\newcolumntype{Y}[1]{>{\raggedright\arraybackslash}p{#1}}
\newcolumntype{Z}[1]{>{\centering\arraybackslash}p{#1}}
\newtheorem{theorem}{Theorem}
\newtheorem{corollary}{Corollary}
\newtheorem{definition}[theorem]{Definition}
\newtheorem{lemma}{Lemma}
\newtheorem{proposition}{Proposition}
\newtheorem{remark}[theorem]{Remark}
\theoremstyle{remark}
\newtheorem{assump}{Assumption}[section]
\providecommand{\customgenericname}{}
\newcommand{\newcustomtheorem}[2]{%
  \newenvironment{#1}[1]
  {%
   \renewcommand\customgenericname{#2}%
   \renewcommand\theinnercustomgeneric{##1}%
   \innercustomgeneric
  }
  {\endinnercustomgeneric}
}
\definecolor{bleu}{cmyk}{1,0.4,0,0}
\definecolor{bleu2}{cmyk}{0.75,0.55,0,0}
\definecolor{orange}{cmyk}{0,0.56,0.88,0}
\newcommand{\N}{\mathbb{N}}
\newcommand{\E}{\mathbb{E}}
\newcommand{\R}{\mathbb{R}}
\newcommand{\C}{\mathbb{C}}
\newcolumntype{a}{>{\columncolor{Gray}}c}
\newcommand{\diag}{\textnormal{diag}}
\newcommand{\tr}{\textnormal{tr}}
\definecolor{Gray}{gray}{0.95}
\definecolor{White}{gray}{1.00}
\definecolor{lightgray}{gray}{0.9}
\newcommand*\dd{\mathop{}\!\mathrm{d}}
\definecolor{green}{rgb}{0.2,0.7,0.3}
\providecommand{\keywords}[1]{\textbf{\textit{Keywords---}} #1}
\def\@fnsymbol#1{\ensuremath{\ifcase#1\or *\or \text{a} \or \text{b}\or
   \text{c}\or \dagger\or \|\or **\or \dagger\dagger
   \or \ddagger\ddagger \else\@ctrerr\fi}}
\title{Overparametrized models with posterior drift\thanks{The authors would like to thank the participants of the 18th Financial Risks International Forum, the 7th QFFE event, the 29th Forecasting Financial Markets Conference, the 19th BiGSEM Doctoral Workshop, the Scientific Beta Alumni Global Summit, the 2025 AFFI conference, and the SAF lab seminar for their helpful comments.}}
\author{Guillaume Coqueret\thanks{EMLYON Business School, 144 avenue Jean Jaurès, 69007 Lyon, FRANCE. E-mail: coqueret@em-lyon.com} \and Martial Laguerre \thanks{EMLYON Business School, 144 avenue Jean Jaurès, 69007 Lyon, FRANCE. E-mail: laguerre@em-lyon.com}\thanksgap{0.5ex} \thanks{Université Claude Bernard Lyon 1, 43 Boulevard du 11 Novembre 1918, 69100 Villeurbanne, France.}\thanksgap{0.5ex} \thanks{Corresponding author.}}
\begin{document}
\maketitle

\begin{abstract}

This paper investigates the impact of posterior drift on out-of-sample forecasting accuracy in overparametrized machine learning models. We document the loss in performance when the loadings of the data generating process change between the training and testing samples. This matters crucially in settings in which regime changes are likely to occur, as can be the case in financial markets. Applied to equity premium forecasting, our results underline the sensitivity of a market timing strategy to sub-periods and to the bandwidth parameters that control the complexity of the model. For the average investor, we find that focusing on holding periods of 15 years can generate very heterogeneous returns, especially for small bandwidths. Large bandwidths yield much more consistent outcomes, but are far less appealing from a risk-adjusted return standpoint. All in all, our findings tend to recommend cautiousness when resorting to large linear models for stock market predictions.  
\end{abstract}

\keywords{Overparameterization, Equity Premium, Benign Overfiting, Random Matrix Theory, Market Timing.}

%\textbf{Acknowledgment}: This paper was written with zero use of Large Language Models. The ideas, proofs, code and writing were generated by the authors without any help whatsoever of generative AI.

\clearpage

\doparttoc % Tell to minitoc to generate a toc for the parts
\faketableofcontents % Run a fake tableofcontents command for the partocs

%\part{} % Start the document part
%\parttoc % Insert the document TOC

\section{Introduction}

Recently, the literature in machine learning (ML) has investigated the notion of ``\textit{double descent}'', whereby linear overparametrized models (i.e., with more parameters than observations) can have surprising out-of-sample benefits.\footnote{This is sometimes also referred to as ``\textit{benign overfitting}'' (\cite{bartlett2020benign}), or ``\textit{grokking}'' (\cite{power2022grokking}, \cite{varma2023explaining}), though these notions do not necessarily perfectly overlap.} As is customary in most machine learning contributions, including on double descent, a key assumption to derive analytical results is the invariance in distributions between the training and testing phases: the data generating process (DGP) is assumed to remain the same once the model has been calibrated. Unfortunately, this may not be the case in practice, especially in financial markets which can be subject to sudden regime changes. If we write $P_{y,X}=P_{y | X}P_X$ for the joint law of $X$ and $y$, we see that there are potentially two drivers for the change of distribution in the DGP: the conditional one, $P_{y | X}$, and the unconditional one, $P_X$. When the law of $X$ changes, we refer to \textit{covariate shift}, whereas when the link between $X$ and $y$ evolves, we will talk of \textit{posterior drift} (sometimes also referred to as \textit{concept drift}, see, e.g. \cite{gama2014survey} and \cite{lu2018learning}). Interestingly, the notion of posterior drift has also witnessed a surge in interest lately, in particular from the ML community (see, e.g., \cite{maity2024linear}, \cite{hu2025transfer} and \cite{wang2025conformal}).

The goal of the present paper is to investigate how posterior drift can be detrimental to out-of-sample forecasting accuracy in the context of overparametrized models.\footnote{The topic of covariate shift is already partly covered in Section 5 of \cite{hastie2022surprises} (via what they call the misspecified case), so that the room for novel contributions on the matter is limited.} Our practical motivation originates in the seminal paper by \cite{kelly2023virtue} which argues that the performance of aggregate market timing with linear models increases with the number of parameters. Since complexity is the ratio between the number of parameters and the sample size, the authors conclude that complexity is virtuous for equity premium prediction. However, several contributions have since then challenged this point of view, all from different angles. For instance, \cite{berk2023comment} underlines the lack of financial grounding (equilibrium consistency), whereas \cite{cartea2025limited} argue that the effect of noise is underestimated in \cite{kelly2023virtue}. Indeed, as the number of factors (i.e., predictors) increases, the amount of noise may also increase, thereby impairing the models' accuracy. \cite{nagel2025seemingly} demonstrates that the random Fourier feature (RFF) trick used to artificially increase the number of predictors in fact collapses to a low-complexity kernel ridge regression and the strategy boils down to volatility-timed trend-following. Resorting to RFFs is also problematic because they often require in-sample rescaling that violates the shift-invariance property required in theoretical results (\cite{fallahgoul2025high}). 
Finally, \cite{buncic2025simplified} and \cite{guo2025simplicity} point to another important shortcoming: adding a constant in the set of predictors reverses the pattern and performance then decreases with complexity. As a consequence, it turns out that Sharpe ratios obtained with low dimensional predictors are much higher than those reported in \cite{kelly2023virtue} for large levels of complexity. 

In focusing on the perils of regime changes, we shed light on another explanation of the ambiguous role of overparametrization in the forecasting efficiency of linear models. Indeed, we contend that the efficacy of complex models in equity premium prediction can be severely jeopardized by changing economic environments. In practice, the relationships that are inferred during the estimation phase may change due to unpredictable shocks, and, in this case, the out-of-sample precision of predictions can be substantially attenuated. Another representation of such phenomena can be made through the lens of the signal-to-noise ratio. In the presence of posterior drift, the information from predictors wanes, and, when signals are too weak, recent results indicate that ridgeless estimators perform worse than models that ignore the data completely (see Theorem 2 in \cite{shen2024can} and Corollary 1 in \cite{fallahgoul2025high} for instance).

Our contributions are twofold. First, in Section \ref{sec:theo}, we extend the misspecified isotropic results of \cite{hastie2022surprises} to the case of non-i.i.d. data and derive the expected return of a market-timing strategy when \textit{posterior drift} is taken into account. Our theoretical predictions span a large range of cases, as summarized in Table \ref{tab:summary} below. In each of the three panels, the first rows pertain to the main contributions kept in the text, while more peripheral propositions are postponed to the Appendix. In Section \ref{sec:numerical}, we corroborate our theoretical findings with Monte-Carlo simulations.

\begin{table}[ht]
\centering
\smallskip
\renewcommand{\arraystretch}{1.35}
\footnotesize
\begin{tabular}{@{} Z{2cm} Y{2.2cm} Z{2.5cm} Z{2.9cm} Z{1.9cm} Z{2.9cm} @{}}
\toprule
\textbf{Proposition} & \textbf{Quantity} & \textbf{Model} & \textbf{Feature correlation} & \textbf{Regularization} & \textbf{Location in paper} \\
\midrule
\multicolumn{6}{@{}l}{\textbf{Panel A. Expected return of the strategy}} \\
\midrule
2 & $\mathbb{E}[r^{(s)}_{t+1}|X]$ & Misspecified & General & Ridge & Body, Eq. \eqref{eq:oosexpretnonasymp}\\
4 & $\mathbb{E}[r^{(s)}_{t+1}|X]$ & Misspecified & IID & Ridge & Body, Eq. \eqref{eq:prop3} \\ \midrule
7 & $\mathbb{E}[r^{(s)}_{t+1}|X]$ & Well-specified & General & Ridge & App. \ref{subsub:wellspecifiedridge}, Eq. \eqref{eq:E_prop7} \\
9 & $\mathbb{E}[r^{(s)}_{t+1}|X]$ & Well-specified & IID & Ridge & App. \ref{subsub:wellspecifiedridge}, Eq. \eqref{eq:E_prop9} \\
11 & $\mathbb{E}[r^{(s)}_{t+1}|X]$ & Well-specified & General & $z\!\to\!0$ & App. \ref{subsub:wellspecifiedminnorm}, Eq. \eqref{eq:E_prop11} \\
12 & $\mathbb{E}[r^{(s)}_{t+1}|X]$ & Well-specified & IID & $z\!\to\!0$ & App. \ref{subsub:wellspecifiedminnorm}, Eq. \eqref{eq:E_prop12}\\
13 & $\mathbb{E}[r^{(s,w)}|X]$& Well-specified & General & Both & App. \ref{subsub:misspecifiedgeneral}, Eq. \eqref{eq:E_prop13} \\
14 & $\mathbb{E}[r^{(s)}_{t+1}|X]$ & Misspecified & General & Both & App. \ref{subsub:misspecifiedgeneral}, Eq. \eqref{eq:asyretgeneral} \\
\midrule
\multicolumn{6}{@{}l}{\textbf{Panel B. Second moment / Variance of the strategy}} \\
\midrule
3 & $\mathbb{V}[r^{(s)}_{t+1}|X]$ & Misspecified & General & Both & Body, Eq. \eqref{eq:var_prop3} \\
5 & $\mathbb{V}[r^{(s)}_{t+1}|X]$ & Misspecified & IID & Ridge & Body, Eq. \eqref{eq:var_prop5} \\ \midrule
15 & $\mathbb{E}[(r^{(s)}_{t+1})^2|X]$ & Misspecified & General & Both & App. \ref{subsub:misspecifiedgeneral}, \\
16 & $\mathbb{E}[(r^{(s)}_{t+1})^2|X]$ & Misspecified & IID & Ridge & App. \ref{subsub:misspecifiedisotropic} \\
\midrule
\multicolumn{6}{@{}l}{\textbf{Panel C. Sharpe ratio}} \\
\midrule
Eq.~(11) & $SR(z;c\phi)$ & Misspecified & IID & Ridge & Body, Eq.\eqref{eq:sharpe_ratio_iid} \\
\bottomrule
\end{tabular}
\caption{\textbf{Summary of the main theoretical results in the paper}. \small All limits are taken in the proportional regime $n, p, q \to \infty$ with $p/n \to c\phi$ and $(p+q)/n \to c$. \label{tab:summary}}
\end{table}

Second, in Section \ref{sec:evidence}, we empirically confirm the intuitions. We first provide in Section \ref{sec:betas} new evidence of time-variation in links between predictors and aggregate returns (betas). We then replicate in Section \ref{sec:sensi} the study of \cite{kelly2023virtue} with some subtleties. We test their procedure over several sub-periods of 15 years (the horizon for a representative investor) and across a range of bandwidth parameters, which we define below. 

Our empirical results indicate strong discrepancies in both dimensions (periods and bandwidths). Plainly speaking, this means that even with the bandwidth chosen in \cite{kelly2023virtue}, the performance can be either well above (e.g., +7\% monthly in 2005-2019) or well below (+0.5\% in 1975-1989) than the one that we report for the full sample (+4\%). It is possible to reduce this uncertainty by increasing the bandwidth value, but in this case, the average return decreases invariably towards zero, making complexity much less appealing. 

In sum, while it is likely that sophistication \textit{can} bring value in forecasting models, our findings suggest that analysts who rely on overparametrized models should pay particular attention to the stability and sensitivity of performance in their backtests. \\

\textbf{Notations.} $n,p \in \N_{>0}$ and $q \in \N_{\geq 0}$ are large dimensional parameters, possibly tending to infinity. $n$ is the number of observations and $p+q$ the number of independent variables in linear models. We use $C$ and $D$ (resp. $\tau$) for arbitrary large (resp. small) positive constants. Let $\langle \cdot, \cdot \rangle$ stand for the scalar product, i.e. for any vectors $u$, $v \in \R^p$, $\langle u, v \rangle = u'v$. Denote by $\| \cdot \| = \| \cdot \|_2$ the Euclidean norm of vectors. For any matrix $A \in \R^{p \times p}$ and any vector $v \in \R^p$, we denote by $\| v \|_A = \sqrt{v'Av}$, a reweighted version of the Euclidean norm. Similarly, for any matrix $A \in \R^{p \times p}$ and any vector $v,u \in \R^p$, define $\langle u, v \rangle_A = u'Av$. We use $\| \cdot \|_{\text{op}}$ to denote the Euclidean $2$-norm of a matrix, namely, for any matrix $A \in \R^{n \times p}$, $\| A \|_{\text{op}}$ is the largest singular value of $A$. 

\section{Theoretical grounding}
\label{sec:theo}

\subsection{Setup: misspecification \textit{and} posterior drift}
\label{subsec:baseline}

This paper pertains to large models which, as in most of the recent literature, are \textit{linear}, see, e.g., \cite{bartlett2020benign}, and \cite{kelly2023virtue}. \cite{hastie2022surprises} do cover some non-linear cases but these cases require very lengthy proofs which is why the present paper focuses on extending linear combinations of features. Henceforth, our modelling framework follows that of \cite{hastie2022surprises} closely. We assume a data generating process (DGP) of the form
\begin{align}
\label{eq:dgpmis}
\begin{split}
    ((x_i,w_i),e_i) &\sim P_{x,w} \times  P_e,  \quad i=1,\dots,n \\
    y_i &= x'_i\beta+ w'_i\theta + e_i,  \quad i=1,\dots,n, 
\end{split}
\end{align}
where the $n$ random draws are independent. $P_{x,w}$ is a distribution on $\R^{p+q}$ with $\E[x_i]=0$ and $\E[w_i]=0$ and 
\begin{align*}
    \mathbb{C}\text{ov}[(x_i,w_i)]=\Sigma=
    \begin{bmatrix}
        \Sigma_x & \Sigma_{xw} \\
        \Sigma'_{xw} & \Sigma_w
    \end{bmatrix}. \label{eq:covmis}
\end{align*}
$P_e$ is a distribution on $\R$ with zero mean and variance $\sigma^2$. In our empirical application, as in \cite{kelly2023virtue}, $y$ will be the equity premium and $x$ a set of macroeconomic predictors that are expected to have some predictive power over the premium.

The agent has access to the $x_i$ but not to the $w_i$, and hence only sees an incomplete picture, this is why her model is labeled as \textit{misspecified}. This corresponds to the assumptions in Section 5 of \cite{hastie2022surprises} and, to some extent, to Section IV of \cite{kelly2023virtue}. We write
$(y_i, x_i)\in \R \times \R^p  $ for one observation in the training data, and $(y,X)$ for the corresponding aggregated vector (in $\R^n$) and matrix (in $\R^{n\times p}$). 

As in \cite{hastie2022surprises}, we define out-of-sample prediction risk of the well specified model, i.e. the model in which all the covariates are observed, as
\begin{equation}
\label{eq:R_X}
    R_X(\hat{\beta},\beta) := \E[(x_0'(\hat{\beta}-\beta))^2|X] = B_X(\hat{\beta}, \beta) + V_X(\hat{\beta}, \beta),
\end{equation}
where $x_0$ is a random draw of $P_x$ that is independent from the training data. In the equation, the \textbf{bias} is $B_X(\hat{\beta}, \beta) =\|\E[\hat{\beta}|X]-\beta \|_\Sigma^2$, and the \textbf{variance} is $V_X(\hat{\beta}, \beta)=\text{Tr}[\mathbb{C}\text{ov}(\hat{\beta}|X)\Sigma]$.

The main departure from existing results is that we now discriminate between the $(\beta',\theta')'$ vector that is used for the two data generating processes in Equation \eqref{eq:dgpmis}: one vector for the training data ($(\beta_{\text{is}}',\theta_{\text{is}}')'$) and another one for the testing data ($(\beta_{\text{oos}}',\theta_{\text{oos}}')'$). In doing so, we include \textbf{posterior drift} in the model: we postulate a change in the link between $y$ and $(x',w')'$. Empirical support for this assumption will be provided in Section \ref{sec:betas} below.

The estimator $\hat{\beta}$ is evaluated with the in-sample (is) or out-of-sample (oos) generated data: 
\begin{equation}
    \hat{\beta}_{u} := \hat{\beta}_{u}(z_n)= (X'X+ \alpha I)^{-1}X'(X\beta_{u}+W \theta_u+e)=n^{-1}(\hat{\Sigma}+z_nI)^{-1}X'(X\beta_{u}+W \theta_u+e),
    \label{eq:estimator}
\end{equation}
where $\alpha > 0$ and $u \in \{\text{is}, \text{oos} \}$.
Crucially, the above estimator relies on shrinkage with intensity $z_n = \alpha / n$ towards the identity matrix. In the remainder of the paper, we omit the dependence of the regularization parameter on the sample size $n$ and set $z := z_n$. If $n<p$ and $z \to 0$, the pseudo-inverse is taken for the inverse matrix. We have also set $\hat{\Sigma}=X'X/n$. Below, we make explicit the terms for the quadratic error \eqref{eq:R_X} that is made when using the estimator $\hat{\beta}_{\text{is}}$ (from the data generated with $\beta_{\text{is}}$) when compared to the changed (or realized) $\beta_{\text{oos}}$. We are interested in this subsection in results for ridgeless estimators (when $z \to 0^+$).

\begin{proposition}
    
\label{lem:2}
    Under model \eqref{eq:dgpmis} and with $z \to 0$, the prediction risk of the misspecified model is 
\begin{align*}
    R_X^m(\hat{\beta}_{\text{is}},\beta_{\text{oos}}, \theta_{\text{oos}}) & = \E[(x_0'(\hat{\beta}_{\text{is}}-\beta_{\text{oos}})-w_0'\theta_{\text{oos}})^2|X] \\
    & = R_X(\hat{\beta}_{\text{is}},\beta_{\text{oos}}) + M(\theta_{\text{oos}}),
\end{align*}
where $M(\theta_{\text{oos}}) =\E[(w_0'\theta_{\text{oos}}-\E[w_0'\theta_{\text{oos}}|x_0])^2]$ is the \textit{misspecification bias}, i.e., the signal embedded in the unobserved features (following the notation of \cite{hastie2022surprises}). Moreover, in the isotropic case $\Sigma = I$,
\begin{align}
    \underset{n,p \rightarrow \infty}{\lim}R_X^m(\hat{\beta}_{\text{is}},\beta_{\text{oos}},\theta_{oos}) &= \left\{ \begin{array}{c l}
    (\sigma^2 + \| \theta_{is} \|_2^2) \frac{c}{1-c} + \|{\beta}_{\text{oos}}-{\beta}_{\text{is}}\|_2^2 + \| \theta_{oos} \|_2^2 & \quad p/n \rightarrow c < 1 \\
    (\sigma^2 + \| \theta_{is} \|_2^2)(c-1)^{-1} +c^{-1}\|\beta_{\text{is}}-\beta_{\text{oos}} \|^2_2 &\quad p/n \rightarrow c > 1  \\+ (1-c^{-1}) \|\beta_{\text{oos}} \|_2^2 + \| \theta_{oos} \|_2^2 .
    \end{array} \right. 
\end{align}
\end{proposition}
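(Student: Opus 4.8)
The plan is to separate the purely algebraic decomposition (the first two lines of the statement) from the isotropic random-matrix asymptotics (the limiting formula), leaning on the bias--variance bookkeeping of \cite{hastie2022surprises} while carefully keeping the training vector $(\beta_{\text{is}},\theta_{\text{is}})$ and the realized vector $(\beta_{oos},\theta_{oos})$ distinct. For the identity $R_X^m=R_X+M$: write $w_0'\theta_{oos}=\E[w_0'\theta_{oos}\mid x_0]+\xi_0$ with $\E[\xi_0\mid x_0]=0$, expand the square defining $R_X^m$, and condition on $(X,x_0)$. Since $\hat\beta_{\text{is}}$ is a function of the training sample only, it is independent of $(x_0,w_0)$, so the cross term carrying $\xi_0$ vanishes by the tower rule; and because $\E[w_0\mid x_0]=0$ in our setup (in particular when $\Sigma_{xw}=0$, e.g.\ under $\Sigma=I$, on which the limiting formula relies), the remaining ``predictable'' term is exactly $\E[(x_0'(\hat\beta_{\text{is}}-\beta_{oos}))^2\mid X]=R_X(\hat\beta_{\text{is}},\beta_{oos})$, while $\E[\xi_0^2]=M(\theta_{oos})$ by definition. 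The split $R_X=B_X+V_X$ is then the one recalled in \eqref{eq:R_X}.

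Next I compute the conditional moments of $\hat\beta_{\text{is}}$ at $z\to0$. From \eqref{eq:estimator}, $\hat\beta_{\text{is}}=\Pi\beta_{\text{is}}+(X'X)^+X'(W\theta_{\text{is}}+e)$, where $\Pi:=(X'X)^+X'X$ is the orthogonal projector onto the row space of $X$. Independence of $(W,e)$ from $X$ together with $\E[W]=0$ give $\E[\hat\beta_{\text{is}}\mid X]=\Pi\beta_{\text{is}}$, hence $B_X(\hat\beta_{\text{is}},\beta_{oos})=\|\Pi\beta_{\text{is}}-\beta_{oos}\|_\Sigma^2$. Moreover the $n$ rows of $W\theta_{\text{is}}+e$ are i.i.d.\ with variance $\theta_{\text{is}}'\Sigma_w\theta_{\text{is}}+\sigma^2$, so $\mathbb{C}\text{ov}(\hat\beta_{\text{is}}\mid X)=(\theta_{\text{is}}'\Sigma_w\theta_{\text{is}}+\sigma^2)(X'X)^+X'X(X'X)^+=(\theta_{\text{is}}'\Sigma_w\theta_{\text{is}}+\sigma^2)(X'X)^+$ and $V_X(\hat\beta_{\text{is}},\beta_{oos})=(\theta_{\text{is}}'\Sigma_w\theta_{\text{is}}+\sigma^2)\,\tr[(X'X)^+\Sigma]$. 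Note the asymmetry already visible here: the \emph{training} loadings $\theta_{\text{is}}$ enter through the variance (the unobserved signal $W\theta_{\text{is}}$ acts like extra noise), whereas the \emph{realized} loadings $\theta_{oos}$ enter only through $M$.

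Finally, set $\Sigma=I$ and let $n,p\to\infty$ with $p/n\to c$; then $\theta_{\text{is}}'\Sigma_w\theta_{\text{is}}=\|\theta_{\text{is}}\|^2$, $M(\theta_{oos})=\|\theta_{oos}\|^2$, and the orthogonal decomposition $\Pi\beta_{\text{is}}-\beta_{oos}=\Pi(\beta_{\text{is}}-\beta_{oos})-(I-\Pi)\beta_{oos}$ gives $B_X=\|\Pi(\beta_{\text{is}}-\beta_{oos})\|^2+\|(I-\Pi)\beta_{oos}\|^2$. For $c<1$, $X$ has full column rank a.s., so $\Pi=I$, hence $B_X\to\|\beta_{oos}-\beta_{\text{is}}\|^2$ and $\tr[(X'X)^{-1}]=n^{-1}\tr[\hat\Sigma^{-1}]\to c/(1-c)$ by the Marchenko--Pastur law (mean of $1/x$ under $MP_c$). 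For $c>1$, $\Pi$ is a rank-$n$ projector that is asymptotically rotationally invariant, so for each fixed vector $v\in\{\beta_{\text{is}}-\beta_{oos},\,\beta_{oos}\}$ one has $v'\Pi v\to c^{-1}\|v\|^2$ and $v'(I-\Pi)v\to(1-c^{-1})\|v\|^2$, which yields the stated bias, while $\tr[(X'X)^+]=\tr[(XX')^{-1}]=p^{-1}\tr[(XX'/p)^{-1}]\to(c-1)^{-1}$, again by Marchenko--Pastur applied to the $n\times n$ Gram matrix. Adding $B_X+V_X+M$ in each regime reproduces the two branches.

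The main obstacle is this last step: the almost-sure evaluation of $\tr[(X'X)^+]$ and of the quadratic forms $v'\Pi v$ in the two regimes — for which one invokes the isotropic Marchenko--Pastur / deterministic-equivalent theory, exactly as in \cite{hastie2022surprises} — together with the bookkeeping that the unobserved signal $W\theta_{\text{is}}$ inflates the variance by $\|\theta_{\text{is}}\|^2$ but, since $\E[W\mid X]=0$, leaves the bias untouched. The algebra of the first two steps is routine once the training and realized coefficient vectors are kept separate.
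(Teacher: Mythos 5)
Your proposal is correct and follows essentially the same route as the paper: the same cross-term-vanishing argument (conditioning on $x_0$ and using independence of $(x_0,w_0)$ from the training sample) for the decomposition $R_X^m=R_X+M$, the same observation that the unobserved signal $W\theta_{\text{is}}$ inflates the effective noise variance to $\sigma^2+\|\theta_{\text{is}}\|_2^2$ while $\theta_{oos}$ enters only through $M$, and the same bias--variance bookkeeping with Marchenko--Pastur asymptotics for the traces and rotation-invariance for the quadratic forms in $\Pi$. Your orthogonal splitting $\|\Pi(\beta_{\text{is}}-\beta_{oos})\|^2+\|(I-\Pi)\beta_{oos}\|^2$ is a slightly cleaner way to reach the $c>1$ bias than the paper's three-term expansion with an explicit cross term, but the argument is otherwise identical.
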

The proof of the proposition is postponed to Appendix \ref{app:proof_l2}. The limiting expressions generalize Theorem 4 in \cite{hastie2022surprises} by adding the posterior drift component. Plainly, both expressions for the risk are increasing in the \textbf{misspecification risk} $\| \theta_{oos} \|_2^2$ due to the unobserved features \textit{and} in the \textbf{drift risk} $\|\beta_{\text{is}}-\beta_{\text{oos}} \|_2^2$ that comes from the change in loadings.

\subsection{Portfolio performance under posterior drift}
\label{subsec:performance}

Proposition \ref{lem:2} quantifies the impact of the presence of posterior drift on the risk of out-of-sample prediction of linear (ridge) regressions in the over-parameterized setting. We now focus on the impact on the financial performance of a market timing strategy. As in \cite{kelly2023virtue}, we analyze the returns of the strategy which consists of adjusting the position in an asset according to its estimated expected returns. Formally, the strategy's return can be written as 

\begin{equation}
\label{eq:strat}
    r^{(s)}_{t+1}(z) = \hat{\pi}_t(z) r_{t+1},
\end{equation}
where $\hat{\pi}_t(z) = \sum_{k=1}^p \hat{\beta}_{\text{is}}^{(k)} (z)x_t^{(k)}$ is the adjusted position in the asset and $r_{t+1}$ is its realized monthly return (the superscript $^{(s)}$ in the notation stands for `\textit{strategy}'). The estimated loadings in this case can be shrunk as in the general case in Equation \eqref{eq:estimator}.
For brevity, we directly consider the misspecified framework described in \eqref{eq:dgpmis} and without loss of generality, we assume $\sigma^2 = 1$. We provide additional results within the well-specified framework in Appendix \ref{subsub:wellspecifiedridge} and \ref{subsub:wellspecifiedminnorm}.

The following assumption establishes the linear model from which the covariates are considered to be drawn.

\begin{assump}
\label{ass:iddistcov}
    The covariates have the form $x_i = \Sigma_x^{1/2} z_i$ where $z_i = (z_{i1}, z_{i2}, \dots, z_{ip})$ is a vector of latent features having independent and identically distributed entries. Moreover, for all $j \in \{1,\dots,p\}$, $\E[z_{ij}] = 0$, $\E[z_{ij}^2] = 1$ and $\E[|z_{ij}|^{4+a}] \leq C < \infty$ for $a > 0$.
\end{assump} 

Denote by $r^{(s,w)}_{t+1}(z)$ the return of the strategy if the model is well-specified, i.e. there are no unobserved features, $w_i$, in the DGP \eqref{eq:dgpmis}. Heed the superscript "\textit{w}" which stands for "\textit{well-specified}". The out-of-sample expected return of the timing strategy under posterior drift $r^{(s)}_{t+1}$ can then be decomposed as follows. 

\begin{proposition}
\label{prop:expretmis}
Let $z > 0$, $\xi = (e_{1}, e_{2}, \dots , e_{n})'$ with $(e_i)_i$ as in \eqref{eq:dgpmis} and Assumption (\ref{ass:iddistcov}) hold. Then,

    \begin{align}
    \label{eq:oosexpretnonasymp}
        \mathbb{E}\left[ r^{(s)}_{t+1}(z) \big| X  \right]  = \mathbb{E}\left[ r^{(s,w)}_{t+1}(z) \big| X  \right] + \mathcal{M}(z)
    \end{align}

    where 

    \begin{align*}
        \mathbb{E}\left[ r^{(s,w)}_{t+1}(z) \big| X  \right] &= \beta_{\text{oos}}' \Sigma_x (zI + \hat{\Sigma}_x)^{-1} \big( \hat{\Sigma}_x \beta_{\text{is}} + \frac{1}{n} X' \xi \big) \\
        \mathcal{M}(z) & = \big( \beta_{\text{oos}}' \Sigma_x + \theta_{\text{oos}}' \Sigma_{xw}' \big) (zI + \hat{\Sigma}_x)^{-1} \E \Big[\hat{\Sigma}_{xw}\Big| X \Big] \theta_{\text{is}} \\
        & \quad + \theta_{\text{oos}}' \Sigma_{xw}' (zI + \hat{\Sigma}_x)^{-1} \big( \hat{\Sigma}_x \beta_{\text{is}} +  \frac{1}{n} X' \xi \big) .
    \end{align*}

\end{proposition}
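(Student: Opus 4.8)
The plan is to reduce the statement to one conditional-expectation computation, using only the first two moments of the covariates and the exogeneity of the noise. First I would write the strategy return in vector form: by definition $r^{(s)}_{t+1}(z)=\hat\pi_t(z)\,r_{t+1}=\big(x_t'\hat\beta_{\text{is}}(z)\big)r_{t+1}$, and under the out-of-sample version of \eqref{eq:dgpmis} the realized return is $r_{t+1}=x_t'\beta_{oos}+w_t'\theta_{oos}+e_{t+1}$, where $(x_t,w_t,e_{t+1})$ is an independent draw from $P_{x,w}\times P_e$, independent of the training sample $(X,W,\xi)$. Substituting the closed form of the ridge estimator from \eqref{eq:estimator}, namely $\hat\beta_{\text{is}}(z)=(zI+\hat\Sigma_x)^{-1}\big(\hat\Sigma_x\beta_{is}+\hat\Sigma_{xw}\theta_{is}+\tfrac1n X'\xi\big)$ with $\hat\Sigma_x=X'X/n$ and $\hat\Sigma_{xw}=X'W/n$, turns the task into evaluating $\E\big[\hat\beta_{\text{is}}(z)'x_t(x_t'\beta_{oos}+w_t'\theta_{oos}+e_{t+1})\,\big|\,X\big]$, where the conditioning $\sigma$-algebra is understood to carry the realized training noise $\xi$ as well (this is what makes the $\tfrac1n X'\xi$ term survive in the well-specified piece below, and is consistent with $\xi$ being introduced explicitly in the hypotheses).

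Then I would split this expectation into three pieces according to the three summands of $r_{t+1}$. For the $x_t'\beta_{oos}$ piece, independence of $(x_t,w_t)$ from $(X,W,\xi)$ together with $\E[x_tx_t']=\Sigma_x$ (recall $\E[x_t]=0$) gives, by the tower property, $\E\big[\hat\beta_{\text{is}}(z)'x_tx_t'\beta_{oos}\,\big|\,X,\xi\big]=\beta_{oos}'\Sigma_x\,\E\big[\hat\beta_{\text{is}}(z)\,\big|\,X,\xi\big]$; for the $w_t'\theta_{oos}$ piece, using $\E[x_tw_t']=\Sigma_{xw}$, the same argument yields $\theta_{oos}'\Sigma_{xw}'\,\E\big[\hat\beta_{\text{is}}(z)\,\big|\,X,\xi\big]$; and the $e_{t+1}$ piece vanishes because $e_{t+1}$ is centered and independent of $(x_t,W,\xi)$. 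Hence $\E\big[r^{(s)}_{t+1}(z)\,\big|\,X\big]=(\beta_{oos}'\Sigma_x+\theta_{oos}'\Sigma_{xw}')\,\E\big[\hat\beta_{\text{is}}(z)\,\big|\,X,\xi\big]$.

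Next I would compute $\E[\hat\beta_{\text{is}}(z)\,|\,X,\xi]$. Conditionally on $(X,\xi)$ the matrices $\hat\Sigma_x,zI$ and the vector $\tfrac1n X'\xi$ are fixed and the only random object in the estimator is $\hat\Sigma_{xw}=X'W/n$; since $W$ is independent of $\xi$, $\E[\hat\Sigma_{xw}\,|\,X,\xi]=\E[\hat\Sigma_{xw}\,|\,X]$, so $\E[\hat\beta_{\text{is}}(z)\,|\,X,\xi]=(zI+\hat\Sigma_x)^{-1}\big(\hat\Sigma_x\beta_{is}+\E[\hat\Sigma_{xw}\,|\,X]\theta_{is}+\tfrac1n X'\xi\big)$, the inverse being unambiguous since $z>0$. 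Plugging this in and distributing $(\beta_{oos}'\Sigma_x+\theta_{oos}'\Sigma_{xw}')$ over the three terms, I would then regroup: the terms $\beta_{oos}'\Sigma_x(zI+\hat\Sigma_x)^{-1}\big(\hat\Sigma_x\beta_{is}+\tfrac1n X'\xi\big)$ are exactly what the same computation produces when there are no unobserved features ($\theta\equiv0$, $w_t\equiv0$), i.e.\ they equal $\E[r^{(s,w)}_{t+1}(z)\,|\,X]$; and the leftover terms $(\beta_{oos}'\Sigma_x+\theta_{oos}'\Sigma_{xw}')(zI+\hat\Sigma_x)^{-1}\E[\hat\Sigma_{xw}\,|\,X]\theta_{is}$ together with $\theta_{oos}'\Sigma_{xw}'(zI+\hat\Sigma_x)^{-1}\big(\hat\Sigma_x\beta_{is}+\tfrac1n X'\xi\big)$ are precisely $\mathcal{M}(z)$. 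This establishes \eqref{eq:oosexpretnonasymp}.

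There is no deep obstacle here; the work is entirely bookkeeping, and the delicate point is the conditioning structure. One must read ``$|\,X$'' as conditioning on $X$ \emph{and} the training noise $\xi$, and one must keep in mind that $x_i$ and $w_i$ are drawn jointly rather than independently, so that $\E[\hat\Sigma_{xw}\,|\,X]\neq 0$ in general --- this is exactly the term that channels the in-sample signal $\theta_{is}$ of the hidden features into the strategy's expected return, and it would disappear under an artificial independence assumption. Note also that only the centering and the second moments of $(x_t,w_t)$ and the exogeneity of $e_{t+1}$ are used in this step, so Assumption \ref{ass:iddistcov} is needed only through its moment content.
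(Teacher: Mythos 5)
Your proposal is correct and follows essentially the same route as the paper: substitute the closed form $\hat\beta_{\text{is}}=(zI+\hat\Sigma_x)^{-1}\big(\hat\Sigma_x\beta_{is}+\hat\Sigma_{xw}\theta_{is}+\tfrac1n X'\xi\big)$, use independence of the test draw from the training sample together with the tower property to replace $x_tx_t'$ and $w_tx_t'$ by $\Sigma_x$ and $\Sigma_{xw}'$ and to reduce $\hat\Sigma_{xw}$ to $\E[\hat\Sigma_{xw}\mid X]$, and regroup the resulting terms into the well-specified part plus $\mathcal{M}(z)$. Your only departure is organizational (factoring out the test-sample expectation before expanding the estimator, and making explicit that the conditioning retains the realized $\xi$ — a point the paper leaves implicit), which does not change the substance.
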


The proof of the proposition is postponed to Section \ref{subsub:misspecifiedgeneral} in the Appendix and simulations for verification are provided in Section \ref{sec:numerical} in particular cases. The proposition shows that the out-of-sample expected return of the timing strategy under posterior drift crucially depends on the interactions between the observed and the unobserved covariates. These complex relations are encapsulated in the quantity $\mathcal{M}(z)$. 

\subsubsection{A general model: correlated features}

In this section, we consider a general type of covariance structure for the covariates. Namely, we relate the unobserved features to the $x_i$s through the following assumption. 

% \begin{assump}
% \label{ass:unobscov}
%     The unobserved covariates have the form $w_i = \Sigma_w^{1/2} P z_i$ where $P \in \R^{q \times p}$ is a matrix with orthonormal rows. 
% \end{assump} 

\begin{assump}
\label{ass:unobscov}
    The unobserved covariates have the form $w_i = \Sigma_w^{1/2} P z_i$ where $P \in \R^{q \times p}$ is deterministic. 
\end{assump}

Next, we define the following dimensional ratios with their posited limiting behavior.  

\begin{assump}
\label{ass:complexity}
    For some $\tau > 0$, $\tau \leq c \phi \leq \tau^{-1}$, with $c = \lim_{p,q,n \to \infty}(p+q)/n$ and $\phi = \lim_{p,q \to \infty}p/(p+q) \in (0,1]$. 
\end{assump}

This assumption is standard in statistical theory and characterizes the proportional regime in which our study takes place. The constant $c$ stands for the complexity of the true model while $\phi$ is the misspecification ratio. The product $c\phi$ is the complexity of the misspecified model. 

Our most general result on the out-of-sample expected return of the timing strategy is, in all fairness, impossible to interpret. The formula is extremely intricate when unobserved covariates are introduced in the model, that is, when $\mathcal{M}(z)$ is not null. In this case, few stylized facts stand out, so to avoid over-burdening the main text, we have decided to postpone it to the Appendix - see Proposition \ref{prop:expretgeneral}. Fortunately, the expression becomes fairly compact and instructive when the observed and unobserved features are uncorrelated. This is discussed below of Proposition \ref{prop:expretdriftisomis} in Section \ref{subsubsec:iid}.

% The proof of Proposition \ref{prop:expretgeneral} is deferred to Appendix \ref{subsub:misspecifiedgeneral}. The expression is fairly compact when the model is well-specified. Though, as can be expected from Proposition \ref{prop:expretmis}, the formula becomes extremely intricate when unobserved covariates are introduced, that is when $\mathcal{M}(z)$ is not null. \textcolor{red}{Numerical results in Section \ref{sec:numerical} confirm the theoretical asymptotic behavior of the returns [TO DO].}

In addition to raw performance (returns), the \textit{risks} of financial strategies are of utmost interest to practitioners. Thus, in the following proposition, we establish the (squared) volatility of such a timing strategy that we define as the second moment of the strategy's return. Before stating the proposition, note that, under Assumptions \ref{ass:iddistcov}-\ref{ass:unobscov}, the DGP in \eqref{eq:dgpmis} can be rewritten as

\begin{align*}
%\label{eq:dgpmis}
\begin{split}
    (z_i,e_i) &\sim P_{z} \times  P_e,  \quad i=1,\dots,n \\
    y_i &= z_i'\omega_u + e_i,  \quad i=1,\dots,n, 
\end{split}
\end{align*}

where $\omega_u := \Sigma_x^{1/2} \beta_u + P' \Sigma_w^{1/2} \theta_u$ with $u \in \{\text{is}, \text{oos}\}$. Hence, the original model can be cast into an equivalent sequence where the features are i.i.d. and the regression coefficients are linear transformations of the initial ones. Using the notations of the above DGP, we are ready to establish the limiting variance of the strategy's return.   

\begin{proposition}
\label{prop:volstratret}
    Let $z>0$ and Assumptions (\ref{ass:iddistcov})-(\ref{ass:complexity}) and (\ref{ass:opbound})-(\ref{ass:esvd_diag_omega_oos}) hold and denote $m_4 = \E[z_i^4]$. Then, 
    \begin{align}
        \mathbb{V}\left[ r^{(s)}_{t+1}(z) \Big| X  \right] &\xrightarrow[\substack{\mathstrut n,p,q \rightarrow \infty \\ p/n \rightarrow c\phi }]{\mathds{P}} \quad (1+\|\omega_{oos}\|^2) \mathcal{L}(z) + \big( \mathcal{E}^{(d)}(z;c\phi,\iota_d)\big)^2 + (m_4-3)\mathcal{K}(z), \label{eq:var_prop3}
    \end{align}
    where
    \begin{itemize}
        \item $\mathcal{E}^{(d)}(z;c\phi,\iota_d)$ is the asymptotic average return of the strategy and is given in Eq. \eqref{eq:asyretgeneral};
        \item $\mathcal{L}(z)$ is the (limiting) leverage of the strategy, that is $\E[|\hat{\pi}_t(z)|^2 |X]  \to \mathcal{L}(z)$ in probability when $n,p,q \to \infty$ such that $p/n \to c\phi$, and is defined in the Appendix in Eq. \eqref{eq:leverage_gen_ridge} for ridge regularization and in Eq. \eqref{eq:leverage_ridgeless} when the penalty level is vanishing. % =\mathbb{E}\left[ \hat{\beta}' \Sigma_x \hat{\beta}\Big| X  \right] 
        \item $\|\omega_{oos}\|^2$ is the amount of signal contained in the out-of-sample data.
        \item $\mathcal{K}(z)$ is a weighted version of the leverage that accounts for posterior drift. %, that is \\ $\mathbb{E}\left[ \hat{\beta}' \Sigma_x^{1/2} \diag(\omega_{oos})^2 \Sigma_x^{1/2} \hat{\beta}\Big| X  \right] \to \mathcal{K}(z)$. 
        Its detailed expression is provided in Eq. \eqref{eq:K1_K2_gen_ridge}, while $\lim_{z \to 0^+} \mathcal{K}(z)$ is given in Eq. \eqref{eq:K1_K2_ridgeless}.
    \end{itemize}
\end{proposition}

The proof of Proposition \ref{prop:volstratret} is also located in Appendix \ref{subsub:misspecifiedgeneral}. The proposition shows that the volatility is linearly increasing in the leverage induced by the strategy. The impact of leverage increases when the signal contained in the out-of-sample equity premium, i.e. $\|\omega_{\text{oos}}\|^2$, is strong. For notational convenience, in the remainder of the article, we omit the dependence of the leverage on the penalty level $z$, that is $\mathcal{L} \equiv \mathcal{L}(z)$. The variance is also increasing in the magnitude of the strategy's expected return, i.e. $[ \mathcal{E}^{(d)}(z;c\phi,\iota_d)]^2$. Finally, the last term pertains to posterior drift but is hard to interpret. It can nonetheless be made more explicit in a simpler i.i.d. framework (see Section \ref{subsubsec:iid} below).

From Propositions \ref{prop:volstratret} (and  \ref{prop:expretgeneral}), it is of course possible to compute the (limiting) risk-adjusted return (Sharpe ratio) of the timing strategy. These results, because of their abstraction, carry little insight, unfortunately. In the next section, we analyze the simpler case when the covariates have i.i.d. coordinates, and this for two reasons. First, it improves interpretability by producing results that are more intuitively understandable. Second, and more importantly, in our empirical exercises, we resort to random Fourier features (RFFs). One implication of this transformation of data is that the predictors behave as if i.i.d.,\footnote{More precisely, instead of employing the raw features $x_i$ as the independent variables, we work with $\tilde{x}_i = \varphi(Wx_i)$ where $\varphi$ is a nonlinear function and $W$ is a matrix with i.i.d. random entries. As is demonstrated in \cite{hastie2022surprises}, under some mild conditions on $\varphi$, $\tilde{X}$ behaves as if it has i.i.d. entries.} which further justifies our focusing on this special case.

%Finally, the volatility increases when the fourth moment of the components of $z$, i.e. $m_4$, is superior to $2$. Thus, the latent features $z_i$ do not need to be drawn from a leptokurtic distribution to amplify the volatility of the returns. => un peu confus et risqué/glissant comme statement

%\begin{align*}
 %   S(z;c\phi) := \frac{\|\beta_{\text{is}}\| \|\beta_{\text{oos}}\| \int h(\lambda;z,c\phi,\mu) d \zeta_{d}(\lambda) + \mathcal{J}_1(z;c\phi,\mu,\eta) + \mathcal{J}_2(z;c\phi,\mu,\nu) +\mathcal{J}_3(z;c\phi,\mu,\psi)}{\sqrt{(1+\|\omega_{\text{oos}}\|^2) \mathcal{L} + 2 \Big( \mathcal{K}_1(z;c\phi) + \Big(\frac{m_4}{2}-1\Big)\mathcal{K}_2(z;c\phi) \Big)}},
%\end{align*}

\subsubsection{The case of i.i.d. features}
\label{subsubsec:iid}

In what follows, we study the simpler case of i.i.d. features with unit variance, which entails $\Sigma=I$. Our next proposition gives the limiting out-of-sample expected return of the timing strategy ($\mathcal{E}^{(d)}(z;c\phi)$) under both posterior drift and i.i.d. features. Moreover, for the sake of comparison, we also introduce an important benchmark, namely the limiting zero-drift expected return from Proposition 3 in \cite{kelly2023virtue}:

\begin{equation}
    \mathcal{E}(z;c\phi) = \underset{\substack{\mathstrut n,p \rightarrow \infty \\ p/n \rightarrow c\phi }}{\lim} \ \mathbb{E}\left[ r^{(s)}_{t+1}(z) \big| X, \beta_{\text{is}}=\beta_{\text{oos}}\right]. 
    \label{eq:zd_return}
\end{equation}

\begin{proposition}
\label{prop:expretdriftisomis}
    Let $z>0$ and assume $(x_i,w_i)'$ has independent and identically distributed entries with unit variance and finite moments of order $4+a$, for some $a > 0$. Then, 
    \begin{equation}
        \mathbb{E}\left[ r^{(s)}_{t+1}(z) \big| X  \right] \xrightarrow[\substack{\mathstrut n,p \rightarrow \infty \\ p/n \rightarrow c\phi }]{\mathds{P}} \quad f(z;c\phi) \langle \beta_{\text{is}}, \beta_{\text{oos}} \rangle =: \mathcal{E}^{(d)}(z;c\phi), 
        \label{eq:prop3}
    \end{equation}
    where $f(z;c\phi)$ and $f(c\phi):=\lim_{z \to 0} f(z;c\phi)$ are defined in Equations \eqref{eq:f_isotropic} and \eqref{eq:isotropicridgeless} in the Appendix.\footnote{Propositions \ref{prop:nonasyboundsdrift} and \ref{prop:nonasyboundsdriftridgeless} in the Appendix provide non-asymptotic bounds (under stronger conditions) for the approximation of the expected return. However, these bounds are additive in the errors and, therefore, not relative to the scale of the return. Hence, given the typical order of magnitude of financial returns, it would make sense to establish multiplicative bounds similar to those developed by \cite{cheng2022dimension}, \cite{misiakiewicz2024non} or \cite{defilippis2024dimension} for the approximation of the test error of different types of ridge regression. We leave this avenue open for further research.}
In particular, if $\| \beta_{is} \|^2 = \| \beta_{oos} \|^2$, then
    \begin{equation*}
        \mathbb{E}\left[ r^{(s)}_{t+1}(z) \big| X  \right] \xrightarrow[\substack{\mathstrut n,p \rightarrow \infty \\ p/n \rightarrow c\phi }]{\mathds{P}} \quad \mathcal{E}^{(d)}(z;c\phi) = \mathcal{E}(z;c\phi) - \frac{1}{2} f(z;c\phi) \| \beta_{\text{is}} - \beta_{\text{oos}} \|^2,
    \end{equation*}
Furthermore, $z \mapsto f(z;c\phi) \langle \beta_{\text{is}}, \beta_{\text{oos}} \rangle$ is monotone decreasing (resp. increasing) in $z$ when \\ $\langle \beta_{\text{is}}, \beta_{\text{oos}} \rangle \geq 0$ (resp. $\langle \beta_{\text{is}}, \beta_{\text{oos}} \rangle \leq 0$).    
\end{proposition}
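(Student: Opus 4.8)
The plan is to pass to the limit in the non-asymptotic identity of Proposition~\ref{prop:expretmis}, using standard random-matrix deterministic equivalents and exploiting the simplifications of the isotropic regime. First I would observe that, since $(x_i,w_i)'$ has i.i.d.\ unit-variance entries, $\Sigma=I_{p+q}$, so $\Sigma_x=I_p$, $\Sigma_{xw}=0$, and $W$ is independent of $X$ with mean zero, whence $\E[\hat\Sigma_{xw}\mid X]=n^{-1}X'\E[W\mid X]=0$. Substituting this into Proposition~\ref{prop:expretmis} makes the cross-term vanish, $\mathcal{M}(z)=0$, leaving $\E[r^{(s)}_{t+1}(z)\mid X]=\beta_{oos}'(zI+\hat\Sigma_x)^{-1}(\hat\Sigma_x\beta_{is}+n^{-1}X'\xi)$ with $\hat\Sigma_x=n^{-1}X'X$. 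Next, since $\xi$ is independent of $X$ and centered with unit-variance entries ($\sigma^2=1$), the term $\beta_{oos}'(zI+\hat\Sigma_x)^{-1}n^{-1}X'\xi$ has conditional mean $0$ and conditional variance $n^{-1}\beta_{oos}'(zI+\hat\Sigma_x)^{-1}\hat\Sigma_x(zI+\hat\Sigma_x)^{-1}\beta_{oos}\le (4zn)^{-1}\|\beta_{oos}\|^2\to 0$ (using $\lambda/(z+\lambda)^2\le 1/(4z)$ for $\lambda\ge 0$), so by Chebyshev's inequality this term tends to $0$ in probability.

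It then remains to handle $\beta_{oos}'(zI+\hat\Sigma_x)^{-1}\hat\Sigma_x\beta_{is}=\langle\beta_{is},\beta_{oos}\rangle-z\,\beta_{oos}'(zI+\hat\Sigma_x)^{-1}\beta_{is}$. Because the population covariance is the identity, the resolvent of $\hat\Sigma_x$ admits the scalar deterministic equivalent $\beta_{oos}'(zI+\hat\Sigma_x)^{-1}\beta_{is}-v(z;c\phi)\langle\beta_{is},\beta_{oos}\rangle\xrightarrow{\PP}0$ in the bilinear-form sense for bounded vectors, where $v(z;c\phi)=\lim_{n,p} p^{-1}\tr(zI+\hat\Sigma_x)^{-1}$ is the Stieltjes transform of the Marchenko--Pastur law $\mu_{c\phi}$ with aspect ratio $\lim p/n=c\phi$ (recall $p/n=(p/(p+q))\cdot((p+q)/n)$); this holds under the assumed finite $(4+a)$-th moment. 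Putting $f(z;c\phi):=1-z\,v(z;c\phi)=\int\lambda(\lambda+z)^{-1}\,d\mu_{c\phi}(\lambda)$ — which matches Equation~\eqref{eq:f_isotropic} — and combining with the previous paragraph yields $\E[r^{(s)}_{t+1}(z)\mid X]\xrightarrow{\PP}f(z;c\phi)\langle\beta_{is},\beta_{oos}\rangle$; the ridgeless value $f(c\phi)=\lim_{z\downarrow 0}f(z;c\phi)$ then follows by monotone convergence.

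The two final assertions are elementary consequences. For the ``in particular'' part, polarization gives $\langle\beta_{is},\beta_{oos}\rangle=\tfrac12(\|\beta_{is}\|^2+\|\beta_{oos}\|^2-\|\beta_{is}-\beta_{oos}\|^2)$, which under $\|\beta_{is}\|^2=\|\beta_{oos}\|^2$ equals $\|\beta_{is}\|^2-\tfrac12\|\beta_{is}-\beta_{oos}\|^2$; since the no-drift specialization $\beta_{is}=\beta_{oos}$ of the main limit is precisely $\mathcal{E}(z;c\phi)=f(z;c\phi)\|\beta_{is}\|^2$ (Proposition~3 of \cite{kelly2023virtue}; cf.\ Equation~\eqref{eq:zd_return}), the stated identity follows. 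For monotonicity, differentiating under the integral sign — legitimate since $\lambda/(\lambda+z)^2\le 1/(4z)$ and $\mu_{c\phi}$ charges $(0,\infty)$ — gives $\partial_z f(z;c\phi)=-\int\lambda(\lambda+z)^{-2}\,d\mu_{c\phi}(\lambda)<0$ for $z>0$, so $f(\cdot;c\phi)$ is strictly decreasing; multiplying by the constant $\langle\beta_{is},\beta_{oos}\rangle$ makes $z\mapsto f(z;c\phi)\langle\beta_{is},\beta_{oos}\rangle$ decreasing (resp.\ increasing) when $\langle\beta_{is},\beta_{oos}\rangle\ge 0$ (resp.\ $\le 0$).

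The only genuinely technical point is the bilinear-form deterministic equivalent for the resolvent under a mere finite $(4+a)$ moment; this, however, is classical (anisotropic Marchenko--Pastur / local-law estimates, which are already the workhorse behind the well-specified results and \cite{hastie2022surprises}), so the effort is mostly in quoting it at the right level of generality. Everything else — the collapse of $\mathcal{M}(z)$, the negligibility of the noise term, polarization, and the sign of $\partial_z f$ — is routine.
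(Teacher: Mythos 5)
Your proposal is correct and follows essentially the same route as the paper: reduce to the well-specified isotropic case by noting $\Sigma_{xw}=0$ and $\E[\hat\Sigma_{xw}\mid X]=0$ so that $\mathcal{M}(z)=0$, kill the noise term $n^{-1}\beta_{oos}'(zI+\hat\Sigma_x)^{-1}X'\xi$ by a second-moment/Chebyshev argument, apply the resolvent deterministic equivalent to $-z\beta_{oos}'(zI+\hat\Sigma_x)^{-1}\beta_{is}$ (the paper carries this out via the Knowles--Yin anisotropic local law plus truncation in Lemmas \ref{lem:deterministic_equivalent} and \ref{lem:asympcv}, which is exactly the ``classical'' input you defer to), and finish with polarization and the sign of $\partial_z f$. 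Your two local shortcuts --- bounding the conditional noise variance by $\|\beta_{oos}\|^2/(4zn)$ via $\lambda/(z+\lambda)^2\le 1/(4z)$ instead of the paper's operator-norm/Bai--Yin route, and differentiating $f(z;c\phi)=\int\lambda(\lambda+z)^{-1}d\mu_{c\phi}$ directly instead of implicitly differentiating the fixed-point equation for $m$ --- are both valid and arguably cleaner, but do not change the architecture of the argument.
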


This proposition, the proof of which is located in Section \ref{subsub:misspecifiedisotropic}, calls for several remarks. First, the above expression generalizes the results of \cite{kelly2023virtue}. Indeed, the above formulation allows to recover Figure 5 in their theoretical section when setting $c=10$ and $\beta = \beta_{\text{is}} = \beta_{\text{oos}}$ with $\beta_i = \beta_j$ for any $i,j \in \{1, \dots, p \}$ and such that $\| \beta \|^2 = 0.2$ when $q=0$. The pattern exhibited in \cite{kelly2023virtue} is a special case when the scalar product $\langle \beta_{\text{is}}, \beta_{\text{oos}} \rangle $ is fixed to a particular value.

The shape in the r.h.s. of \eqref{eq:prop3} is a multiple of the scalar product $\langle \beta_{\text{is}}, \beta_{\text{oos}} \rangle $. Simply put, the expected performance is proportional to the alignment between the in-sample loadings and the out-of-sample ones. Given the fact that the mean of the estimated loadings is very close to zero (because of the Random Fourier transform), the scalar product can also be interpreted as the sample covariance between $\beta_{\text{is}}$ and $\beta_{\text{oos}}$. In particular, if the two vectors are orthogonal (with null correlation), then the average return is simply zero. This makes sense, as it is in this case impossible to extract any meaningful information from the training set.

The proportional form in \eqref{eq:prop3} is in fact very general and can also be obtained with a DGP that includes latent features, as in \cite{hastie2020surprises}. The corresponding results and proofs are postponed to Appendix \ref{subsub:misspecifiedlatent}.

\begin{remark}
The limiting expression in (\ref{eq:prop3}) can be adapted to any specific form of $\beta_{\text{oos}}$. For instance, let $\beta_{\text{oos}}$ be a linear transformation of $\beta_{\text{is}}$, i.e. $\beta_{\text{oos}} = W \beta_{\text{is}}$ with $W \in \R^{p \times p}$ being a fixed (deterministic) matrix. If we denote by $(\xi_k)_{1 \leq k \leq p}$ the eigenvalues of $W$ and $(v_i)_{1 \leq k \leq p}$ their associated eigenvectors, then the (asymptotic) expected return of the strategy converges to
\begin{align*}
    \mathcal{E}^{(d)}(z;c\phi) = h(z;c\phi,\beta_{\text{is}},W) \|\beta_{\text{is}}\|^2,
\end{align*}
where $h(z;c\phi,\beta_{\text{is}},W) := f(z;c\phi) \sum_{k=1}^p \xi_k \text{cosim}(\beta_{\text{is}}, v_k)^2$ and $ \text{cosim}(\beta_{\text{is}}, v_k) := \langle \beta_{is}, v_k \rangle / \|\beta_{\text{is}}\|$ is the cosine similarity between $\beta_{\text{is}}$ and $v_k$, which lies between $-1$ and $1$. In the Appendix, we also show that $f \in [0,1]$, thus it is straightforward that $\mathcal{E}^{(d)}(z;c\phi) \geq 0$ if $W$ is positive semidefinite. Otherwise, if $W$ is not positive semidefinite, the picture is more nuanced and the expected return can be negative.
\end{remark}

More importantly, when the covariates are isotropic, the misspecification term $\mathcal{M}(z)$ in \eqref{eq:oosexpretnonasymp} vanishes, and the out-of-sample expected return behaves as if the model was well-specified with a true complexity equal to $c\phi = p/n$. This result leads us to draw the following remark which establishes the conditions under which the presence of posterior drift deteriorates the expected returns of the market-timing strategy. 

\begin{remark}
\label{rem:condmis}
    Let $z>0$ and assume that $(x_i,w_i)'$ has independent and identically distributed entries with unit variance and finite moments of order $4+a$, for some $a > 0$. Then,

    \begin{equation*}
        \mathcal{E}^{(d)}(z;c\phi) < \mathcal{E}(z;c\phi)
    \end{equation*}

    if and only if 

    \begin{equation}
    \label{eq:condprod}
        \langle (\beta_{\text{oos}} - \beta_{\text{is}}), \beta_{\text{is}} \rangle < 0,
    \end{equation}

    or alternatively, if and only if,

    \begin{equation*}
    %\label{eq:condnorm}
        \| \beta_{\text{oos}} \|^2 - \| \beta_{\text{is}} \|^2 < \| \beta_{\text{is}} - \beta_{\text{oos}} \|^2.
    \end{equation*}
\end{remark}

Consequently, the loss of financial performance depends on the scalar product between the vector of \textit{in-sample} regression coefficients, $\beta_{\text{is}}$, and a \textit{residual} vector, $\beta_{\text{oos}} - \beta_{\text{is}}$, arising from the change in DGP. In particular, when the amount of signal contained in the first $p$ features is preserved or is less important \textit{out-of-sample} than \textit{in-sample}, i.e. $\| \beta_{\text{oos}} \|^2 \leq \| \beta_{\text{is}} \|^2$, Remark \ref{rem:condmis} implies that the presence of posterior drift not only increases the out-of-sample prediction risk (as documented in our first proposition), but, more interestingly from a financial standpoint, it also systematically deteriorates the returns of market timing strategies based on these predictions.

For the sake of completeness, we also provide the second (demeaned) moment of the strategy's return in the following proposition. 

\begin{proposition}
\label{prop:volstratret_iid}
    Let $z>0$ and Assumptions (\ref{ass:iddistcov})-(\ref{ass:complexity}) and (\ref{ass:opbound})-(\ref{ass:esvd_diag_omega_oos}) hold. Denote $m_4 = \E[z_i^4]$. Then, 
    \begin{align}
        \mathbb{V}\left[ r^{(s)}_{t+1}(z) \Big| X  \right] &\xrightarrow[\substack{\mathstrut n,p,q \rightarrow \infty \\ p/n \rightarrow c\phi }]{\mathds{P}}  (1+S_{oos}) \mathcal{L} + ( \hspace{0.4em} \underbrace{f(z;c\phi) \langle \beta_{oos}, \beta_{is} \rangle}_{=\mathcal{E}^{(d)}(z;c\phi)} \hspace{0.4em} )^2 + (m_4-3) f(z;c\phi) \|\beta_{is} \circ \beta_{oos}\|^2  \label{eq:var_prop5}
    \end{align}

    where $S_{oos} = \|\beta_{oos}\|^2+\|\theta_{oos}\|^2$ is the amount of signal contained in the out-of-sample data and $\mathcal{L}$ is the (limiting) leverage of the strategy, that is, $\E[|\hat{\pi}_t(z)|^2 |X] \to \mathcal{L}$ in probability when $n,p,q \to \infty$ such that $p/n \to c\phi$. The latter quantity is defined in the Appendix in Eq. \eqref{eq:leverage_ridge_iid} for ridge regularization and in Eq. \eqref{eq:leverage_ridgeless_iid} when the penalty level is vanishing. Recall that $f$ is defined in Eq. \eqref{eq:f_isotropic} for the ridge scenario and in Eq. \eqref{eq:isotropicridgeless} for the ridgeless case.
\end{proposition}

% \textcolor{red}{Here again, the volatility of the strategy is increasing in the leverage generated by the strategy. On the contrary though, it is decreasing in the level of posterior drift as embodied in the second and fourth order terms $\langle \beta_{oos}, \beta_{is} \rangle^2$ and $\|\beta_{is} \circ \beta_{oos}\|^2$. This observation might be interpreted as follows. For a given level of leverage, the magnitude of the returns will be less significant if the DGP learned during the training phase differs from that from which the out-of-sample returns are drawn. The only scenario in which the level of posterior drift increases the volatility of the strategy's return is when the kurtosis of the latent variables $z$ is inferior to two [INTERPRETATION?]. In this situation, the second moment of the portfolio is increasing in the forth order term $\|\beta_{is} \circ \beta_{oos}\|^2$.}

Here again, the volatility of the strategy is increasing in the leverage generated by the strategy. Interestingly, it is also increasing in the magnitude of the strategy's expected return $\mathcal{E}^{(d)}(z;c\phi)$. Said differently, in the i.i.d. scenario, the risk of the strategy is decreasing in the level of posterior drift, as embodied by the term $\langle \beta_{oos}, \beta_{is} \rangle^2$. Finally, the effect of the third term is harder to expound. For fixed parameters vectors $\beta_{is}$ and $\beta_{oos}$, the returns are all the more volatile when the kurtosis of the latent variables $z$ increases. On the other hand, when the kurtosis level $m_4$ is fixed to a level superior to $2$, the variance of the returns is increasing in another proxy for posterior drift: $\|\beta_{is} \circ \beta_{oos}\|^2$.\footnote{Indeed, note that this term can formulated as 
\begin{align*}
    \|\beta_{is} \circ \beta_{oos}\|^2 = \sum_i (\beta_{is})_i^2 (\beta_{oos})_i^2 = \langle \beta_{oos}, \beta_{is} \rangle^2 - \sum_{i\neq j} (\beta_{is})_i^2 (\beta_{oos})_j^2.
\end{align*}
In plain words, it is a quantification of the (squared) posterior drift, to which the interactions between the \textit{in-sample} and \textit{out-of-sample} coefficients corresponding to different directions are subtracted. In fact, because of the sphericity of the observed features covariance matrix ($\Sigma_x \propto I_p$), the scalar product of each eigenvector of $\Sigma_x$ with the parameter vectors depends \underline{only} on the coefficient of the considered direction. Recently, the effects of different geometrical settings on the learning task has thoroughly been explored (see e.g., \cite{wu2020optimal}, \cite{hastie2020surprises}, \cite{richards2021asymptotics}).} Quite surprisingly though, \textit{thin-tailedness} of the latent variables distribution ($m_4 < 3$) generates a surprising effect that decreases the volatility of the strategy when $\|\beta_{is} \circ \beta_{oos}\|^2$ increases.

Finally, by the continuous mapping theorem, we obtain an analytic formula for the Sharpe ratio of the market timing strategy in the large $n,p,q$ limit. That is, 
\begin{equation}
\begin{split}
    SR(z;c\phi) :&= \frac{f(z;c\phi) \langle \beta_{\text{is}}, \beta_{\text{oos}} \rangle}{\sqrt{(1+S_{oos}) \mathcal{L} + (f(z;c\phi) \langle \beta_{oos}, \beta_{is} \rangle)^2 + (m_4-3) f(z;c\phi) \|\beta_{is} \circ \beta_{oos}\|^2  }} \\
    &= \text{sign}(\langle \beta_{\text{is}}, \beta_{\text{oos}} \rangle)\Bigg[ 1 + (1+S_{oos})\frac{\mathcal{L}}{(\mathcal{E}^{(d)}(z;c\phi))^2} + \frac{m_4-3}{f(z;c\phi)} \cdot \Bigg(\frac{\|\beta_{is} \circ \beta_{oos}\|}{\langle \beta_{oos}, \beta_{is} \rangle} \Bigg)^2\Bigg]^{-1/2}.
\end{split}
\label{eq:sharpe_ratio_iid}
\end{equation}

A crucial quantity for the risk-adjusted return is the ratio between the asymptotic leverage and the squared expected return of the strategy. In particular, the higher the leverage relatively to the magnitude of the expected return, the weaker the Sharpe ratio of the strategy. Intuitively, the strategy will be all the less profitable on a risk-adjusted basis when the size of the position induced by the strategy is important compared to the potential gains.

The analytical formula for the Sharpe ratio allows us to delve into an essential matter: to what extent does the posterior drift affect the Sharpe ratio of the strategy along the complexity path? As evidenced in Figure \ref{fig:sr_signal}, the answer depends on the amount of signal encapsulated in the data. Indeed, when the \textit{in-sample} signal is strong (right graph), the Sharpe ratio is only mildly impacted by a changing DGP and the level of Sharpe ratio is reasonable (approximately that of the S\&P500 over long horizons).

\begin{figure}[!h]
\centering
    \includegraphics[width=16cm]{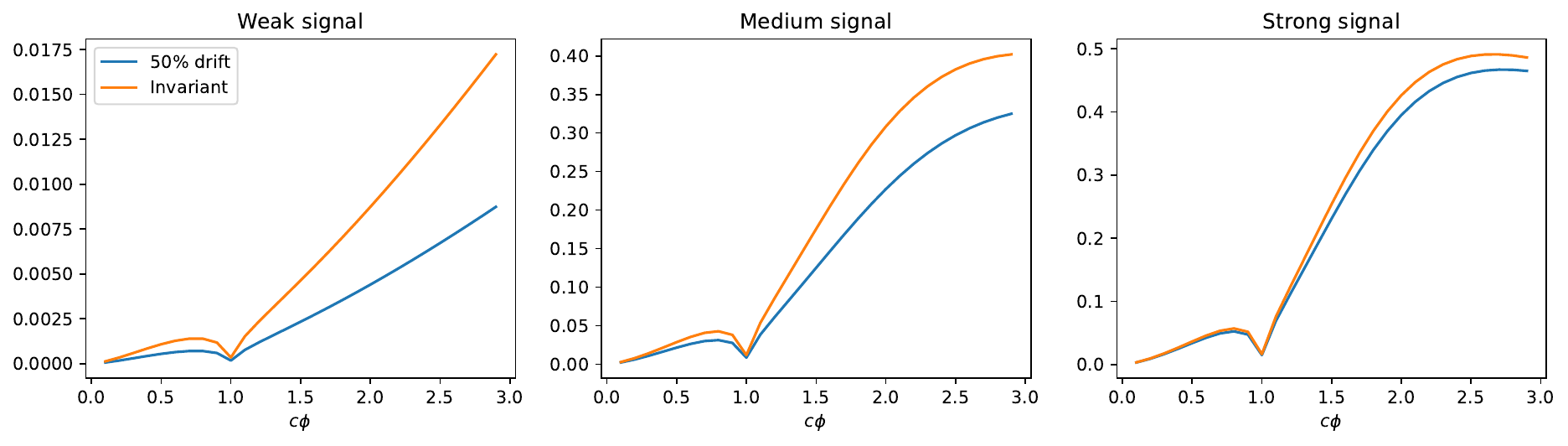}
\caption{\textbf{Sharpe ratio under different signal levels.} \small The model's true complexity is $c=3$ ($p+q = 3000$, $n=1000$). The regularization parameter of the ridge estimator is set to $z=10^{-5}$. We plot the theoretical Sharpe ratio of the strategy as per Eq. \eqref{eq:sharpe_ratio_iid} (solid line). From the leftmost to the rightmost graph, the \textit{in-sample} signal level ($\|\beta_{is}\|^2 + \|\theta_{is}\|^2$) amounts to $0.2$, $2$ and $5$, respectively.  \label{fig:sr_signal}}
\end{figure}

However, the gap increases and the Sharpe ratios shrink as the signal weakens (middle and left graphs). In a financial context, this observation is worrisome. Indeed, the weakness of the signal contained in financial returns is a stylized fact (see, e.g., \cite{shen2024can}). For instance, empirical $R^2$ values for (monthly) return forecasts based on macroeconomic predictors are most often below $5 \%$ (\cite{welch2008comprehensive}, \cite{ferreira2011forecasting}, \cite{nagel2025seemingly}). In addition, the loss in performance due to posterior drift is all the more pronounced in the overparametrized regime. We however note that the double ascent phenomenon still holds in the presence of posterior drift in the data.

In the next section, we verify numerically the conformity of the theoretical expressions for the moments of the market-timing strategy. 

\section{Simulations}
\label{sec:numerical}

The proofs of Propositions \ref{prop:expretmis} and \ref{prop:expretdriftisomis} are somewhat cumbersome. This section is dedicated to the verification of these theoretical results via Monte-Carlo simulations. To this end, we generate data as per the DGP of Equation \eqref{eq:dgpmis}. Let $p+q=300$, we draw $n=100$ samples of covariates from a $300$-dimensional standard multivariate normal distribution, i.e. $(x_i,w_i) \sim N(0,\Sigma)$ with $\Sigma = I_{300}$. \footnote{Extended numerical experiments are carried out under general structure for features' correlation (see Sections \ref{subsec:numanalysis_correlated_equidist} and \ref{subsec:numanalysis_correlated_conc} in the Appendix). All in all, simulated and theoretical moments are in line with stronger deviations observed when the parameter vector is concentrated in the top coefficients.} The error term is a white Gaussian noise, viz. $e_i \sim N(0,1)$. 

The \textit{in-sample} and \textit{out-of-sample} true parameters vectors for the observed data, $\beta_{\text{is}}$ and $\beta_{\text{oos}}$ respectively, are equally distributed on the eigenvectors of $\Sigma$, namely, on the canonical basis vectors. $\beta_{is}$ has unit signal, that is $\| \beta_{\text{is}} \|^2 = 1$, while we consider a sequence of $(\beta_{\text{oos},k})_k$ with decreasing signal. In particular, $\beta_{\text{is}} = p^{-1/2}(1,1,\dots,1,1)'$ and $\beta_{\text{oos},k} = k p^{-1/2}(1,1,\dots,1,1)'$ with $k \in \{ 0.2, 0.4, 0.6, 0.8, 1\}$, which yields $\| \beta_{\text{oos},k} \|^2 = k^2 \leq 1$ and $\langle \beta_{\text{is}}, \beta_{\text{oos}} \rangle = k$.

We analogously generate the true parameters vectors for the unobserved data, $\theta_{\text{is}}$ and $\theta_{\text{oos}}$, setting only a different number of coefficients, viz. $q$ instead of $p$. Thus, the signal of the model \textit{in-sample} is always equal to $2$ while it is set to $2k^2$ \textit{out-of-sample}, for $k \in \{ 0.2, 0.4, 0.6, 0.8, 1\}$. Finally, we simulate the performance of two market-timing strategies based on ridge estimators with shrinkage intensity $z \in \{0.01,0.1\}$.  

\begin{figure}[!h]
\centering
    \includegraphics[width=16cm]{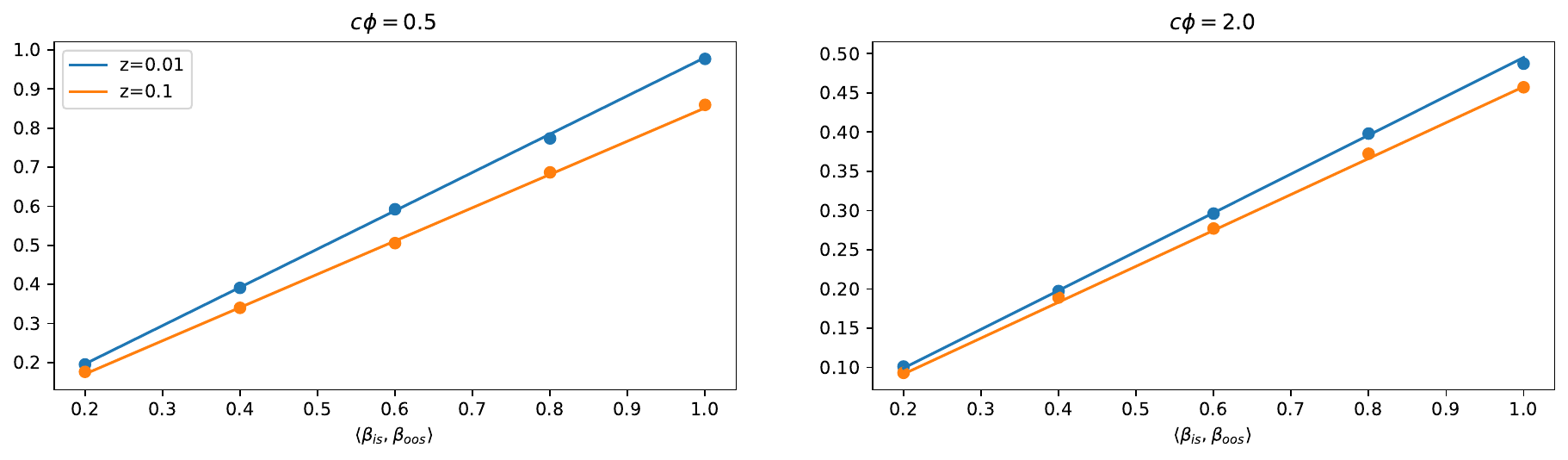}
\caption{\textbf{Simulated and theoretical returns of the strategy under mild regularization.} \small The model's true complexity is $c=3$ ($p+q = 300$, $n=100$). $z$ is the regularization parameter of the ridge estimator. We plot the theoretical returns of the strategy as per Proposition \ref{prop:expretdriftisomis} (solid line). Simulated returns (dots) are generated according to the process described in Section \ref{sec:numerical} and are averaged over 100,000 random draws. \label{fig:retmildprop}}
\end{figure}

As expected from Proposition \ref{prop:expretdriftisomis}, the strategy's expected return is linear in the scalar product of $\beta_{\text{is}}$ and $\beta_{\text{oos}}$. This is verified in Figure \ref{fig:retmildprop}. This pattern arises whether the model is misspecified and underparameterized (Fig. \ref{fig:retmildprop}, left), misspecified and overparameterized (Fig. \ref{fig:retmildprop}, center) or well-specified (Fig. \ref{fig:retmildprop}, right). Note that the rightmost point on each plot corresponds to the expected return in the no-drift scenario. 

We defer the comparison of simulated and theoretical expected moments of the market-timing strategy under strong regularization ($z \in \{10,100\}$) to Appendix \ref{subsec:simretstrongequi}. Once again, these experiments confirm the claims stated in Propositions \ref{prop:expretdriftisomis} and \ref{prop:volstratret_iid}. 

\begin{figure}[!h]
\centering
    \includegraphics[width=16cm]{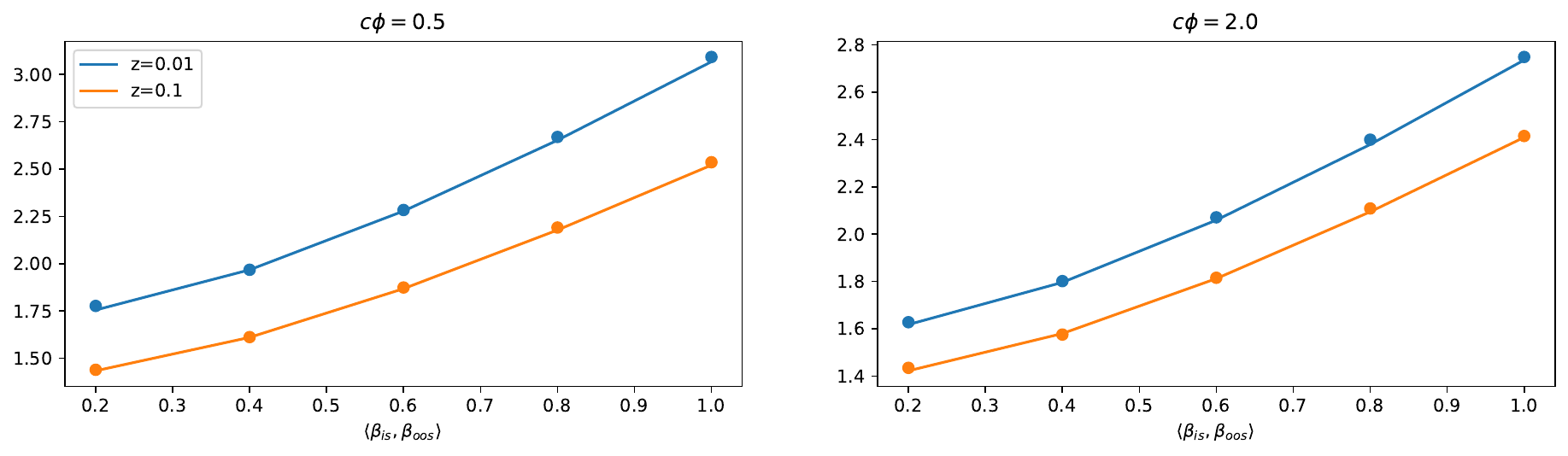}
\caption{\textbf{Simulated and theoretical volatility of the portfolio's return under mild regularization.} \small The model's true complexity is $c=3$ ($p+q = 300$, $n=100$). $z$ is the regularization parameter of the ridge estimator. We plot the theoretical volatility of the strategy's return as per Proposition \ref{prop:volstratret_iid} (solid line). Simulated returns (dots) are generated according to the process described in Section \ref{sec:numerical} and the (sample) standard deviation is computed over 100,000 random draws. \label{fig:volmildprop}}
\end{figure}

Figure \ref{fig:volmildprop} implies that the volatility of the portfolio's return is increasing and slightly convex in the scalar product of the in-sample and out-of-sample target coefficients vector. Furthermore, returns are less volatile when the size of the position is guided by a more regularized ridge regression (i.e., a higher $z$). Indeed, a higher penalty level reduces the amplitude of forecasts and thus of positions, thereby dampening return dispersion.

\begin{figure}[!h]
\centering
    \includegraphics[width=16cm]{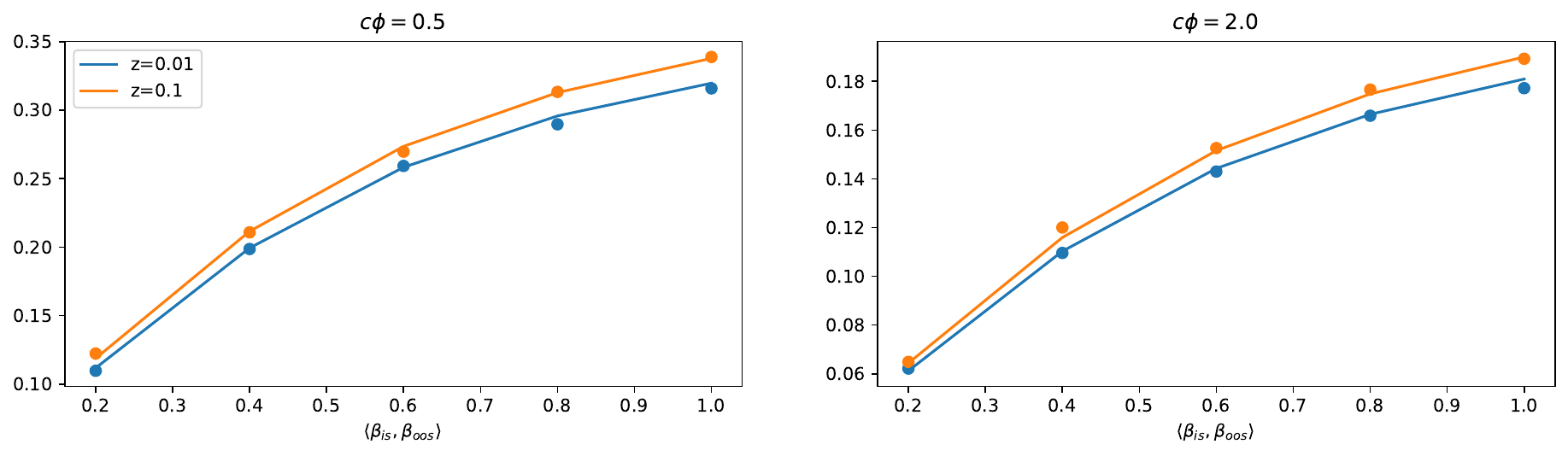}
\caption{\textbf{Simulated and theoretical risk-adjusted returns under mild regularization.} \small The model's true complexity is $c=3$ ($p+q = 300$, $n=100$). $z$ is the regularization parameter of the ridge estimator. We plot the theoretical Sharpe ratio of the strategy as per Eq. \eqref{eq:sharpe_ratio_iid} (solid line). Simulated returns (dots) are generated according to the process described in Section \ref{sec:numerical} and results are averaged over 100,000 random draws. \label{fig:srmildprop}}
\end{figure}

The simulated results confirm our theoretical predictions regarding the Sharpe ratio of the strategy. Higher risk-adjusted returns are obtained for stronger levels of penalization. This is contrary to the findings of \cite{kelly2023virtue} who find that min-norm interpolation yields a market-timing strategy that surpasses all the others in terms of Sharpe ratio. 

In Appendix \ref{subsec:simretconcentrated}, we perform the same analysis with true parameter vectors concentrated in some directions of $\Sigma$. As evidenced in Figures \ref{fig:retmildconc} and \ref{fig:retstrongconc}, the average return of the strategy is again closely aligned with the theory formulated in Eq. \eqref{eq:prop3}.

\section{Empirical evidence}
\label{sec:evidence}

\subsection{Data}
\label{sec:data}

We rely on the same data as \cite{kelly2023virtue}, which comes from \cite{goyal2023comprehensive} and is available on \href{https://sites.google.com/view/agoyal145\string#Amit Goyal's website}{Amit Goyal's website}. We use the CRSP value-weighted returns minus the T-bill rate as dependent variable. The predictors consist of the 14 macro-economic indicators used in \cite{kelly2023virtue}, to which we add lagged returns. 

In Section \ref{sec:betas}, we will compute betas from predictive regressions:
\begin{equation}
r_{t+1}= a +\sum_{k=1}^p\beta^{(k)}x_t^{(k)}+e_{t+1},
    \label{eq:predreg}
\end{equation}
where the $x_t^{(k)}$ are all $p$ predictors. We estimate $\hat{\beta}$ every month on rolling windows of 15 years (180 points), which is our reference horizon. This generates time-series of loadings $\hat{\beta}_t$ that characterize the predictive link between the predictors and the returns from the CRSP portfolio $r_{t+1}$. These series are plotted in Figure \ref{fig:betasTS} below and therein we have included lagged returns up to order 6. 

\begin{figure}[!h]
\centering
    \hspace*{-1cm}\includegraphics[width=16cm]{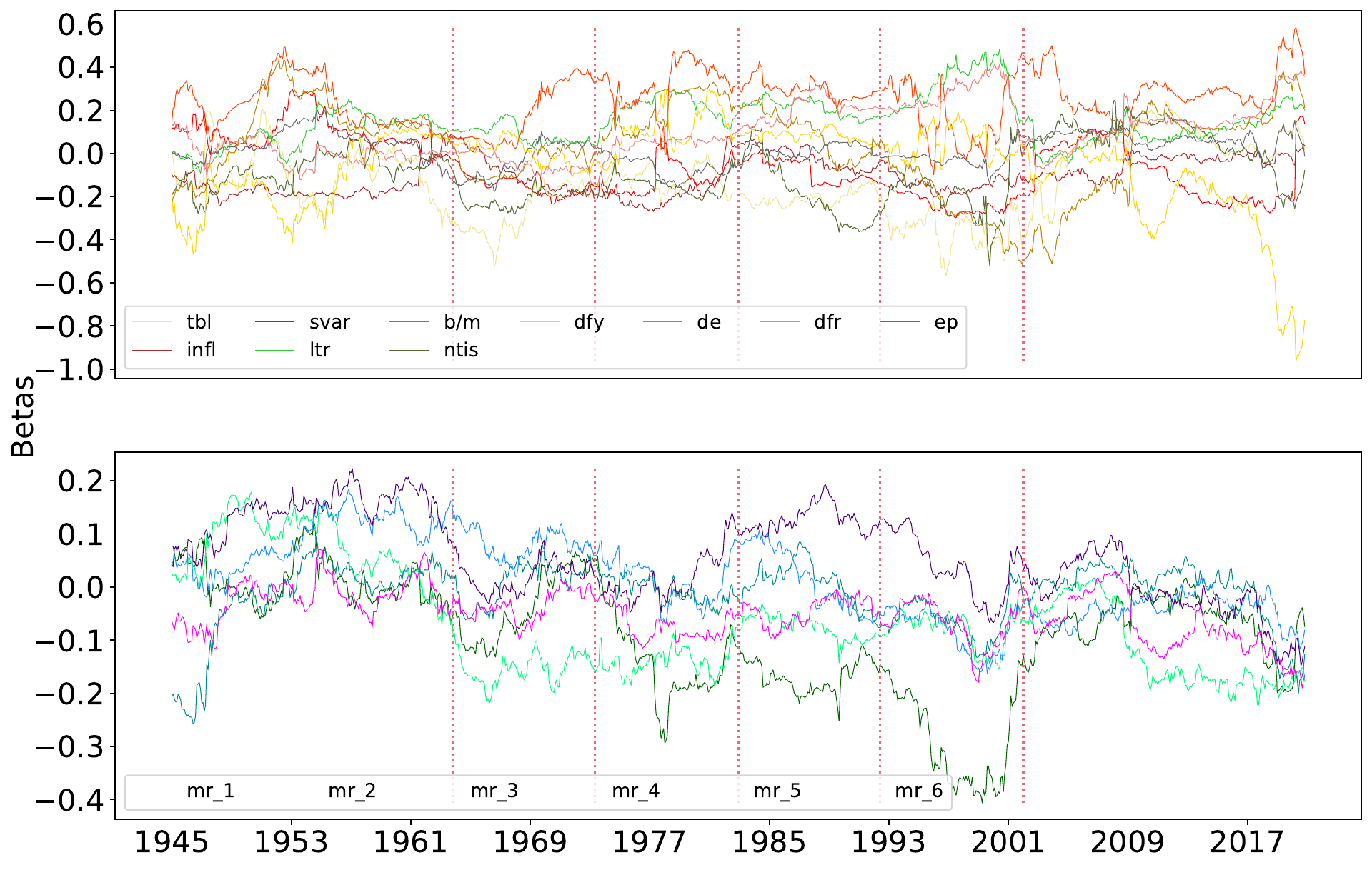}
\caption{\textbf{Time series of the betas.} \small Each month, we plot the coefficients of the linear time series regressions of the returns onto each predictor (the $\hat{\beta}^{(k)}$ in Equation \eqref{eq:predreg}). Estimations are performed on a fifteen-year rolling window. mr\_i denotes the $i$-month lagged marked return. Returns and predictors are standardized following the protocol of \cite{kelly2023virtue} (Section V). Change point estimates (see Section \ref{sec:betas}) are represented by red (vertical) dashed lines. \label{fig:betasTS}}
\end{figure}

\subsection{Time-varying betas}
\label{sec:betas}

There is considerable evidence supporting the time-varying nature of the coefficients in asset pricing models. Such effects have been documented on the performance of \textbf{predictive regressions} (\cite{dangl2012predictive}, \cite{coulombe2023time}, \cite{farmer2023pockets}), in \textbf{factor models} (\cite{ghysels1998stable}, \cite{bollerslev2023optimal}) and for \textbf{risk premia estimation} in particular (\cite{gagliardini2016time}, \cite{bakalli2023penalized}). With regard to return prediction with nonlinear models, we also refer to \cite{capponi2025nonstationarity} who show that model performance strongly fluctuates through time.

Empirically, this stylized fact can be revealed by change point detection. Such techniques seek to uncover sharp changes in the distribution of ordered observations (time series). They can take two forms, parametric and non-parametric. Parametric change point detection constrains the model to learn a particular functional form, i.e., observations between the change points are assumed to be drawn from a specific family of distribution (see \cite{Frick2014MultCPD}, \cite{roy2016cpdMarkov}, \cite{Enikeeva2019hdCpd}). This assumption is relaxed in the case of non-parametric change point detection and we refer to \cite{aminikhanghahi2017survey} for a complete survey on change point detection applied to time series. Recent contributions evaluate the dissimilarity of distributions using kernel distances (\cite{garreau2018consistent}, \cite{arlot2019kernel}) or rank (\cite{lung2015homogeneity}).

We choose to rely on the multivariate nonparametric multiple change point detection method recently proposed by \cite{londschien2023RfDetection} and that builds upon the use of classifiers. Notably, this will allow us to study betas in a joint manner: the entire time series structure of the coefficients is used to detect change points. Specifically, our analysis of time-variation is carried out with random forests via the \textit{changeforest} algorithm\footnote{The repository is available at \href{https://github.com/mlondschien/changeforest.git\string\#https://github.com/mlondschien/changeforest.git}{https://github.com/mlondschien/changeforest.git}} whose technicalities we briefly outline below.

The idea behind the algorithm is to determine recursive binary splits at the points maximizing the nonparametric classifier log-likelihood ratio. 
This is shown in Figure \ref{fig:changepoints}.\footnote{We stick to the default hyperparameters of the algorithm with two exceptions. First, we set the maximum tree depth to 2 since, by default, $p^{1/2} = 4$ features are selected for each tree. Second, we arbitrarily choose a minimal relative segment length of 0.125 so the results appear sparser. By setting a smaller minimum relative length, the algorithm detects an abundance of change points.} Therein, we have fed the time-series of the betas (the $\hat{\beta}_t^{(p)}$ from Figure \ref{fig:betasTS}) to \textit{changeforest}.\footnote{We exclude the variables "lty", "dp", "dy" and "tms" from the regressions as they introduced too much multicollinearity, yielding some estimates to be overly positive while others compensating with large negative values.} The top panel of the graph shows that the most relevant split occurs in December 1982. It is the one that maximizes the gain of the classifier when deciding between two regimes (two classes). The second horizontal panel shows the following two splits, in November 1963 and January 2002. Lastly, the bottom panel provides three new splits. Indeed, we see that the first and last periods are not split in two because the corresponding tests were not conclusive. This procedure tests if the split significantly improves the gain of the classifier. When the gain is too marginal, the algorithm stops, as for the pruning of simple decision trees. The final result clearly identifies several break points in the time-series of loadings, which justifies our assumption that changes occur between the training (in-sample) phase and the testing (out-of-sample) stage.

\begin{figure}[!h]
\centering
    \includegraphics[width=14cm]{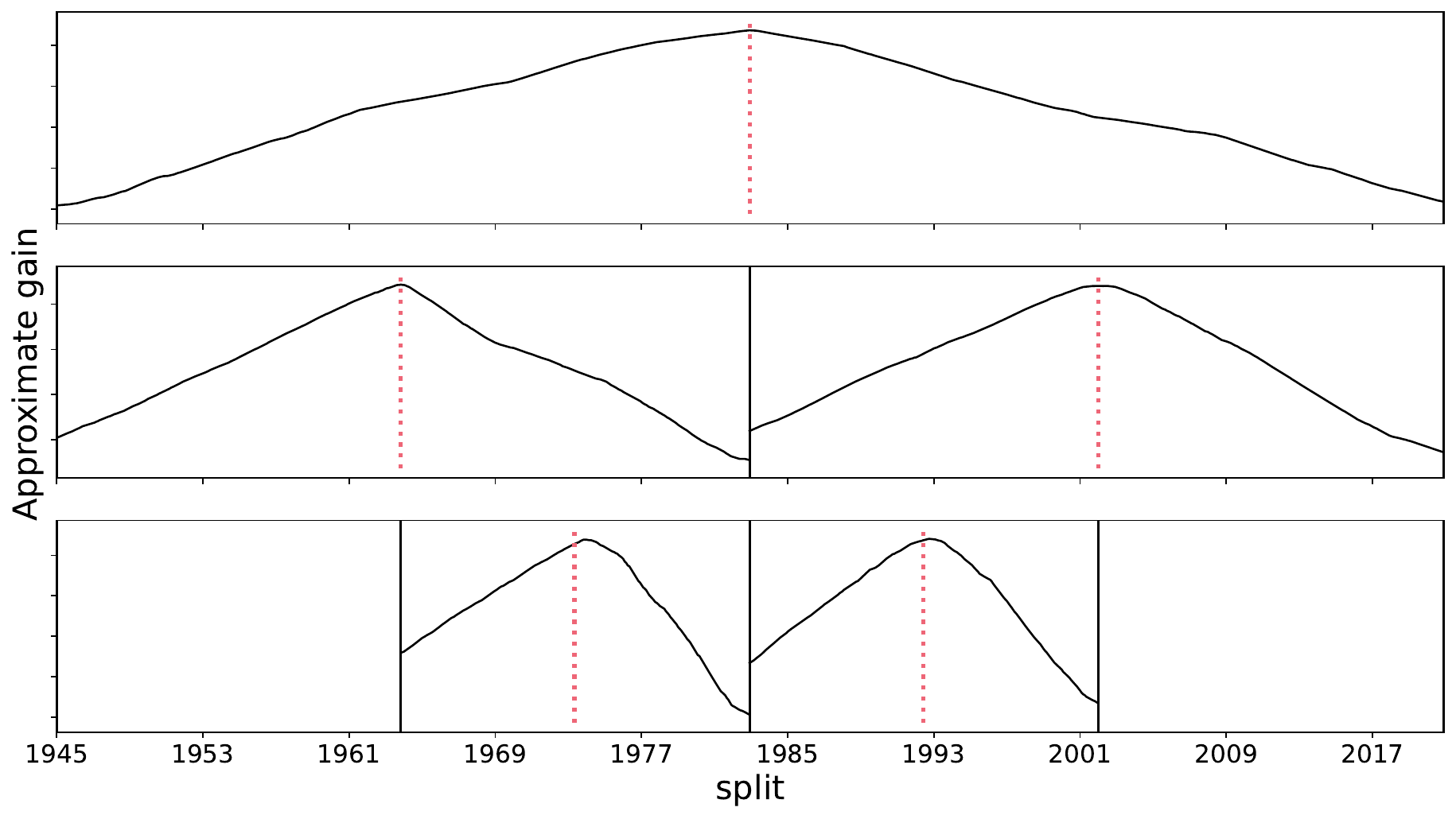}
\caption{\textbf{Change point detection.} \small We plot the approximate gain curves from the last step of the two-step search for the three binary segmentations obtained with the \textit{changeforest} algorithm (\cite{londschien2023RfDetection}). Change point estimates are marked with vertical red dots. \label{fig:changepoints}}
\end{figure}

\subsection{Protocol}
\label{sec:sensi}

We now closely follow the empirical protocol of \cite{kelly2023virtue} and we keep the notations therein. The idea is to model aggregate returns as a linear function of a large number of predictors which are created from a smaller set of macro-economic indicators, namely those analyzed in \cite{welch2008comprehensive}. To expand the predictor space, \cite{kelly2023virtue} resort to Random Fourier Features (RFFs). More precisely, if $G_t$ is the $15\times 1 $ vector of the original predictors, we will use
\begin{equation}
    S_{i,t}=\frac{1}{\sqrt{p}} \left[\sin(\gamma w_i'G_t), \cos(\gamma w_i'G_t) \right], \quad w_i \overset{d}{=}\mathcal{N}(O,I_{15})
    \label{eq:RFF}
\end{equation}
as our independent variables in the predictive regression $r_{t+1}=S_{i,t}'\beta_i+e_{t+1}$. Importantly, $i$ is the index of a random draw of the stochastic weights $w_i$, which are i.i.d. Because the predictors will be random, it is important to cancel the noise by averaging across many independent draws of $w_i$ - usually several hundreds. Moreover, we are interested in the overparametrized regime, hence all our results below pertain to the case $n=12$ and $p=600$, so that the complexity level is equal to $c=50$. In \cite{kelly2023virtue}, the benefits from complexity vanish theoretically for $c>10$ and empirically for $c>50$. This is why we henceforth work with a high and fixed level of complexity $c=50$. 

In Equation \eqref{eq:RFF}, an important parameter is $\gamma$, the bandwidth of the kernel, which determines the level of nonlinearity of the model induced by the RFFs. In \cite{kelly2023virtue}, the authors argue that their approach is insensitive to this choice, at least when $\gamma$ lies between 0.5 and 2. Our findings below paint a more nuanced picture. 

Each month, given a random draw $w_i$ and the features from Equation \eqref{eq:RFF}, we estimate the $\hat{\beta}$ as 
\begin{equation}
    \hat{\beta}^{RF}_i(z)=\left(zI + n^{-1}\sum_{t=1}^n S_{i,t} S_{i,t}'\right)^{-1} \frac{1}{n}\sum_{t=1}^n S_{i,t}R_{t+1}, 
\end{equation}
where $RF$ stands for "\textit{random features}" and $z>0$ is a regularization parameter which ensures that the inverse matrix is well defined when the number of predictor $p$ exceeds the sample size $n$. The weight assigned to the market portfolio is then simply this estimator defined above and the corresponding realized return is evaluated after one month. The procedure is then repeated until the end of the sample and monthly returns are averaged over alternative periods of time.

For comparison purposes, we also report the average returns of the strategy when the raw features, $G_t$, are used to compute the position, instead of the RFFs, $S_{i,t}$. The portfolio weight of this strategy is hence given by 
\begin{equation}
    \hat{\beta}^{L}(z)=\left(zI + n^{-1}\sum_{t=1}^n G_{t} G_{t}'\right)^{-1} \frac{1}{n}\sum_{t=1}^n G_{t}R_{t+1}, 
    \label{eq:linear}
\end{equation}
where the superscript $L$ stands for "\textit{linear}". In Figure \ref{fig:expret}, the corresponding returns are shown with horizontal dashed lines (they do not depend on $\gamma$).  

\subsection{Results: sensitivity to bandwidths and periods}

In Figure \ref{fig:expret}, we plot the average return obtained when using $\hat{\beta}^{RF}_i(z)$ as portfolio weight - averaged across 500 draws of $w_i$. We average returns over periods of 15 years for two reasons. First, because 15 years is already a long horizon for an investor and second because this allows to span several types of economic conditions: both periods of growth and slowdown. The values are reported for two levels of regularization: $z=0.01$ (left) and $z=100$ (right).

\begin{figure}[!h]
\centering
    \includegraphics[width=16cm]{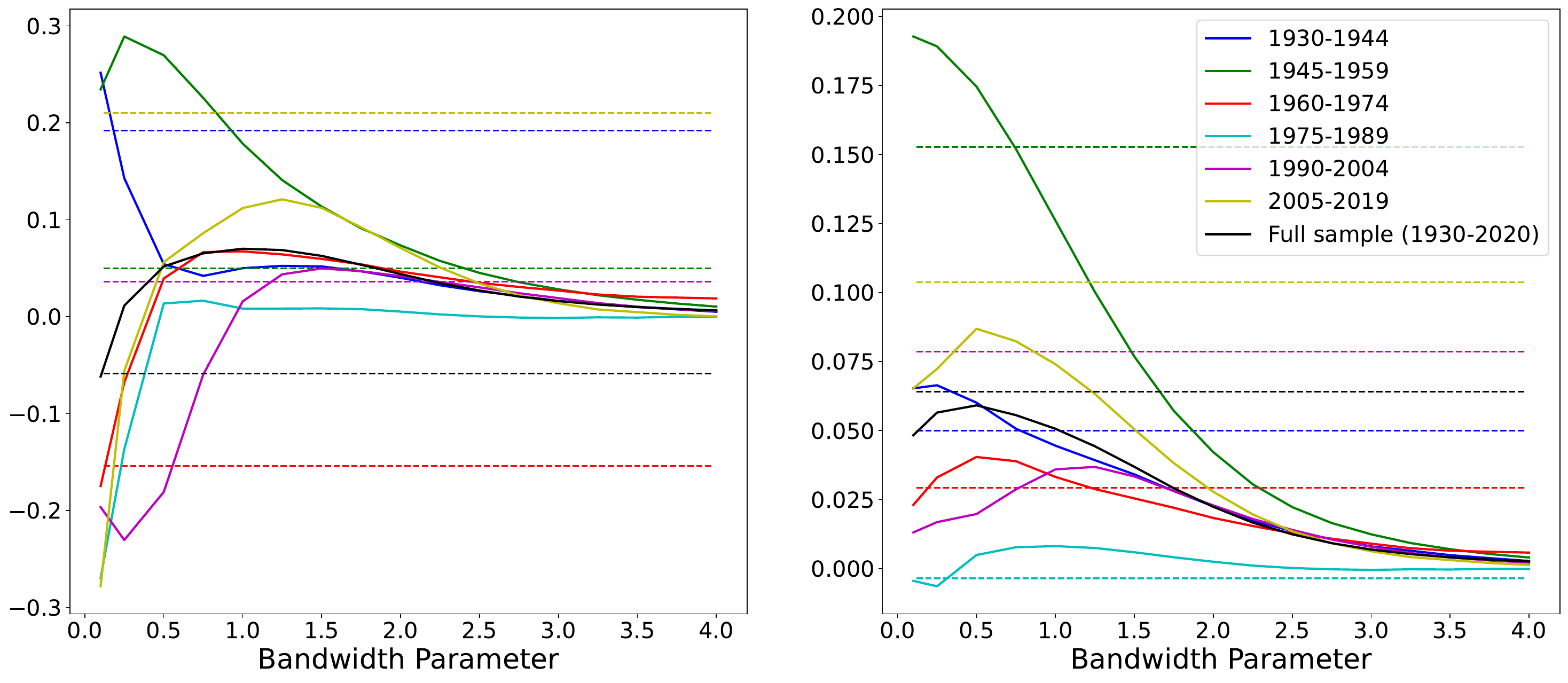}
\caption{\textbf{Returns' sensitivity to periods and bandwidth.} \small We plot average realized returns obtained on six sub-periods and on the full sample (solid lines) for $p=600$ RFFs, as a function of the bandwidth parameter of the kernel from the Random Fourier Features. The (dashed) horizontal lines display the return of the strategy when the raw features are used instead of RFFs. In this case, $p=15$. For the sake of visibility, we do not plot the return of the strategy when using raw features for the 1975-1989 period (-50\%). The left plot shows the case when the penalization parameter is $z=0.01$ (mild regularization), while the right one pertains to $z=100$ (strong regularization). Portfolios are averaged over 500 random draws.\label{fig:expret}}
\end{figure}

For a small regularization ($z=0.01$) and $\gamma = 2$, as in the original paper, the return for the period 2005-2019 is 7.24\% but it shrinks to 0.52\% for 1975-1989 - a reduction by a factor of 14. An analyst in 1974 would probably choose $\gamma=1$ (from the red curve with $z = 0.01$), and hope for a return of 5\%. But the performance in the following 15 years would prove very disappointing (less than 1\%: again, a significant contraction). 

For the sake of completeness, we also depict the sensitivity of the annualized Sharpe ratio in Figure \ref{fig:sharpe}. Over the full sample, we find a maximum value of 0.3, slightly lower than the roughly 0.4 reported by \cite{kelly2023virtue} in their Figure 8 (Panel A). These values are somewhat underwhelming if we put them in perspective with the 0.44 Sharpe ratio of the excess returns on the market index documented in \cite{pav2021sharpe} over the 1926-2020 period - from the \href{https://mba.tuck.dartmouth.edu/pages/faculty/ken.french/data_library.html\string#Ken French data library}{Ken French data library}. For the raw market portfolio, when the risk-free rate is not subtracted, the Sharpe ratio increases to 0.62. This is even before taking trading costs into account: holding the market portfolio entails minimum trading costs, which is not the case of the timing strategy presented in \cite{kelly2023virtue}. 

\begin{figure}[!h]
\centering
    \includegraphics[width=14cm]{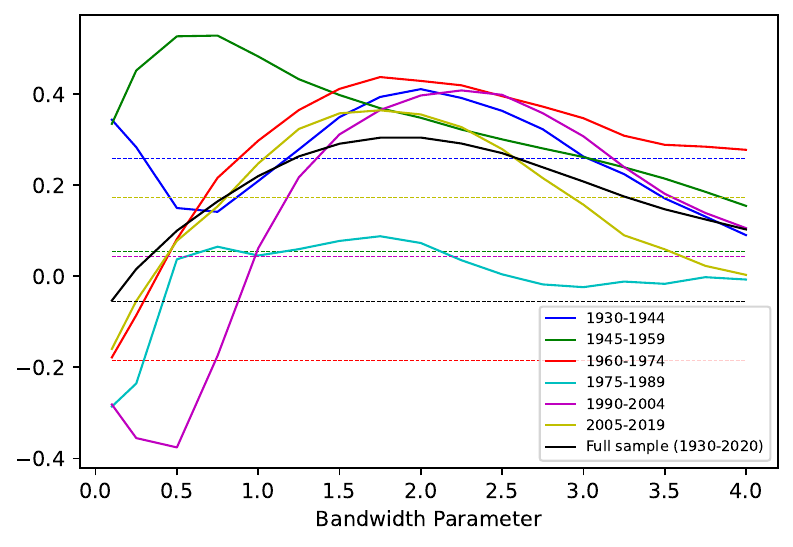}
\caption{\textbf{Sharpe ratio sensitivity to periods and bandwidth.} We plot average annualized Sharpe ratios obtained on six sub-periods and on the full sample for $p=600$ RFFs, as a function of the bandwidth parameter of the kernel from the Random Fourier Features. The (dashed) horizontal lines display the return of the strategy when the raw features are used instead of RFFs. In this case, $n=12$ and $p=15$. For the sake of visibility, we do not plot the return of the strategy when using raw features for the 1975-1989 period (-0.11). The penalization parameter is $z=0.01$ (mild regularization). Portfolios are averaged over 500 random draws. \label{fig:sharpe}}
\end{figure}

In fact, a closer look reveals the unreasonably high levels of volatility that are implied by these Sharpe ratios. A monthly return of 0.03 (reported both in Figure \ref{fig:expret} and in \cite{kelly2023virtue} for a bandwidth of two) translates arithmetically to 0.36 on an annual basis, thus a Sharpe ratio of 0.3 suggests a yearly volatility above 100\%, which is prohibitively high, even for risk-prone investors (e.g., hedge funds).

Lastly, Figure \ref{fig:sharpe} overwhelmingly indicates that the best choice of bandwidth is indeed close to two. Except for one period (subsequent to World War II), all curves reach their maximum when the bandwidth is in the vicinity of two. This is a stable and consistent pattern - which may explain why it is presented as the baseline case in \cite{kelly2023virtue}.

\subsection{Time-varying bandwidths}

In Figure \ref{fig:expret}, we note that the optimal choice of $\gamma$ is time-varying. It lies between $\gamma^*=0$ in the first period (left graph) or in the second period (right graph) and $\gamma^*=1.5$ in 1990-2004 or 2005-2019 (left panel) and in 1990-2004 (right panel). Choosing larger values for $\gamma$ reduces uncertainty, but at the cost of lower returns. 

We now want to account for the fact that investors can adjust $\gamma$ based on past performance. In a new exercise, we pick, for a given period of 15 years, the bandwidth that yielded the best results in the previous period (we call this the ``\textit{feasible}'' strategy), but also the bandwidth that proved the best ex-post, i.e., in \textit{hindsight}. The corresponding returns are gathered in Table \ref{table:stratreturns}.

As expected, when $\gamma$ is chosen in hindsight (rightmost columns), the strategy based on RFFs always outperforms the feasible one. Note that this scheme is unfeasible in practice. However, there are periods during which the RFF-based return is  dominated by the simpler linear strategy. Therefore, this further underlines some limitations of the more sophisticated approach. With enough regularization, we see that the simpler model reaches an unconditional (full sample) return that is higher than the high-complexity strategy. This is in line with the conclusions of \cite{shen2024can}.

\begin{table}[!h]
\centering
\begin{tabular}{ c c c | c c | c c }
\toprule
\multirow{2}{*}{\textbf{Period}} & \multicolumn{2}{c}{\textbf{Linear}} & \multicolumn{2}{c}{\textbf{RFFs (feasible)}} & \multicolumn{2}{c}{\textbf{RFFs (hindsight)}}\\
\cmidrule{2-7}
  & $z=0.01$ & $z=100$ & $z=0.01$ & $z=100$ & $z=0.01$ & $z=100$ \\
\midrule \makecell{1930-1944} & \makecell{19.22\%} & \makecell{5.00 \%} & \makecell{5.08\%} & \makecell{2.82\%} & \makecell{25.16\%} & \makecell{6.64\%} \\
\makecell{1945-1959} & \makecell{4.97\%} & \makecell{15.28\%} & \makecell{23.45\%} & \makecell{18.92\%} & \makecell{28.90\%} & \makecell{23.45\%} \\
\makecell{1960-1974} & \makecell{-15.40\%} & \makecell{2.92\%} & \makecell{-6.78\%} & \makecell{2.31\%} & \makecell{6.73\%} & \makecell{4.04\%} \\
\makecell{1975-1989} & \makecell{-49.32\%} & \makecell{-0.35\%} & \makecell{0.84\%} & \makecell{0.49\%} & \makecell{0.84\%} & \makecell{0.81\%} \\
 \makecell{1990-2004} & \makecell{3.63\%} & \makecell{7.86\%} & \makecell{4.37\%} & \makecell{3.60\%} & \makecell{4.98\%} & \makecell{3.68\%} \\
\makecell{2005-2019} & \makecell{21.01\%} & \makecell{10.38\%} & \makecell{12.11\%} & \makecell{6.33\%} & \makecell{12.11\%} & \makecell{8.69\%}\\ \midrule
 \makecell{Full sample} & \makecell{-5.87\%} & \makecell{6.40\%} & \makecell{6.51\%} & \makecell{5.75\%} & \makecell{13.12\%} & \makecell{7.89\%} \\
\bottomrule \vspace{-6mm}
\end{tabular}
\caption{\textbf{Strategy returns under different learning schemes.} \small We exhibit the (monthly) returns of the market-timing strategy under three learning schemes. The column ``Linear'' refers to the strategy of Equation \eqref{eq:linear} relying on ridge regression directly applied to \citeauthor{goyal2023comprehensive}'s macro-economic indicators ($p=15$). The last two columns display the returns based on Random Features ridge regression. The ``hindsight'' strategy represents the best case scenario, i.e., if the investor is able to choose the bandwidth with knowledge of the future. The ``feasible'' strategy consists in choosing the bandwidth parameter based on the best value from the previous period. Note that for the first subperiod (1930-1944), we took the average return over the different values of $\gamma$. For these last two schemes, we generate $p=600$ random features and average portfolios over 500 random draws. \label{table:stratreturns}}
\end{table}

\section{Counterfactual returns}
\label{sec:cf_ret}

Our most important theoretical result is Proposition \ref{prop:expretdriftisomis}, which quantifies the loss in performance based on the misalignment between the in-sample loadings and the out-of-sample loadings. In this last section, we thus propose to assess the strategy's return in the hypothetical scenario where the data would not suffer from posterior drift. This will allow us to quantify the loss from Proposition \ref{prop:expretdriftisomis}. To this end, we generate counterfactual returns using the time series of estimated loadings $\hat{\beta}_t$ obtained by regressing the RFFs onto the real returns $r_{t+1}$. 
Formally, the counterfactual returns are sampled as follows:
\begin{equation}
r^{(c)}_{t+1}= \sum_{k=1}^p \hat{\beta}^{(k)}_{t} x_t^{(k)}
    \label{eq:predreg_cf}
\end{equation}
where the $x_t^{(p)}$ are all $p$ predictors (i.e., the RFFs, not the original signals). We underline that this assumes no misspecification as well, i.e., $\theta=0$ in Equation \eqref{eq:dgpmis}. We opt for simplicity and adding misspecification would artificially burden the exercise with an additional source of errors. Moreover, this route is already explored in \cite{kelly2023virtue}, whereas we focus here on posterior drift.

Once they are generated, the hypothetical returns have low average returns as well as low volatility over the whole period (90 years). This was expected because the ridgeless algorithm selects the solution with the smallest norm. Therefore, we scale these returns in order to match the first two empirical moments of the historical realized returns, i.e.,
\begin{equation}
\Tilde{r}^{(c)}_{t+1}= \Bigl(\frac{r^{(c)}_{t+1} - \Bar{r}_c}{\sigma_c}\Bigl) \sigma + \Bar{r}
    \label{eq:cf_scaled}
\end{equation}
where $\Bar{r}_c$ (resp. $\Bar{r}$) is the sample mean of the counterfactual (resp. realized) returns and $\sigma_c$ (resp. $\sigma$) is their sample standard deviation.

The average returns are plotted in Figure \ref{fig:cf_ridgeless} and confirm the theoretical results of Section \ref{subsec:performance}. The performance of the strategy is substantially stronger without posterior drift (dashed lines). This difference shown for $z=0.01$ is more pronounced when the bandwidth is small. As the latter increases, all returns slowly shrink to zero. 
The case of strong regularization ($z=100$) corroborates these patterns and is postponed to Figure \ref{fig:cf_strongreg} in the Appendix. Overall, it is clear that when the loadings do not change, the model that is learned allows to generate substantial profits. But when the DGP changes (via the posterior drift), then returns are curtailed.   

\begin{figure}[!h]
\centering
    \includegraphics[width=16cm]{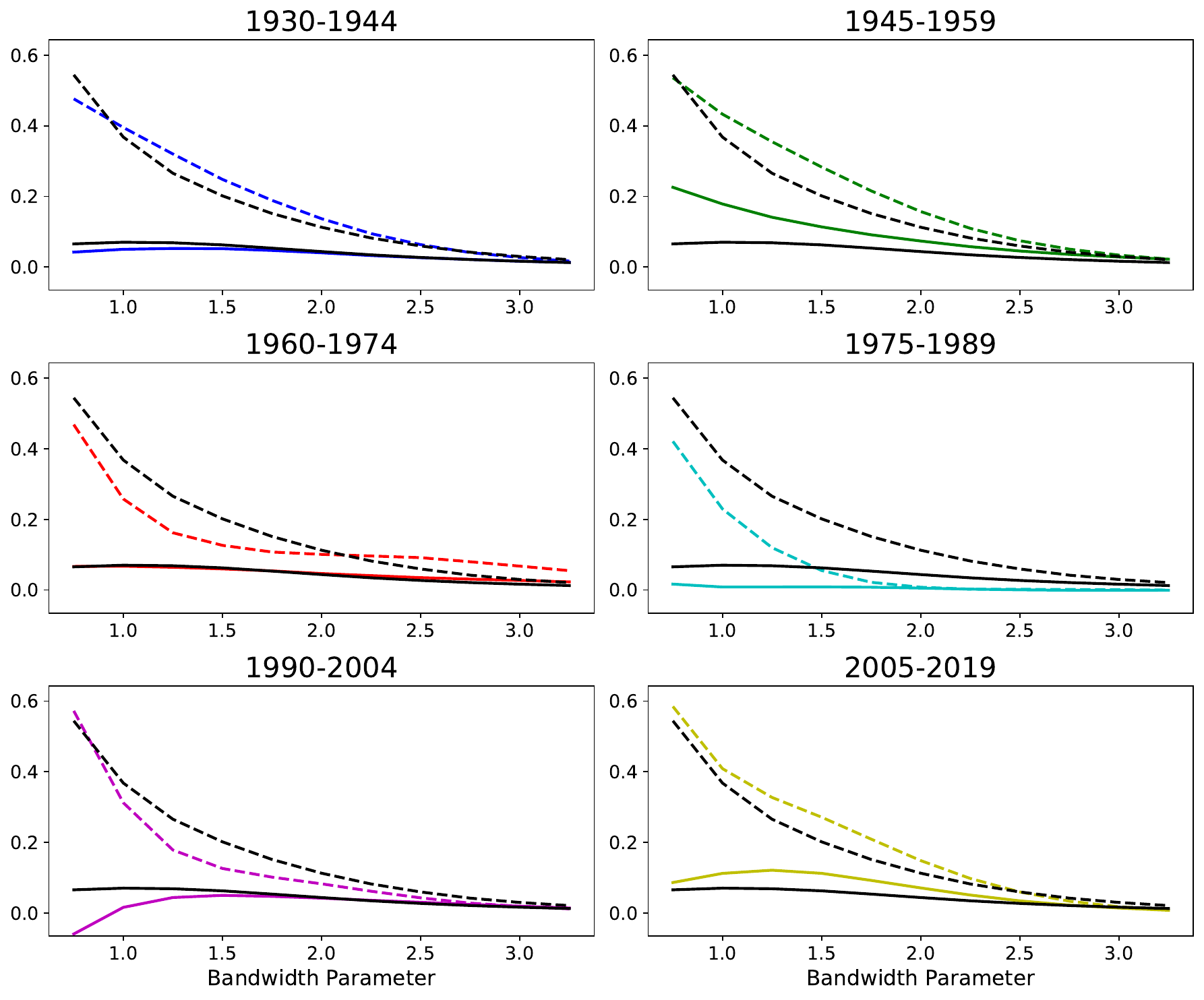}
\caption{\textbf{Strategy returns without posterior drift under mild regularization.} \small We plot the real (with posterior drift, solid lines) and the counterfactual (without posterior drift, dashed lines) returns of the strategy obtained on six sub-periods for $p=600$ RFFs, as a function of the bandwidth parameter of the kernel from the Random Fourier Features. Colored lines represent the strategy returns for the corresponding sub-periods, while \textbf{black lines} show returns over the full sample. The penalization parameter is $z=0.01$ (mild regularization). Returns are averaged over 500 random draws of RFFs.  \label{fig:cf_ridgeless}}
\end{figure}

\clearpage

\section{Conclusion}
\label{sec:conc}

This article seeks to document the sensitivity to posterior drift of the accuracy of overparametrized linear models in equity premium prediction. Our theoretical results exhaustively and invariably point to a natural degradation of performance when the shift between in-sample and out-of-sample betas increases. Empirically, we find that out-of-sample timing can be sometimes promising, but also very unstable. Average (and risk-adjusted) returns strongly depend on market conditions, even on long horizons, and on a critical parameter that is not straightforward to tune. We attribute these limitations to changes in links between premia and macro variables. These changes strongly attenuate the signal contained in predictors. In sum, large models \textit{can} bring value, but not unconditionally, and one should resort to them while keeping these caveats in mind. 

\section*{Replication package}

\noindent The python code to replicate Figure \ref{fig:sr_signal} is available 
\href{https://drive.google.com/file/d/1lf2zgPKyz_WoYlrakVyNsTVu8XPSvzO5/view?usp=sharing}{here}.

\noindent The python code to replicate Figures \ref{fig:retmildprop}, \ref{fig:volmildprop}, \ref{fig:srmildprop} and the ones in sections \ref{sec:app_numericanalysis_iid} and \ref{sec:app_numericanalysis_correlated}  of the Appendix (except Figure \ref{fig:convergence}) is available 
\href{https://drive.google.com/file/d/1A1p7gocZYnrKp7TXctA8P-DXObTeoE45/view?usp=sharing}{here}.

\noindent The python code to replicate Figures \ref{fig:betasTS} and \ref{fig:changepoints} is available
\href{https://drive.google.com/file/d/11X1xOwGs0XWIbMDuCYDn2wZGx1Ke-4ZY/view?usp=sharing}{here}.

\noindent The python code to replicate Figures \ref{fig:expret}, \ref{fig:sharpe}, \ref{fig:cf_ridgeless} and \ref{fig:cf_strongreg} is available 
\href{https://drive.google.com/file/d/1VpnsCpIOqY_xv7z-jikvJvCvM7vOFGup/view?usp=sharing}{here}.

\noindent The python code to replicate Figure \ref{fig:convergence} is available 
\href{https://drive.google.com/file/d/15W-fjSYRjqU5cN7TOm5lY8f2YeYwN8wj/view?usp=sharing}{here}.

\bibliographystyle{chicago}
\bibliography{bib}

@article{bakalli2023penalized,
  title={A penalized two-pass regression to predict stock returns with time-varying risk premia},
  author={Bakalli, Gaetan and Guerrier, St{\'e}phane and Scaillet, Olivier},
  journal={Journal of Econometrics},
  volume={237},
  number={2},
  pages={105375},
  year={2023}
}

@article{bartlett2020benign,
  title={Benign overfitting in linear regression},
  author={Bartlett, Peter L and Long, Philip M and Lugosi, G{\'a}bor and Tsigler, Alexander},
  journal={Proceedings of the National Academy of Sciences},
  volume={117},
  number={48},
  pages={30063--30070},
  year={2020},
  publisher={National Acad Sciences}
}

@article{berk2023comment,
  title={Comment on `\textit{{T}he Virtue of Complexity in Return Prediction}'},
  author={Berk, Jonathan},
  journal={SSRN Working Paper},
  volume={4410125},
  year={2023}
}

@article{bollerslev2023optimal,
  title={Optimal Inference for Spot Regressions},
  author={Bollerslev, Tim and Li, Jia and Ren, Yuexuan},
  journal={American Economic Review},
  volume={114},
  number={3},
  year={2024},
  pages={678--708}
}

@article{buncic2025simplified,
  title={Simplified: {A} Closer Look at the Virtue of Complexity in Return Prediction},
  author={Buncic, Daniel},
  journal={SSRN Working Paper},
  volume={5239006},
  year={2025}
}

@article{cartea2025limited,
  title={The Limited Virtue of Complexity in a Noisy World},
  author={Cartea, Alvaro and Jin, Qi and Shi, Yuantao},
  year={2025},
  journal={SSRN Working Paper},
  volume={5202064}
}

@article{coulombe2023time,
  title={Time-varying parameters as ridge regressions},
  author={Coulombe, Philippe Goulet},
  journal={arXiv Preprint},
  number={2009.00401},
  year={2023}
}

@article{dangl2012predictive,
  title={Predictive regressions with time-varying coefficients},
  author={Dangl, Thomas and Halling, Michael},
  journal={Journal of Financial Economics},
  volume={106},
  number={1},
  pages={157--181},
  year={2012}
}

@article{fallahgoul2025high,
  title={High-Dimensional Learning in Finance},
  author={Fallahgoul, Hasan},
  journal={SSRN Working Paper },
  volume={5281959},
  year={2025}
}

@article{farmer2023pockets,
  title={Pockets of predictability},
  author={Farmer, Leland E and Schmidt, Lawrence and Timmermann, Allan},
  journal={Journal of Finance},
  volume={78},
  number={3},
  pages={1279--1341},
  year={2023}
}

@article{gagliardini2016time,
  title={Time-varying risk premium in large cross-sectional equity data sets},
  author={Gagliardini, Patrick and Ossola, Elisa and Scaillet, Olivier},
  journal={Econometrica},
  volume={84},
  number={3},
  pages={985--1046},
  year={2016}
}

@article{gama2014survey,
  title={A survey on concept drift adaptation},
  author={Gama, Jo{\~a}o and {\v{Z}}liobait{\.e}, Indr{\.e} and Bifet, Albert and Pechenizkiy, Mykola and Bouchachia, Abdelhamid},
  journal={ACM Computing Surveys},
  volume={46},
  number={4},
  pages={1--37},
  year={2014},
  publisher={ACM New York, NY, USA}
}

@article{garreau2018consistent,
  title={Consistent change-point detection with kernels},
  author={Garreau, Damien and Arlot, Sylvain},
  journal={Electronic Journal of Statistics},
  volume={12},
  number={2},
  pages={4440--4486},
  year={2018}
}

@article{ghysels1998stable,
  title={On stable factor structures in the pricing of risk: do time-varying betas help or hurt?},
  author={Ghysels, Eric},
  journal={Journal of Finance},
  volume={53},
  number={2},
  pages={549--573},
  year={1998}
}

@article{goyal2023comprehensive,
  title={A comprehensive 2022 look at the empirical performance of equity premium prediction},
  author={Goyal, Amit and Welch, Ivo and Zafirov, Athanasse},
  journal={Review of Financial Studies},
  volume={Forthcoming},
  year={2024}
}

@article{guo2025simplicity,
  title={Simplicity versus Complexity: The Role of Historical Average in Kelly, Malamud, and Zhou's (2024) RFF Model},
  author={Guo, Hui and Huang, Jiawei and Yu, Yan},
  journal={SSRN Working Paper},
  volume={5489967},
  year={2025}
}

@article{hastie2020surprises,
  title={Surprises in High-Dimensional Ridgeless Least Squares Interpolation},
  author={Hastie, Trevor and Montanari, Andrea and Rosset, Saharon and Tibshirani, Ryan J},
  journal={arXiv Preprint},
  number={1903.08560},
  year={2020}
}

@article{hastie2022surprises,
  title={Surprises in high-dimensional ridgeless least squares interpolation},
  author={Hastie, Trevor and Montanari, Andrea and Rosset, Saharon and Tibshirani, Ryan J},
  journal={Annals of Statistics},
  volume={50},
  number={2},
  pages={949--986},
  year={2022}
}

@article{kelly2023virtue,
  title={The virtue of complexity in return prediction},
  author={Kelly, Bryan T and Malamud, Semyon and Zhou, Kangying},
  year={2024},
  journal={Journal of Finance},
  volume={79},
  number={1},
  pages={459--503}
}

@article{lu2018learning,
  title={Learning under concept drift: A review},
  author={Lu, Jie and Liu, Anjin and Dong, Fan and Gu, Feng and Gama, Joao and Zhang, Guangquan},
  journal={IEEE Transactions on Knowledge and Data Engineering},
  volume={31},
  number={12},
  pages={2346--2363},
  year={2018},
  publisher={IEEE}
}

@article{maity2024linear,
  title={A linear adjustment-based approach to posterior drift in transfer learning},
  author={Maity, Subha and Dutta, Diptavo and Terhorst, Jonathan and Sun, Yuekai and Banerjee, Moulinath},
  journal={Biometrika},
  volume={111},
  number={1},
  pages={31--50},
  year={2024},
  publisher={Oxford University Press}
}

@article{nagel2025seemingly,
  title={Seemingly Virtuous Complexity in Return Prediction},
  author={Nagel, Stefan},
  journal={SSRN Working Paper},
  number={5335012},
  year={2025}
}

@book{pav2021sharpe,
  title={The {S}harpe {R}atio: {S}tatistics and Applications},
  author={Pav, Steven E},
  year={2021},
  publisher={Chapman and Hall/CRC}
}

@article{power2022grokking,
  title={Grokking: {G}eneralization beyond overfitting on small algorithmic datasets},
  author={Power, Alethea and Burda, Yuri and Edwards, Harri and Babuschkin, Igor and Misra, Vedant},
  journal={arXiv Preprint},
  number={2201.02177},
  year={2022}
}

@inproceedings{hu2025transfer,
  title={Transfer Learning Meets Functional Linear Regression: No Negative Transfer Under Posterior Drift},
  author={Hu, Xiaoyu and Lin, Zhenhua},
  booktitle={Proceedings of the AAAI Conference on Artificial Intelligence},
  volume={39},
  number={16},
  pages={17351--17358},
  year={2025}
}

@article{shen2024can,
  title={Can Machines Learn Weak Signals?},
  author={Shen, Zhouyu and Xiu, Dacheng},
  journal={SSRN Working Paper},
  volume={4748784},
  year={2025}
}

@article{varma2023explaining,
  title={Explaining grokking through circuit efficiency},
  author={Varma, Vikrant and Shah, Rohin and Kenton, Zachary and Kram{\'a}r, J{\'a}nos and Kumar, Ramana},
  journal={arXiv Preprint},
  number={2309.02390},
  year={2023}
}

@article{welch2008comprehensive,
  title={A comprehensive look at the empirical performance of equity premium prediction},
  author={Welch, Ivo and Goyal, Amit},
  journal={Review of Financial Studies},
  volume={21},
  number={4},
  pages={1455--1508},
  year={2008}
}

@article{londschien2023RfDetection,
  author  = {Malte Londschien and Peter Bühlmann and Solt Kovács},
  title   = {Random Forests for Change Point Detection},
  journal = {Journal of Machine Learning Research},
  year    = {2023},
  volume  = {24},
  number  = {216},
  pages   = {1--45},
  url     = {http://jmlr.org/papers/v24/22-0512.html}
}

@article{roy2016cpdMarkov,
    author = {Roy, Sandipan and Atchadé, Yves and Michailidis, George},
    title = "{Change Point Estimation in High Dimensional Markov Random-Field Models}",
    journal = {Journal of the Royal Statistical Society Series B: Statistical Methodology},
    volume = {79},
    number = {4},
    pages = {1187-1206},
    year = {2016},
    month = {09},
    abstract = "{The paper investigates a change point estimation problem in the context of high dimensional Markov random-field models. Change points represent a key feature in many dynamically evolving network structures. The change point estimate is obtained by maximizing a profile penalized pseudolikelihood function under a sparsity assumption. We also derive a tight bound for the estimate, up to a logarithmic factor, even in settings where the number of possible edges in the network far exceeds the sample size. The performance of the estimator proposed is evaluated on synthetic data sets and is also used to explore voting patterns in the US Senate in the 1979–2012 period.}",
    issn = {1369-7412},
    doi = {10.1111/rssb.12205},
    url = {https://doi.org/10.1111/rssb.12205},
    eprint = {https://academic.oup.com/jrsssb/article-pdf/79/4/1187/49214994/jrsssb\_79\_4\_1187.pdf},
}

@article{Frick2014MultCPD,
    author = {Frick, Klaus and Munk, Axel and Sieling, Hannes},
    title = "{Multiscale Change Point Inference}",
    journal = {Journal of the Royal Statistical Society Series B: Statistical Methodology},
    volume = {76},
    number = {3},
    pages = {495-580},
    year = {2014},
    month = {05},
    abstract = "{We introduce a new estimator, the simultaneous multiscale change point estimator SMUCE, for the change point problem in exponential family regression. An unknown step function is estimated by minimizing the number of change points over the acceptance region of a multiscale test at a level α. The probability of overestimating the true number of change points K is controlled by the asymptotic null distribution of the multiscale test statistic. Further, we derive exponential bounds for the probability of underestimating K. By balancing these quantities, α will be chosen such that the probability of correctly estimating K is maximized. All results are even non-asymptotic for the normal case. On the basis of these bounds, we construct (asymptotically) honest confidence sets for the unknown step function and its change points. At the same time, we obtain exponential bounds for estimating the change point locations which for example yield the minimax rate O(n−1) up to a log-term. Finally, the simultaneous multiscale change point estimator achieves the optimal detection rate of vanishing signals as n → ∞, even for an unbounded number of change points. We illustrate how dynamic programming techniques can be employed for efficient computation of estimators and confidence regions. The performance of the multiscale approach proposed is illustrated by simulations and in two cutting edge applications from genetic engineering and photoemission spectroscopy.}",
    issn = {1369-7412},
    doi = {10.1111/rssb.12047},
    url = {https://doi.org/10.1111/rssb.12047},
    eprint = {https://academic.oup.com/jrsssb/article-pdf/76/3/495/49507957/jrsssb\_76\_3\_495.pdf},
}

@article{Enikeeva2019hdCpd,
author = {Farida Enikeeva and Zaid Harchaoui},
title = {{High-dimensional change-point detection under sparse alternatives}},
volume = {47},
journal = {Annals of Statistics},
number = {4},
publisher = {Institute of Mathematical Statistics},
pages = {2051--2079},
keywords = {Change-point problem, High-dimensional data, Minimax optimality, Sparsity},
year = {2019}
}

@article{aminikhanghahi2017survey,
  title={A survey of methods for time series change point detection},
  author={Aminikhanghahi, Samaneh and Cook, Diane J},
  journal={Knowledge and Information Systems},
  volume={51},
  number={2},
  pages={339--367},
  year={2017},
  publisher={Springer}
}

@article{arlot2019kernel,
  title={A kernel multiple change-point algorithm via model selection},
  author={Arlot, Sylvain and Celisse, Alain and Harchaoui, Zaid},
  journal={Journal of Machine Learning Research},
  volume={20},
  number={162},
  pages={1−-56},
  year={2019}
}

@article{lung2015homogeneity,
  title={Homogeneity and change-point detection tests for multivariate data using rank statistics},
  author={Lung-Yut-Fong, Alexandre and L{\'e}vy-Leduc, C{\'e}line and Capp{\'e}, Olivier},
  journal={Journal de la Soci{\'e}t{\'e} Fran{\c{c}}aise de Statistique},
  volume={156},
  number={4},
  pages={133--162},
  year={2015}
}

@article{capponi2025nonstationarity,
  title={The nonstationarity-complexity tradeoff in return prediction},
  author={Capponi, Agostino and Huang, Chengpiao and Sidaoui, J Antonio and Wang, Kaizheng and Zou, Jiacheng},
  journal={arXiv Preprint},
  number={2512.23596},
  year={2025}
}

@article{knowles2017anisotropic,
  title={Anisotropic local laws for random matrices},
  author={Knowles, Antti and Yin, Jun},
  journal={Probability Theory and Related Fields},
  volume={169},
  pages={257--352},
  year={2017},
  publisher={Springer}
}

@article{silverstein1995strong,
  title={Strong convergence of the empirical distribution of eigenvalues of large dimensional random matrices},
  author={Silverstein, Jack W},
  journal={Journal of Multivariate Analysis},
  volume={55},
  number={2},
  pages={331--339},
  year={1995},
  publisher={Elsevier}
}

@article{wang2025conformal,
  title={Conformal Prediction Under Generalized Covariate Shift with Posterior Drift},
  author={Wang, Baozhen and Qiao, Xingye},
  journal={arXiv Preprint},
  volume={2502.17744},
  year={2025}
}

@article{yin1988limit,
  title={On the limit of the largest eigenvalue of the large dimensional sample covariance matrix},
  author={Yin, Yong-Qua and Bai, Zhi-Dong and Krishnaiah, Pathak R},
  journal={Probability theory and related fields},
  volume={78},
  pages={509--521},
  year={1988},
  publisher={Springer}
}

@incollection{bai2008limit,
  title={Limit of the smallest eigenvalue of a large dimensional sample covariance matrix},
  author={Bai, Zhi-Dong and Yin, Yong-Qua},
  booktitle={Advances In Statistics},
  pages={108--127},
  year={2008},
  publisher={World Scientific}
}

@article{misiakiewicz2024non,
  title={A non-asymptotic theory of Kernel Ridge Regression: deterministic equivalents, test error, and GCV estimator},
  author={Misiakiewicz, Theodor and Saeed, Basil},
  journal={arXiv preprint arXiv:2403.08938},
  year={2024}
}

@article{cheng2022dimension,
  title={Dimension free ridge regression},
  author={Cheng, Chen and Montanari, Andrea},
  journal={arXiv preprint arXiv:2210.08571},
  year={2022}
}

@article{ferreira2011forecasting,
  title={Forecasting stock market returns: The sum of the parts is more than the whole},
  author={Ferreira, Miguel A and Santa-Clara, Pedro},
  journal={Journal of Financial Economics},
  volume={100},
  number={3},
  pages={514--537},
  year={2011},
  publisher={Elsevier}
}

@article{defilippis2024dimension,
  title={Dimension-free deterministic equivalents for random feature regression},
  author={Defilippis, Leonardo and Loureiro, Bruno and Misiakiewicz, Theodor},
  journal={arXiv preprint arXiv:2405.15699},
  year={2024}
}

@article{misiakiewicz2023six,
  title={Six lectures on linearized neural networks},
  author={Misiakiewicz, Theodor and Montanari, Andrea},
  journal={arXiv preprint arXiv:2308.13431},
  year={2023}
}

@inproceedings{richards2021asymptotics,
  title={Asymptotics of ridge (less) regression under general source condition},
  author={Richards, Dominic and Mourtada, Jaouad and Rosasco, Lorenzo},
  booktitle={International Conference on Artificial Intelligence and Statistics},
  pages={3889--3897},
  year={2021},
  organization={PMLR}
}

@article{wu2020optimal,
  title={On the optimal weighted $l_2$ regularization in overparameterized linear regression},
  author={Wu, Denny and Xu, Ji},
  journal={Advances in neural information processing systems},
  volume={33},
  pages={10112--10123},
  year={2020}
}

\clearpage

\appendix

%\section{Appendix}

\appendix

\addcontentsline{toc}{section}{Appendix} % Add the appendix text to the document TOC
\part{Appendix} % Start the appendix part
\parttoc % Insert the appendix TOC

\section{Numerical experiments: iid features}
\label{sec:app_numericanalysis_iid}

\subsection{Simulated moments under strong regularization (equidistributed parameters vector)}
\label{subsec:simretstrongequi}

\begin{figure}[!h]
\centering
    \includegraphics[width=16cm]{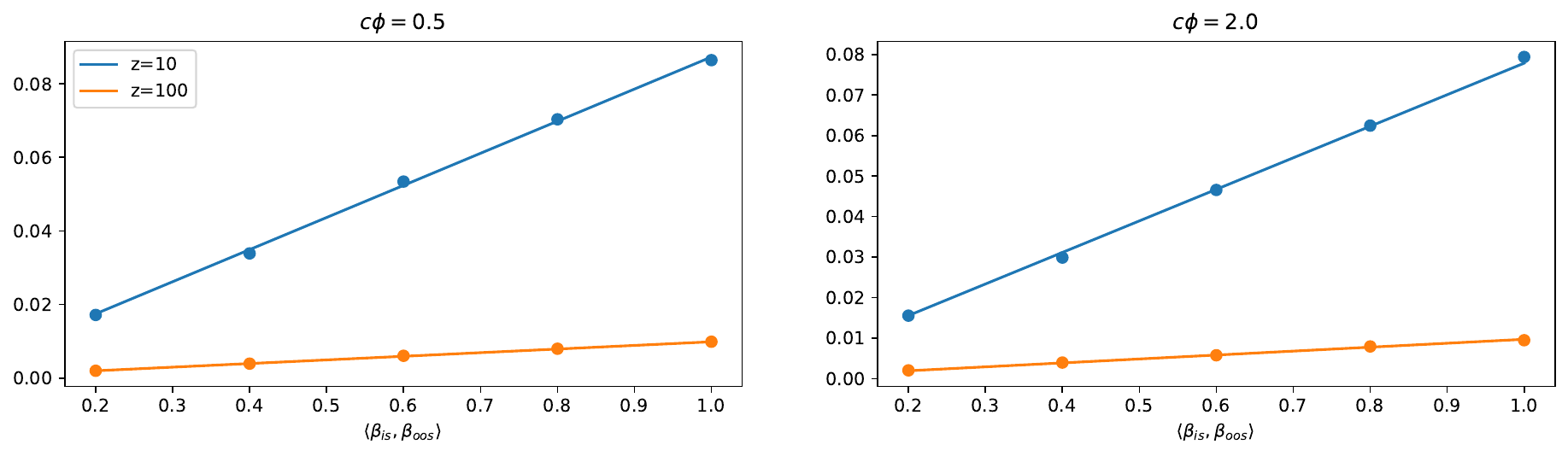}
\caption{\textbf{Simulated and theoretical returns of the strategy under strong regularization.} \small The model's true complexity is $c=3$ ($p+q = 300$, $n=100$). $z$ is the regularization parameter of the ridge estimator. We plot the theoretical returns of the strategy as per Proposition \ref{prop:expretdriftisomis} (solid line). Simulated returns (dots) are generated according to the process described in Section \ref{sec:numerical} and are averaged over 100,000 random draws. \label{fig:retstrongprop}}
\end{figure}

\begin{figure}[!h]
\centering
    \includegraphics[width=16cm]{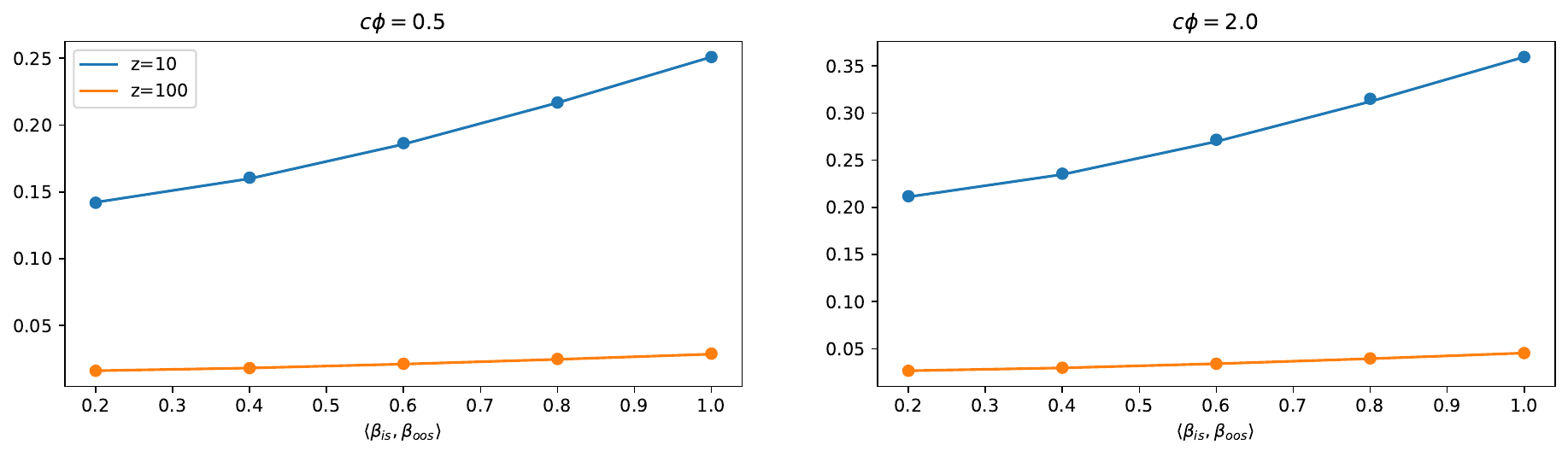}
\caption{\textbf{Simulated and theoretical volatility of the strategy under strong regularization.} \small The model's true complexity is $c=3$ ($p+q = 300$, $n=100$). $z$ is the regularization parameter of the ridge estimator. We plot the theoretical volatility of the strategy as per Proposition \ref{prop:volstratret_iid} (solid line). Simulated returns (dots) are generated according to the process described in Section \ref{sec:numerical} and are averaged over 100,000 random draws. \label{fig:volstrongprop}}
\end{figure}

\begin{figure}[!h]
\centering
    \includegraphics[width=16cm]{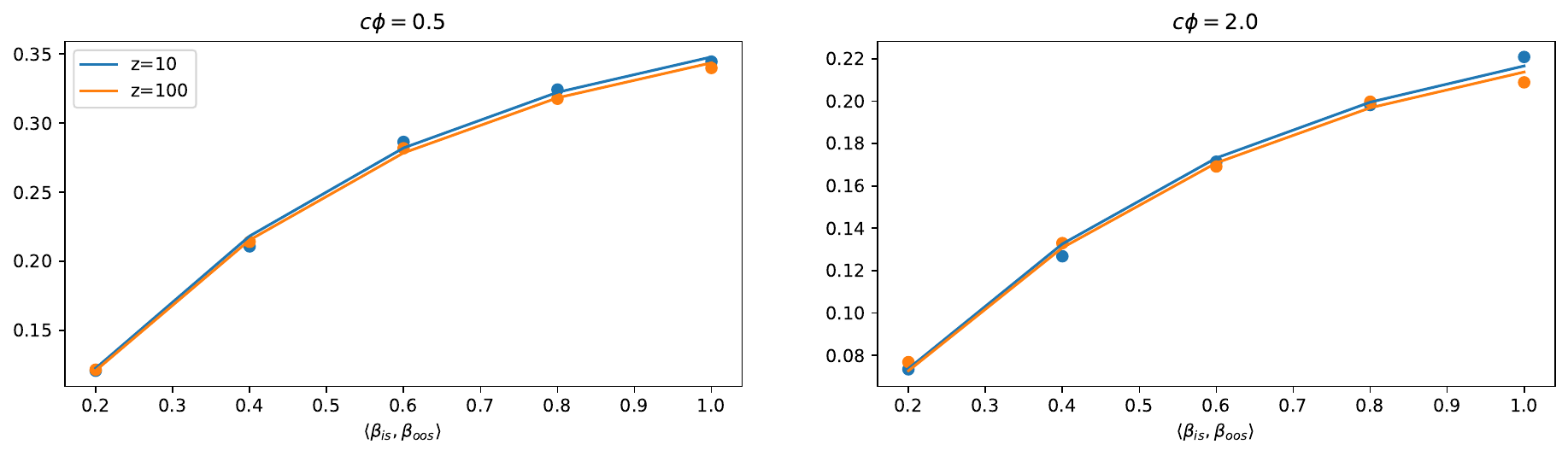}
\caption{\textbf{Simulated and theoretical Sharpe ratio of the strategy under strong regularization.} \small The model's true complexity is $c=3$ ($p+q = 300$, $n=100$). $z$ is the regularization parameter of the ridge estimator. We plot the theoretical Sharpe ratio of the strategy as per Eq. \eqref{eq:sharpe_ratio_iid} (solid line). Simulated returns (dots) are generated according to the process described in Section \ref{sec:numerical} and are averaged over 100,000 random draws. \label{fig:srstrongprop}}
\end{figure}

\clearpage

\subsection{Simulated moments (concentrated parameters vector)}
\label{subsec:simretconcentrated}

We simulate the expected moments of the market timing strategy following the process described in Section \ref{sec:numerical} except for the true parameters vectors related to the observed data, $\beta_{is}$ and $\beta_{oos}$.

In this section, the latter are concentrated on some directions of $\Sigma$. The first half of the coefficients of $\beta_{is}$ are equal to $1$. The remaining components are equal to $0.1$. Furthermore, we normalize $\beta_{is}$ so it has unit signal, i.e. $\| \beta_{is} \|^2=1$.  

We consider a sequence of $(\beta_{oos,k})_k$, $k \in \{ 1,2,3,4,5 \}$, with unit signal. To this end, let the first $(50/k)\%$ of the coefficients of $\beta_{oos,k}$ be equal to $1$ while the remaining components are set to $0.1$. As for $\beta_{is}$, we normalize these vectors so they have unit signal, namely $\| \beta_{oos,k} \|^2 = 1$ for any $k \in \{ 1,2,3,4,5 \}$. 

We analogously generate the true parameters vectors for the unobserved data, $\theta_{is}$ and $\theta_{oos}$, setting only a different number of coefficients, viz. $q$ instead of $p$.

\begin{figure}[!h]
\centering
    \includegraphics[width=16cm]{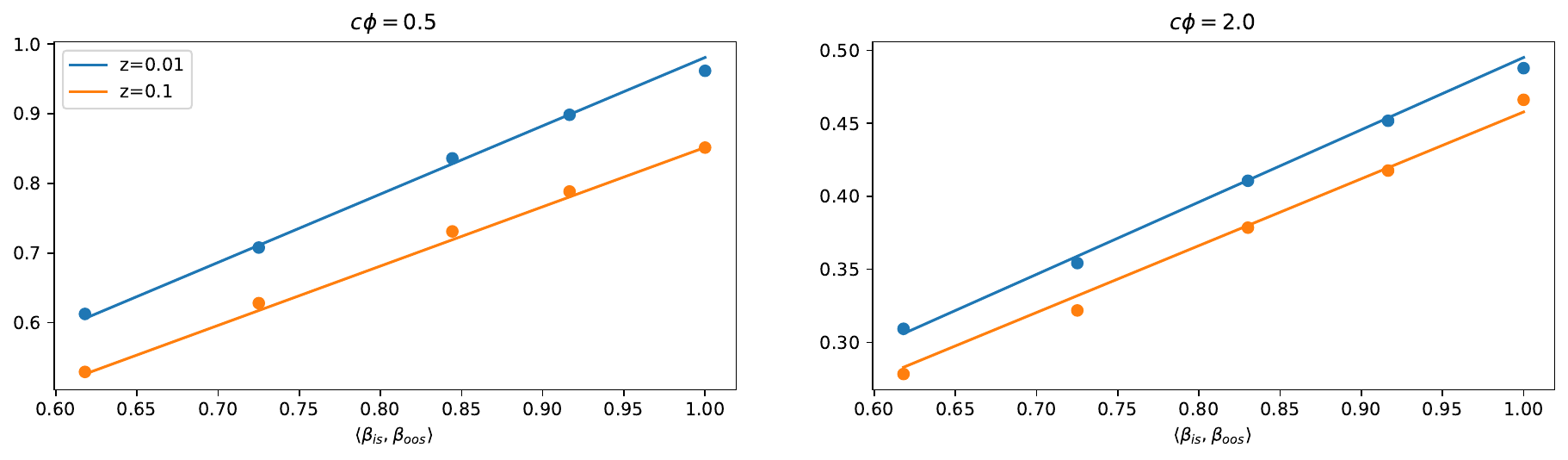} 
\caption{\textbf{Simulated and theoretical returns of the strategy under mild regularization (concentrated (true) parameters vectors).} \small The model's true complexity is $c=3$ ($p+q = 300$, $n=100$). $z$ is the regularization parameter of the ridge estimator. We plot the theoretical returns of the strategy as per Proposition \ref{prop:expretdriftisomis} (solid line). Simulated returns (dots) are generated according to the process described in Section \ref{subsec:simretconcentrated} and are averaged over 100,000 random draws. \label{fig:retmildconc}}
\end{figure}

\begin{figure}[!h]
\centering
    \includegraphics[width=16cm]{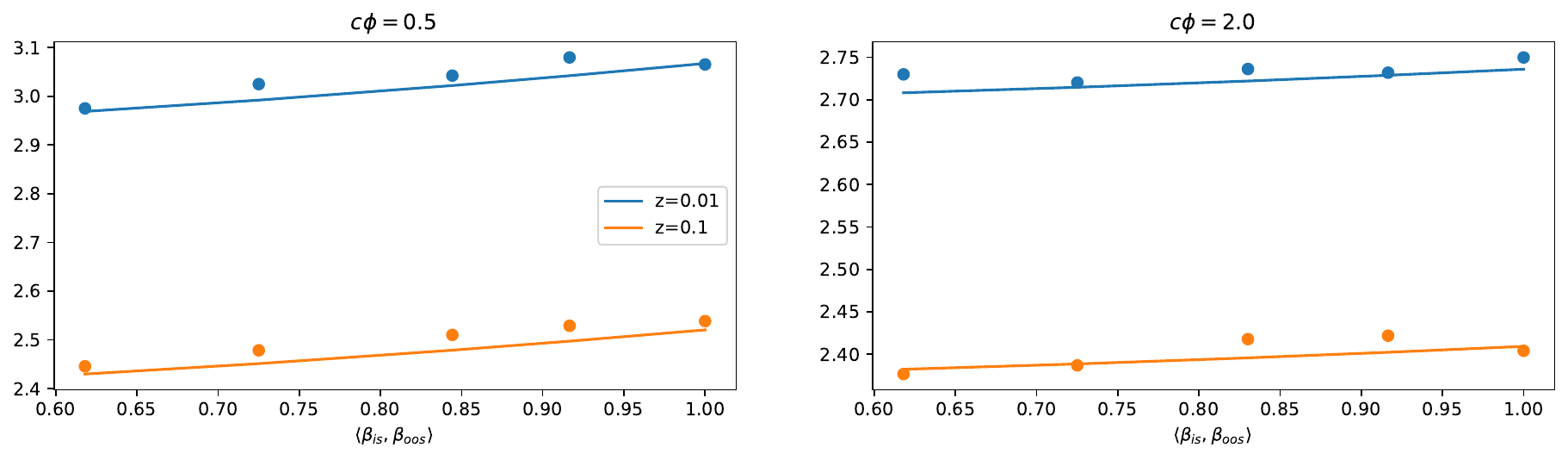}  
\caption{\textbf{Simulated and theoretical volatility of the strategy under mild regularization (concentrated (true) parameters vectors).} \small The model's true complexity is $c=3$ ($p+q = 300$, $n=100$). $z$ is the regularization parameter of the ridge estimator. We plot the theoretical volatility of the strategy as per Proposition \ref{prop:volstratret_iid} (solid line). Simulated returns (dots) are generated according to the process described in Section \ref{subsec:simretconcentrated} and the sample volatility is computed over 100,000 random draws. \label{fig:volmildconc}}
\end{figure}

\begin{figure}[!h]
\centering
    \includegraphics[width=16cm]{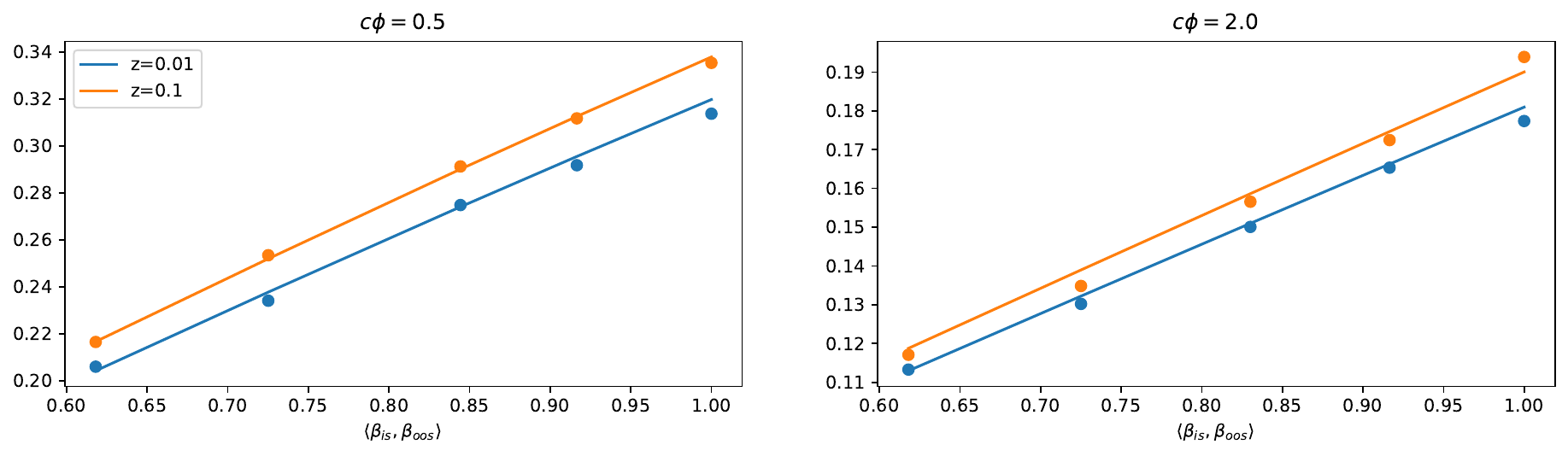}  
\caption{\textbf{Simulated and theoretical Sharpe ratio of the strategy under mild regularization (concentrated (true) parameters vectors).} \small The model's true complexity is $c=3$ ($p+q = 300$, $n=100$). $z$ is the regularization parameter of the ridge estimator. We plot the theoretical Sharpe ratio of the strategy as per Eq. \eqref{eq:sharpe_ratio_iid} (solid line). Simulated returns (dots) are generated according to the process described in Section \ref{sec:numerical} and the average  averaged over 100,000 random draws.  \label{fig:srmildconc}}
\end{figure}

\begin{figure}[!h]
\centering
    \includegraphics[width=16cm]{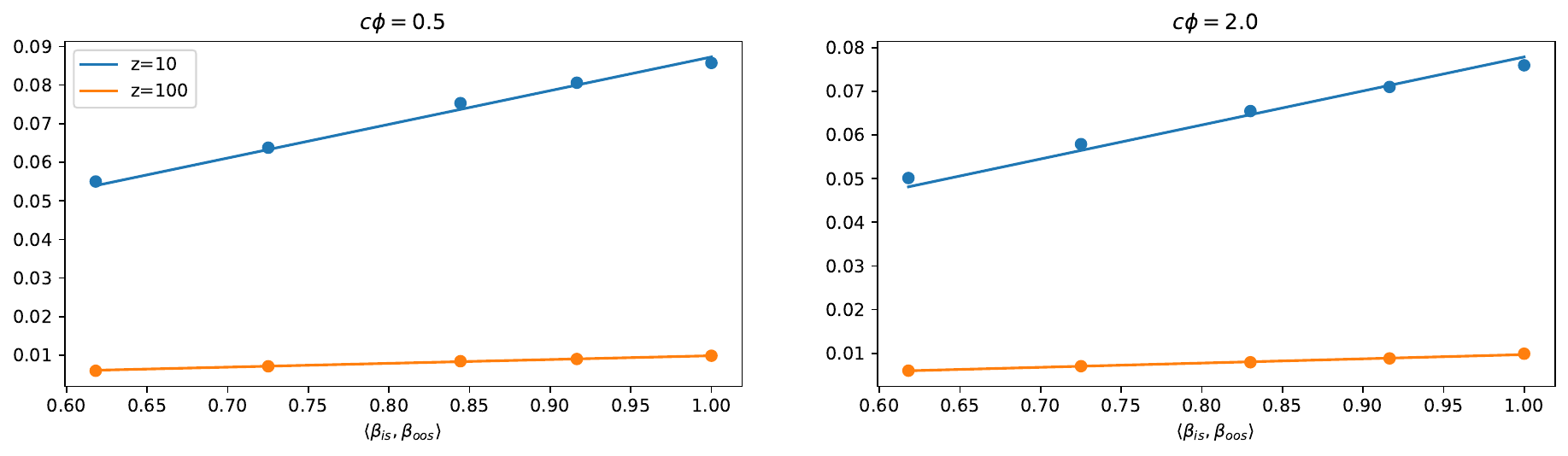}  
\caption{\textbf{Simulated and theoretical returns of the strategy under strong regularization (concentrated (true) parameters vectors).} \small The model's true complexity is $c=3$ ($p+q = 300$, $n=100$). $z$ is the regularization parameter of the ridge estimator. We plot the theoretical returns of the strategy as per Proposition \ref{prop:expretdriftisomis} (solid line). Simulated returns (dots) are generated according to the process described in Section \ref{subsec:simretconcentrated} and are averaged over 100,000 random draws. \label{fig:retstrongconc}}
\end{figure}

\begin{figure}[!h]
\centering
    \includegraphics[width=16cm]{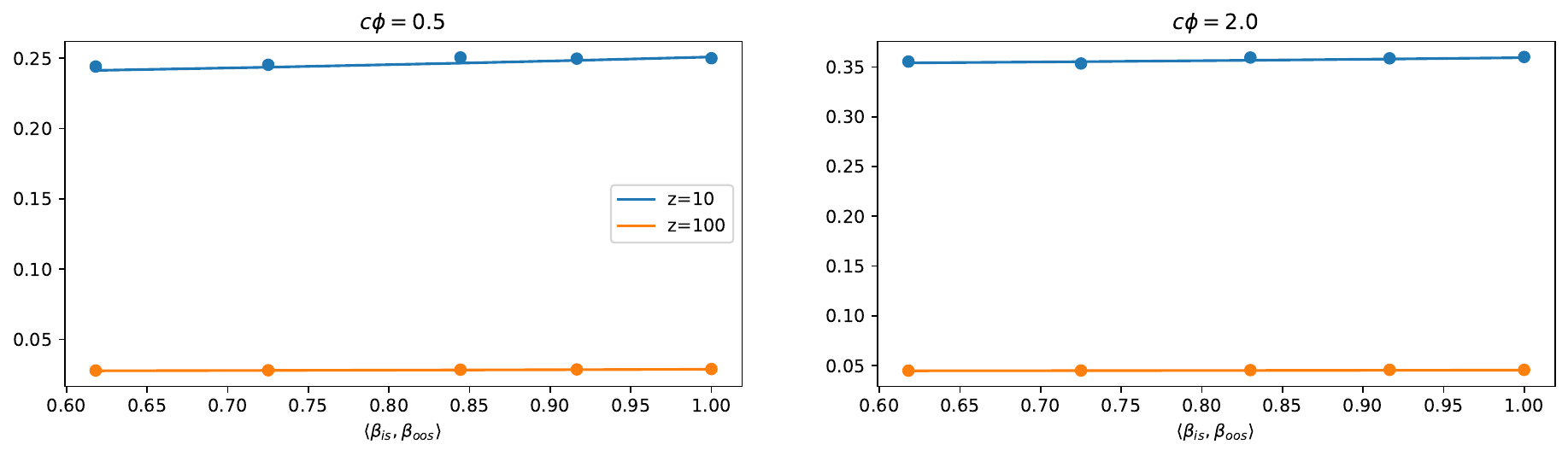}  
\caption{\textbf{Simulated and theoretical volatility of the strategy under strong regularization (concentrated (true) parameters vectors).} \small The model's true complexity is $c=3$ ($p+q = 300$, $n=100$). $z$ is the regularization parameter of the ridge estimator. We plot the theoretical volatility of the strategy as per Proposition \ref{prop:volstratret_iid} (solid line). Simulated returns (dots) are generated according to the process described in Section \ref{subsec:simretconcentrated} and are averaged over 100,000 random draws. \label{fig:volstrongconc}}
\end{figure}

\begin{figure}[!h]
\centering
    \includegraphics[width=16cm]{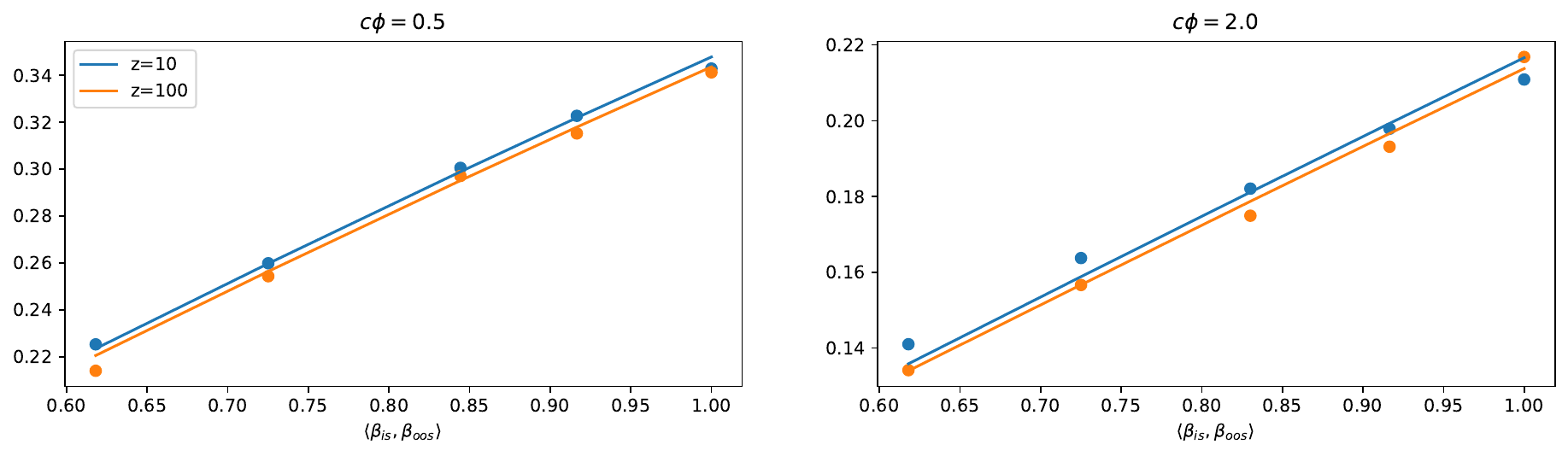}  
\caption{\textbf{Simulated and theoretical Sharpe ratio of the strategy under strong regularization (concentrated (true) parameters vectors).} \small The model's true complexity is $c=3$ ($p+q = 300$, $n=100$). $z$ is the regularization parameter of the ridge estimator. We plot the theoretical Sharpe ratio of the strategy as per Eq. \eqref{eq:sharpe_ratio_iid} (solid line). Simulated returns (dots) are generated according to the process described in Section \ref{subsec:simretconcentrated} and are averaged over 100,000 random draws. \label{fig:srstrongconc}}
\end{figure}

The convergence of average simulated returns towards theoretical returns is illustrated in Figure \ref{fig:convergence}. Note that returns are averaged over 10 million random draws which is already computationally prohibitive when $n>100$. High precision (errors below $0.2 \%$) is already reached for $n=130$. According to Propositions \ref{prop:nonasyboundsdrift} and \ref{prop:nonasyboundsdriftridgeless}, the gaps tend to zero as the number of samples, $n$, increases. Better convergence patterns (strictly decreasing graphically) could also be obtained by averaging over a more substantial number of draws, but at the cost of extended computational times.

%\vspace{10mm}

\begin{figure}[!h]
\centering
    \centering
  \begin{minipage}[b]{0.49\textwidth}
    \includegraphics[width=\textwidth]{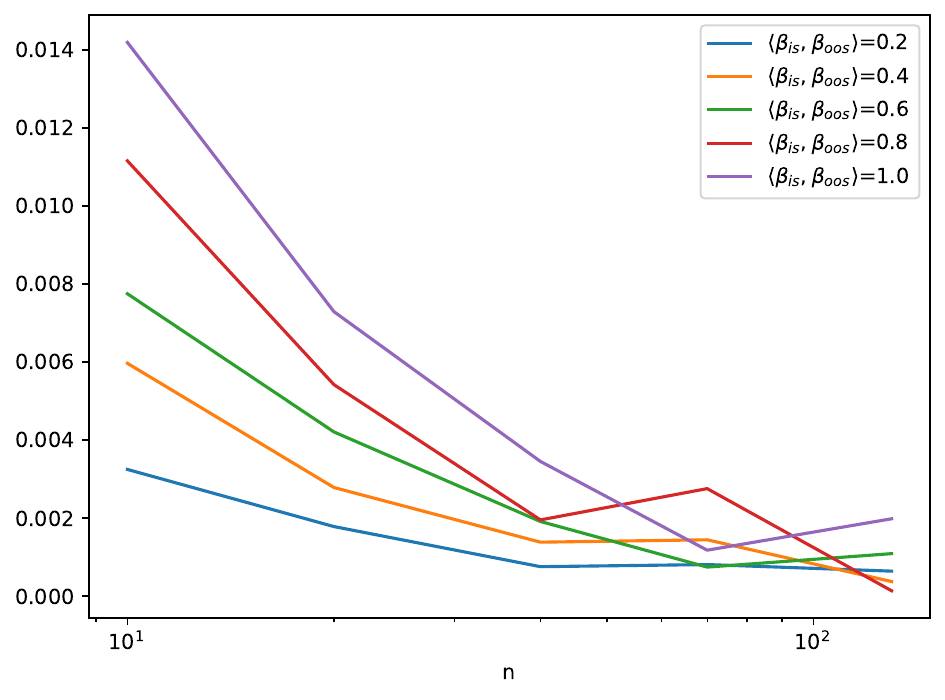}
  \end{minipage}
  \hfill
  \begin{minipage}[b]{0.49\textwidth}
    \includegraphics[width=\textwidth]{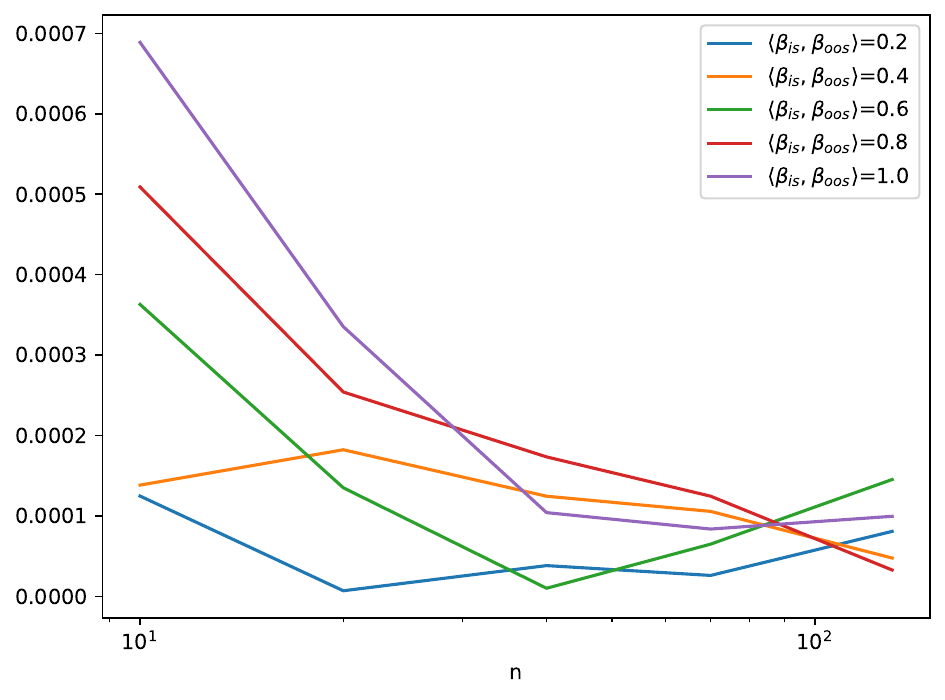}
  \end{minipage}  
\caption{\textbf{Convergence of average simulated returns.} \small We plot the difference between theoretical and average simulated returns. The true model's complexity is $c=3$ while the working model has a complexity of $c \phi = 0.5$. The left plot shows the case when the penalization parameter is $z=0.1$ (mild regularization), while the right one pertains to $z=10$ (strong regularization). Simulated returns are generated according to the process described in Section \ref{sec:numerical} and are averaged over 10 million random draws.\label{fig:convergence}}
\end{figure}

\clearpage

\section{Numerical experiments: correlated features}
\label{sec:app_numericanalysis_correlated}

\subsection{Equidistributed parameters vector}
\label{subsec:numanalysis_correlated_equidist}

We simulate the expected moments of the market timing strategy following the process described in Section \ref{sec:numerical} but introducing correlation among the features. In particular, the covariance matrices $\Sigma_x$ and $\Sigma_w$ have an autoregressive structure. That is, $(\Sigma_x)_{ij} = 0.9^{|i-j|}$ for any $i,j \in \{1, \dots, p\}$ and $(\Sigma_w)_{ij} = 0.9^{|i-j|}$ for any $i,j \in \{1, \dots, q\}$. Moreover, the deterministic matrix $P$ in Assumption \ref{ass:unobscov} is such that $P_{ij} = 1$ if $i=j$ and $0$ otherwise. 

\begin{figure}[!h]
\centering
    \includegraphics[width=16cm]{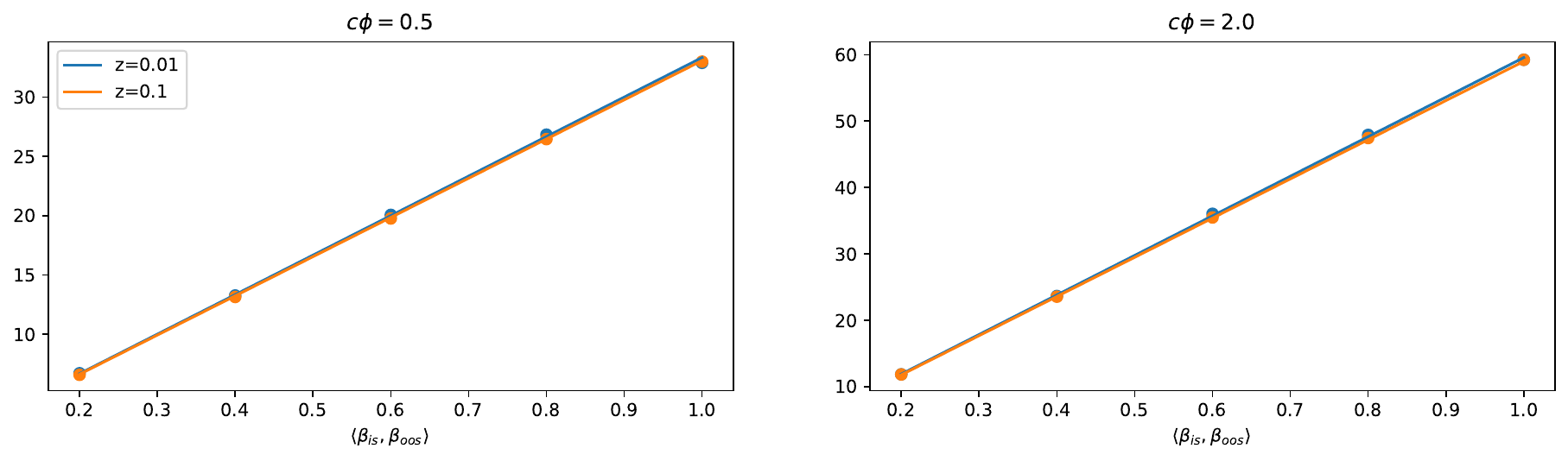}  
\caption{\textbf{Simulated and theoretical returns of the strategy under mild regularization (equidistributed (true) parameters vectors).} \small The model's true complexity is $c=3$ ($p+q = 300$, $n=100$). $z$ is the regularization parameter of the ridge estimator. We plot the theoretical returns of the strategy as per Proposition \ref{prop:expretgeneral} (solid line). Simulated returns (dots) are generated according to the process described in Section \ref{subsec:numanalysis_correlated_equidist} and are averaged over 100,000 random draws. \label{fig:retmildequigeneral}}
\end{figure}

\begin{figure}[!h]
\centering
    \includegraphics[width=16cm]{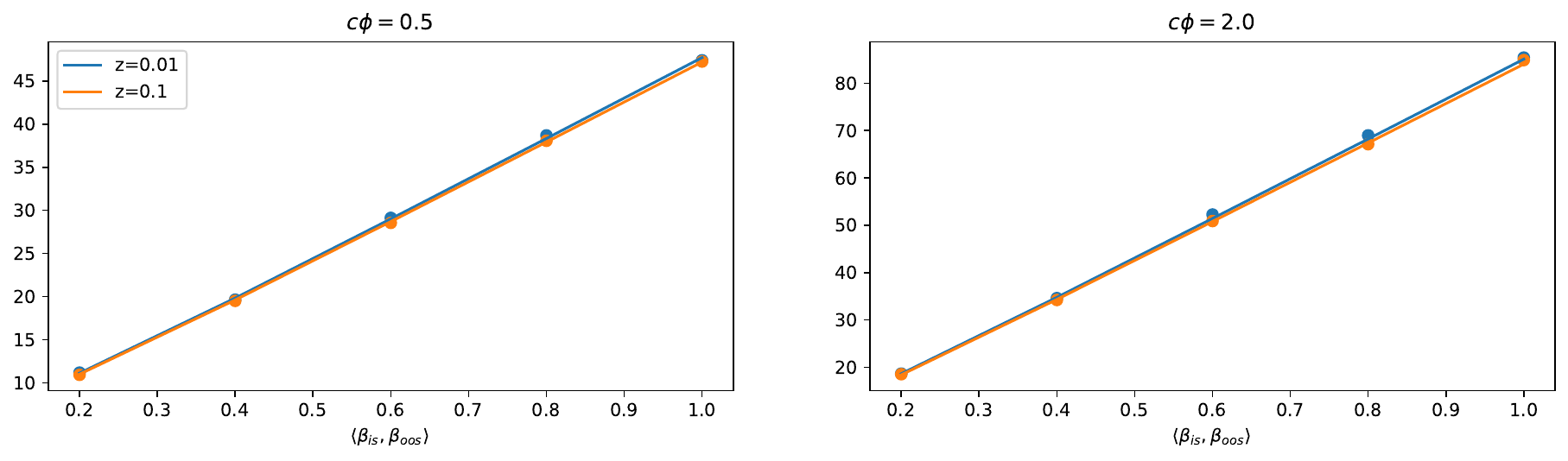}  
\caption{\textbf{Simulated and theoretical volatility of the strategy under mild regularization (equidistributed (true) parameters vectors).} \small The model's true complexity is $c=3$ ($p+q = 300$, $n=100$). $z$ is the regularization parameter of the ridge estimator. We plot the theoretical volatility of the strategy as per Proposition \ref{prop:volstratret} (solid line). Simulated returns (dots) are generated according to the process described in Section \ref{subsec:numanalysis_correlated_equidist} and the sample volatility is computed over 100,000 random draws. \label{fig:volmildequigeneral}}
\end{figure}

\begin{figure}[!h]
\centering
    \includegraphics[width=16cm]{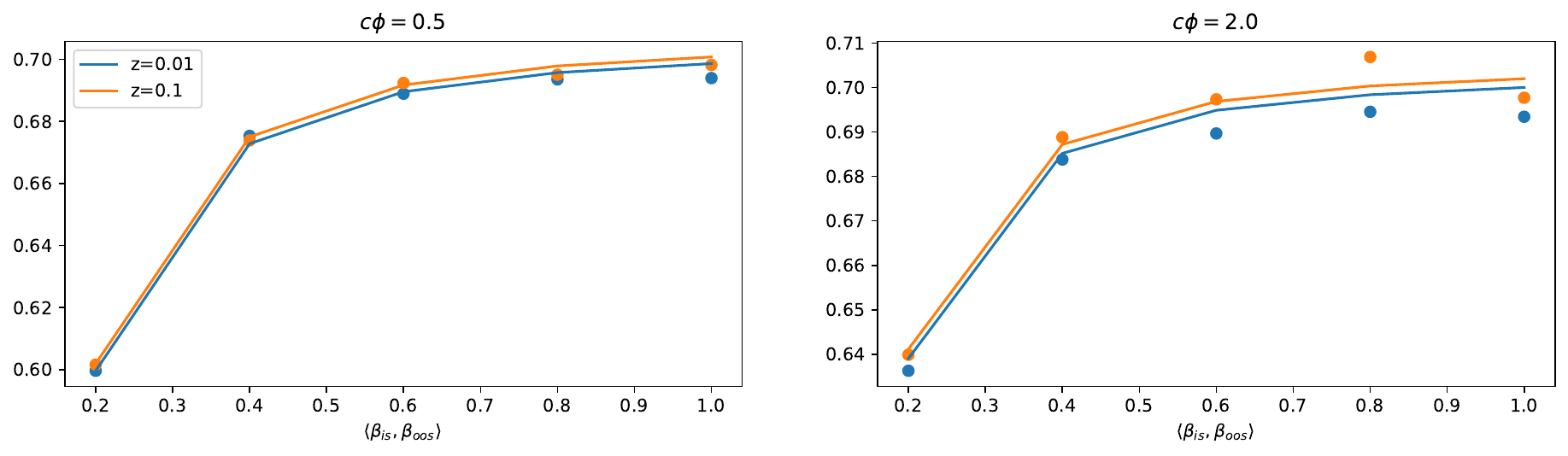}  
\caption{\textbf{Simulated and theoretical Sharpe ratio of the strategy under mild regularization (equidistributed (true) parameters vectors).} \small The model's true complexity is $c=3$ ($p+q = 300$, $n=100$). $z$ is the regularization parameter of the ridge estimator. We plot the theoretical Sharpe ratio of the strategy as the ratio of the expected returns displayed in Proposition \ref{prop:expretgeneral} and the square root of their variance as provided in Proposition \ref{prop:volstratret} (solid line). Simulated returns (dots) are generated according to the process described in Section \ref{subsec:numanalysis_correlated_equidist} and the average  averaged over 100,000 random draws.  \label{fig:srmildequigeneral}}
\end{figure}

\begin{figure}[!h]
\centering
    \includegraphics[width=16cm]{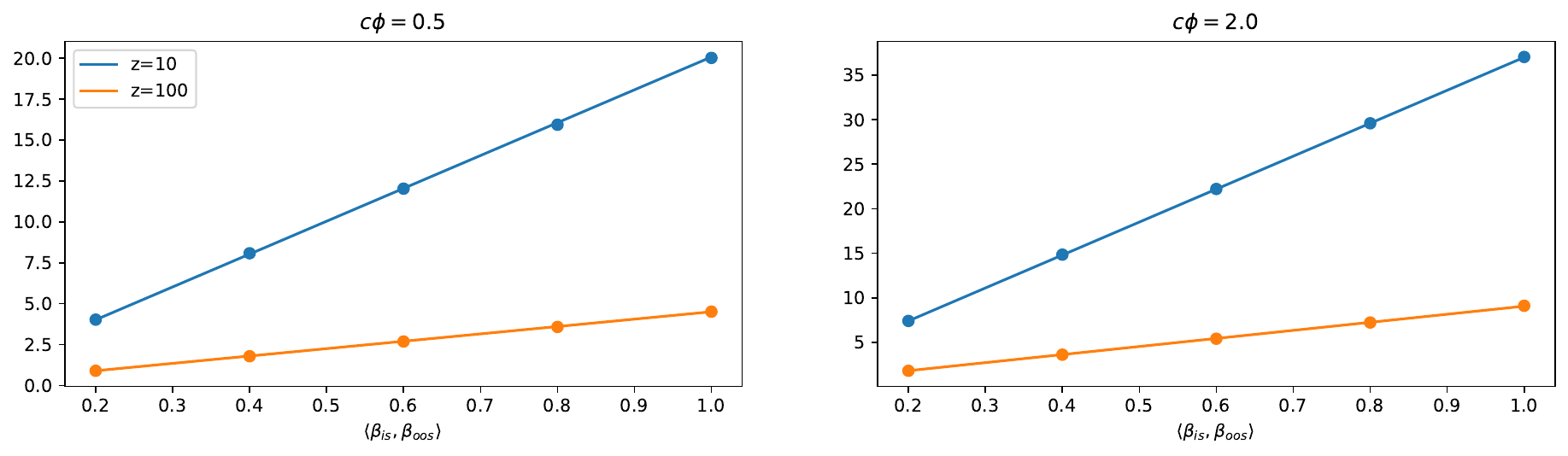}  
\caption{\textbf{Simulated and theoretical returns of the strategy under strong regularization (equidistributed (true) parameters vectors).} \small The model's true complexity is $c=3$ ($p+q = 300$, $n=100$). $z$ is the regularization parameter of the ridge estimator. We plot the theoretical returns of the strategy as per Proposition \ref{prop:expretgeneral} (solid line). Simulated returns (dots) are generated according to the process described in Section \ref{subsec:numanalysis_correlated_equidist} and are averaged over 100,000 random draws. \label{fig:retstrongequigeneral}}
\end{figure}

\begin{figure}[!h]
\centering
    \includegraphics[width=16cm]{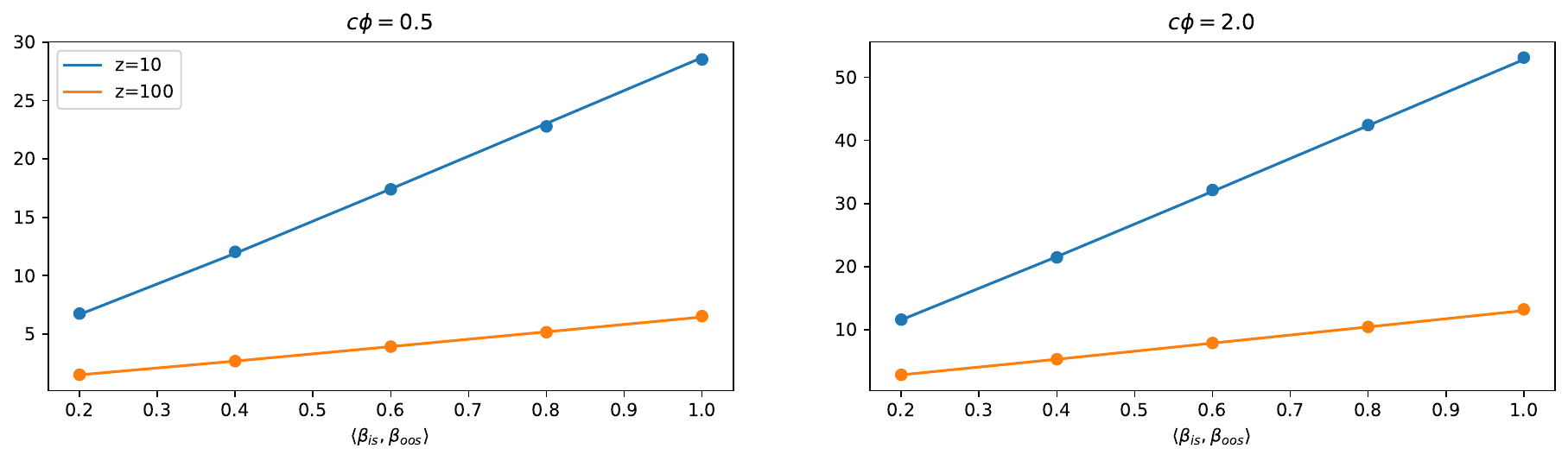}  
\caption{\textbf{Simulated and theoretical volatility of the strategy under strong regularization (equidistributed (true) parameters vectors).} \small The model's true complexity is $c=3$ ($p+q = 300$, $n=100$). $z$ is the regularization parameter of the ridge estimator. We plot the theoretical volatility of the strategy as per Proposition \ref{prop:volstratret} (solid line). Simulated returns (dots) are generated according to the process described in Section \ref{subsec:numanalysis_correlated_equidist} and the sample volatility is computed over 100,000 random draws. \label{fig:volstrongequigeneral}}
\end{figure}

\begin{figure}[!h]
\centering
    \includegraphics[width=16cm]{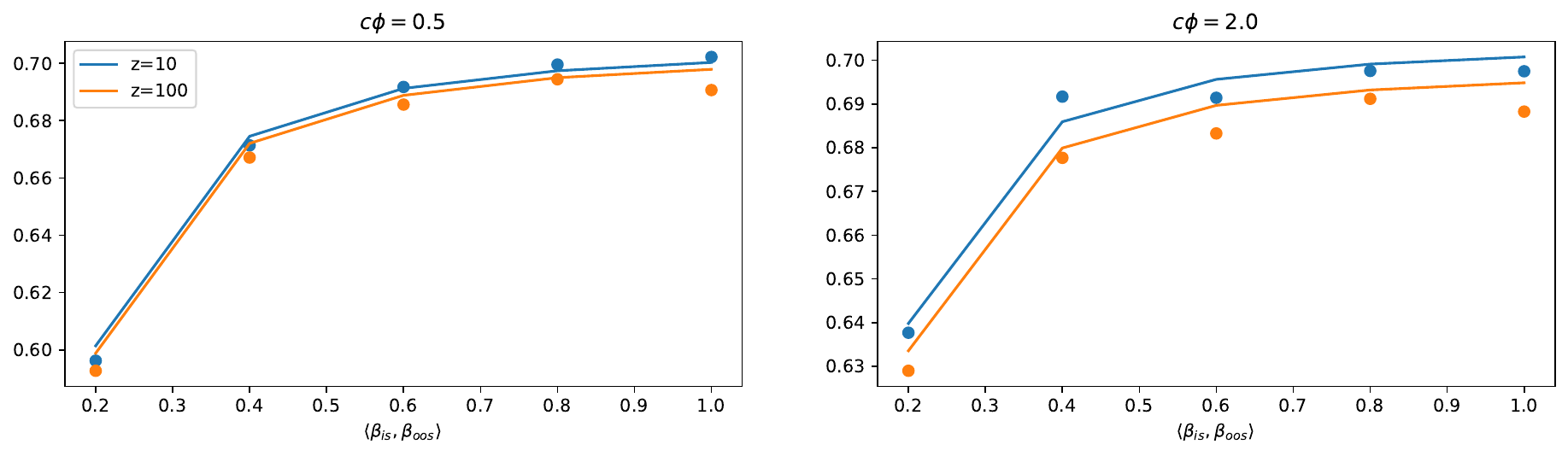}  
\caption{\textbf{Simulated and theoretical Sharpe ratio of the strategy under strong regularization (equidistributed (true) parameters vectors).} \small The model's true complexity is $c=3$ ($p+q = 300$, $n=100$). $z$ is the regularization parameter of the ridge estimator. We plot the theoretical Sharpe ratio of the strategy as the ratio of the expected returns displayed in Proposition \ref{prop:expretgeneral} and the square root of their variance as provided in Proposition \ref{prop:volstratret} (solid line). Simulated returns (dots) are generated according to the process described in Section \ref{subsec:numanalysis_correlated_equidist} and the average  averaged over 100,000 random draws.  \label{fig:srstrongequigeneral}}
\end{figure}

\clearpage

\subsection{Concentrated parameters vector}
\label{subsec:numanalysis_correlated_conc}

We simulate the expected moments of the market timing strategy following the process described in Section \ref{subsec:simretconcentrated} but introducing correlation among the features. As in Section \ref{subsec:numanalysis_correlated_equidist}, the covariance matrices $\Sigma_x$ and $\Sigma_w$ have an autoregressive structure. That is, $(\Sigma_x)_{ij} = 0.9^{|i-j|}$ for any $i,j \in \{1, \dots, p\}$ and $(\Sigma_w)_{ij} = 0.9^{|i-j|}$ for any $i,j \in \{1, \dots, q\}$. Moreover, the deterministic matrix $P$ in Assumption \ref{ass:unobscov} is such that $P_{ij} = 1$ if $i=j$ and $0$ otherwise. 

\begin{figure}[!h]
\centering
    \includegraphics[width=16cm]{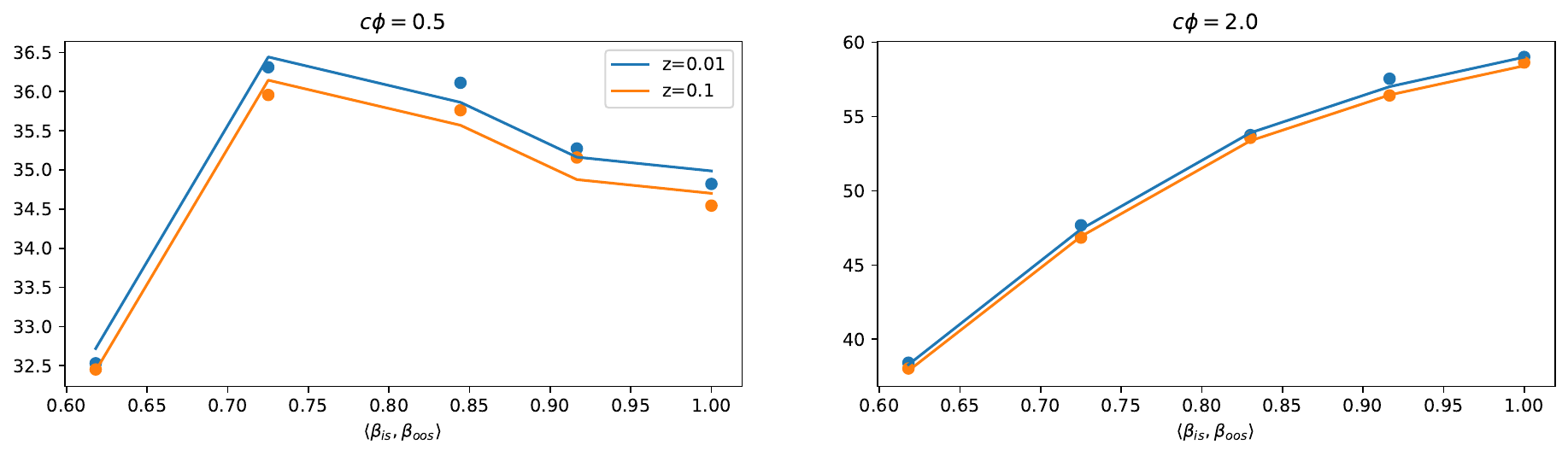}  
\caption{\textbf{Simulated and theoretical returns of the strategy under mild regularization (concentrated (true) parameters vectors).} \small The model's true complexity is $c=3$ ($p+q = 300$, $n=100$). $z$ is the regularization parameter of the ridge estimator. We plot the theoretical returns of the strategy as per Proposition \ref{prop:expretgeneral} (solid line). Simulated returns (dots) are generated according to the process described in Section \ref{subsec:numanalysis_correlated_conc} and are averaged over 100,000 random draws. \label{fig:retmildconcgeneral}}
\end{figure}

\begin{figure}[!h]
\centering
    \includegraphics[width=16cm]{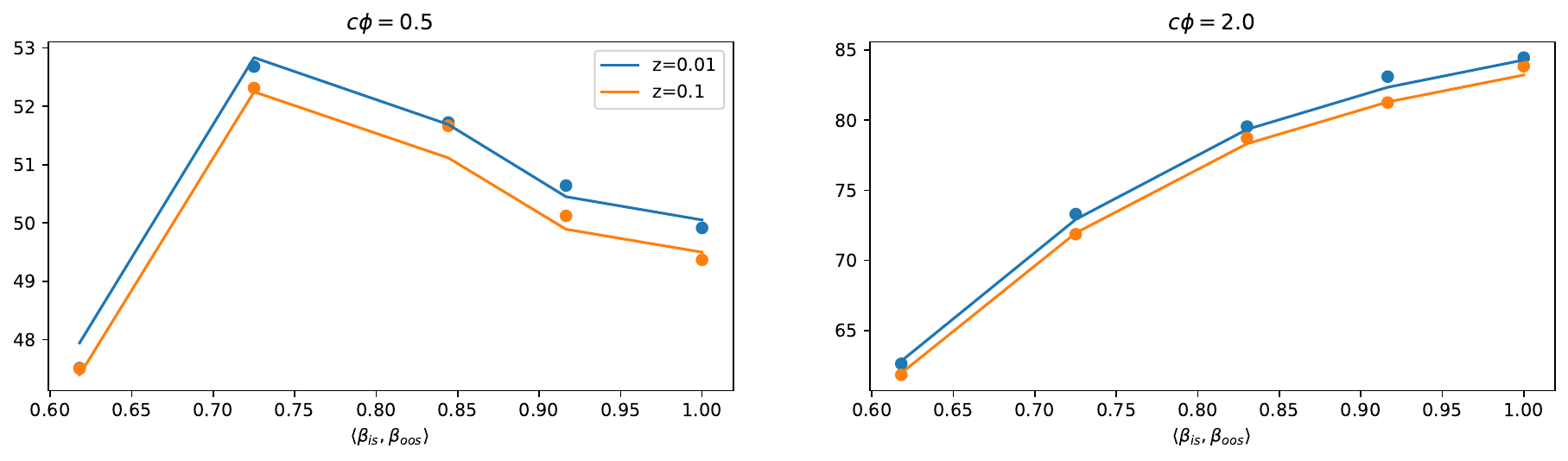}  
\caption{\textbf{Simulated and theoretical volatility of the strategy under mild regularization (concentrated (true) parameters vectors).} \small The model's true complexity is $c=3$ ($p+q = 300$, $n=100$). $z$ is the regularization parameter of the ridge estimator. We plot the theoretical volatility of the strategy as per Proposition \ref{prop:volstratret} (solid line). Simulated returns (dots) are generated according to the process described in Section \ref{subsec:numanalysis_correlated_conc} and the sample volatility is computed over 100,000 random draws. \label{fig:volmildconcgeneral}}
\end{figure}

\begin{figure}[!h]
\centering
    \includegraphics[width=16cm]{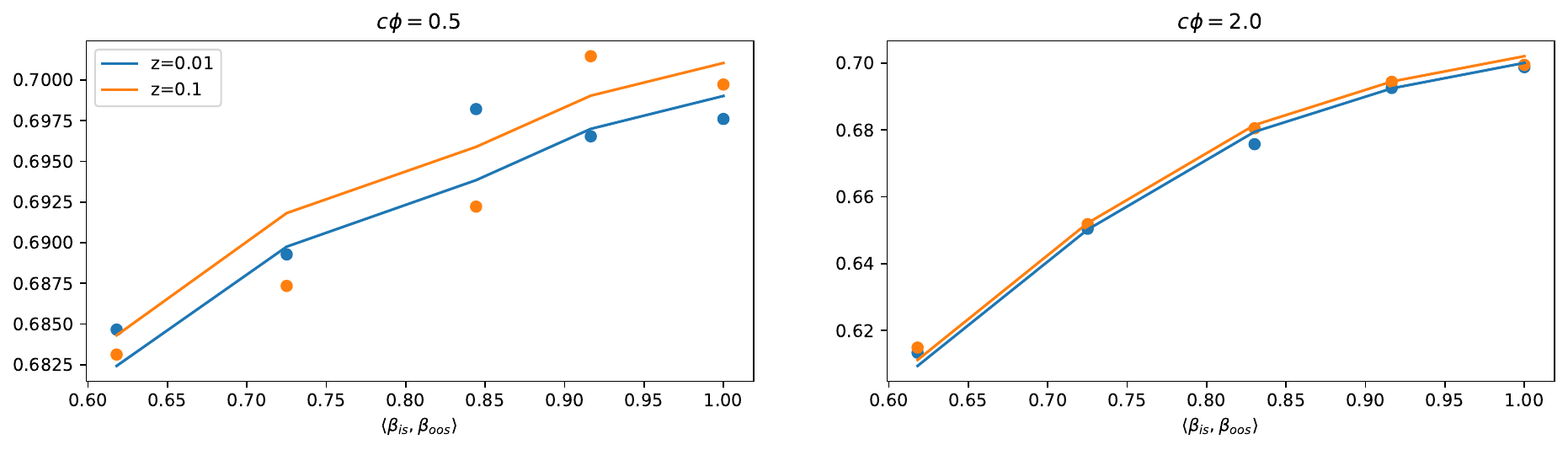}  
\caption{\textbf{Simulated and theoretical Sharpe ratio of the strategy under mild regularization (concentrated (true) parameters vectors).} \small The model's true complexity is $c=3$ ($p+q = 300$, $n=100$). $z$ is the regularization parameter of the ridge estimator. We plot the theoretical Sharpe ratio of the strategy as the ratio of the expected returns displayed in Proposition \ref{prop:expretgeneral} and the square root of their variance as provided in Proposition \ref{prop:volstratret} (solid line). Simulated returns (dots) are generated according to the process described in Section \ref{subsec:numanalysis_correlated_conc} and the average  averaged over 100,000 random draws.  \label{fig:srmildconcgeneral}}
\end{figure}

\begin{figure}[!h]
\centering
    \includegraphics[width=16cm]{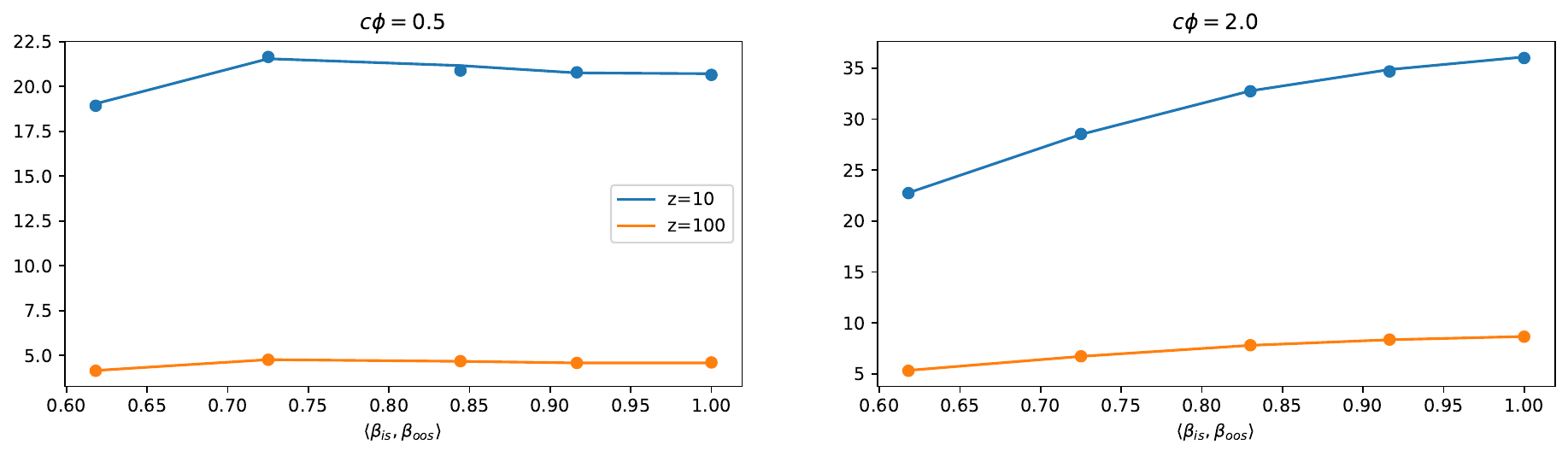}  
\caption{\textbf{Simulated and theoretical returns of the strategy under strong regularization (concentrated (true) parameters vectors).} \small The model's true complexity is $c=3$ ($p+q = 300$, $n=100$). $z$ is the regularization parameter of the ridge estimator. We plot the theoretical returns of the strategy as per Proposition \ref{prop:expretgeneral} (solid line). Simulated returns (dots) are generated according to the process described in Section \ref{subsec:numanalysis_correlated_conc} and are averaged over 100,000 random draws. \label{fig:retstrongconcgeneral}}
\end{figure}

\begin{figure}[!h]
\centering
    \includegraphics[width=16cm]{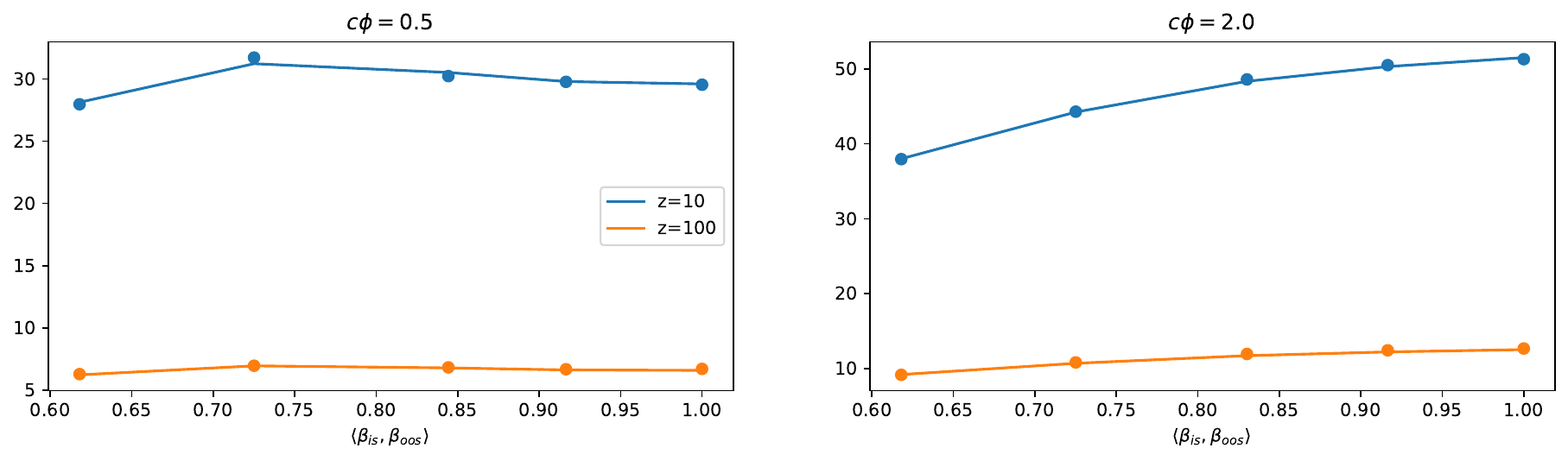}  
\caption{\textbf{Simulated and theoretical volatility of the strategy under strong regularization (concentrated (true) parameters vectors).} \small The model's true complexity is $c=3$ ($p+q = 300$, $n=100$). $z$ is the regularization parameter of the ridge estimator. We plot the theoretical volatility of the strategy as per Proposition \ref{prop:volstratret} (solid line). Simulated returns (dots) are generated according to the process described in Section \ref{subsec:numanalysis_correlated_conc} and the sample volatility is computed over 100,000 random draws. \label{fig:volstrongconcgeneral}}
\end{figure}

\begin{figure}[!h]
\centering
    \includegraphics[width=16cm]{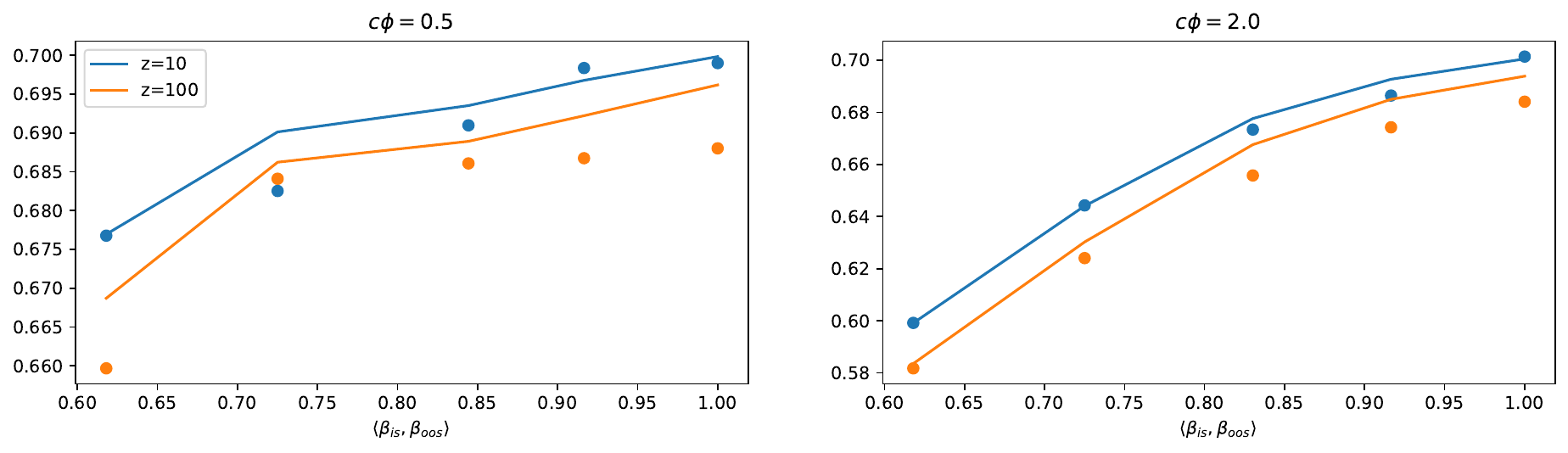}  
\caption{\textbf{Simulated and theoretical Sharpe ratio of the strategy under strong regularization (concentrated (true) parameters vectors).} \small The model's true complexity is $c=3$ ($p+q = 300$, $n=100$). $z$ is the regularization parameter of the ridge estimator. We plot the theoretical Sharpe ratio of the strategy as the ratio of the expected returns displayed in Proposition \ref{prop:expretgeneral} and the square root of their variance as provided in Proposition \ref{prop:volstratret} (solid line). Simulated returns (dots) are generated according to the process described in Section \ref{subsec:numanalysis_correlated_conc} and the average  averaged over 100,000 random draws.  \label{fig:srstrongconcgeneral}}
\end{figure}

%\clearpage

\clearpage

\section{Counterfactual returns with strong regularization}

\begin{figure}[!h]
\centering
    \includegraphics[width=16cm]{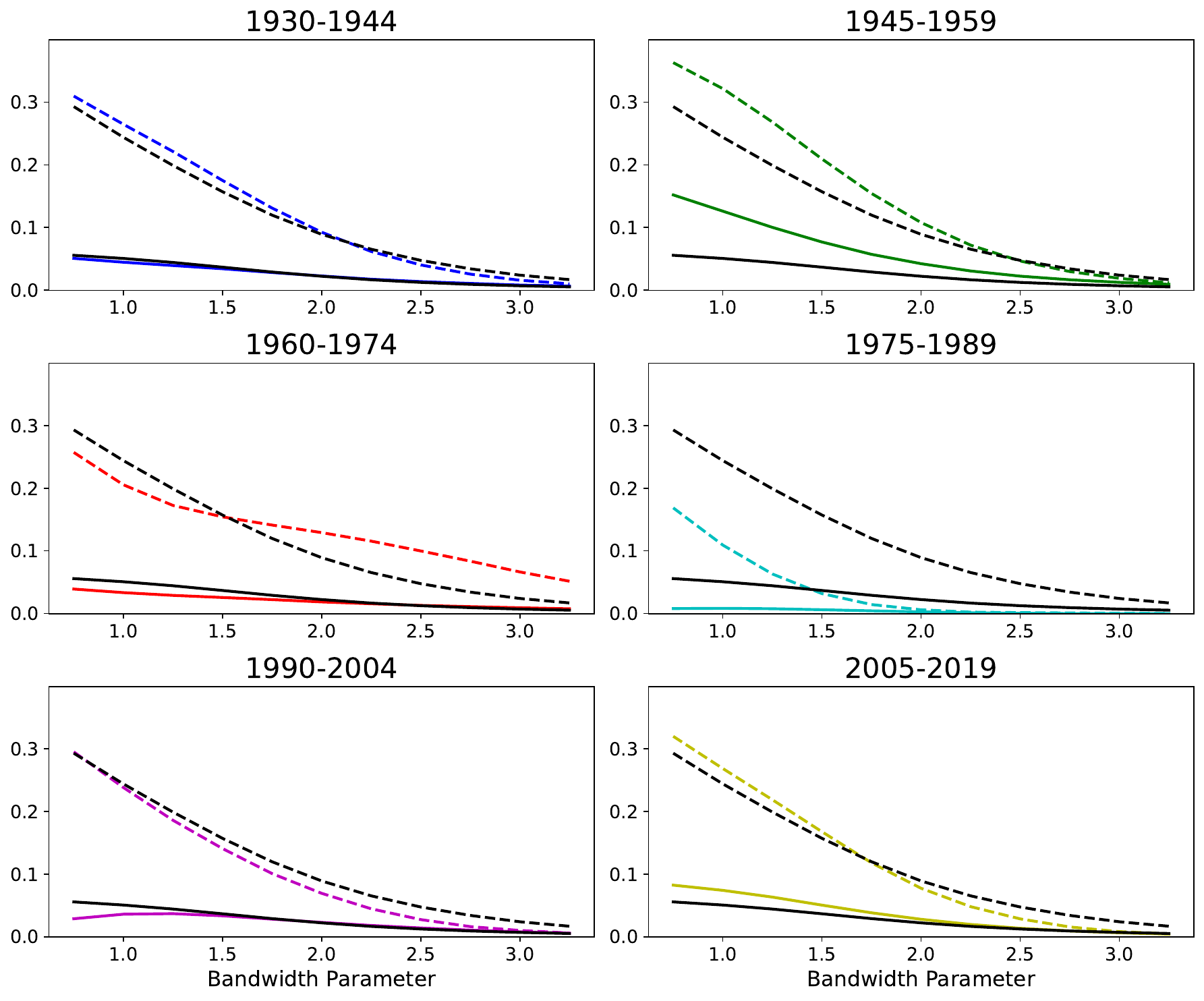}
\caption{\textbf{Portfolio returns without concept drift under strong regularization.} \small We plot the real (with concept drift, solid lines) and the counterfactual (without concept drift, dashed lines) returns of the strategy obtained on six sub-periods for $P=600$ RFFs, as a function of the bandwidth parameter of the kernel from the Random Fourier Features. Colored lines represent the strategy returns for the corresponding sub-periods, while \textbf{black lines} show returns over the full sample. The penalization parameter is $z=100$ (strong regularization). Returns are averaged over 500 random draws of RFFs. \label{fig:cf_strongreg}}
\end{figure}

\clearpage

\section{Proof of Proposition \ref{lem:2}}
\label{app:proof_l2}

\subsection{Preliminary results and discussions}

We start with a few useful results that correspond to the correctly specified model ($\theta=0$) in the isotropic case $\Sigma = I$. This will give us building blocks for subsequent calculations. 

\begin{lemma}
\label{lem:1}
The estimator $\beta_{\text{is}}$ in \eqref{eq:estimator} has out-of-sample bias and variance (defined in Equation \eqref{eq:R_X}):
\begin{align}
    B_X(\hat{\beta}_{\textnormal{is}}, \beta_{\textnormal{oos}})&=  {\beta}_{\textnormal{is}}' \Pi \Sigma \Pi {\beta}_{\textnormal{is}}+ ({\beta}_{\textnormal{oos}}-{\beta}_{\textnormal{is}})'\Sigma ({\beta}_{\textnormal{oos}}-{\beta}_{\textnormal{is}})+2 ({\beta}_{\textnormal{oos}}-{\beta}_{\textnormal{is}})'\Sigma \Pi {\beta}_{\textnormal{is}} \\
    V_X(\hat{\beta}_{\textnormal{is}}, \beta_{\textnormal{oos}})&=n^{-1}\sigma^2 \textnormal{Tr}(\hat{\Sigma}^+\Sigma),
\end{align}
where $\Pi=I-\hat{\Sigma}^+\hat{\Sigma}$, $I$ being the identity matrix.
\end{lemma}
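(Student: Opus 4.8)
The plan is to read everything off the closed form of the ridgeless estimator, together with two elementary facts about the Moore--Penrose pseudoinverse: $(nA)^{+}=n^{-1}A^{+}$ and $A^{+}AA^{+}=A^{+}$. In the correctly specified case ($\theta=0$) the training data satisfy $y=X\beta_{\textnormal{is}}+e$, so from \eqref{eq:estimator}, letting $z\to 0^{+}$ in the singular value decomposition of $X$ gives $(X'X+\alpha I)^{-1}X'\to (X'X)^{+}X'$ and hence
\[
\hat\beta_{\textnormal{is}} \;=\; (X'X)^{+}X'X\,\beta_{\textnormal{is}} \;+\;(X'X)^{+}X'e .
\]
Since $(X'X)^{+}X'X=\hat\Sigma^{+}\hat\Sigma=I-\Pi$ and $\E[e\mid X]=0$ (the draws, and $e$, are independent of $X$), the conditional mean is $\E[\hat\beta_{\textnormal{is}}\mid X]=(I-\Pi)\beta_{\textnormal{is}}$.

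For the bias I would substitute this into $B_X(\hat\beta_{\textnormal{is}},\beta_{\textnormal{oos}})=\|\E[\hat\beta_{\textnormal{is}}\mid X]-\beta_{\textnormal{oos}}\|_{\Sigma}^{2}$, write the inner vector as $-\Pi\beta_{\textnormal{is}}-(\beta_{\textnormal{oos}}-\beta_{\textnormal{is}})$, and expand the quadratic form. Because $\Pi$ is a symmetric idempotent projector and $\Sigma$ is symmetric, the square splits into $\beta_{\textnormal{is}}'\Pi\Sigma\Pi\beta_{\textnormal{is}}$, $(\beta_{\textnormal{oos}}-\beta_{\textnormal{is}})'\Sigma(\beta_{\textnormal{oos}}-\beta_{\textnormal{is}})$, and twice the cross term $(\beta_{\textnormal{oos}}-\beta_{\textnormal{is}})'\Sigma\Pi\beta_{\textnormal{is}}$, which is exactly the claimed identity.

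For the variance, the only stochastic term is $(X'X)^{+}X'e$, so $\mathbb{C}\text{ov}(\hat\beta_{\textnormal{is}}\mid X)=(X'X)^{+}X'\bigl(\sigma^{2}I\bigr)X(X'X)^{+}=\sigma^{2}(X'X)^{+}X'X(X'X)^{+}=\sigma^{2}(X'X)^{+}$ by $A^{+}AA^{+}=A^{+}$. Rewriting $(X'X)^{+}=n^{-1}\hat\Sigma^{+}$ and taking $\textnormal{Tr}(\,\cdot\;\Sigma)$ gives $V_X(\hat\beta_{\textnormal{is}},\beta_{\textnormal{oos}})=n^{-1}\sigma^{2}\textnormal{Tr}(\hat\Sigma^{+}\Sigma)$, as stated.

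The computation is short; the one place to be careful is the passage to the ridgeless limit — justifying via the SVD that $(X'X+\alpha I)^{-1}X'\to (X'X)^{+}X'$ and keeping the $n$ scalings straight between $(X'X)^{+}$ and $\hat\Sigma^{+}$ — and verifying that $\Pi=\Pi'=\Pi^{2}$ so that $\Pi\Sigma\Pi$ is indeed the correct middle matrix in the first bias term. Everything else is a routine expansion of quadratic forms.
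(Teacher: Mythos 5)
Your proposal is correct and follows essentially the same route the paper takes: the paper's proof is a one-line remark that one plugs $\E[\hat\beta_{\textnormal{is}}\mid X]=\hat\Sigma^{+}\hat\Sigma\,\beta_{\textnormal{is}}$ into the bias/variance decomposition of Equation \eqref{eq:R_X}, exactly the computation you carry out in detail (including the variance step via $A^{+}AA^{+}=A^{+}$). Your extra care about the ridgeless limit via the SVD and the $n$-scaling between $(X'X)^{+}$ and $\hat\Sigma^{+}$ is sound and just fills in what the paper leaves implicit.
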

This can be proven as in \cite{hastie2022surprises}, notably by plugging $\E[\hat{\beta}_u|X]=\hat{\Sigma}^+\hat{\Sigma}\beta_u$ into Equation \eqref{eq:R_X}. Compared to the i.i.d. case in which $\beta$ remains the same for both samples, only the bias is changed and has new additional terms that reflect the drift between $\beta_{\text{is}}$ and $\beta_{\text{oos}}$. Asymptotically, as both dimensions increase to infinity such that $p/n \rightarrow \gamma \in (0,1)$, it will hold that $\Pi=0$ as long as all eigenvalues are bounded from below by some strictly positive constant. Thus, compared to Proposition 2 in \cite{hastie2022surprises}, it holds that

\begin{equation}
\label{eq:limup}
    \underset{n,p \rightarrow \infty}{\lim}R_X(\hat{\beta}_{\text{is}},\beta_{\text{oos}}) = \sigma^2 \frac{\gamma}{1-\gamma} + \|{\beta}_{\textnormal{oos}}-{\beta}_{\textnormal{is}}\|^2_\Sigma, \quad p/n \rightarrow \gamma \in (0,1).
\end{equation}

Hence, we clearly see that in contrast with the case when $\beta$ does not change, there is an additional term that pertains to the change between the training loadings ($\beta_{\text{is}}$) and the testing ones ($\beta_\text{oos}$). This quantity is strictly positive if $\Sigma$ is positive definite and it increases with each element-wise absolute distance $|\beta_{\text{is},p}-\beta_{\text{oos},p}|$ when correlations are positive, or when $\Sigma$ is diagonally dominant.

We next turn the overparametrized regime in which the number of predictors lies asymptotically \textit{above} the number of observations, i.e., $\gamma >1$. This case is much more involved and for simplicity, we will only tackle it in the simplified framework in which predictors are independent (and have common variance), i.e., when $\Sigma = I$ - this is commonly referred to as the \textit{isotropic} feature case. 

In this particular case, the variance term is the same as in the original article, that is, $V_X(\hat{\beta}_{\textnormal{is}}, \beta_{\textnormal{oos}})=\sigma^2(\gamma-1)^{-1}$. However, the bias term is directly impacted and, in addition to the original term $\|\beta_{\text{is}} \|^2_2(1-\gamma^{-1})$, there are two quantities, namely $\|\beta_{\text{is}}-\beta_\text{oos} \|^2_2$ and $2({\beta}_{\textnormal{oos}}-{\beta}_{\textnormal{is}})' \Pi {\beta}_{\textnormal{is}} \rightarrow 2(1-\gamma^{-1})({\beta}_{\textnormal{oos}}-{\beta}_{\textnormal{is}})'{\beta}_{\textnormal{is}} $.
The first one is obvious from the second term of the bias in Lemma \ref{lem:1} when $\Sigma=I$. The second limiting value can be determined exactly as in the proof of Theorem 1 in \cite{hastie2022surprises}. Adding all three terms gives
\begin{align}
\underset{n,p \rightarrow \infty}{\lim}R_X(\hat{\beta}_{\text{is}},\beta_{\text{oos}}) &=\sigma^2(\gamma-1)^{-1} +\gamma^{-1}\|\beta_{\text{is}}\|_2^2+\|\beta_{\text{oos}}\|_2^2 -2\gamma^{-1}\beta_{\text{oos}}'\beta_{\text{is}} \\
&= \sigma^2(\gamma-1)^{-1} +\gamma^{-1}\|\beta_{\text{is}}-\beta_{\text{oos}} \|^2_2 + (1-\gamma^{-1}) \|\beta_{\text{oos}} \|_2^2,
\label{eq:limop}
\end{align}
where in the second equation all terms are positive. In particular, it is clear that the risk increases with $|\beta_{\text{is},p}-\beta_{\text{oos},p}|$ which measures the absolute change in the $p^{th}$ loading of the model. Intuitively, the more the data generating process changes, the larger the out-of-sample error. The general case involves the empirical distribution of the eigenvalues of $\Sigma$ and is out of the scope of the present paper, but it can be conjectured that the limiting risk will also involve the change term $\Delta = \beta_{\text{is}}-\beta_{\text{oos}}$.
 
\subsection{The proof}

Let $(x_0,w_0)$ be a random draw of $P_{x,w}$ that is independent from the training data.
\begin{align*}
    R_X^m(\hat{\beta}_{\text{is}},\beta_{\text{oos}}, \theta) &= \E[(x_0\hat{\beta}_{\text{is}} - \E[y_0|x_0,w_0])^2|X] \\
    &= \E[(x_0\hat{\beta}_{\text{is}} - \E[y_0|x_0] + \E[y_0|x_0] - \E[y_0|x_0,w_0])^2|X] \\
    &= \E[(x_0\hat{\beta}_{\text{is}} - \E[y_0|x_0])^2|X] + \E[(\E[y_0|x_0] - \E[y_0|x_0,w_0])^2] \\ &\quad + 2 \E[(x_0\hat{\beta}_{\text{is}} - \E[y_0|x_0])(\E[y_0|x_0] - \E[y_0|x_0,w_0])|X] \\
\end{align*}

In addition, we have $\E[y_0|x_0] = x_0' \beta_{\text{oos}} + \E[w_0'\theta_{oos}|x_0]$ and $\E[y_0|x_0,w_0] = x_0' \beta_{\text{oos}} + w_0'\theta$.

Focusing on the cross-term, note that by the law of iterated expectations we have

\begin{align*}
    & \quad \E[(x_0\hat{\beta}_{\text{is}} - \E[y_0|x_0])(\E[y_0|x_0] - \E[y_0|x_0,w_0])|X] \\
    &= \E[(x_0\hat{\beta}_{\text{is}} - x_0' \beta_{\text{oos}} - \E[w_0'\theta_{oos}|x_0])( \E[w_0'\theta_{oos}|x_0] - w_0'\theta_{oos})|X] \\
    & = \E \Big[ \E[(x_0\hat{\beta}_{\text{is}} - x_0' \beta_{\text{oos}} - \E[w_0'\theta_{oos}|x_0])( \E[w_0'\theta_{oos}|x_0] - w_0'\theta_{oos})|x_0,X] \big|X \Big] \\
    & = \E \Big[ (x_0\hat{\beta}_{\text{is}} - x_0' \beta_{\text{oos}} - \E[w_0'\theta_{oos}|x_0])( \E[w_0'\theta_{oos}|x_0] - \E[w_0'\theta_{oos}|x_0,X]) \big|X \Big]
\end{align*}

$w_0$ is independent from $X$. Then, $\E[w_0'\theta_{oos}|x_0,X] = \E[w_0'\theta_{oos}|x_0]$ and the cross term vanishes. Therefore,

\begin{equation*}
    R_X^m(\hat{\beta}_{\text{is}},\beta_{\text{oos}}, \theta) = \E[(x_0'(\hat{\beta}_{\text{is}}-\beta_{\text{oos}}))^2|X] + \E[(w_0'\theta_{oos} - \E[w_0'\theta_{oos}|x_0])^2]=R_X(\hat{\beta}_{\text{\text{is}}},\beta_{\text{oos}})+M(\theta_{oos})
\end{equation*}
The two cases (under- versus over-parametrized models) are obtained thanks to Equations \eqref{eq:limup} and \eqref{eq:limop}. In fact, when covariates are i.i.d. with unit variance, the training labels can be thought to have variance $\sigma^2 + \|\theta_{is}\|_2^2$. Then, it suffices to plug $\sigma^2 \to \sigma^2 + \|\theta_{is}\|_2^2$ into \eqref{eq:limup} or \eqref{eq:limop} and add the misspecification bias $M(\theta_{oos}) = \|\theta_{oos}\|_2^2$ to obtain the desired prediction risk. This completes the proof.

\section{Asymptotic moments of the strategy's return}
\label{subsec:app_expret}

Let $\lambda_1 \geq \lambda_2 \geq \dots \geq \lambda_{p} \geq 0$ be the eigenvalues of $\Sigma_x$. Denote by $\delta_{\lambda_i}$ a unit mass concentrated at $\lambda_i$. Then, the empirical spectral distribution (ESD) of $\Sigma_x$ is given by

\begin{equation}
\label{eq:esd}
   \hat{\mu} \equiv \hat{\mu}(\lambda) = \frac{1}{p} \sum_{i=1}^p \delta_{\lambda_i}
\end{equation}

For $ i \in \{ 1, \dots , p \}$, denote by $v_i$ the eigenvector of $\Sigma_x$ associated with the eigenvalue $\lambda_i$. Then, we also define the 

\begin{align}
\label{eq:rewesdcomp}
    \begin{split}
    &\hat{\zeta}_{is} := \frac{1}{\|\beta_{is}\|^2} \sum_{i=1}^p | \langle \beta_{is} , v_i \rangle |^2 \delta_{\lambda_i} \\
    &\hat{\zeta}_{oos} := \frac{1}{\|\beta_{oos}\|^2} \sum_{i=1}^p | \langle \beta_{oos} , v_i \rangle |^2 \delta_{\lambda_i} \\
    &\hat{\zeta}_r := \left\{ \begin{array}{c l}
    \frac{1}{\|\beta_{oos} - \beta_{is}\|^2} \sum_{i=1}^p | \langle \beta_{oos} - \beta_{is} , v_i \rangle |^2 \delta_{\lambda_i} \quad &\beta_{is} \neq \beta_{oos}\\
    0 \quad &\beta_{is} = \beta_{oos}
    \end{array} \right. 
    \end{split}
\end{align}

where the subscript $r$ in $\hat{\zeta}_r$ stands for "residual". Note that these three quantities are density measures. In particular, they are reweighted counterparts of $\hat{\mu}$ where the summands account for the projections of $\beta_{is}$, $\beta_{oos}$ or the residual vector $\beta_{oos} - \beta_{is}$  on the eigenvectors of $\Sigma_x$. These densities are often referred to as the eigenvector empirical spectral distributions (the so-called VESDs) or local densities of the states $\beta_{is}$, $\beta_{oos}$ and $\beta_{oos} - \beta_{is}$, respectively. 

Finally, we set 

\begin{align}
\label{eq:rewesddrift}
    \hat{\zeta}_d \equiv \hat{\zeta}_d (\lambda) := \frac{1}{\|\beta_{is}\| \|\beta_{oos}\|} \sum_{i=1}^p \langle \beta_{is} , v_i \rangle \langle v_i, \beta_{oos} \rangle  \delta_{\lambda_i},
\end{align}

where the subscript $d$ stands for "drift". We make two observations regarding $\hat{\zeta}_d$. First, it is a signed measure and no longer a density. Second, it can be decomposed as a function of $\hat{\zeta}_{is}$, $\hat{\zeta}_{oos}$ and $\hat{\zeta}_r$ as follows:

\begin{equation*}
    \hat{\zeta}_d = \frac{1}{2\|\beta_{is}\| \|\beta_{oos}\|} \big( \|\beta_{is}\|^2 \hat{\zeta}_{is} + \|\beta_{oos}\|^2 \hat{\zeta}_{oos} - \|\beta_{oos} - \beta_{is}\|^2 \hat{\zeta}_r \big).
\end{equation*}

As is customary in the literature, we will need a few technical assumptions.

\begin{assump}
\label{ass:opbound}
    $\lambda_1 = \| \Sigma_x \|_{\text{op}} \leq \tau^{-1}$.
\end{assump} 

\begin{assump}
\label{ass:accumulationzero}
    $\hat{\mu}([0,\tau]) \leq 1 - \tau$.
\end{assump}

Condition \ref{ass:opbound} and \ref{ass:accumulationzero} require the eigenvalues of $\Sigma_x$ to be bounded and not to be concentrated at zero.\footnote{As noted in \cite{hastie2020surprises}, this assumption could be relaxed by requiring only $d < p$ of the eigenvalues to be non-vanishing and by consequently redefining $c\phi = d/n$. However, we pursue with the stronger hypothesis to avoid cumbersome notations in the statement of the results and proofs.} 

\begin{assump}
\label{ass:probacvmumis}
    $\hat{\mu}$ weakly converges to a density measure $\mu$. Besides, $supp(\mu) \subseteq [0,\infty)$.
\end{assump}

\begin{assump}
\label{ass:probacvpimis}
    $\hat{\zeta}_{is}$, $\hat{\zeta}_{oos}$ and $\hat{\zeta}_{r}$ weakly converges to some density measures $\zeta_{is}$, $\zeta_{oos}$ and $\zeta_{r}$ respectively. Moreover, these limiting measures have support on $[0,\infty)$.
\end{assump}

Finally, the latter two assumptions ensure that the densities defined in \eqref{eq:esd} and \eqref{eq:rewesdcomp} are well-behaved in the asymptotic proportional regime, i.e. when $n,p \rightarrow \infty$ such that $p/n \rightarrow c\phi$.

\begin{remark}
    Assumption \ref{ass:probacvpimis} implies that 
    
    \begin{align*}
        \hat{\zeta}_{d} = \frac{1}{2\|\beta_{is}\| \|\beta_{oos}\|} \big( \|\beta_{is}\|^2 \hat{\zeta}_{is} + \|\beta_{oos}\|^2 \hat{\zeta}_{oos} - \|\beta_{oos} - \beta_{is}\|^2 \hat{\zeta}_{r} \big)
    \end{align*}

    weakly converges to a (possibly signed) measure $\zeta_{d}$.  
\end{remark}

Next, we define the solutions to two different equations, $m(z;c\phi,\hat{\mu})$ and $s_0(c\phi,\hat{\mu})$, that will appear in the subsequent theoretical results. Denote by $\C_+$ the complex upper half plane, i.e. $\C_+ = \{ z \in \C : \Im(z) > 0 \}$. 

\begin{definition}
\label{def:mn}
    Let $c\phi = p/n > 0$ and $z \in \C_+$. Define $m := m(z;c\phi,\hat{\mu})$ as the unique solution in $\C_+$ to

    \begin{equation*}
        m = \int \frac{1}{\lambda(1-c\phi-c\phi zm)-z} d\hat{\mu} ( \lambda).
    \end{equation*}

    This definition is extended to $\Im(z) = 0$ whenever possible. 
    
\end{definition}

\begin{definition}
\label{def:s0}
    Let $s_0 := s_0(c\phi,\hat{\mu})$ be the unique non-negative solution to

    \begin{equation*}
        1 - \frac{1}{c\phi} = \int \frac{1}{1 + \lambda c\phi s_0} d\hat{\mu}(\lambda).
    \end{equation*}
    
\end{definition}

Lastly, we define the following quantity which repeatedly appear in the different characterizations of the expected return of the strategy. Note that it does not depend on the regularization parameter, $z$. 

\begin{definition}
\label{def:retnoreg}
    \begin{align*}
    \begin{split}
        \Bar{H}(\hat{\zeta}_d)
    & = \langle \beta_{oos}, \beta_{is} \rangle_{\Sigma} \\
    &= \|\beta_{is}\| \|\beta_{oos}\| \int \lambda d \hat{\zeta}_d (\lambda) .
    \end{split}
\end{align*}
\end{definition}

\subsection{Well-specified model and ridge regularization}
\label{subsub:wellspecifiedridge}

Here, we consider that the true model is known. Hence, throughout this section, $\theta = 0$ and consequently, $q=0$, $\phi = 1$ and $\Sigma \equiv \Sigma_x$.

Let 

\begin{equation*}
    F_n(z) = -z \langle \beta_{oos}, \Sigma (\hat{\Sigma}+zI)^{-1}\beta_{is} \rangle,
\end{equation*}

and define:

\begin{align*}
    \Bar{F}(z;\hat{\mu}, \hat{\zeta}_d) &= -z \langle \beta_{oos}, \Sigma ( zI + (czm(-z;c,\hat{\mu}) + 1 - c) \Sigma)^{-1} \beta_{is} \rangle \\
    &= - z \|\beta_{is}\| \|\beta_{oos}\| \int \frac{\lambda}{\lambda(cz m(-z;c,\hat{\mu}) + 1 - c) + z} d \hat{\zeta}_d ( \lambda) \\
\end{align*}

where $m := m(-z;c,\hat{\mu})$ is given in Definition \ref{def:mn}.

Moreover, consider the following hypothesis.

We now state a number of lemmas that will be useful in the remainder of the proofs. In particular, Lemma $\ref{lem:deterministic_equivalent}$ establishes non-asymptotic bounds for the gap between $F_n$ and $\Bar{F}$ which implies almost sure convergence as proved in Lemma $\ref{lem:asympcv}$.

\begin{lemma}
\label{lem:deterministic_equivalent}
    Let $n^{-2/3+\tau} \leq z \leq \tau^{-1}$ and Assumptions (\ref{ass:complexity}), (\ref{ass:opbound}), (\ref{ass:accumulationzero}) and (\ref{ass:iddistcov}) hold. Then, for any $D>0$ and $\varepsilon>0$, there exist $n_0 := n_0(\varepsilon,D)$ such that for $n \geq n_0$, 

    \begin{equation}
    \label{eq:ineqF}
        |F_n(z) - \Bar{F}(z;\hat{\mu}, \hat{\zeta}_d)| \leq \frac{\| \Sigma \beta_{oos}\| \|\beta_{is}\|}{n^{\frac{1}{2}-\varepsilon}z}
    \end{equation}

holds with probability $1-n^{-D}$. 
\end{lemma}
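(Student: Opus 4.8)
The plan is to recognize $F_n(z)$ as, up to the prefactor $-z$, a bilinear form in the resolvent $(\hat{\Sigma}+zI)^{-1}$ evaluated against the deterministic vectors $u := \Sigma\beta_{oos}$ and $v := \beta_{is}$, and to bound its distance to the corresponding bilinear form in the \emph{deterministic equivalent} resolvent. Writing $\rho(z) := c z\, m_n(-z;c,\mu) + 1 - c$, one has $\bar F_n(z;\mu,\zeta_d) = -z\,\langle u, (zI+\rho(z)\Sigma)^{-1} v\rangle$, so the claim reduces to
\[
\big| \langle u, \big[(\hat{\Sigma}+zI)^{-1} - (zI+\rho(z)\Sigma)^{-1}\big] v \rangle \big| \le \frac{\|u\|\,\|v\|}{n^{1/2-\varepsilon}\,z^{2}}
\]
on an event of probability $1-n^{-D}$. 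First I would check that $\rho(z)$ is exactly the scalar solving the self-consistent (fixed-point) equation of the Marchenko--Pastur deterministic equivalent for $\hat\Sigma = n^{-1}\Sigma^{1/2}Z'Z\Sigma^{1/2}$: differentiating the defining relation of $m_n$ in Definition~\ref{def:mn} and performing the change of variables $t=\rho(z)/z$ shows that $\bar Q(z):=(zI+\rho(z)\Sigma)^{-1}$ coincides with the standard deterministic equivalent $\big(zI+\Sigma/(1+\delta(z))\big)^{-1}$ with $\delta(z)=\tfrac1n\tr(\Sigma\bar Q(z))$; Assumptions~\ref{ass:opbound} and~\ref{ass:accumulationzero} guarantee existence and uniqueness and the lower bound $\rho(z)\ge \tau'$ on the relevant $z$-range.

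The core of the argument is a \textbf{quantitative anisotropic local law} for the sample covariance resolvent: under Assumption~\ref{ass:indepcov} (independent entries with all moments finite), for $n^{-2/3+\tau}\le z\le \tau^{-1}$ and any deterministic unit vectors $a,b\in\R^p$,
\[
\big|\langle a, (\hat{\Sigma}+zI)^{-1} b\rangle - \langle a, \bar Q(z) b\rangle\big| \le \frac{1}{n^{1/2-\varepsilon}\,z^{2}}
\]
holds with probability $1-n^{-D}$ for every $D>0$ once $n\ge n_0(\varepsilon,D)$ (the $1/z^2$ comes from the edge rate $\tfrac1z\sqrt{\Im m(-z)/(nz)}$ together with the bounds $z\le\tau^{-1}$ and $z$ small, which let one absorb lower powers of $z$). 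Rather than reprove this, I would invoke the anisotropic/isotropic local law from the random-matrix literature (Knowles--Yin, Bloemendal--Erd\H{o}s--Knowles--Yau--Yin, or the resolvent-concentration estimates used in \cite{hastie2022surprises}), stated in exactly the generality needed: bounded $\|\Sigma\|_{\text{op}}$, no spectral mass at $0$, and spectral parameter down to the soft edge at scale $n^{-2/3}$. Applying it with $a = u/\|u\|$, $b = v/\|v\|$, then multiplying through by $\|u\|\,\|v\|$ and by $z$, yields $|F_n(z)-\bar F_n(z;\mu,\zeta_d)| \le \|\Sigma\beta_{oos}\|\,\|\beta_{is}\|/(n^{1/2-\varepsilon}z)$, which is \eqref{eq:ineqF}. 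The integral representation of $\bar F_n$ then follows by diagonalizing $\Sigma$ and recognizing $\sum_i \langle\beta_{is},v_i\rangle\langle v_i,\beta_{oos}\rangle\,\tfrac{\lambda_i}{\lambda_i\rho(z)+z} = \|\beta_{is}\|\,\|\beta_{oos}\|\int \tfrac{\lambda}{\lambda\rho(z)+z}\,d\zeta_d(\lambda)$ from the definition~\eqref{eq:rewesddrift} of $\zeta_d$.

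The main obstacle is twofold. First, I must make sure the cited local law is genuinely valid \emph{uniformly down to $z\asymp n^{-2/3+\tau}$}: bulk deterministic-equivalent statements (valid only for $z\gtrsim 1$) are insufficient, and near the soft edge the precise error rate and its $z$-dependence are delicate, so controlling the factor $1/z^{2}$ requires care about how $\Im m(-z)$ and the subleading terms behave as $z\downarrow 0$, and about the fact that $\Sigma\beta_{oos}$ is an \emph{arbitrary} deterministic direction (forcing the genuinely anisotropic, not merely averaged, version). Second, one has to pin down the exact correspondence between the companion Stieltjes transform $m_n$ of Definition~\ref{def:mn} and the fixed-point scalar of the deterministic equivalent, including sign and branch conventions when passing from $z\in\C_+$ to the real argument $-z<0$, so that $\rho(z)>0$ and $\bar Q(z)$ is positive definite; once these points are settled the remainder (pulling out norms, the union bound over the high-probability event, and rescaling by $z$) is routine bookkeeping.
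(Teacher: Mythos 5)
Your proposal follows essentially the same route as the paper: the paper also normalizes the directions $\Sigma\beta_{oos}$ and $\beta_{is}$, invokes the anisotropic local law of Knowles--Yin (their Theorem 3.16(i) and Remark 3.17) on the domain reaching down to $\Re(z)\geq n^{-2/3+\tau}$, controls the $z$-dependence of the error via the Stieltjes-transform bound $|\Im(r_n(-z))|\leq |\Im(z)|/\Re(z)^2$ before sending $\Im(z)\to 0$, and converts the companion transform $r_n(-z)=c\,m_n(-z;c,\mu)+(1-c)/z$ into the $m_n$-form of $\bar F_n$ exactly as you anticipate. The only cosmetic difference is that you phrase the deterministic equivalent via the trace fixed point $\delta(z)$ rather than directly via $r_n$, which is an equivalent parametrization.
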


\begin{proof}
    The proof mainly consists in an application of the anisotropic local law derived by \cite{knowles2017anisotropic}.

    First, let $\Tilde{\beta}_{(oos, \Sigma)} = (\Sigma \beta_{oos}) / \| \Sigma \beta_{oos}\| = (\Sigma \beta_{oos}) / \| \beta_{oos}\|_{\Sigma}$ and $\Tilde{\beta}_{is} = \beta_{is} / \| \beta_{is}\|$. Furthermore, define 
    \begin{align*}
        &J_n(z) = -z \langle \Tilde{\beta}_{(oos, \Sigma)}, (\hat{\Sigma}+zI)^{-1}\Tilde{\beta}_{is} \rangle \\
        &\Bar{J}(z;\hat{\mu}, \hat{\zeta}_d) = -z \langle \Tilde{\beta}_{(oos, \Sigma)}, ( zI + (czm(-z;c,\hat{\mu}) + 1 - c) \Sigma)^{-1} \Tilde{\beta}_{is} \rangle,
    \end{align*}
    
    which are standardized counterparts of $F_n$ and $\Bar{F}$.

    Define $\mathbf{D}(\tau,n) = \{ z \in \C_+: |z| \geq \tau, |\Re (z)| \leq \tau^{-1}, 0 \leq \Im (z) \leq \tau^{-1} \}$ and $\mathbf{D}^O (\tau,n) = \{ z \in \mathbf{D}(\tau,n) : \Re (z) \leq -n^{-2/3 + \tau} \}$. For the sake of simplicity, we drop the dependence on $\tau$ and $n$ and let $\mathbf{D}^O := \mathbf{D}^O (\tau,n)$. 

    Consider $z \in \C_-$, i.e. $\Im (z) < 0$, such that $\Re (z) \geq n^{-2/3 + \tau}$. Hence, $-z \in \mathbf{D}^O$. Then, as per Theorem 3.16. (i) and Remark 3.17. in \cite{knowles2017anisotropic}, for any $D>0$ and $\varepsilon>0$, there exist $n_0 := n_0(\varepsilon,D)$ such that for $n \geq n_0$, the following holds with probability $1-n^{-D}$

    \begin{equation*}
        |J_n(z) + \langle \Tilde{\beta}_{(oos, \Sigma)}, ( I + r_n(-z) \Sigma)^{-1} \Tilde{\beta}_{is} \rangle| \leq n^\varepsilon \sqrt{\frac{\Im (r_n(-z))}{n \Im (-z) }},
    \end{equation*} 

    with $r_n = r_n(-z)$ the unique solution in $\C_+$ of the equation

    \begin{equation*}
    \label{eq:fixedr_n}
        \frac{1}{r_n} = - z + c \int \frac{\lambda}{1 + r_n \lambda} d\hat{\mu}(\lambda),\quad \Im(-z) > 0.
    \end{equation*}

    where $\lambda_1 \geq \lambda_2 \geq \dots \geq \lambda_p \geq 0$ are the eigenvalues of $\Sigma$.

    Moreover, $r_n(-z)$ is the Stieltjes transform of a probability measure $\rho_n$ with bounded support in $[0,\infty)$.

    Therefore, we have 

    \begin{align*}
        \Im (r_n(-z)) &= \Im \Big( \int \frac{1}{x + z} d\rho_n (x) \Big) = \int \frac{-\Im(z)}{(x + \Re(z))^2 + \Im(z)^2} d\rho_n(x).
    \end{align*}

    This implies 

    \begin{equation*}
        |\Im(r_n(-z))| \leq \frac{|\Im(z)|}{\Re(z)^2}.
    \end{equation*}

    and consequently,

    \begin{equation*}
        |J_n(z) + \langle \Tilde{\beta}_{(oos, \Sigma)}, ( I + r_n(-z) \Sigma)^{-1} \Tilde{\beta}_{is} \rangle| \leq \frac{1}{n^{1/2 - \varepsilon}\Re(z)}\sqrt{\frac{|\Im(z)|}{\Im (-z) }}.
    \end{equation*} 

    Note that $r_n$ is the companion Stieltjes transform of the ESD defined in \eqref{eq:esd}. However, we would like to relate the deterministic equivalent of $J_n(z)$ to the eigenvalue distribution of the covariance matrix, or equivalently to its Stieltjes transform. This is possible thanks to the following (well-known) relation in random matrix theory (see e.g. \cite{silverstein1995strong})

    \begin{equation}
    \label{eq:relationrm}
        r_n(-z) = c m(-z;c,\hat{\mu}) + \frac{1-c}{z}.
    \end{equation}

    Here, $m := m(-z;c,\hat{\mu})$ is the unique solution in $\C_+$ to

    \begin{equation*}
        m = \int \frac{1}{\lambda(1-c+czm)+z} d \hat{\mu} (\lambda), \quad \Im(-z) > 0.
    \end{equation*}

    Replacing $r_n(-z)$ as per Equation \eqref{eq:relationrm} in $-\langle \Tilde{\beta}_{(oos, \Sigma)}, ( I + r_n(-z) \Sigma)^{-1} \Tilde{\beta}_{is} \rangle$, we obtain

    \begin{equation*}
        |J_n(z) - \Bar{J}(z;\hat{\mu}, \hat{\zeta}_d)| \leq \frac{1}{n^{1/2 - \varepsilon}\Re(z)}\sqrt{\frac{|\Im(z)|}{\Im (-z) }}.
    \end{equation*} 

    Taking $\Im(z) \rightarrow 0$, we get the following inequality with probability $1-n^{-D}$:

    \begin{equation*}
        |J_n(z) - \Bar{J}(z;\hat{\mu}, \hat{\zeta}_d)| \leq \frac{1}{n^{1/2 - \varepsilon}z}, \quad z \in [n^{-2/3+\tau}, \tau^{-1}],
    \end{equation*} 

    for any sufficiently large $n$.

    Finally, multiplying both sides of the inequality by $\| \Sigma \beta_{oos}\| \| \beta_{is}\|$, we obtain \eqref{eq:ineqF}.

    The proof of Lemma \ref{lem:deterministic_equivalent} is complete.
    
\end{proof}

\begin{lemma}
\label{lem:baiyin}
    
    (\cite{yin1988limit}, \cite{bai2008limit})  Let $Y$ be a $n \times p$ matrix with i.i.d. entries, $(y_{ij})_{1 \leq i \leq n, 1 \leq j \leq p}$. Besides, for all $i,j$, $\E[y_{ij}] = 0$, $\E[y_{ij}^2] = 1$ and $\E[|y_{ij}|^{4+a}] \leq C < \infty$ for $a > 0$. Then, in the limit, as $p,n \rightarrow \infty$, $p/n \rightarrow c \in (0,\infty)$,

    \begin{align*}
        \lim \sup \quad \lambda_{\text{max}}\Big(\frac{1}{n} Y'Y - (1+c)I\Big) \leq 2 \sqrt{c}, \quad a.s.
    \end{align*}
    
\end{lemma}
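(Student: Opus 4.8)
The statement is the classical Bai--Yin law for the largest eigenvalue of a sample covariance matrix, and the most economical course is simply to invoke \cite{yin1988limit} and \cite{bai2008limit}; below I record the shape of the moment-method argument that underlies them. Write $M_n=\frac1n Y'Y$, a $p\times p$ positive semidefinite matrix, and note that $\lambda_{\max}\!\big(M_n-(1+c)I\big)=\lambda_{\max}(M_n)-(1+c)$, so it suffices to prove $\limsup_n\lambda_{\max}(M_n)\le(1+\sqrt c)^2$ almost surely (the matching lower bound, not needed here, follows from the almost sure weak convergence of the empirical spectral distribution of $M_n$ to the Marchenko--Pastur law, whose upper edge is $(1+\sqrt c)^2$).

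\textbf{Step 1 (truncation and recentering).} Because only a $(4+a)$-th moment is assumed, I would first replace $y_{ij}$ by the truncated, recentred, rescaled variable $\hat y_{ij}=s_n^{-1}\big(y_{ij}\mathbf 1_{\{|y_{ij}|\le\eta_n\sqrt n\}}-\E[y_{ij}\mathbf 1_{\{|y_{ij}|\le\eta_n\sqrt n\}}]\big)$, with $\eta_n\downarrow 0$ chosen slowly and $s_n\to 1$ the renormalising constant. Using $\E|y_{ij}|^{4+a}\le C$ to control both the number and the size of the removed entries, together with an operator-norm bound on the resulting low-rank correction (and, if $a$ is small, a subsequence device), one shows $\limsup_n\lambda_{\max}(M_n)=\limsup_n\lambda_{\max}(\hat M_n)$ a.s., where $\hat M_n=\frac1n\hat Y'\hat Y$ and $\hat Y=(\hat y_{ij})$. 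The gain is that $\hat y_{ij}$ now has moments of all orders, with the quantitative bound $\E|\hat y_{ij}|^{\ell}\le C_\ell\,(\eta_n\sqrt n)^{\ell-4-a}$ for $\ell>4+a$, which drives the combinatorics below.

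\textbf{Step 2 (moment method and concentration).} Fix $\varepsilon>0$ and take an integer $k=k_n\to\infty$ of order $\log n$ with a large enough constant. Expanding the trace over closed walks alternating between a row-index set and a column-index set,
\[
\E\,\tr\!\big(\hat M_n^{\,k}\big)=n^{-k}\sum_{i_1,\dots,i_k}\ \sum_{j_1,\dots,j_k}\ \E\!\left[\prod_{\ell=1}^{k}\hat y_{i_\ell j_\ell}\,\hat y_{i_\ell j_{\ell+1}}\right],\qquad j_{k+1}:=j_1 .
\]
Independence and $\E\hat y_{ij}=0$ annihilate every walk that traverses some edge of the underlying bipartite multigraph exactly once; the surviving walks live on connected graphs with at most $2k$ vertices, and each isomorphism class is estimated using the truncated-moment bound of Step 1. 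The dominant contribution comes from walks along trees in which every edge is used exactly twice (equivalently, from non-crossing pair partitions), whose count is the Narayana/Catalan sum equal to $(1+\sqrt c)^{2k}$; tuning $\eta_n$ so that the excess moments do not grow faster than allowed keeps all other graph classes of lower order, giving $\E\,\tr\!\big(\hat M_n^{\,k_n}\big)\le p\,\big((1+\sqrt c)^2+o(1)\big)^{k_n}$. Since $\hat M_n\succeq 0$ we have $\lambda_{\max}(\hat M_n)^{k_n}\le\tr(\hat M_n^{\,k_n})$, so Markov's inequality yields
\[
\PP\!\left(\lambda_{\max}(\hat M_n)>(1+\sqrt c)^2+\varepsilon\right)\le\frac{\E\,\tr(\hat M_n^{\,k_n})}{\big((1+\sqrt c)^2+\varepsilon\big)^{k_n}}\le p\left(\frac{(1+\sqrt c)^2+o(1)}{(1+\sqrt c)^2+\varepsilon}\right)^{k_n},
\]
which is summable in $n$ once $k_n$ is a sufficiently large multiple of $\log n$ (recall $p\le Cn$). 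Borel--Cantelli then gives $\limsup_n\lambda_{\max}(\hat M_n)\le(1+\sqrt c)^2+\varepsilon$ a.s., and letting $\varepsilon\downarrow 0$ along a sequence and combining with Step 1 finishes the proof.

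\textbf{Main obstacle.} The genuine difficulty is the graph-enumeration bookkeeping in Step 2: one must let $k_n\to\infty$ (indispensable for Borel--Cantelli) while simultaneously keeping $\eta_n\sqrt n$ large enough that truncation does not disturb the variance, and argue that neither the non-tree graphs nor the growing high-order moments of $\hat y_{ij}$ can overtake the Catalan main term. This balancing act is precisely the technical heart of \cite{yin1988limit} and \cite{bai2008limit}, and for the purposes of the present paper it is enough to cite their result rather than reproduce these estimates.
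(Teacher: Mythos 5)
Your proposal is correct and takes the same approach as the paper, which states this lemma purely as a citation of the Bai--Yin theorem from \cite{yin1988limit} and \cite{bai2008limit} without reproducing a proof. Your supplementary sketch of the truncation-plus-moment-method argument (reducing to $\limsup\lambda_{\max}(n^{-1}Y'Y)\le(1+\sqrt c)^2$, Catalan counting, Markov and Borel--Cantelli) is an accurate outline of the standard proof, but citing the result, as you note, is all that is required here.
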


\begin{lemma}
\label{lem:asympcv}
    Let $z > \tau$ and Assumptions (\ref{ass:iddistcov}), (\ref{ass:complexity}), (\ref{ass:opbound}), (\ref{ass:accumulationzero}), (\ref{ass:probacvmumis}), (\ref{ass:probacvpimis}) hold. Then 

    \begin{equation*}
        F_n(z) \xrightarrow[]{a.s.} \Bar{F}(z; \mu, \zeta_d)
    \end{equation*}
    
\end{lemma}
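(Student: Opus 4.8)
The plan is to obtain the claim in two stages: first remove the randomness by invoking the non-asymptotic bound of Lemma~\ref{lem:deterministic_equivalent} together with a Borel--Cantelli argument, and then pass from the empirical objects $(\mu,\zeta_d)$ to their weak limits $(\nu,\eta_d)$ by an elementary continuity/dominated-convergence argument. Throughout, fix $z>\tau$; since all the standing assumptions remain valid when the universal constant $\tau$ is decreased, we may assume in addition $z\le\tau^{-1}$, so that $n^{-2/3+\tau}\le z\le\tau^{-1}$ for every $n$ large enough and Lemma~\ref{lem:deterministic_equivalent} applies. That lemma is stated under the all-moments hypothesis of Assumption~\ref{ass:indepcov}, whereas here we only have the i.i.d./$(4+a)$-moment Assumption~\ref{ass:iddistcov}; I would bridge the gap by the standard truncation $z_{ij}\mapsto z_{ij}\mathbf{1}\{|z_{ij}|\le n^{1/4}\}$ followed by recentering and rescaling, controlling the resulting resolvent perturbation with the operator-norm bound of Lemma~\ref{lem:baiyin}.

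\textbf{Step 1 (removing the randomness).} By Lemma~\ref{lem:deterministic_equivalent}, for every $D>1$ and every small $\varepsilon>0$ there is $n_0$ such that, for $n\ge n_0$,
\[
\PP\Big(\big|F_n(z)-\bar F_n(z;\mu,\zeta_d)\big|>C_n\Big)\le n^{-D},\qquad C_n:=\frac{\|\Sigma\beta_{oos}\|\,\|\beta_{is}\|}{n^{1/2-\varepsilon}\,z}.
\]
Since $\|\Sigma\|_{\mathrm{op}}\le\tau^{-1}$ and the signal levels $\|\beta_{is}\|^2,\|\beta_{oos}\|^2$ are bounded, $C_n\to0$, and $\sum_n n^{-D}<\infty$ for $D>1$; Borel--Cantelli then gives $F_n(z)-\bar F_n(z;\mu,\zeta_d)\to0$ almost surely.

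\textbf{Step 2 (the deterministic limit $\bar F_n(z;\mu,\zeta_d)\to\bar F_n(z;\nu,\eta_d)$).} This splits in two. (i) The scalar $m_n:=m_n(-z;c,\mu)$ converges to $m_n(-z;c,\nu)$: the sequence stays in a fixed compact subset of $(0,\infty)$ because $\mathrm{supp}(\mu)\subseteq[0,\tau^{-1}]$ and $z>0$ keep the denominator in Definition~\ref{def:mn} bounded away from zero; along any subsequence $m_n\to m'$, and since $\lambda\mapsto\big(\lambda(1-c+czm_n)+z\big)^{-1}$ is bounded and continuous on $[0,\tau^{-1}]$, the weak convergence $\mu\Rightarrow\nu$ (Assumption~\ref{ass:probacvmumis}) lets one pass to the limit in the fixed-point equation and conclude that $m'$ solves its $\nu$-version; uniqueness forces $m_n\to m_n(-z;c,\nu)$. (ii) With that convergence, the integrand $g_n(\lambda):=\lambda\big(\lambda(czm_n+1-c)+z\big)^{-1}$ converges uniformly on $[0,\tau^{-1}]$ to a bounded continuous $g$; combined with the weak convergence of the signed measures $\zeta_d\Rightarrow\eta_d$ — which holds by the Remark following Assumption~\ref{ass:probacvpimis} via $2\|\beta_{is}\|\|\beta_{oos}\|\zeta_d=\|\beta_{is}\|^2\zeta_{is}+\|\beta_{oos}\|^2\zeta_{oos}-\|\beta_{oos}-\beta_{is}\|^2\zeta_r$ and convergence of the norms — and with the uniformly bounded total variation of $\zeta_d$, this yields $\int g_n\,d\zeta_d\to\int g\,d\eta_d$, hence $\bar F_n(z;\mu,\zeta_d)=-z\|\beta_{is}\|\|\beta_{oos}\|\int g_n\,d\zeta_d\to\bar F_n(z;\nu,\eta_d)$. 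Adding Steps~1 and~2 gives $F_n(z)\xrightarrow{a.s.}\bar F_n(z;\nu,\eta_d)$.

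\textbf{Main obstacle.} The genuinely delicate point is the uniform (in $n$) control of the denominator $\lambda(czm_n+1-c)+z$ away from zero underlying both halves of Step~2: in the overparametrised regime $c>1$ the constant $1-c$ is negative, so one must use the defining equation of $m_n$, the positivity of $z$, and $\mathrm{supp}(\mu)\subset[0,\tau^{-1}]$ to show $czm_n+1-c$ stays in a region where no cancellation occurs — once this is secured, every limit exchange above is a routine dominated-convergence argument. The only other non-mechanical step is verifying that Lemma~\ref{lem:deterministic_equivalent} survives the weakening of moment hypotheses via the truncation sketched above, which is standard but not quite free.
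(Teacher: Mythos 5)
Your proposal follows essentially the same route as the paper's proof: the anisotropic local law of Lemma~\ref{lem:deterministic_equivalent} combined with Borel--Cantelli to remove the randomness, a truncation-and-recentering step to bridge the moment gap between Assumption~\ref{ass:iddistcov} and the all-moments Assumption~\ref{ass:indepcov}, and weak convergence of $(\mu,\zeta_d)$ to $(\nu,\eta_d)$ for the deterministic limit; your Step~2 in fact spells out the convergence of $m_n$ and of the signed measure $\zeta_d$ more explicitly than the paper does, and your worry about the denominator $\lambda(czm_n+1-c)+z$ is resolved by noting that $czm_n(-z)+1-c=zr_n(-z)>0$ since $r_n(-z)$ is the Stieltjes transform of a measure supported on $[0,\infty)$ evaluated at a negative argument. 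The only point where your variant is more fragile than the paper's is the truncation: you cut at the $n$-dependent level $n^{1/4}$, whereas the paper (following \cite{hastie2020surprises}) truncates at a fixed level $M$, bounds the residual's operator norm via Lemma~\ref{lem:baiyin} by a factor $\varepsilon_M$ small uniformly in $n$, and sends $M\to\infty$ only at the very end --- under a bare $4+a$ moment bound this fixed-$M$ scheme is the safe one and you should adopt it in place of the $n$-dependent cutoff.
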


\begin{proof}
    Rewrite $F_n(z;\hat{\Sigma},\Sigma) = F_n(z)$ to highlight the dependence of the function on the sample covariance, $\hat{\Sigma}$, and the population covariance, $\Sigma$.

    Note that 

    \begin{align*}
        | F_n(z;\hat{\Sigma},\Sigma_1) - F_n(z;\hat{\Sigma},\Sigma_2) | &\leq z \|\beta_{is}\| \|\beta_{oos}\| \| (\hat{\Sigma} + zI)^{-1} \|_{\text{op}} \| \Sigma_1 - \Sigma_2\|_{\text{op}} \\
        & \leq \|\beta_{is}\| \|\beta_{oos}\| \| \Sigma_1 - \Sigma_2\|_{\text{op}}
    \end{align*}

    The second inequality follows after observing that $\| (\hat{\Sigma} + zI)^{-1} \|_{\text{op}} \leq z^{-1}$.

    We also have

    \begin{align*}
        | F_n(z;\hat{\Sigma}_1,\Sigma) - F_n(z;\hat{\Sigma}_2,\Sigma) | &\leq z \|\beta_{is}\| \|\beta_{oos}\| \| \Sigma \|_{\text{op}} \| (\hat{\Sigma}_1 + zI)^{-1} - (\hat{\Sigma}_2 + zI)^{-1} \|_{\text{op}} \\
        & \leq z \tau^{-1} \|\beta_{is}\| \|\beta_{oos}\| \| (\hat{\Sigma}_1 + zI)^{-1}(\hat{\Sigma}_1 + zI - (\hat{\Sigma}_2 + zI))(\hat{\Sigma}_2 + zI)^{-1}\|_{\text{op}} \\
        & \leq z \tau^{-1} \|\beta_{is}\| \|\beta_{oos}\| \| (\hat{\Sigma}_1 - \hat{\Sigma}_2)\|_{\text{op}} \|(\hat{\Sigma}_1 + zI)^{-1}\|_{\text{op}} \|(\hat{\Sigma}_2 + zI)^{-1}\|_{\text{op}} \\
        & \leq z^{-1} \tau^{-1} \|\beta_{is}\| \|\beta_{oos}\| (\hat{\Sigma}_1 - \hat{\Sigma}_2)\|_{\text{op}} \\
        & \leq \tau^{-2} \|\beta_{is}\| \|\beta_{oos}\| (\hat{\Sigma}_1 - \hat{\Sigma}_2)\|_{\text{op}} 
    \end{align*}

    Therefore, setting $C = \max \big\{ \|\beta_{is}\| \|\beta_{oos}\|, \tau^{-2} \|\beta_{is}\| \|\beta_{oos}\| \big\}$, we have

    \begin{align*}
        | F_n(z;\hat{\Sigma}_1,\Sigma_1) - F_n(z;\hat{\Sigma}_2,\Sigma_2) | &= | F_n(z;\hat{\Sigma}_1,\Sigma_1) - F_n(z;\hat{\Sigma}_1,\Sigma_2) + F_n(z;\hat{\Sigma}_1,\Sigma_2) - F_n(z;\hat{\Sigma}_2,\Sigma_2) | \\
        & \leq | F_n(z;\hat{\Sigma}_1,\Sigma_1) - F_n(z;\hat{\Sigma}_1,\Sigma_2)| + |F_n(z;\hat{\Sigma}_1,\Sigma_2) - F_n(z;\hat{\Sigma}_2,\Sigma_2) | \\
        & \leq C \| \Sigma_1 - \Sigma_2\|_{\text{op}} + C \|\hat{\Sigma}_1 - \hat{\Sigma}_2\|_{\text{op}}
    \end{align*}

    Let $\hat{\Sigma}_1 = n^{-1} X_1' X_1$ and $\hat{\Sigma}_2 = n^{-1} X_2' X_2$. Then,

    \begin{align}
    \label{eq:ineqlipschitz}
    \begin{split}
        | F_n(z;\hat{\Sigma}_1,\Sigma_1) - F_n(z;\hat{\Sigma}_2,\Sigma_2) | & \leq C \| \Sigma_1 - \Sigma_2\|_{\text{op}} + \frac{C}{n} \| X_1' X_1 - X_1' X_2 + X_1' X_2 - X_2' X_2\|_{\text{op}} \\
        & \leq C \| \Sigma_1 - \Sigma_2\|_{\text{op}} + \frac{C}{n} \| X_1' (X_1 - X_2) + (X_1 - X_2)' X_2\|_{\text{op}} \\
        & \leq C \| \Sigma_1 - \Sigma_2\|_{\text{op}} + \frac{C}{n} \| X_1 - X_2 \|_{\text{op}} \big( \| X_1 \|_{\text{op}} + \| X_2 \|_{\text{op}} \big)
    \end{split}
    \end{align}

    The rest of the proof relies on the same truncation and centralization arguments used in \cite{hastie2020surprises} (Appendix A.4.).

    Let $M > 0$. We decompose the matrices of covariates, $X$, and latent covariates, $Z$, as follows:

    \begin{align*}
        &X = X_M + \Tilde{X}_M, \quad X_M = Z_M \Sigma_M^{1/2}, \quad \Tilde{X}_M = \Tilde{Z}_M \Sigma^{1/2} \\
        &Z = a_M Z_M + \Tilde{Z}_M \\
        & \Sigma_M = a_M^2 \Sigma
    \end{align*}

    where, denoting by $Z_{ij}$ the $(i,j)$-th entry of $Z$,

    \begin{align*}
        &(Z_M)_{ij} = \frac{1}{a_M} \Big( Z_{ij} \mathds{1}_{|Z_{ij}|\leq M} - \E[Z_{ij} \mathds{1}_{|Z_{ij}|\leq M}] \Big), \quad a_M = \E \Big[ \Big( Z_{ij} \mathds{1}_{|Z_{ij}|\leq M} - \E[Z_{ij} \mathds{1}_{|Z_{ij}|\leq M}] \Big)^2\Big]^{1/2} \\
        &(\Tilde{Z}_M)_{ij} = Z_{ij} \mathds{1}_{|Z_{ij}| > M} - \E[Z_{ij} \mathds{1}_{|Z_{ij}| > M}]
    \end{align*}

    Note that $Z_{ij} = a_M (Z_M)_{ij} + (\Tilde{Z}_M)_{ij}$. Each entry of $Z_M$ and $\Tilde{Z}_M$ has zero mean and the entries of $Z_M$ have unit variance and are bounded. Moreover, the fourth moment of $Z_{ij}$ are finite, which entails

    \begin{align*}
        |a_M^2 - 1| \leq \varepsilon_M, \quad \E \Big[ \big| (\Tilde{Z}_M)_{ij} \big|^4 \Big] \leq \varepsilon_M^4
    \end{align*}

    for some $\varepsilon_M \rightarrow 0$ as $M \rightarrow \infty$.

    Using \eqref{eq:ineqlipschitz}, we obtain 

    \begin{align*}
        | F_n(z;\hat{\Sigma}_1,\Sigma_1) - F_n(z;\hat{\Sigma}_2,\Sigma_2) | &\leq C \| \Sigma_M - \Sigma\|_{\text{op}} + \frac{C}{n} \| \Tilde{X}_M \|_{\text{op}} \big( \| X \|_{\text{op}} + \| X_M \|_{\text{op}} \big) \\
        & \leq C \tau^{-1} |a_M^2 - 1| + C\tau^{-1} \Big|\Big| \frac{\Tilde{Z}_M}{\sqrt{n}} \Big|\Big|_{\text{op}} \Big( \Big|\Big| \frac{Z}{\sqrt{n}} \Big|\Big|_{\text{op}} + \Big|\Big| \frac{Z_M}{\sqrt{n}} \Big|\Big|_{\text{op}} \Big) \\
    \end{align*}

    Recall that the operator norm of a real matrix $Y$ is the square root of the maximum eigenvalue of the (symmetric) matrix $Y'Y$. Therefore, Lemma \ref{lem:baiyin} ensures that 

    \begin{align*}
        \underset{n,p \rightarrow \infty}{\lim \sup} \Big( \Big|\Big| \frac{Z}{\sqrt{n}} \Big|\Big|_{\text{op}} + \Big|\Big| \frac{Z_M}{\sqrt{n}} \Big|\Big|_{\text{op}} \Big) \leq K < \infty, \quad \underset{n,p \rightarrow \infty}{\lim \sup} \Big|\Big| \frac{\Tilde{Z}_M}{\sqrt{n}} \Big|\Big|_{\text{op}} < K \varepsilon_M, \quad a.s.
    \end{align*}

    which yields

    \begin{align}
    \label{eq:limsup1}
        \underset{n,p \rightarrow \infty}{\lim \sup} | F_n(z;\hat{\Sigma},\Sigma) - F_n(z;n^{-1} X_M' X_M,\Sigma_M) | \leq C \tau^{-1} K \varepsilon_M = C_1 \varepsilon_M
    \end{align}

    Next, denote by $\hat{\mu}^M$ and $\hat{\zeta}_d^M$ the ESD and reweighted ESD of $\Sigma_M$ defined analogously to the measures in \eqref{eq:esd} and \eqref{eq:rewesddrift} for $\Sigma$.

    As each entry of $Z_M$ has bounded moments of all orders, we can apply Lemma \ref{lem:deterministic_equivalent} to $F_n(z; n^{-1} X_M' X_M,\Sigma_M)$. Then, taking $\varepsilon = 0.1$ and $D = 2$, we have 

    \begin{align*}
        |F_n(z; n^{-1} X_M' X_M,\Sigma_M) - \Bar{F}(z;\hat{\mu}^M, \hat{\zeta}_d^M)| \leq \frac{\| \Sigma_M \beta_{oos}\| \|\beta_{is}\|}{n^{0.49}z} = \frac{C}{n^{0.49}}
    \end{align*}

    with probability $1 - n^{-2}$ for $n$ large enough.

    Concomitantly, we have the following relations for the spectral measures:
    
    \begin{align*}
        \hat{\mu}^M(x) = \hat{\mu}\Big(\frac{x}{a_M^2}\Big), \quad \hat{\zeta}_d^M(x) = \hat{\zeta}_d \Big(\frac{x}{a_M^2}\Big)
    \end{align*}

    Therefore, $\hat{\mu}^M(x)$ (resp. $\hat{\zeta}^M(x)$) converges weakly to $\mu^M(x) = \mu(x/a_M^2)$ (resp. $\zeta_d^M(x) = \zeta_d (x/a_M^2)$). By Borel-Cantelli lemma, this yields

    \begin{align}
    \label{eq:limsup2}
        \underset{n \rightarrow \infty}{\lim \sup} |F_n(z; n^{-1} X_M' X_M,\Sigma_M) - \Bar{F}(z;\mu^M, \zeta_d^M)| = 0.
    \end{align}

    Finally, note that $\mu^M$ (resp. $\zeta_d^M$) weakly converges to $\mu$ (resp. $\zeta_d$) as $M \rightarrow \infty$. Consequently, taking $M \rightarrow \infty$, Equations $\eqref{eq:limsup1}$ and $\eqref{eq:limsup2}$ yield 

    \begin{equation*}
        F_n(z;\hat{\Sigma},\Sigma) \xrightarrow[]{a.s.} \Bar{F}(z; \mu, \zeta_d)
    \end{equation*}

    The proof of Lemma \ref{lem:asympcv} is complete.
    
\end{proof}

\begin{lemma}
\label{lem:qt}
    Let $\xi = (e_1, e_2, \dots , e_n)'$ with $(e_i)_i$ as in \eqref{eq:dgpmis}. Then,

    \begin{equation*}
        \frac{1}{n^2} \E [ (X' \xi) (X' \xi)' | X ] = \frac{1}{n} \hat{\Sigma}
    \end{equation*}
\end{lemma}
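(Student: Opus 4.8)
The statement is a direct consequence of the independence between the noise vector $\xi$ and the design matrix $X$, together with the normalization $\sigma^2 = 1$ adopted in Section~\ref{subsec:performance}. The plan is to first rewrite the outer product $(X'\xi)(X'\xi)' = X'\xi\xi'X$, then pull the deterministic (conditioned) factor $X$ out of the conditional expectation by linearity, reducing the claim to the computation of $\E[\xi\xi' \mid X]$.

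Concretely, I would proceed as follows. First, since $X$ is $\mathcal{F}$-measurable with respect to the conditioning $\sigma$-field, $\E[X'\xi\xi'X \mid X] = X' \,\E[\xi\xi' \mid X]\, X$. Second, because the draws $((x_i,w_i),e_i)$ in \eqref{eq:dgpmis} are independent across $i$ and, for each $i$, $e_i$ is drawn from $P_e$ independently of $(x_i,w_i)$, the vector $\xi = (e_1,\dots,e_n)'$ is independent of $X$; hence $\E[\xi\xi' \mid X] = \E[\xi\xi']$. Third, using that the $e_i$ are i.i.d.\ with mean $0$ and variance $\sigma^2 = 1$, we get $\E[\xi\xi'] = \sigma^2 I_n = I_n$. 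Putting these together yields $\E[(X'\xi)(X'\xi)' \mid X] = X'X = n\hat{\Sigma}$, and dividing by $n^2$ gives $\frac{1}{n^2}\E[(X'\xi)(X'\xi)' \mid X] = \frac{1}{n}\hat{\Sigma}$.

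There is no real obstacle here: the only point worth flagging explicitly is the use of the convention $\sigma^2 = 1$ fixed at the start of Section~\ref{subsec:performance} (otherwise the right-hand side would read $\frac{\sigma^2}{n}\hat{\Sigma}$), and the observation that the diagonal structure $\E[\xi\xi'] = I_n$ relies on the $e_i$ being uncorrelated, which follows from their independence in \eqref{eq:dgpmis}. The remaining manipulations are routine linear algebra and require no estimates or asymptotics.
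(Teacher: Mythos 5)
Your proof is correct and follows exactly the same route as the paper's: factor $X$ out of the conditional expectation, use independence of $\xi$ and $X$ to get $\E[\xi\xi'\mid X]=\sigma^2 I_n=I_n$ under the normalization $\sigma^2=1$, and conclude via $X'X=n\hat{\Sigma}$. Your explicit flagging of the $\sigma^2=1$ convention is a small clarity improvement over the paper, which uses it silently.
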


\begin{proof}

    \begin{align*}
        \frac{1}{n^2} \E [ (X' \xi) (X' \xi)' | X ] &= \frac{1}{n^2} X' \E [\xi \xi' | X ] X = \frac{1}{n^2} X' \E [\xi \xi' ] X \\
        & =  \frac{1}{n^2} X' X  = \frac{1}{n} \hat{\Sigma}
    \end{align*}

\end{proof}

\begin{lemma}
\label{lem:nonasyboundres}
    Let $z > 0$, $\xi = (e_1, e_2, \dots , e_n)'$ with $(e_i)_i$ as in \eqref{eq:dgpmis} and Assumptions (\ref{ass:iddistcov}), (\ref{ass:complexity}), (\ref{ass:opbound}) hold. Besides, assume $\E [\lambda_{max}(n^{-1}Z'Z)] \leq K < \infty$, for any $n \in \N_{>0}$. Then, there exist $C := C(\tau,K)$ such that with probability at least $1 - n^{-1/5}$, 

    \begin{equation*}
        \big| \frac{1}{n}\beta_{oos}' \Sigma (zI + \hat{\Sigma})^{-1} X' \xi \big| \leq \frac{C \| \Sigma \beta_{oos} \|}{n^{2/5}}
    \end{equation*}
\end{lemma}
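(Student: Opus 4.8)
The plan is to bound the scalar $\frac{1}{n}\beta_{oos}'\Sigma(zI+\hat\Sigma)^{-1}X'\xi$ by a Cauchy--Schwarz-type split that isolates a deterministic vector norm from a random quadratic form in the noise $\xi$. First I would write $\frac1n\beta_{oos}'\Sigma(zI+\hat\Sigma)^{-1}X'\xi = \langle u, \xi\rangle$ with $u = \frac1n X(zI+\hat\Sigma)^{-1}\Sigma\beta_{oos}\in\R^n$, so that by Cauchy--Schwarz $|\langle u,\xi\rangle| \le \|u\|\,\|\xi\|$. Alternatively, and more in the spirit of getting the right rate, I would condition on $X$ and control $\E[\langle u,\xi\rangle^2\mid X] = \sigma^2\|u\|^2$ (using $\E[\xi\xi'\mid X]=\sigma^2 I$, $\sigma^2=1$ here), then apply a conditional Markov/Chebyshev inequality to get the high-probability bound from the bound on $\|u\|^2$.

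The core estimate is therefore to bound $\|u\|^2 = \frac{1}{n^2}\beta_{oos}'\Sigma(zI+\hat\Sigma)^{-1}\hat\Sigma\,(zI+\hat\Sigma)^{-1}\Sigma\beta_{oos}$, where I have used $X'X/n=\hat\Sigma$. Since $\hat\Sigma\succeq 0$ and $(zI+\hat\Sigma)^{-1}$ commutes with $\hat\Sigma$, the middle factor $(zI+\hat\Sigma)^{-1}\hat\Sigma(zI+\hat\Sigma)^{-1}$ has operator norm at most $\sup_{\lambda\ge 0}\lambda/(z+\lambda)^2 \le 1/(4z)$, and in any case at most $\|(zI+\hat\Sigma)^{-1}\|_{\text{op}}\cdot\|\hat\Sigma(zI+\hat\Sigma)^{-1}\|_{\text{op}}\le z^{-1}\cdot 1 = z^{-1}$. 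Hence $\|u\|^2 \le \frac{1}{n}\cdot\frac{1}{nz}\|\Sigma\beta_{oos}\|^2$ — wait, this only gives an $n^{-1}$ rate, not $n^{-2/5}$; but that is fine, a cruder $n^{-1/2}$-type bound in $\|u\|$ is more than enough and the stated $n^{-2/5}$ leaves room. Actually the cleaner route uses Lemma~\ref{lem:qt}: $\frac{1}{n^2}\E[(X'\xi)(X'\xi)'\mid X] = \frac1n\hat\Sigma$, so
\begin{align*}
\E\Big[\big(\tfrac1n\beta_{oos}'\Sigma(zI+\hat\Sigma)^{-1}X'\xi\big)^2\ \Big|\ X\Big] = \frac1n\,\beta_{oos}'\Sigma(zI+\hat\Sigma)^{-1}\hat\Sigma(zI+\hat\Sigma)^{-1}\Sigma\beta_{oos} \le \frac{\|\Sigma\beta_{oos}\|^2}{n z}.
\end{align*}
Then conditional Markov gives $|\cdot| \le \|\Sigma\beta_{oos}\|\,(nz)^{-1/2}\,t^{-1/2}$ with conditional probability $\ge 1-t$; choosing $t = n^{-1/5}$ yields the bound $\|\Sigma\beta_{oos}\|\,n^{-2/5}z^{-1/2}$ with probability $\ge 1-n^{-1/5}$, and absorbing $z^{-1/2}\le \tau^{-1/2}$ (valid since $z>\tau$ in the companion lemmas, or carrying $z$ explicitly) into $C:=C(\tau,K)$ completes it. The role of the hypothesis $\E[\lambda_{\max}(n^{-1}Z'Z)]\le K$ and Assumption~\ref{ass:opbound} is only to guarantee $\|\Sigma\beta_{oos}\| \le \|\Sigma\|_{\text{op}}\|\beta_{oos}\| \le \tau^{-1}\|\beta_{oos}\|$ stays controlled, and to make the $X$-unconditional statement meaningful; strictly, the conditional bound already holds for every realization of $X$, so one can even drop the integrability assumption for this particular lemma, but I would keep it for uniformity with the surrounding results.

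The main obstacle — really the only subtlety — is handling the regularization parameter $z$ in the denominator: the $(nz)^{-1/2}$ factor blows up as $z\to 0$, so the constant $C$ genuinely depends on a lower bound for $z$; since the lemma statement fixes $z>0$ and allows $C=C(\tau,K)$, I would either invoke the implicit regime $z>\tau$ used throughout this appendix subsection or state the bound with an explicit $z^{-1/2}$ and note it is absorbed. A secondary point is that $\hat\Sigma$ is random and correlated with the very expression being bounded, but conditioning on $X$ dissolves this entirely because $\xi$ is independent of $X$ by the DGP \eqref{eq:dgpmis}, so $\hat\Sigma$ acts as a constant inside the conditional expectation. No random matrix machinery (local laws, Bai--Yin) is needed for this lemma — it is purely a second-moment computation plus Markov — which is why I expect it to be short.
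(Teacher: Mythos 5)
Your proof is correct and follows the same skeleton as the paper's: express the conditional second moment via Lemma \ref{lem:qt}, obtain $\E\bigl[\,\bigl|\tfrac1n\beta_{oos}'\Sigma(zI+\hat\Sigma)^{-1}X'\xi\bigr|^2\,\big|\,X\bigr]=\tfrac1n\,\beta_{oos}'\Sigma(zI+\hat\Sigma)^{-1}\hat\Sigma(zI+\hat\Sigma)^{-1}\Sigma\beta_{oos}$, and conclude by Chebyshev with a threshold tuned to trade an $n^{-1/5}$ failure probability for an $n^{-2/5}$ deviation rate. The one step where you genuinely diverge is the operator bound on the middle matrix: the paper writes $\|(zI+\hat\Sigma)^{-1}\|_{\text{op}}^2\,\|\hat\Sigma\|_{\text{op}}\le z^{-2}\tau^{-1}\|n^{-1}Z'Z\|_{\text{op}}$ and then invokes the hypothesis $\E[\lambda_{\max}(n^{-1}Z'Z)]\le K$ to control the resulting expectation, landing on a second-moment bound of order $\|\Sigma\beta_{oos}\|^2/(nz^2)$ and hence a final bound carrying $z^{-1}$; you instead use the deterministic spectral estimate $\|(zI+\hat\Sigma)^{-1}\hat\Sigma(zI+\hat\Sigma)^{-1}\|_{\text{op}}\le\sup_{\lambda\ge0}\lambda/(z+\lambda)^2\le(4z)^{-1}$, which holds for every realization of $X$. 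Your route buys two things: the auxiliary moment assumption on $\lambda_{\max}(n^{-1}Z'Z)$ (and the $K$-dependence of $C$) becomes superfluous, and the $z$-dependence of the bound improves from $z^{-1}$ to $z^{-1/2}$. You are also right to flag the $z$ bookkeeping: the paper's own proof ends with $C^{1/2}\|\Sigma\beta_{oos}\|/(n^{2/5}z)$ while the lemma statement omits the $z$, so the factor is implicitly absorbed into $C(\tau)$ under the standing convention $z>\tau$ used in the surrounding results — exactly the resolution you propose.
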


\begin{proof}
    Using Chebyshev's inequality, for some $t>0$, we have

    \begin{align}
    \label{eq:chebyshev}
        \mathds{P} \Big( \big| \frac{1}{n}\beta_{oos}' \Sigma (zI + \hat{\Sigma})^{-1} X' \xi \big| \geq t \Big) \leq \frac{\mathds{V}\big[ \frac{1}{n}\beta_{oos}' \Sigma (zI + \hat{\Sigma})^{-1} X' \xi \big]}{t^2} = \frac{\E\big[ \big| \frac{1}{n}\beta_{oos}' \Sigma (zI + \hat{\Sigma})^{-1} X' \xi \big|^2 \big]}{t^2}
    \end{align}

    The equality follows from the independence of $X$ and $\xi$. Besides, 

    \begin{align*}
        \E \Big[ \big|\frac{1}{n}\beta_{oos}' \Sigma (zI + \hat{\Sigma})^{-1} X' \xi \big|^2 \Big] &= \E \Big[ \big(\frac{1}{n}\beta_{oos}' \Sigma (zI + \hat{\Sigma})^{-1} X' \xi \big) \big( \frac{1}{n}\beta_{oos}' \Sigma (zI + \hat{\Sigma})^{-1} X' \xi \big)' \Big] \\
        & = \frac{1}{n^2} \E \Big[ \beta_{oos}' \Sigma (zI + \hat{\Sigma})^{-1} X' \xi \xi' X (zI + \hat{\Sigma})^{-1} \Sigma \beta_{oos} \Big] \\
        & = \E \Big[ \beta_{oos}' \Sigma (zI + \hat{\Sigma})^{-1} \frac{1}{n^2} \E [ (X' \xi) (X' \xi)' | X ] (zI + \hat{\Sigma})^{-1} \Sigma \beta_{oos} \Big]
    \end{align*}

    The last equality results from the law of iterated expectations. Next, using Lemma \ref{lem:qt}, we have

    \begin{align}
    \label{eq:l2conv}
    \begin{split}
        \E \Big[ \big|\frac{1}{n}\beta_{oos}' \Sigma (zI + \hat{\Sigma})^{-1} X' \xi \big|^2 \Big] &= \frac{1}{n} \E \Big[ \beta_{oos}' \Sigma (zI + \hat{\Sigma})^{-1} \hat{\Sigma} (zI + \hat{\Sigma})^{-1} \Sigma \beta_{oos} \Big] \\
        & = \frac{1}{n} \beta_{oos}' \Sigma \E \Big[ (zI + \hat{\Sigma})^{-1} \Sigma^{1/2} \Big( \frac{1}{n} Z'Z \Big) \Sigma^{1/2} (zI + \hat{\Sigma})^{-1} \Big] \Sigma \beta_{oos} \\
        & \leq \frac{1}{n} \| \Sigma \beta_{oos} \|^2 \|\Sigma^{1/2}\|_{\text{op}}^2 \E \Big[ \|(zI + \hat{\Sigma})^{-1}\|_{\text{op}}^2 \Big| \Big| \frac{1}{n} Z'Z \Big| \Big|_{\text{op}} \Big] \\
        & \leq \frac{1}{n} \| \Sigma \beta_{oos} \|^2 \tau^{-1} z^{-2}\E \Big[ \Big| \Big| \frac{1}{n} Z'Z \Big| \Big|_{\text{op}} \Big] \\
        & = \frac{C \| \Sigma \beta_{oos} \|^2}{n z^2}
    \end{split}
    \end{align}

    Taking $t = C^{1/2} \| \Sigma \beta_{oos} \| n^{-1/4}z$ in \eqref{eq:chebyshev}, we get

    \begin{align*}
        \mathds{P} \Big( \big| \frac{1}{n}\beta_{oos}' \Sigma (zI + \hat{\Sigma})^{-1} X' \xi \big| \geq \frac{C^{1/2} \| \Sigma \beta_{oos} \|}{n^{1/4}z} \Big) \leq \frac{1}{n^{1/2}}
    \end{align*}
    
\end{proof}

\begin{corollary}
\label{cor:cvresasy}
    Let $z > 0$, $\xi = (e_1, e_2, \dots , e_n)'$ with $(e_i)_i$ as in \eqref{eq:dgpmis} and Assumptions (\ref{ass:iddistcov}), (\ref{ass:complexity}), (\ref{ass:opbound}) hold. Then, in the limit, as $p,n \rightarrow \infty$, $p/n \rightarrow c \in (0,\infty)$,

    \begin{equation*}
        \frac{1}{n}\beta_{oos}' \Sigma (zI + \hat{\Sigma})^{-1} X' \xi \xrightarrow[]{L_2} 0
    \end{equation*}
\end{corollary}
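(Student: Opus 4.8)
The plan is to obtain the statement as an immediate quantitative consequence of the $L_2$ estimate already produced inside the proof of Lemma \ref{lem:nonasyboundres}. Recall that, after bounding the second moment and invoking Lemma \ref{lem:qt}, that argument yields the chain of inequalities \eqref{eq:l2conv}, i.e.
$$\E\left[\left|\tfrac{1}{n}\beta_{oos}'\Sigma(zI+\hat{\Sigma})^{-1}X'\xi\right|^2\right]\le\frac{C\,\|\Sigma\beta_{oos}\|^2}{n\,z^2},$$
with $C=C(\tau,K)$ depending only on the operator-norm bound $\tau^{-1}$ on $\Sigma$ from Assumption \ref{ass:opbound} and on a uniform bound $K$ for $\E[\lambda_{\max}(n^{-1}Z'Z)]$. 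Since $z>0$ is fixed and $\|\Sigma\beta_{oos}\|\le\|\Sigma\|_{\text{op}}\|\beta_{oos}\|\le\tau^{-1}\|\beta_{oos}\|$ is held bounded in the proportional regime, the right-hand side is $O(1/n)$ and hence tends to $0$. As $L_2$ convergence of a sequence of random variables to $0$ is by definition the vanishing of its second moment, this is exactly the claim.

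The one gap to fill is that the hypothesis $\E[\lambda_{\max}(n^{-1}Z'Z)]\le K$ used in Lemma \ref{lem:nonasyboundres} is not listed among the assumptions of the corollary, which cites only Assumptions \ref{ass:iddistcov}, \ref{ass:complexity} and \ref{ass:opbound}. I would supply it by noting that under Assumption \ref{ass:iddistcov} the entries of $Z$ are i.i.d.\ with mean zero, unit variance and finite $(4+a)$-th moment, so that Lemma \ref{lem:baiyin} gives $\lambda_{\max}(n^{-1}Z'Z)\to(1+\sqrt{c})^2$ almost surely; the $(4+a)$-th moment condition moreover makes the sequence $\bigl(\lambda_{\max}(n^{-1}Z'Z)\bigr)_n$ uniformly integrable (equivalently, one may cite the $L_1$ version of the Bai--Yin theorem), so that $\sup_n\E[\lambda_{\max}(n^{-1}Z'Z)]<\infty$, and this supremum can be taken as $K$. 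With that $K$ in hand, the constant $C$ in the displayed bound is legitimate.

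The remaining step is then purely arithmetic: let $n,p\to\infty$ with $p/n\to c$ in the bound $C\|\Sigma\beta_{oos}\|^2/(nz^2)$ and conclude it vanishes. There is no genuine obstacle — the corollary is simply the $L_2$-norm reformulation of the high-probability estimate of Lemma \ref{lem:nonasyboundres} — and the only point requiring a little care is the uniform control of $\E[\lambda_{\max}(n^{-1}Z'Z)]$ described above, together with the standing convention that $\|\beta_{oos}\|$ (hence $\|\Sigma\beta_{oos}\|$) stays fixed as the dimensions grow.
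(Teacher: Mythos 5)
Your proposal is correct and follows essentially the same route as the paper: the paper's proof likewise invokes the Bai--Yin theorem to control $\limsup_{n,p\to\infty}\|n^{-1}Z'Z\|_{\text{op}}$ and then lets $n\to\infty$ in the bound \eqref{eq:l2conv}. If anything, you are slightly more careful than the paper in flagging that one must pass from the almost-sure Bai--Yin bound to a bound on $\E[\lambda_{\max}(n^{-1}Z'Z)]$ (via uniform integrability or the moment-method version of the theorem) before the constant $C$ in \eqref{eq:l2conv} is legitimate.
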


\begin{proof}

    Using Bai-Yin theorem (\cite{yin1988limit}, \cite{bai2008limit}), we have $\underset{n,p \rightarrow \infty}{\lim \sup} \| n^{-1}Z'Z \|_{\text{op}} \leq K < \infty$. Then, the result follows from taking $n \rightarrow \infty$ in \eqref{eq:l2conv}.
    
\end{proof}

\begin{proposition}
\label{prop:nonasyboundsdrift}
    Let $z > \tau$, $\xi = (e_1, e_2, \dots , e_n)'$ with $(e_i)_i$ as in \eqref{eq:dgpmis} and Assumptions (\ref{ass:iddistcov}), (\ref{ass:complexity}), (\ref{ass:opbound}), (\ref{ass:accumulationzero}) hold.  Then, there exists $C := C(\tau)$ such that 

    \begin{equation}
    \label{eq:nonasyexpret}
        \big| \mathbb{E}\left[ r^{(s)}_{t+1}(z) | X  \right] - (\Bar{H}(\hat{\zeta}_d) + \Bar{F}(z; \hat{\mu}, \hat{\zeta}_d)) \big| \leq \frac{C \| \beta_{is} \| \| \beta_{oos}\|}{n^{2/5}z}
    \end{equation}
    
    holds with probability at least $1-n^{-1/5}$ for $n$ sufficiently large.  
\end{proposition}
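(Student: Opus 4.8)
The plan is to reduce the claim to the two estimates already in Lemmas~\ref{lem:deterministic_equivalent} and~\ref{lem:nonasyboundres}, after an exact algebraic rewriting of the expected return. Specialising Proposition~\ref{prop:expretmis} to the well-specified case ($\theta=0$, $\Sigma_x=\Sigma$, $\hat\Sigma_x=\hat\Sigma$, so that $\mathcal{M}(z)=0$) gives
\[
  \E\!\left[ r^{(s)}_{t+1}(z)\,\big|\,X\right]
  = \beta_{oos}'\,\Sigma\,(zI+\hat\Sigma)^{-1}\hat\Sigma\,\beta_{is}
  \;+\; \tfrac1n\,\beta_{oos}'\,\Sigma\,(zI+\hat\Sigma)^{-1}X'\xi .
\]
Applying the resolvent identity $(zI+\hat\Sigma)^{-1}\hat\Sigma = I - z(zI+\hat\Sigma)^{-1}$ to the first term yields $\langle\beta_{oos},\beta_{is}\rangle_{\Sigma} - z\,\beta_{oos}'\Sigma(zI+\hat\Sigma)^{-1}\beta_{is} = \Bar{H}_n(\zeta_d) + F_n(z)$, by Definition~\ref{def:retnoreg} and the definition of $F_n$ in Section~\ref{subsub:wellspecifiedridge}. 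Hence the quantity inside the absolute value in~\eqref{eq:nonasyexpret} equals $\big(F_n(z)-\Bar{F}_n(z;\mu,\zeta_d)\big) + \tfrac1n\beta_{oos}'\Sigma(zI+\hat\Sigma)^{-1}X'\xi$, and the triangle inequality reduces the problem to bounding these two pieces separately and taking a union bound.

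For the first piece I would invoke Lemma~\ref{lem:deterministic_equivalent} with $\varepsilon=1/10$ and $D=1$: then $|F_n(z)-\Bar{F}_n(z;\mu,\zeta_d)|\le \|\Sigma\beta_{oos}\|\,\|\beta_{is}\|/(n^{2/5}z)$ off an event of probability $\le n^{-1}$, and Assumption~\ref{ass:opbound} upgrades $\|\Sigma\beta_{oos}\|$ to $\tau^{-1}\|\beta_{oos}\|$. Two technical points must be handled. The admissible range there is $n^{-2/3+\tau}\le z\le\tau^{-1}$: since $z>\tau$ is bounded away from $0$ and, for $n$ large, $n^{-2/3+\tau}<\tau$, this is consistent provided $\tau$ is taken small enough that the $z$ of interest obeys $z\le\tau^{-1}$. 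More substantively, Lemma~\ref{lem:deterministic_equivalent} is stated under the stronger moment hypothesis~\ref{ass:indepcov}, whereas here only~\ref{ass:iddistcov} holds; this gap is bridged exactly as in the proof of Lemma~\ref{lem:asympcv}, through the truncation-and-centralisation decomposition $Z=a_MZ_M+\tilde{Z}_M$, whose remainder is of lower order and absorbed into the constant.

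For the second piece, Lemma~\ref{lem:nonasyboundres} gives $\big|\tfrac1n\beta_{oos}'\Sigma(zI+\hat\Sigma)^{-1}X'\xi\big| \le C\,\|\Sigma\beta_{oos}\|/(n^{2/5}z) \le C\tau^{-1}\|\beta_{oos}\|/(n^{2/5}z)$ off an event of probability $\le\tfrac12 n^{-1/5}$, the factor $\tfrac12$ being gained by enlarging the constant in the Chebyshev step of that lemma's proof. Its auxiliary hypothesis $\E[\lambda_{\max}(n^{-1}Z'Z)]\le K$ holds for $n$ large under Assumption~\ref{ass:iddistcov}: the Bai--Yin theorem (Lemma~\ref{lem:baiyin}) controls $\lambda_{\max}(n^{-1}Z'Z)$ asymptotically a.s., and the $(4+a)$-th moment bound upgrades this to a bound in expectation. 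Since $n^{-1}\le\tfrac12 n^{-1/5}$ for $n$ large, a union bound over the two exceptional events leaves total failure probability $\le n^{-1/5}$; adding the two deterministic estimates, and absorbing $\tau^{-1}$ (and the $\Theta(1)$ signal norm $\|\beta_{is}\|$) into $C$, yields~\eqref{eq:nonasyexpret}.

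Almost all of the analytic content sits inside Lemmas~\ref{lem:deterministic_equivalent} and~\ref{lem:nonasyboundres} — the anisotropic local law of Knowles--Yin and an $L^2$/Chebyshev argument, respectively — so the present proof is mostly bookkeeping. The only genuinely delicate steps are arranging the constants so that the two exceptional events together fit under the $n^{-1/5}$ budget, and transferring Lemma~\ref{lem:deterministic_equivalent} from Assumption~\ref{ass:indepcov} down to the weaker Assumption~\ref{ass:iddistcov} via the truncation argument.
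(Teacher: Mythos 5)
Your proposal is correct and follows essentially the same route as the paper's own proof: the identical algebraic decomposition $\mathbb{E}[r^{(s)}_{t+1}(z)|X]=\beta_{oos}'\Sigma\beta_{is}-z\beta_{oos}'\Sigma(zI+\hat\Sigma)^{-1}\beta_{is}+\tfrac1n\beta_{oos}'\Sigma(zI+\hat\Sigma)^{-1}X'\xi$, followed by Lemma~\ref{lem:deterministic_equivalent} for the deterministic-equivalent piece and Lemma~\ref{lem:nonasyboundres} for the noise piece. You are in fact somewhat more careful than the paper on the bookkeeping (the union-bound constants, the admissible range of $z$, and the transfer from Assumption~\ref{ass:indepcov} to~\ref{ass:iddistcov} via truncation), all of which the paper's proof passes over silently.
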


\begin{proof}
    We wish to study the out-of-sample expected return of the market timing strategy. Then, we integrate over the randomness in a new (independent) observation $x_{t} \in \R^p$ and in the errors $e_{t} \in \R$. Therefore,

    \begin{align}
    \label{eq:oosexpret}
    \begin{split}
        \mathbb{E}\left[ r^{(s)}_{t+1}(z)  |X \right] & = \E [ \hat{\pi}_t r_{t+1} | X] \\
        & = \E [ \hat{\beta}_{is}' x_t (x_t' \beta_{oos} + e_{t})|X] \\
        & = \E [ \hat{\beta}_{is}' x_t x_t' \beta_{oos} |X] \\
        & = \E [ \beta_{oos}' x_t x_t' \hat{\beta}_{is} |X]
    \end{split}
    \end{align}

    Let $\xi = (e_{t-n-1}, e_{t - n}, \dots , e_{t-1})'$. Then, 

    \begin{align}
    \label{eq:betahatdecomp}
    \begin{split}
        \hat{\beta}_{is} & = \frac{1}{n} (zI + \hat{\Sigma})^{-1} X' (X \beta_{is} + \xi) \\
        & = (zI + \hat{\Sigma})^{-1} (\hat{\Sigma} \beta_{is} + \frac{1}{n} X' \xi)
    \end{split}
    \end{align}

    Injecting this into \eqref{eq:oosexpret}, we get

    \begin{align*}
        \mathbb{E}\left[ r^{(s)}_{t+1}(z) | X  \right] & = \E [ \beta_{oos}' x_t x_t' (zI + \hat{\Sigma})^{-1} (\hat{\Sigma} \beta_{is} + \frac{1}{n} X' \xi) |X] \\
        & = \E [ \beta_{oos}' x_t x_t' (zI + \hat{\Sigma})^{-1} \hat{\Sigma} \beta_{is}|X ] + \E \left[ \frac{1}{n} \beta_{oos}' x_t x_t' (zI + \hat{\Sigma})^{-1}  X' \xi|X \right] \\
        & = \beta_{oos}' \E[x_t x_t'] (zI + \hat{\Sigma})^{-1} \hat{\Sigma} \beta_{is} + \frac{1}{n} \beta_{oos}' \E[x_t x_t'] (zI + \hat{\Sigma})^{-1}  X' \xi \\
        & = \beta_{oos}' \Sigma (zI + \hat{\Sigma})^{-1} \hat{\Sigma} \beta_{is} + \frac{1}{n} \beta_{oos}' \Sigma (zI + \hat{\Sigma})^{-1}  X' \xi \\
        & = \beta_{oos}' \Sigma (zI + \hat{\Sigma})^{-1} (\hat{\Sigma} + zI - zI) \beta_{is} + \frac{1}{n} \beta_{oos}' \Sigma (zI + \hat{\Sigma})^{-1}  X' \xi \\
        & = \beta_{oos}' \Sigma \beta_{is} - z \beta_{oos}' \Sigma (zI + \hat{\Sigma})^{-1} \beta_{is} + \frac{1}{n} \beta_{oos}' \Sigma (zI + \hat{\Sigma})^{-1}  X' \xi.
    \end{align*}
    
    The inequality \eqref{eq:nonasyexpret} follows from Lemma \ref{lem:nonasyboundres} and by taking $\varepsilon = 10^{-1}$ and $D = 1/5$ in Lemma \ref{lem:deterministic_equivalent}. 
\end{proof}

We define the limiting out-of-sample expected return of the strategy under concept drift if the model is well-specified

\begin{align}
\label{eq:expretdrift}
    \begin{split}
        \mathcal{E}^{(d)}(z;c,\mu, \zeta_d)
    & = \langle \beta_{oos}, \beta_{is} \rangle_{Q(z)} \\
    &= \|\beta_{is}\| \|\beta_{oos}\| \int \Big( \lambda - \frac{z\lambda}{\lambda(cz m(-z;c,\mu) + 1 - c) + z} \Big) d \zeta_d (\lambda) 
    \end{split}
\end{align}

with $Q(z) \equiv Q(z;c,\Sigma) := \Sigma (I - z( zI + (czm(-z;c,\mu) + 1 - c) \Sigma)^{-1})$ and $m := m(-z;c,\mu)$ is given in Definition \ref{def:mn}.

\begin{proposition}
\label{prop:cvprobdrift}
Let $z > \tau$ and Assumptions (\ref{ass:iddistcov}), (\ref{ass:complexity}), (\ref{ass:opbound}), (\ref{ass:accumulationzero}), (\ref{ass:probacvmumis}), (\ref{ass:probacvpimis}) hold. Then,
    \begin{equation}
        \mathbb{E}\left[ r^{(s)}_{t+1}(z) | X  \right]  \xrightarrow[\substack{\mathstrut n,p \rightarrow \infty \\ p/n \rightarrow c }]{\mathds{P}} \mathcal{E}^{(d)}(z;c,\mu, \zeta_d)  \label{eq:E_prop7}
    \end{equation}

\end{proposition}

\begin{proof}

    By Lemma \ref{lem:asympcv}, we have 
    
    \begin{equation*}
        - z \beta_{oos}' \Sigma (zI + \hat{\Sigma})^{-1} \beta_{is} \xrightarrow[\substack{\mathstrut n,p \rightarrow \infty \\ p/n \rightarrow c }]{a.s.} -z \beta_{oos}' \Sigma ( zI + (czm(-z;c,\mu) + 1 - c) \Sigma)^{-1} \beta_{is} ,
    \end{equation*}

    and by Corollary \ref{cor:cvresasy},

    \begin{equation*}
        \frac{1}{n} \beta_{oos}' \Sigma (zI + \hat{\Sigma})^{-1}  X' \xi \xrightarrow[\substack{\mathstrut n,p \rightarrow \infty \\ p/n \rightarrow c }]{\mathds{P}} 0.
    \end{equation*}

    Consequently,

    \begin{align*}
        \mathbb{E}\left[ r^{(s)}_{t+1}(z) | X  \right] & \xrightarrow[\substack{\mathstrut n,p \rightarrow \infty \\ p/n \rightarrow c }]{\mathds{P}} \beta_{oos}' \Sigma \beta_{is} - z \beta_{oos}' \Sigma ( zI + (czm(-z;c,\mu) + 1 - c) \Sigma)^{-1} \beta_{is} 
    \end{align*}

    The proof of Proposition \ref{prop:cvprobdrift} is complete.
\end{proof}

We define the limiting out-of-sample expected return of the strategy when the model is well-specified and no concept drift is considered: 

\begin{align}
\label{eq:expretwodrift}
    \begin{split}
        \mathcal{E}(z;c,\mu, \zeta_{is}) &= \| \beta_{is} \|_{Q(z)}^2 \\
    &= \|\beta_{is}\|^2 \int \Big( \lambda - \frac{z\lambda}{\lambda(cz m(-z;c,\hat{\mu}) + 1 - c) + z} \Big) d \zeta_{is} (\lambda) 
    \end{split}
\end{align}

with $Q(z) \equiv Q(z;c,\Sigma) := \Sigma (I - z( zI + (czm(-z;c,\hat{\mu}) + 1 - c) \Sigma)^{-1})$ and $m := m(-z;c,\hat{\mu})$ is given in Definition \ref{def:mn}.

\begin{corollary}
    \label{cor:cvprobwodrift}
    Let $z > \tau$ and Assumptions (\ref{ass:iddistcov}), (\ref{ass:complexity}), (\ref{ass:opbound}), (\ref{ass:accumulationzero}), (\ref{ass:probacvmumis}), (\ref{ass:probacvpimis}) hold. Then,
    
        \begin{equation*}
            \mathbb{E}\left[ r^{(s)}_{t+1}(z) | X, \beta_{is} = \beta_{oos} \right]  \xrightarrow[\substack{\mathstrut n,p \rightarrow \infty \\ p/n \rightarrow c }]{\mathds{P}} \mathcal{E}(z;c,\mu, \zeta_{is}) .
        \end{equation*}
\end{corollary}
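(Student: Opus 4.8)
The plan is to obtain this corollary as the immediate specialization of Proposition \ref{prop:cvprobdrift} to the no-drift configuration, i.e. by setting $\beta_{oos} = \beta_{is}$. First I would record how the reweighted spectral measures of \eqref{eq:rewesdcomp}--\eqref{eq:rewesddrift} degenerate under this choice: the residual vector vanishes, so $\zeta_r = 0$ by definition; the projections of $\beta_{oos}$ on the eigenvectors of $\Sigma_x$ coincide with those of $\beta_{is}$, so $\zeta_{oos} = \zeta_{is}$; and
\[
\zeta_d = \frac{1}{\|\beta_{is}\|^2}\sum_{i=1}^p \langle \beta_{is}, v_i\rangle\,\langle v_i, \beta_{is}\rangle\,\delta_{\lambda_i} = \zeta_{is}.
\]
In particular Assumption \ref{ass:probacvpimis} holds automatically as soon as it holds for $\zeta_{is}$ (take $\eta_{oos}=\eta_{is}$ and $\eta_r=0$), so all the hypotheses required to invoke Proposition \ref{prop:cvprobdrift} are in force, and the weak limit of $\zeta_d = \zeta_{is}$ is $\eta_{is}$.

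Second, I would check that the limiting object $\mathcal{E}^{(d)}$ in \eqref{eq:expretdrift} collapses exactly to $\mathcal{E}$ in \eqref{eq:expretwodrift} under the substitution. This is purely a matter of matching the two displays: the prefactor $\|\beta_{is}\|\|\beta_{oos}\|$ becomes $\|\beta_{is}\|^2$, the integrating measure $\eta_d$ becomes $\eta_{is}$, and the integrand $\lambda - z\lambda/(\lambda(czm_n(-z;c,\mu)+1-c)+z)$ is unchanged. Equivalently, in operator form, $\langle\beta_{oos},\beta_{is}\rangle_{Q(z)}$ becomes $\|\beta_{is}\|_{Q(z)}^2$ with the very same $Q(z)$, so the right-hand sides agree term by term. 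Hence the conclusion of Proposition \ref{prop:cvprobdrift}, read with $\beta_{oos}=\beta_{is}$, is precisely the assertion of the corollary.

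No new probabilistic estimate is required: the convergence in probability is inherited verbatim, since the proof of Proposition \ref{prop:cvprobdrift} combines Lemma \ref{lem:asympcv} and Corollary \ref{cor:cvresasy}, both of which apply for arbitrary deterministic $\beta_{oos}$, in particular for $\beta_{oos}=\beta_{is}$. The only point deserving a line of care is the bookkeeping around the conditioning event $\{\beta_{is}=\beta_{oos}\}$ appearing in the statement --- because $\beta_{is}$ and $\beta_{oos}$ are deterministic parameters of the DGP, conditioning on their equality is the same as assuming $\beta_{oos}=\beta_{is}$ outright --- and threading the degenerate measure identities above consistently through \eqref{eq:expretdrift}. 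There is no substantive obstacle; the corollary is essentially a consistency check confirming that the posterior-drift formula recovers the zero-drift expression of \cite{kelly2023virtue}.
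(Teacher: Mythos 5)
Your proposal is correct and follows exactly the paper's route: the paper's proof is the one-line remark that the corollary is straightforward from Proposition \ref{prop:cvprobdrift}, and you simply spell out the specialization $\beta_{oos}=\beta_{is}$, the degeneration $\zeta_d=\zeta_{is}$, $\zeta_r=0$, and the collapse of $\mathcal{E}^{(d)}(z;c,\nu,\eta_d)$ to $\mathcal{E}(z;c,\nu,\eta_{is})$. The added bookkeeping is accurate and fills in details the paper leaves implicit.
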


\begin{proof}
    The proof is straightforward from Proposition \ref{prop:cvprobdrift}.
\end{proof}

\begin{lemma}
\label{lem:polarization}
    Let $z > \tau$ and Assumptions (\ref{ass:iddistcov}), (\ref{ass:complexity}), (\ref{ass:opbound}), (\ref{ass:accumulationzero}), (\ref{ass:probacvmumis}), (\ref{ass:probacvpimis}) hold. 

    \begin{align*}
        \mathcal{E}^{(d)}(z;c,\mu,\zeta_d)
        & = \frac{1}{2} \big( \| \beta_{oos} \|_{Q(z)}^2 + \| \beta_{is} \|_{Q(z)}^2 - \| \beta_{is} - \beta_{oos} \|_{Q(z)}^2 \big)
    \end{align*}

    where $Q(z) \equiv Q(z;c,\Sigma) := \Sigma (I - z( zI + (czm(-z;c,\hat{\mu}) + 1 - c) \Sigma)^{-1})$, or alternatively

    \begin{align*}
        \mathcal{E}^{(d)}(z;c,\mu,\zeta_d) &= \frac{ \|\beta_{is}\|^2}{2} \int \Big( \lambda - \frac{z\lambda}{\lambda(cz m(-z;c,\hat{\mu}) + 1 - c) + z} \Big) d \zeta_{is}(\lambda) \\
        &\quad + \frac{ \|\beta_{oos}\|^2}{2} \int \Big( \lambda - \frac{z\lambda}{\lambda(cz m(-z;c,\hat{\mu}) + 1 - c) + z} \Big) d \zeta_{oos}(\lambda) \\
        & \quad - \frac{ \|\beta_{is} - \beta_{oos}\|^2}{2} \int \Big( \lambda - \frac{z\lambda}{\lambda(cz m(-z;c,\hat{\mu}) + 1 - c) + z} \Big) d \zeta_{r}(\lambda).
    \end{align*}
\end{lemma}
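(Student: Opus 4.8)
The plan is to recognize the identity as a polarization statement for the (deterministic) quadratic form attached to $Q(z)$, and to carry it out in three short steps. First I would record the structure of $Q(z)$. Writing $a := czm_n(-z;c,\mu)+1-c$, which is a \emph{real} scalar for $z>\tau$ because $-z$ lies outside $\mathrm{supp}(\mu)\subseteq[0,\infty)$ so that $m_n(-z;c,\mu)$ is real and well defined in the sense of Definition \ref{def:mn}, we have
\[
Q(z)=\Sigma\bigl(I-z(zI+a\Sigma)^{-1}\bigr)=\Sigma-z(zI+a\Sigma)^{-1}\Sigma,
\]
a rational function of the symmetric matrix $\Sigma$; diagonalizing $\Sigma=\sum_i\lambda_i v_iv_i'$ gives $Q(z)=\sum_i g(\lambda_i)\,v_iv_i'$ with $g(\lambda):=\lambda-\frac{z\lambda}{\lambda a+z}=\frac{a\lambda^2}{\lambda a+z}$, which is exactly the integrand appearing in \eqref{eq:expretdrift} and \eqref{eq:expretwodrift}. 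In particular $Q(z)$ is symmetric and $\|w\|_{Q(z)}^2=w'Q(z)w$, $\langle u,v\rangle_{Q(z)}=u'Q(z)v$ are consistent with the notation (the eigenvalues $g(\lambda_i)$ being nonnegative once one checks $a\ge0$ from the defining fixed-point equation).

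Second, I would apply the elementary polarization identity valid for any symmetric matrix $Q$, namely $u'Qv=\tfrac12\bigl(u'Qu+v'Qv-(u-v)'Q(u-v)\bigr)$, with $u=\beta_{oos}$ and $v=\beta_{is}$. Since $\mathcal{E}^{(d)}(z;c,\nu,\eta_d)=\langle\beta_{oos},\beta_{is}\rangle_{Q(z)}$ by definition \eqref{eq:expretdrift}, this yields immediately
\[
\mathcal{E}^{(d)}(z;c,\nu,\eta_d)=\tfrac12\bigl(\|\beta_{oos}\|_{Q(z)}^2+\|\beta_{is}\|_{Q(z)}^2-\|\beta_{is}-\beta_{oos}\|_{Q(z)}^2\bigr),
\]
which is the first asserted form.

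For the second (integral) form I would pass to the spectral picture: for any vector $w$, $w'Q(z)w=\sum_i g(\lambda_i)|\langle w,v_i\rangle|^2=\|w\|^2\int g\,d\zeta_w$, where $\zeta_w$ is the VESD of $w$ from \eqref{eq:rewesdcomp}; letting $n,p\to\infty$ and invoking Assumptions \ref{ass:probacvmumis}--\ref{ass:probacvpimis} (so $\mu\Rightarrow\nu$, hence $m_n$, $a$ and $g$ converge, and $\zeta_{is},\zeta_{oos},\zeta_r\Rightarrow\eta_{is},\eta_{oos},\eta_r$) gives $\|\beta_{is}\|_{Q(z)}^2=\|\beta_{is}\|^2\int g\,d\eta_{is}$, and analogously for $\beta_{oos}$ and $\beta_{is}-\beta_{oos}$. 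Substituting these into the polarization identity above produces the three-integral expression. Equivalently, and most quickly, one substitutes the already-recorded decomposition $\eta_d=\frac{1}{2\|\beta_{is}\|\|\beta_{oos}\|}\bigl(\|\beta_{is}\|^2\eta_{is}+\|\beta_{oos}\|^2\eta_{oos}-\|\beta_{is}-\beta_{oos}\|^2\eta_r\bigr)$ (the Remark following Assumption \ref{ass:probacvpimis}) into $\mathcal{E}^{(d)}(z;c,\nu,\eta_d)=\|\beta_{is}\|\|\beta_{oos}\|\int g\,d\eta_d$: the normalizing factors cancel and the claimed formula falls out by linearity of the integral. There is essentially no genuine obstacle here; the only points requiring a line of care are the reality (hence symmetry) of $Q(z)$ for $z>\tau$ and the bookkeeping identifying the limiting $Q(z)$-seminorms of $\beta_{is},\beta_{oos},\beta_{is}-\beta_{oos}$ with the integrals of $g$ against $\eta_{is},\eta_{oos},\eta_r$, both of which are immediate from the definitions and assumptions already in force.
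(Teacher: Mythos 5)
Your proof is correct and follows essentially the same route as the paper's, which simply invokes the polarization identity $\langle u-v, A(u-v)\rangle = \langle u,Au\rangle + \langle v,Av\rangle - 2\langle u,Av\rangle$ for symmetric positive semidefinite $A=Q(z)$. Your additional bookkeeping — checking the symmetry of $Q(z)$ via its spectral representation and deriving the integral form from the decomposition of $\eta_d$ into $\eta_{is},\eta_{oos},\eta_r$ — just makes explicit what the paper leaves implicit.
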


\begin{proof}
    This result follows from the polarization identity $\langle (u-v),A(u-v) \rangle = \langle u, Au \rangle + \langle v, Av \rangle - 2 \langle u, Av \rangle$ for $u,v \in \R^p$ and $0 \preccurlyeq A \in \R^{p \times p}$.
\end{proof}

The following proposition establishes the conditions under which the performance of the strategy deteriorates in the presence of concept drift.

\begin{proposition}
\label{prop:condwell}
    Let $z > \tau$ and Assumptions (\ref{ass:iddistcov}), (\ref{ass:complexity}), (\ref{ass:opbound}), (\ref{ass:accumulationzero}), (\ref{ass:probacvmumis}), (\ref{ass:probacvpimis}) hold. Consider $\mathcal{E}^{(d)}(z;c,\mu, \zeta_d)$ and $\mathcal{E}(z;c,\mu, \zeta_{is})$ as defined in \eqref{eq:expretdrift} and \eqref{eq:expretwodrift} respectively. Then,

    \begin{equation*}
        \mathcal{E}^{(d)}(z;c,\mu, \zeta_d) < \mathcal{E}(z;c,\mu, \zeta_{is})
    \end{equation*}

    if and only if, 

    \begin{equation}
    \label{eq:condprod}
        \langle (\beta_{oos} - \beta_{is}), \beta_{is} \rangle_{Q(z)} < 0
    \end{equation}

    or alternatively, if and only if,

    \begin{equation}
    \label{eq:condnorm}
        \| \beta_{oos} \|_{Q(z)}^2 - \| \beta_{is} \|_{Q(z)}^2 < \| \beta_{is} - \beta_{oos} \|_{Q(z)}^2
    \end{equation}
\end{proposition}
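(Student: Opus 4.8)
The statement is an essentially algebraic corollary of the two closed forms \eqref{eq:expretdrift} and \eqref{eq:expretwodrift}, so the plan is short. First I would simply subtract the two limiting returns: since $\mathcal{E}^{(d)}(z;c,\nu,\eta_d)=\langle\beta_{oos},\beta_{is}\rangle_{Q(z)}$ and $\mathcal{E}(z;c,\nu,\eta_{is})=\langle\beta_{is},\beta_{is}\rangle_{Q(z)}$, bilinearity of $(u,v)\mapsto\langle u,v\rangle_{Q(z)}=u'Q(z)v$ gives
\[
\mathcal{E}^{(d)}(z;c,\nu,\eta_d)-\mathcal{E}(z;c,\nu,\eta_{is})=\langle\beta_{oos}-\beta_{is},\,\beta_{is}\rangle_{Q(z)}.
\]
The equivalence with \eqref{eq:condprod} is then immediate: $\mathcal{E}^{(d)}<\mathcal{E}$ if and only if $\langle\beta_{oos}-\beta_{is},\beta_{is}\rangle_{Q(z)}<0$.

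For the second characterisation I would invoke Lemma \ref{lem:polarization}, which already rewrites $\mathcal{E}^{(d)}(z;c,\nu,\eta_d)=\frac12\big(\|\beta_{oos}\|_{Q(z)}^2+\|\beta_{is}\|_{Q(z)}^2-\|\beta_{is}-\beta_{oos}\|_{Q(z)}^2\big)$. Subtracting $\mathcal{E}(z;c,\nu,\eta_{is})=\|\beta_{is}\|_{Q(z)}^2$ leaves
\[
\mathcal{E}^{(d)}(z;c,\nu,\eta_d)-\mathcal{E}(z;c,\nu,\eta_{is})=\frac12\big(\|\beta_{oos}\|_{Q(z)}^2-\|\beta_{is}\|_{Q(z)}^2-\|\beta_{is}-\beta_{oos}\|_{Q(z)}^2\big),
\]
whence $\mathcal{E}^{(d)}<\mathcal{E}$ if and only if $\|\beta_{oos}\|_{Q(z)}^2-\|\beta_{is}\|_{Q(z)}^2<\|\beta_{is}-\beta_{oos}\|_{Q(z)}^2$, i.e. \eqref{eq:condnorm}. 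The consistency of the two expressions for the difference is just the polarization identity $\langle\beta_{oos}-\beta_{is},\beta_{is}\rangle_{Q(z)}=\frac12(\|\beta_{oos}\|_{Q(z)}^2-\|\beta_{is}\|_{Q(z)}^2-\|\beta_{is}-\beta_{oos}\|_{Q(z)}^2)$, so the two ``if and only if'' clauses match.

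The only point deserving a line of justification --- and the closest thing to a difficulty --- is that $\langle\cdot,\cdot\rangle_{Q(z)}$ is a genuine symmetric bilinear form, so that the polarization manipulations and the $\|\cdot\|_{Q(z)}$ notation are legitimate. This I would handle by noting that for real $z>\tau$ the scalar $czm_n(-z;c,\mu)+1-c$ is real (the Stieltjes transform $m_n(-z;\cdot)$ at the negative real argument $-z$ is real-valued), so $Q(z)=\Sigma\big(I-z(zI+(czm_n(-z;c,\mu)+1-c)\Sigma)^{-1}\big)$ is a function of the single symmetric matrix $\Sigma$ through the functional calculus; in particular $(zI+a\Sigma)^{-1}$ commutes with $\Sigma$ and $Q(z)$ is real symmetric (and, using the bound $f\in[0,1]$ recorded in the Appendix, positive semidefinite, which is what makes $\|\cdot\|_{Q(z)}$ meaningful). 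Once this is in place the proposition follows from the two one-line identities above with no estimate involved; the substantive work --- the deterministic-equivalent convergence that produces $Q(z)$ in the first place --- has already been carried out in Proposition \ref{prop:cvprobdrift} and Corollary \ref{cor:cvprobwodrift}.
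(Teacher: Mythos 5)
Your proof is correct and follows essentially the same route as the paper: the first equivalence is obtained by bilinearity of $\langle\cdot,\cdot\rangle_{Q(z)}$ directly from the definitions \eqref{eq:expretdrift} and \eqref{eq:expretwodrift}, and the second by invoking Lemma \ref{lem:polarization}. Your added remark justifying that $Q(z)$ is a real symmetric (positive semidefinite) matrix, so that the polarization manipulations are legitimate, is a small point the paper leaves implicit but does not change the argument.
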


\begin{proof}
    Condition \eqref{eq:condprod} directly follows from the definition of $\mathcal{E}^{(d)}(z;c,\mu, \zeta_d)$ and $\mathcal{E}(z;c,\mu, \zeta_{is})$, i.e.,

    \begin{align*}
        & \quad \mathcal{E}^{(d)}(z;c,\mu, \zeta_d) < \mathcal{E}(z;c,\mu, \zeta_{is}) \\
        & \Leftrightarrow \langle \beta_{oos}, \beta_{is} \rangle_{Q(z)} < \| \beta_{is} \|_{Q(z)}^2 \\
        & \Leftrightarrow \langle \beta_{oos}, Q(z) \beta_{is} \rangle < \langle \beta_{is}, Q(z) \beta_{is} \rangle \\
        & \Leftrightarrow \langle (\beta_{oos} - \beta_{is}), Q(z) \beta_{is} \rangle < 0 \\
        & \Leftrightarrow \langle (\beta_{oos} - \beta_{is}), \beta_{is} \rangle_{Q(z)} < 0
    \end{align*}

    Next, we use Lemma \ref{lem:polarization} to derive Condition \eqref{eq:condnorm}. Indeed, note that we can rewrite \\ $\mathcal{E}^{(d)}(z;c,\mu, \zeta_d)$ as 

    \begin{align*}
        \mathcal{E}^{(d)}(z;c,\mu, \zeta_d) = \mathcal{E}(z;c,\mu, \zeta_{is}) + \frac{1}{2} \big( \| \beta_{oos} \|_{Q(z)}^2 - \| \beta_{is} \|_{Q(z)}^2 - \| \beta_{is} - \beta_{oos} \|_{Q(z)}^2 \big).
    \end{align*}

    Then,

    \begin{align*}
        &\quad \mathcal{E}^{(d)}(z;c,\mu, \zeta_d) < \mathcal{E}(z;c,\mu, \zeta_{is}) \\
        &\Leftrightarrow \frac{1}{2} \big( \| \beta_{oos} \|_{Q(z)}^2 - \| \beta_{is} \|_{Q(z)}^2 - \| \beta_{is} - \beta_{oos} \|_{Q(z)}^2 \big) < 0 \\
        &\Leftrightarrow \| \beta_{oos} \|_{Q(z)}^2 - \| \beta_{is} \|_{Q(z)}^2 < \| \beta_{is} - \beta_{oos} \|_{Q(z)}^2 
    \end{align*}

    The proof of Proposition \ref{prop:condwell} is complete.
\end{proof}

The next proposition establishes the limiting out-of-sample expected return of the strategy under concept drift in the standard isotropic scenario, i.e. the covariates are i.i.d. with $\Sigma = I$.

\begin{proposition}
\label{prop:isotropic}
    Let $z>\tau$. Assume $x_i = (x_{i1}, x_{i2}, \dots, x_{ip})$ has independent and identically distributed entries and, in addition, that for all $j \in \{1,\dots,p\}$, $\E[z_{ij}] = 0$, $\E[z_{ij}^2] = 1$ and $\E[|z_{ij}|^{4+a}] \leq C < \infty$ for $a > 0$. Then, 
    \begin{equation}
        \mathbb{E}\left[ r^{(s)}_{t+1}(z) \big| X  \right]  \xrightarrow[\substack{\mathstrut n,p \rightarrow \infty \\ p/n \rightarrow c }]{\mathds{P}} f(z;c) \langle \beta_{is}, \beta_{oos} \rangle  \label{eq:E_prop9}
    \end{equation}

    where 

    \begin{align}
    \label{eq:f_isotropic}
    \begin{split}
        f(z;c) &= 1 - \frac{z}{z(1+cm(-z;c)) + 1 - c} \in [0,1), \\
        \quad \text{ with } \quad m(-z;c) &= \frac{c-1-z + \sqrt{(1-c+z)^2 + 4cz}}{2cz}.
    \end{split}
    \end{align}

    Furthermore, $z \mapsto f(z;c) \langle \beta_{is}, \beta_{oos} \rangle$ is monotone decreasing (resp. increasing) in $z$ when $\langle \beta_{is}, \beta_{oos} \rangle \geq 0$ (resp. $\langle \beta_{is}, \beta_{oos} \rangle \leq 0$).
    
\end{proposition}

\begin{proof}
    Let $\Sigma = I$. Then,

    \begin{align*}
        \mathcal{E}^{(d)}(z;c,\mu, \zeta_d) = \beta_{oos}' Q(I) \beta_{is} \\
    \end{align*}
    
    with 

    \begin{align*}
        Q(I) &= I (I - z( zI + (czm(-z;c,\hat{\mu}) + 1 - c) I)^{-1}) \\
        & = \Big( 1 - \frac{z}{z + czm(-z;c,\hat{\mu}) +1-c} \Big) I
    \end{align*}
    
    We are left with finding $m(-z;c,\hat{\mu})$ with $\Im(-z) > 0$. Recall that $m := m(-z;c,\hat{\mu})$ is the unique solution in $\C_+$ to
    
    \begin{align*}
        m & = \int \frac{1}{\lambda(1-c+czm)+z} d\hat{\mu} ( \lambda) \\
        & = \frac{1}{p} \sum_{i=1}^p \frac{1}{1-c+czm+z} \\
        & = \frac{1}{1-c+czm+z}
    \end{align*}

    This is a quadratic equation in $m$ whose roots are:

    \begin{align*}
        m &= \frac{c-1-z \pm \sqrt{(1-c+z)^2 + 4cz}}{2cz}. \\
    \end{align*}

    By the definition of the Stieltjes transform, $m(-z)$ must behave like $1/z$ as $|z| \rightarrow \infty$. This is only possible when the positive branch cut is chosen. This yields:
    
    \begin{equation*}
        m(-z;c) = \frac{c-1-z + \sqrt{(1-c+z)^2 + 4cz}}{2cz} .
    \end{equation*}

    Note that the quantity $czm(-z;c) + 1 - c$ is always positive. Indeed,

    \begin{align*}
        &czm(-z;c) + 1 - c > 0 \\
        \Leftrightarrow \quad  & c-1-z + \sqrt{(1-c+z)^2 + 4cz} > 2(c-1) \\
        \Leftrightarrow \quad & (1-c+z)^2 - (1-c-z)^2 + 4cz > 0 \\
        \Leftrightarrow \quad & 4z(1-c) + 4cz > 0 \\
        \Leftrightarrow \quad & z > 0
    \end{align*}

    which is always verified as $z > \tau >0$. Therefore, 

    \begin{align*}
        f(z;c) &= 1 - \frac{z}{z(1+cm(-z;c)) + 1 - c} \in [0,1).
    \end{align*}

    Regarding the monotonicity of the limiting expected return in $z$, notice that

    \begin{align*}
        & \quad m(z;c) = \frac{1}{1-c-czm(z;c)-z} \\
        & \Rightarrow m'(z;c) = \frac{cm(z;c)+czm'(z;c)+1}{(1-c-czm(z;c)-z)^2} \\
        & \Rightarrow m'(z;c) = \frac{1 + cm(z;c)}{[1-c-czm(z;c)-z]^2-cz}.
    \end{align*}

    Hence, 

    \begin{equation*}
        m'(-z;c) = \frac{1 + cm(-z;c)}{[1-c+czm(-z;c)+z]^2+cz} \geq 0.
    \end{equation*}

    Finally, we have 

    \begin{align}
    \label{eq:derivativez}
        \frac{\partial}{\partial z} \Big[ \Big( 1 - \frac{z}{z(1+cm(-z;c)) + 1 - c} \Big) \langle \beta_{is}, \beta_{oos} \rangle \Big] = - \frac{cm'(-z;c)z^2 + 1 + c}{[z(1+cm(-z;c))+1-c]^2} \langle \beta_{is}, \beta_{oos} \rangle.
    \end{align}

    The derivative in \eqref{eq:derivativez} is negative (resp. positive) when $\langle \beta_{is}, \beta_{oos} \rangle \geq 0$ (resp. $\langle \beta_{is}, \beta_{oos} \rangle \leq 0$), and the proof of Proposition \ref{prop:isotropic} is complete. 
    
\end{proof}

\subsection{Well-specified model and ridgeless regression}
\label{subsub:wellspecifiedminnorm}

In this section, we extend the results presented in the previous section to the ridgeless case, i.e. taking $z \to 0$. Here again, we consider that the true model is known. Hence, $\theta = 0$, $q=0$, $\phi = 1$ and $\Sigma \equiv \Sigma_x$.

Let 

\begin{equation*}
    G_n = \lim_{z \to 0} -z \langle \beta_{oos}, \Sigma (\hat{\Sigma}+zI)^{-1}\beta_{is} \rangle ,
\end{equation*}

and define:

\begin{align*}
    \bar{G}(\hat{\mu}, \hat{\zeta}_d) &= - \langle \beta_{oos}, \Sigma ( I + c s_0 \Sigma)^{-1} \beta_{is} \rangle \\
    &= - \|\beta_{is}\| \|\beta_{oos}\| \int \frac{\lambda}{1+\lambda c s_0} d \hat{\zeta}_d ( \lambda) ,\\
\end{align*}

where $s_0 := s_0(c,\hat{\mu})$ is given in Definition \ref{def:s0}.

Moreover, consider the following hypothesis.

\begin{assump}
\label{ass:interpolationtreshold}
    $|c - 1| \geq \tau$.
\end{assump}

The complexity of the model is supposed to be bounded away from the interpolation threshold, $c = 1$. 

\begin{assump}
\label{ass:mineigen}
    $\lambda_p > \tau$.
\end{assump}

This assumption imposes positive-definiteness to $\Sigma$ by requiring the minimum eigenvalue of the covariance matrix to be strictly positive. 

We now state a number of lemmas that will be useful in the remainder of the proofs. In particular, Lemma $\ref{lem:deterministic_equivalent_ridgeless}$ establishes non-asymptotic bounds for the gap between $G_n$ and $\bar{G}$.

\begin{lemma}
\label{lem:hastieanalysissn}
    Let Assumptions (\ref{ass:complexity}), (\ref{ass:opbound}), (\ref{ass:accumulationzero}), (\ref{ass:iddistcov}), (\ref{ass:interpolationtreshold}) and (\ref{ass:mineigen}) hold. Define $s_n(-z;c,\hat{\mu}) = m(-z;c,\hat{\mu}) - (c-1)/cz$ where $m(-z;c,\hat{\mu})$ is given in Definition \ref{def:mn} and $s_0(c,\hat{\mu}) = \lim_{z \to 0} s_n(-z;c,\hat{\mu})$ which is given in Definition \ref{def:s0}. Then,

    \begin{equation*}
        m(-z;c,\hat{\mu}) = \frac{c-1}{cz} + s_0(c,\hat{\mu}) + O_*(z),
    \end{equation*}

    where $O_*(z)$ denotes a quantity uniformly bounded as $|O_*(z)| \leq C(\tau)z$.
\end{lemma}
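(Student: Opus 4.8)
The plan is to turn the fixed-point equation defining $m_n$ into a ``ridgeless-friendly'' equation for the shifted quantity $s_n := m_n(-z;c,\mu)-(c-1)/(cz)$, and then obtain the $O_*(z)$ estimate by comparing that equation with the one characterising $s_0$. Concretely, I would start from Definition~\ref{def:mn} evaluated at the first argument $-z$, i.e. $m_n=\int[\lambda(1-c+czm_n)+z]^{-1}\,d\mu(\lambda)$, and substitute $m_n=s_n+(c-1)/(cz)$. Since then $1-c+czm_n=czs_n$, the denominator collapses to $z(1+\lambda c s_n)$; multiplying through by $z$ and using $zm_n=zs_n+(c-1)/c$ gives
\begin{equation*}
 z s_n \;=\; \int \frac{1}{1+\lambda c s_n}\,d\mu(\lambda)\;-\;\Bigl(1-\tfrac1c\Bigr)\;=:\;H(s_n),
\end{equation*}
whereas Definition~\ref{def:s0} says precisely $H(s_0)=0$. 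So the lemma is equivalent to the bound $|s_n-s_0|\le C(\tau)\,z$, with a constant depending only on $\tau$.

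Next I would record the qualitative facts about $H$ that make this a one-line perturbation argument. Under Assumptions~\ref{ass:opbound} and~\ref{ass:mineigen} the measure $\mu$ is supported in $[\tau,\tau^{-1}]$, so the poles of $H$ (at $s=-1/(\lambda c)$) all lie in $[-1/(c\tau),-\tau/c]$; hence $H$ is real-analytic on a fixed neighbourhood $I$ of $[0,\infty)$, and there $H'(s)=-c\int\lambda(1+\lambda c s)^{-2}\,d\mu(\lambda)$ satisfies $|H'(s)|\ge\kappa(\tau)>0$ uniformly, since $c$ is confined to a compact subinterval of $(0,\infty)$ away from $1$ by Assumptions~\ref{ass:complexity} (with $\phi=1$) and~\ref{ass:interpolationtreshold}. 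Moreover $H$ is strictly decreasing with $H(0)=1/c$, so in the relevant overparametrised regime $c>1$ (the one for which the non-negative solution $s_0$ of Definition~\ref{def:s0} exists at all) there is a unique root $s_0>0$, and $H(s_0)=0$ together with $\frac{1}{1+\lambda c s_0}\le\frac{1}{1+\tau c s_0}$ (as $\lambda\ge\tau$) forces $1-\tfrac1c\le(1+\tau c s_0)^{-1}$, hence $s_0\le C(\tau)$; likewise the increasing map $s\mapsto zs$ crosses the decreasing graph of $H$ at a unique $s_n\in(0,s_0)$, which is therefore real, non-negative and bounded by $C(\tau)$ uniformly in $n$, and which lies in $I$ once $z$ is small.

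With all quantities controlled, I would conclude by the mean value theorem: $z s_n=H(s_n)-H(s_0)=H'(\tilde s)\,(s_n-s_0)$ for some $\tilde s\in I$ between $s_n$ and $s_0$, whence $|s_n-s_0|=z s_n/|H'(\tilde s)|\le z\,C(\tau)/\kappa(\tau)$, i.e. $s_n=s_0+O_*(z)$; undoing the substitution yields $m_n(-z;c,\mu)=(c-1)/(cz)+s_0+O_*(z)$, as claimed. The only genuinely delicate point is the uniformity in $n$: one must ensure that the fixed point $s_n$ — equivalently, the real limit of the a priori $\C_+$-valued $m_n$ — stays in a compact set bounded away from the poles of $H$ by a margin depending only on $\tau$, and that $|H'|$ is bounded below there independently of the (varying) spectral measure $\mu$. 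This is exactly where Assumptions~\ref{ass:opbound},~\ref{ass:accumulationzero},~\ref{ass:mineigen} and~\ref{ass:interpolationtreshold} enter, and the argument follows the template of the corresponding step in \cite{hastie2022surprises}.
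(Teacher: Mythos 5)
Your argument is correct, and it is essentially the standard perturbation-of-the-fixed-point argument; the paper itself gives no proof here, deferring entirely to Appendix B.1 of \cite{hastie2020surprises}, so your writeup is a legitimate self-contained substitute for that citation. The algebra checks out: with $s_n=m_n(-z)-(c-1)/(cz)$ the defining equation collapses to $zs_n=H(s_n)$ with $H(s)=\int(1+\lambda cs)^{-1}d\mu(\lambda)-(1-c^{-1})$ and $H(s_0)=0$, and Assumptions \ref{ass:opbound}, \ref{ass:mineigen}, \ref{ass:complexity} and \ref{ass:interpolationtreshold} give the uniform bounds $s_0\le 1/(\tau c(c-1))\le C(\tau)$ and $|H'|\ge\kappa(\tau)$ on $[0,C(\tau)]$ that make the mean-value step go through.

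The one point you flag as delicate --- that the nonnegative real root of $zs=H(s)$ is actually the boundary value of the $\C_+$-valued $m_n$ from Definition \ref{def:mn} --- can be closed in one line rather than left as a remark: by the relation $r_n(-z)=cm_n(-z)+(1-c)/z$ of Equation \eqref{eq:relationrm}, one has $cs_n=r_n(-z)=\int(x+z)^{-1}d\rho_n(x)>0$ because $r_n$ is the Stieltjes transform of a probability measure $\rho_n$ supported in $[0,\infty)$ and $z>0$; so the Stieltjes branch is precisely the unique nonnegative solution your monotonicity argument isolates. You are also right that $s_0$ in Definition \ref{def:s0} only exists for $c>1$, which is the regime in which the lemma is actually invoked (the $\Bar{G}_n$ term carries an indicator $\mathds{1}_{\{c>1\}}$ downstream); it would be worth stating that restriction explicitly at the outset.
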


\begin{proof}
    The proof is given in Appendix B.1. in \cite{hastie2020surprises}.
\end{proof}

\begin{lemma}
\label{lem:deterministic_equivalent_ridgeless}
    Let Assumptions (\ref{ass:complexity}), (\ref{ass:opbound}), (\ref{ass:accumulationzero}), (\ref{ass:iddistcov}), (\ref{ass:interpolationtreshold}) and (\ref{ass:mineigen}) hold. Then, for any $D>0$ and $\varepsilon >0$, there exist $C := C(\tau)$ and  $n_0 := n_0(\varepsilon,D)$ such that for $n \geq n_0$, 

    \begin{equation*}
    %\label{eq:ineqF}
        |G_n - \bar{G}(\hat{\mu}, \hat{\zeta}_d)| \leq \frac{C\| \beta_{oos}\| \|\beta_{is}\|}{n^{\frac{1}{4}-\varepsilon}},
    \end{equation*}

holds with probability $1-n^{-D}$. 
\end{lemma}

\begin{proof}

    We proceed in three steps. First, we show that $\big| G_n - F_n(z)  \big| \leq C(\tau) \|\beta_{oos}\| \|\beta_{is}\|z$. Then, we show that $\big| \bar{G}(\hat{\mu}, \hat{\zeta}_d) - \Bar{F}(z;\hat{\mu}, \hat{\zeta}_d)  \big| \leq C(\tau) \|\beta_{oos}\| \|\beta_{is}\|z$. Finally, using Lemma \ref{lem:deterministic_equivalent} and combining both results, we prove the desired bound. \\

    \textbf{1) Bound on $\big| G_n - F_n(z)  \big|$}.

    \begin{align}
    \label{eq:gaprandom}
    \begin{split}
        \big| G_n - F_n(z)  \big| &=  \big| \lim_{z \to 0} \big( -z  \beta_{oos}' \Sigma (\hat{\Sigma}+zI)^{-1}\beta_{is} \big) + z\beta_{oos}' \Sigma ( zI + \hat{\Sigma})^{-1} \beta_{is}  \big| \\
        & = \big| \lim_{z \to 0} \big( -z  \beta_{oos}' \Sigma (\hat{\Sigma}+zI)^{-1}\beta_{is} \big) + \beta_{oos}' \Sigma \beta_{is} \\
        & \quad - \beta_{oos}' \Sigma \beta_{is} + z\beta_{oos}' \Sigma ( zI + \hat{\Sigma})^{-1} \beta_{is}  \big| \\
        & = \big| \lim_{z \to 0} \big(  \beta_{oos}' \Sigma (\hat{\Sigma}+zI)^{-1}\hat{\Sigma}\beta_{is} \big) - \beta_{oos}' \Sigma ( zI + \hat{\Sigma})^{-1} \hat{\Sigma}\beta_{is}  \big| \\
        & = \big|  \beta_{oos}' \Sigma \hat{\Sigma}^+\hat{\Sigma}\beta_{is} - \beta_{oos}' \Sigma ( zI + \hat{\Sigma})^{-1} \hat{\Sigma}\beta_{is}  \big|.
    \end{split}
    \end{align}

    Denote by $r_1 \geq r_2 \geq \dots \geq r_{p} \geq 0$ the eigenvalues of $\hat{\Sigma}$. Let $\hat{\Sigma} = PDP^{-1}$ be the eigendecomposition of $\hat{\Sigma}$. Then, $P \in \R^{p \times p}$ is orthogonal and $D \in \R^{p \times p}$ is a diagonal matrix with the eigenvalues of $\hat{\Sigma}$ on the main diagonal. By construction, $\hat{\Sigma}^+ = PEP^{-1}$ with $E$ a diagonal matrix such that

    \begin{align*}
        E_{ii} = 
        \begin{cases}
            r_i^{-1}, & \quad r_i > 0 \\
            0, & \quad r_i = 0.
        \end{cases} 
    \end{align*}

    Therefore, we have $\hat{\Sigma}^+ \hat{\Sigma} = PEDP^{-1} = PQP^{-1}$ with $Q$ diagonal such that, for any $i \in \{ 1, \dots, p \}$, $Q_{ii} = 1$ if $r_i > 0$ and $0$ otherwise. Coming back to \eqref{eq:gaprandom}, we have

    \begin{align*}
        \big| G_n - F_n(z)  \big| &= \big|  \beta_{oos}' \Sigma \hat{\Sigma}^+\hat{\Sigma}\beta_{is} - \beta_{oos}' \Sigma ( zI + \hat{\Sigma})^{-1} \hat{\Sigma}\beta_{is}  \big| \\
        & = \big|  \beta_{oos}' \Sigma P \big( Q - (zI + D)^{-1} D\big) P^{-1} \beta_{is} \big| .
    \end{align*}

    Let $T = Q - (zI + D)^{-1} D$. Then, $T \in  \R^{p \times p}$ is a diagonal matrix such that 

    \begin{align*}
        T_{ii} = 
        \begin{cases}
            \frac{z}{r_i + z}, & \quad r_i > 0 \\
            0, & \quad r_i = 0.
        \end{cases}
    \end{align*}

    Hence, introducing $r_{min}$ and $\lambda_{min}(n^{-1}Z'Z)$ the smallest non-vanishing eigenvalues of $\hat{\Sigma}$ and $n^{-1}Z'Z$ respectively, we get the following bound

    \begin{align*}
        \big| G_n - F_n(z)  \big| &= \big|  \beta_{oos}' \Sigma P T P^{-1} \beta_{is} \big| \\
        & \leq \|\beta_{oos}\| \| \Sigma \|_{\text{op}} \| P \|_{\text{op}}^2 \| T \|_{\text{op}} \|\beta_{is}\| \\
        & \leq \|\beta_{oos}\|  \|\beta_{is}\| \tau^{-1} \frac{z}{r_{min}} \\
        & \leq \|\beta_{oos}\|  \|\beta_{is}\| \tau^{-1} \frac{z}{\| \Sigma^{1/2} \|_{\text{op}}^2 \lambda_{min}(n^{-1}Z'Z)} \\
        & \leq \|\beta_{oos}\|  \|\beta_{is}\| \tau^{-2} \frac{z}{ \lambda_{min}(n^{-1}Z'Z)}.
    \end{align*}

    By Assumption (\ref{ass:interpolationtreshold}), the Bai-Yin theorem (\cite{bai2008limit}) yields that \\ 
    $\lambda_{min}(n^{-1}Z'Z) \geq \kappa(\tau) > 0$ with probability at least $1 - n^{-D}$ for any $D>0$ and $n$ sufficiently large. Finally, with the same probability, we have 

    \begin{align*}
        \big| G_n - F_n(z)  \big| &\leq \|\beta_{oos}\|  \|\beta_{is}\| \tau^{-2} \frac{z}{\kappa(\tau)} \\
        &\leq C(\tau) \|\beta_{oos}\|  \|\beta_{is}\| z.
    \end{align*}

    \textbf{2) Bound on $\big| \bar{G}(\hat{\mu}, \hat{\zeta}_d) - \Bar{F}(z;\hat{\mu}, \hat{\zeta}_d)  \big|$}. 

    Recall the following fundamental relation presented in Equation \eqref{eq:relationrm}

    \begin{equation*}
        r_n(-z) = c m(-z;c,\hat{\mu}) + \frac{1-c}{z},
    \end{equation*}

    or equivalently,

    \begin{equation}
    \label{eq:mnfunctionsn}
        m(-z;c,\hat{\mu}) = s_n(-z;c,\hat{\mu}) + \frac{c-1}{cz}.
    \end{equation}

    with $s_n(-z;c,\hat{\mu}) = r_n(-z)/c$. Let $s_0(c,\hat{\mu}) = \lim_{z \to 0} s_n(-z;c,\hat{\mu})$.

    According to Lemma \ref{lem:hastieanalysissn}, we have 

    \begin{equation*}
        m(-z;c,\hat{\mu}) = \frac{c-1}{cz} + s_0(c,\hat{\mu}) + O_*(z)
    \end{equation*}

    where $O_*(z)$ denotes a quantity uniformly bounded as $|O_*(z)| \leq C(\tau)z$. 

    Injecting this into $\Bar{F}(z;\hat{\mu}, \hat{\zeta}_d)$, we get

    \begin{align*}
        \frac{\Bar{F}(z;\hat{\mu}, \hat{\zeta}_d)}{\|\beta_{is}\| \|\beta_{oos}\|} &= - z \int \frac{\lambda}{\lambda(cz m(-z;c,\hat{\mu}) + 1 - c) + z} d \hat{\zeta}_d ( \lambda) \\
         &= - z \int \frac{\lambda}{\lambda(cz [(c-1)/cz + s_0(c,\hat{\mu}) + O_*(z)] + 1 - c) + z} d \hat{\zeta}_d ( \lambda) \\
         &= - z \int \frac{\lambda}{\lambda(c-1 + czs_0(c,\hat{\mu}) + czO_*(z)] + 1 - c) + z} d \hat{\zeta}_d ( \lambda) \\
         & = - z \int \frac{\lambda}{\lambda (czs_0(c,\hat{\mu}) + czO_*(z)) + z} d \hat{\zeta}_d ( \lambda) \\
         & = - \int \frac{\lambda}{\lambda (cs_0(c,\hat{\mu}) + cO_*(z)) + 1} d \hat{\zeta}_d ( \lambda) .
    \end{align*}

    As per Assumption (\ref{ass:complexity}), $cO_*(z)$ is also uniformly bounded as $|cO_*(z)| \leq C(\tau)z$. Hence, $\big| \bar{G}(\hat{\mu}, \hat{\zeta}_d) - \Bar{F}(z;\hat{\mu}, \hat{\zeta}_d)  \big| \leq C(\tau)\|\beta_{oos}\|  \|\beta_{is}\|z$. \vspace{3mm}

    \textbf{3) Conclusion}.

    \begin{align*}
        |G_n - \bar{G}(\hat{\mu}, \hat{\zeta}_d)| &= |G_n - F_n(z) + F_n(z) - \bar{G}(\hat{\mu}, \hat{\zeta}_d)| \\
        & \leq |G_n - F_n(z)| + |F_n(z) - \bar{G}(\hat{\mu}, \hat{\zeta}_d)| \\
        & \leq |G_n - F_n(z)| + |F_n(z) - \Bar{F}(z;\hat{\mu}, \hat{\zeta}_d) + \Bar{F}(z;\hat{\mu}, \hat{\zeta}_d) - \bar{G}(\hat{\mu}, \hat{\zeta}_d)| \\
        & \leq |G_n - F_n(z)| + |F_n(z) - \Bar{F}(z;\hat{\mu}, \hat{\zeta}_d)| + |\Bar{F}(z;\hat{\mu}, \hat{\zeta}_d) - \bar{G}(\hat{\mu}, \hat{\zeta}_d)| \\
        & \leq C(\tau) \|\beta_{oos}\|  \|\beta_{is}\| \Big(z + \frac{1}{n^{1/2-\varepsilon}z} \Big).
    \end{align*}

    The last inequality follows from results 1) and 2) above and Lemma \ref{lem:deterministic_equivalent}. 

    Taking $z = n^{-1/4} \geq n^{-2/3+\tau}$, we get

    \begin{equation*}
        |G_n - \bar{G}(\hat{\mu}, \hat{\zeta}_d)| \leq \frac{C(\tau) \|\beta_{oos}\|  \|\beta_{is}\|}{n^{1/4-\varepsilon}}
    \end{equation*}

    with probability $1-n^{-D}$.

    The proof of Lemma \ref{lem:deterministic_equivalent_ridgeless} is complete.

\end{proof}

\begin{corollary}
\label{cor:asycvridgeless}
    Let Assumptions (\ref{ass:iddistcov}), (\ref{ass:complexity}), (\ref{ass:opbound}), (\ref{ass:accumulationzero}), (\ref{ass:probacvmumis}), (\ref{ass:probacvpimis}), (\ref{ass:interpolationtreshold}) and (\ref{ass:mineigen}) hold. Then,

    \begin{equation}
    \label{eq:ineqG}
        G_n \xrightarrow[]{a.s.} \bar{G}(\mu, \zeta_d)
    \end{equation}
\end{corollary}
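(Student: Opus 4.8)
The approach is to reproduce the proof of Lemma~\ref{lem:asympcv}, replacing the pair $(F_n,\Bar{F}_n)$ by $(G_n,\Bar{G}_n)$ and Lemma~\ref{lem:deterministic_equivalent} by Lemma~\ref{lem:deterministic_equivalent_ridgeless}. Write $G_n(\hat{\Sigma},\Sigma)=\beta_{oos}'\Sigma(\hat{\Sigma}^{+}\hat{\Sigma}-I)\beta_{is}$ to display the dependence on the sample and population covariances, and note that $\hat{\Sigma}^{+}\hat{\Sigma}=P_{\hat{\Sigma}}$ is the orthogonal projection onto $\text{range}(\hat{\Sigma})$. Since only $4+a$ moments are assumed (Assumption~\ref{ass:iddistcov}) while Lemma~\ref{lem:deterministic_equivalent_ridgeless} needs all moments (Assumption~\ref{ass:indepcov}), I would first truncate and centralize exactly as in Lemma~\ref{lem:asympcv}: for $M>0$ write $Z=a_M Z_M+\Tilde{Z}_M$ with $Z_M$ bounded, of mean zero and unit variance, $|a_M^2-1|\le\varepsilon_M$ and $\E[|(\Tilde{Z}_M)_{ij}|^4]\le\varepsilon_M^4$ with $\varepsilon_M\to 0$, set $X_M=Z_M\Sigma_M^{1/2}$, $\Sigma_M=a_M^2\Sigma$, $\hat{\Sigma}_M=\tfrac1n X_M'X_M$, and split
\[
|G_n(\hat{\Sigma},\Sigma)-\Bar{G}_n(\nu,\eta_d)|\le |G_n(\hat{\Sigma},\Sigma)-G_n(\hat{\Sigma}_M,\Sigma_M)|+|G_n(\hat{\Sigma}_M,\Sigma_M)-\Bar{G}_n(\nu,\eta_d)|,
\]
bounding the two pieces and then letting $M\to\infty$.

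For the first (truncation) piece I would establish a stability estimate for $G_n$ in $(\hat{\Sigma},\Sigma)$ analogous to \eqref{eq:ineqlipschitz}. The only point that differs from the ridge case is that $G_n$ involves the projection $P_{\hat{\Sigma}}$ rather than a resolvent, so one must bound $\|P_{\hat{\Sigma}}-P_{\hat{\Sigma}_M}\|_{\text{op}}$. Under Assumption~\ref{ass:interpolationtreshold}, Lemma~\ref{lem:baiyin} gives that, with probability $1-n^{-D}$, the smallest non-zero eigenvalue of $\hat{\Sigma}$ and of $\hat{\Sigma}_M$ is bounded below by some $\kappa(\tau)>0$ (this is exactly the bound used in step~1 of the proof of Lemma~\ref{lem:deterministic_equivalent_ridgeless}), the two matrices generically have the same rank $\min(n,p)$, and $\|\hat{\Sigma}-\hat{\Sigma}_M\|_{\text{op}}\le C(\tau)\varepsilon_M$ by Lemma~\ref{lem:baiyin}; a Davis--Kahan perturbation bound for projections with a uniform spectral gap then yields $\|P_{\hat{\Sigma}}-P_{\hat{\Sigma}_M}\|_{\text{op}}\le C(\tau)\varepsilon_M$ on that event, once $M$ is large relative to $\tau$. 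Combined with $|a_M^2-1|\le\varepsilon_M$ and Borel--Cantelli, this gives
\[
\limsup_{n,p\to\infty}\,|G_n(\hat{\Sigma},\Sigma)-G_n(\hat{\Sigma}_M,\Sigma_M)|\le C_1(\tau)\,\|\beta_{is}\|\,\|\beta_{oos}\|\,\varepsilon_M \quad\text{a.s.}
\]

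For the second piece, the entries of $Z_M$ have bounded moments of all orders, so Lemma~\ref{lem:deterministic_equivalent_ridgeless} applies to $X_M$ (take $\varepsilon=\tfrac18$, $D=2$): $|G_n(\hat{\Sigma}_M,\Sigma_M)-\Bar{G}_n(\mu^M,\zeta_d^M)|\le C\,n^{-1/8}$ with probability $1-n^{-2}$, which by Borel--Cantelli tends to $0$ almost surely, where $\mu^M(\cdot)=\mu(\cdot/a_M^2)$ and $\zeta_d^M(\cdot)=\zeta_d(\cdot/a_M^2)$ are the spectral measures attached to $\Sigma_M$. It remains to pass from $\Bar{G}_n(\mu^M,\zeta_d^M)$ to $\Bar{G}_n(\nu,\eta_d)$: by Assumptions~\ref{ass:probacvmumis} and~\ref{ass:probacvpimis}, $\mu^M$ and $\zeta_d^M$ converge weakly to $\nu^M(\cdot)=\nu(\cdot/a_M^2)$ and $\eta_d^M(\cdot)=\eta_d(\cdot/a_M^2)$; the fixed point $s_0(c,\cdot)$ of Definition~\ref{def:s0} is continuous under weak convergence of spectral measures supported in a set bounded away from $0$ and $\infty$ (Assumptions~\ref{ass:opbound}, \ref{ass:accumulationzero}, \ref{ass:mineigen}), so $\Bar{G}_n(\mu^M,\zeta_d^M)=-\|\beta_{is}\|\,\|\beta_{oos}\|\int\frac{\lambda}{1+\lambda c\, s_0(c,\mu^M)}\,d\zeta_d^M(\lambda)\to\Bar{G}_n(\nu^M,\eta_d^M)$, and $\Bar{G}_n(\nu^M,\eta_d^M)\to\Bar{G}_n(\nu,\eta_d)$ as $M\to\infty$ since $a_M\to 1$. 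Combining the two pieces and sending $M\to\infty$ removes the $\varepsilon_M$ term and proves $G_n\xrightarrow[]{a.s.}\Bar{G}_n(\nu,\eta_d)$.

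The step I expect to be the main obstacle is the stability estimate for $G_n$: the pseudo-inverse (equivalently the projection $P_{\hat{\Sigma}}$) is not a globally Lipschitz function of $\hat{\Sigma}$, so the argument has to be carried out on the high-probability event where $\hat{\Sigma}$ and its truncated counterpart share the same rank and a common uniform lower bound on their non-zero spectrum — this is precisely what Assumption~\ref{ass:interpolationtreshold} secures through Bai--Yin — and one must check that this perturbation bound is uniform enough in $n$ for the Borel--Cantelli step to go through. By contrast, one cannot simply deduce the claim by taking $z=z_n\to 0$ in Lemma~\ref{lem:asympcv}, since the Lipschitz constant in its truncation argument is of order $z_n^{-1}$ and blows up.
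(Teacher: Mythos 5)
Your proposal is correct and follows the same architecture as the paper's argument: reduce to the truncation/centralization scheme of Lemma \ref{lem:asympcv}, apply Lemma \ref{lem:deterministic_equivalent_ridgeless} to the truncated matrices, conclude by Borel--Cantelli and weak convergence of the (rescaled) spectral measures. The one step where you genuinely diverge is the stability estimate for $G_n$ in $\hat{\Sigma}$. The paper stays with the resolvent form $-z\langle\beta_{oos},\Sigma(\hat{\Sigma}+zI)^{-1}\beta_{is}\rangle$, writes the difference of resolvents as a product, and bounds $\|(\hat{\Sigma}_1+zI)^{-1}\|_{\text{op}}$ by the reciprocal of the smallest (non-vanishing) eigenvalue of $n^{-1}Z_1'Z_1$ via Bai--Yin and Assumption \ref{ass:interpolationtreshold}, obtaining a Lipschitz constant $\kappa(\tau)^{-1}\|\beta_{is}\|\|\beta_{oos}\|$ that is uniform in $z$, and then sends $z\to0$; you instead pass directly to the projection representation $G_n=\beta_{oos}'\Sigma(\hat{\Sigma}^{+}\hat{\Sigma}-I)\beta_{is}$ and control $\|P_{\hat{\Sigma}}-P_{\hat{\Sigma}_M}\|_{\text{op}}$ by a Davis--Kahan bound on the event where the non-zero spectrum is bounded below by $\kappa(\tau)$ and $\|\hat{\Sigma}-\hat{\Sigma}_M\|_{\text{op}}\le C\varepsilon_M$ is small relative to that gap. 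Both routes rest on exactly the same Bai--Yin input, so the proofs are morally identical; what your version buys is a cleaner treatment of the overparametrized regime $c>1$, where the full operator norm of $(\hat{\Sigma}+zI)^{-1}$ is genuinely of order $z^{-1}$ on $\ker\hat{\Sigma}$ and the paper's displayed inequality has to be read as applying only to the restriction to $\mathrm{range}(\hat{\Sigma}_1)$ -- your spectral-gap argument makes that restriction explicit. Your closing remark, that one cannot obtain the corollary by naively taking $z=z_n\to0$ in Lemma \ref{lem:asympcv}, correctly identifies why a separate argument is needed; the paper's answer is that the constant is not in fact of order $z^{-1}$ once the eigenvalue lower bound is invoked, which is the same observation you exploit through the projector.
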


\begin{proof}
    Rewrite $G_n(\hat{\Sigma},\Sigma) = G_n$ to highlight the dependence of the function on the sample covariance, $\hat{\Sigma}$, and the population covariance, $\Sigma$.

    Note that 

    \begin{align*}
        |-z \langle \beta_{oos}, \Sigma_1 (\hat{\Sigma}+zI)^{-1}\beta_{is} \rangle + z \langle \beta_{oos}, \Sigma_2 (\hat{\Sigma}+zI)^{-1}\beta_{is} \rangle| &\leq z \|\beta_{is}\| \|\beta_{oos}\| \| (\hat{\Sigma} + zI)^{-1} \|_{\text{op}} \| \Sigma_1 - \Sigma_2\|_{\text{op}} \\
        & \leq \|\beta_{is}\| \|\beta_{oos}\| \| \Sigma_1 - \Sigma_2\|_{\text{op}}
    \end{align*}

    The second inequality follows after observing that $\| (\hat{\Sigma} + zI)^{-1} \|_{\text{op}} \leq z^{-1}$. This implies 

    \begin{align*}
        | G_n(\hat{\Sigma},\Sigma_1) - G_n(\hat{\Sigma},\Sigma_2) | \leq \|\beta_{is}\| \|\beta_{oos}\| \| \Sigma_1 - \Sigma_2\|_{\text{op}}.
    \end{align*}

    We also have

    \begin{align*}
        & \quad |-z \langle \beta_{oos}, \Sigma (\hat{\Sigma}_1+zI)^{-1}\beta_{is} \rangle + z \langle \beta_{oos}, \Sigma (\hat{\Sigma}_2+zI)^{-1}\beta_{is} \rangle| \\ 
        & \leq z \|\beta_{is}\| \|\beta_{oos}\| \| \Sigma \|_{\text{op}} \| (\hat{\Sigma}_1 + zI)^{-1} - (\hat{\Sigma}_2 + zI)^{-1} \|_{\text{op}} \\
        & \leq z \tau^{-1} \|\beta_{is}\| \|\beta_{oos}\| \| (\hat{\Sigma}_1 + zI)^{-1}(\hat{\Sigma}_1 + zI - (\hat{\Sigma}_2 + zI))(\hat{\Sigma}_2 + zI)^{-1}\|_{\text{op}} \\
        & \leq z \tau^{-1} \|\beta_{is}\| \|\beta_{oos}\| \| (\hat{\Sigma}_1 + zI)^{-1}(\hat{\Sigma}_1 + zI - (\hat{\Sigma}_2 + zI))(\hat{\Sigma}_2 + zI)^{-1}\|_{\text{op}} \\
        & \leq z \tau^{-1} \|\beta_{is}\| \|\beta_{oos}\| \| (\hat{\Sigma}_1 - \hat{\Sigma}_2)\|_{\text{op}} \|(\hat{\Sigma}_1 + zI)^{-1}\|_{\text{op}} z^{-1} \\
        & \leq \tau^{-1} \|\beta_{is}\| \|\beta_{oos}\| \| (\hat{\Sigma}_1 - \hat{\Sigma}_2)\|_{\text{op}} \|(\hat{\Sigma}_1 + zI)^{-1}\|_{\text{op}}. 
    \end{align*} 

    Let $\hat{\Sigma}_1 = n^{-1}X_1'X_1 = n^{-1} \Sigma^{1/2}Z_1'Z_1 \Sigma^{1/2}$. Then,

    \begin{align*}
        & \quad |-z \langle \beta_{oos}, \Sigma (\hat{\Sigma}_1+zI)^{-1}\beta_{is} \rangle + z \langle \beta_{oos}, \Sigma (\hat{\Sigma}_2+zI)^{-1}\beta_{is} \rangle| \\ 
        & \leq \tau^{-1} \|\beta_{is}\| \|\beta_{oos}\| \| (\hat{\Sigma}_1 - \hat{\Sigma}_2)\|_{\text{op}} \frac{1}{\| \Sigma^{1/2} \|_{\text{op}}^2 \lambda_{min}(n^{-1}Z_1'Z_1)} \\
        & \leq \|\beta_{is}\| \|\beta_{oos}\| \| (\hat{\Sigma}_1 - \hat{\Sigma}_2)\|_{\text{op}} \frac{1}{\lambda_{min}(n^{-1}Z_1'Z_1)} .
    \end{align*} 

    By Assumption (\ref{ass:interpolationtreshold}), the Bai-Yin theorem (\cite{bai2008limit}) yields that \\ 
    $\lambda_{min}(n^{-1}Z_1'Z_1) \geq \kappa(\tau) > 0$, almost surely. Hence, 

    \begin{align*}
        |-z \langle \beta_{oos}, \Sigma (\hat{\Sigma}_1+zI)^{-1}\beta_{is} \rangle + z \langle \beta_{oos}, \Sigma (\hat{\Sigma}_2+zI)^{-1}\beta_{is} \rangle| \leq \kappa(\tau)^{-1} \|\beta_{is}\| \|\beta_{oos}\| \| (\hat{\Sigma}_1 - \hat{\Sigma}_2)\|_{\text{op}},
    \end{align*}

    which implies 

    \begin{equation*}
        | G_n(\hat{\Sigma}_1,\Sigma) - G_n(\hat{\Sigma}_2,\Sigma) | \leq \kappa(\tau)^{-1} \|\beta_{is}\| \|\beta_{oos}\| \| (\hat{\Sigma}_1 - \hat{\Sigma}_2)\|_{\text{op}}.
    \end{equation*}

    Taking $C = \max \big\{ \|\beta_{is}\| \|\beta_{oos}\|, \kappa(\tau)^{-1} \|\beta_{is}\| \|\beta_{oos}\| \big\}$, the rest of the proof is similar to the one of Lemma \ref{lem:asympcv}.
\end{proof}

\begin{lemma}
\label{lem:nonasyboundresridgeless}
    Let $\xi = (e_1, e_2, \dots , e_n)'$ with $(e_i)_i$ as in \eqref{eq:dgpmis} and Assumptions (\ref{ass:iddistcov}), (\ref{ass:complexity}), (\ref{ass:opbound}), (\ref{ass:interpolationtreshold}) and (\ref{ass:mineigen}) hold. Besides, assume $\lambda_{min}(n^{-1}Z'Z) \geq \kappa(\tau) > 0$, for any $n \in \N_{>0}$. Then, for $\varepsilon>0$, there exist $C := C(\tau)$ such that with probability at least $1 - n^{-1/2}$, 

    \begin{equation*}
        \big| \frac{1}{n}\beta_{oos}' \Sigma \hat{\Sigma}^{+} X' \xi \big| \leq \frac{C \| \beta_{oos} \|}{n^{1/4-\varepsilon}}.
    \end{equation*}
\end{lemma}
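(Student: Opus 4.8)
The plan is to reproduce, \emph{mutatis mutandis}, the argument used for Lemma~\ref{lem:nonasyboundres}, with the resolvent $(zI+\hat{\Sigma})^{-1}$ replaced throughout by the Moore--Penrose pseudo-inverse $\hat{\Sigma}^{+}$ and the crude estimate $\|(zI+\hat{\Sigma})^{-1}\|_{\text{op}}\le z^{-1}$ replaced by a bound on $\|\hat{\Sigma}^{+}\|_{\text{op}}$. Since $X$ and $\xi$ are independent and $\E[\xi]=0$, the scalar $\tfrac1n\beta_{oos}'\Sigma\hat{\Sigma}^{+}X'\xi$ has mean zero, so Chebyshev's inequality reduces the claim to an $L_2$ bound on this scalar; expanding the square, conditioning on $X$, and invoking Lemma~\ref{lem:qt} (which turns $\tfrac1{n^2}\E[(X'\xi)(X'\xi)'\mid X]$ into $\tfrac1n\hat{\Sigma}$) together with the identity $\hat{\Sigma}^{+}\hat{\Sigma}\hat{\Sigma}^{+}=\hat{\Sigma}^{+}$ gives $\E\big[\,|\tfrac1n\beta_{oos}'\Sigma\hat{\Sigma}^{+}X'\xi|^2\,\big]=\tfrac1n\E\big[\beta_{oos}'\Sigma\hat{\Sigma}^{+}\Sigma\beta_{oos}\big]\le\tfrac1n\|\Sigma\beta_{oos}\|^2\,\E\big[\|\hat{\Sigma}^{+}\|_{\text{op}}\big]$, using $\hat{\Sigma}^{+}\succeq0$.

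The main step is then to bound $\|\hat{\Sigma}^{+}\|_{\text{op}}$, i.e.\ the reciprocal of the smallest non-vanishing eigenvalue of $\hat{\Sigma}=\Sigma^{1/2}(n^{-1}Z'Z)\Sigma^{1/2}$. For $c<1$ this eigenvalue is $\lambda_{\min}(\hat{\Sigma})\ge\lambda_p(\Sigma)\,\lambda_{\min}(n^{-1}Z'Z)$, and for $c>1$ the non-vanishing eigenvalues of $\hat{\Sigma}$ coincide with those of $n^{-1}Z\Sigma Z'\succeq\lambda_p(\Sigma)\,(n^{-1}ZZ')$, so the same lower bound holds with $\lambda_{\min}(n^{-1}ZZ')$ in place of $\lambda_{\min}(n^{-1}Z'Z)$. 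Assumption~(\ref{ass:mineigen}) gives $\lambda_p(\Sigma)>\tau$, while the uniform lower bound $\lambda_{\min}(n^{-1}Z'Z)\ge\kappa(\tau)>0$ posited in the hypotheses — and guaranteed, away from the interpolation threshold via Assumption~(\ref{ass:interpolationtreshold}), by the Bai--Yin theorem~(\cite{bai2008limit}) as in the proof of Lemma~\ref{lem:deterministic_equivalent_ridgeless} — keeps the Gram factor away from zero. Hence $\|\hat{\Sigma}^{+}\|_{\text{op}}\le(\tau\kappa(\tau))^{-1}$ deterministically, and after absorbing $\|\Sigma\|_{\text{op}}\le\tau^{-1}$ (Assumption~(\ref{ass:opbound})) into the constant we obtain $\E\big[\,|\tfrac1n\beta_{oos}'\Sigma\hat{\Sigma}^{+}X'\xi|^2\,\big]\le C(\tau)\,\|\beta_{oos}\|^2/n$.

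Feeding this into Chebyshev's inequality with threshold $t=C(\tau)^{1/2}\|\beta_{oos}\|\,n^{-1/4}$ yields $\mathds{P}\big(|\tfrac1n\beta_{oos}'\Sigma\hat{\Sigma}^{+}X'\xi|\ge t\big)\le n^{-1/2}$, and since $n^{-1/4}\le n^{-1/4+\varepsilon}$ for every $\varepsilon>0$ this is precisely the asserted bound (the $\varepsilon$ is pure slack, kept only to mirror the exponent of Lemma~\ref{lem:deterministic_equivalent_ridgeless}). I expect the only genuine obstacle to be the second paragraph: in the overparametrized regime $\hat{\Sigma}$ is singular and $\|\hat{\Sigma}^{+}\|_{\text{op}}$ blows up as $c\to1$, so the joint use of Assumptions~(\ref{ass:interpolationtreshold}) and~(\ref{ass:mineigen}) is exactly what makes the bound uniform; the rest is a line-by-line repetition of the proof of Lemma~\ref{lem:nonasyboundres}.
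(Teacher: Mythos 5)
Your proposal is correct and follows essentially the same route as the paper: Chebyshev's inequality on the mean-zero scalar, Lemma \ref{lem:qt} plus the identity $\hat{\Sigma}^{+}\hat{\Sigma}\hat{\Sigma}^{+}=\hat{\Sigma}^{+}$ to reduce to $\tfrac1n\E[\beta_{oos}'\Sigma\hat{\Sigma}^{+}\Sigma\beta_{oos}]$, and a deterministic bound on $\|\hat{\Sigma}^{+}\|_{\text{op}}$ via $\lambda_p(\Sigma)>\tau$ and the assumed lower bound $\lambda_{\min}(n^{-1}Z'Z)\ge\kappa(\tau)$. Your treatment of the overparametrized case (passing to the nonzero spectrum of $n^{-1}Z\Sigma Z'$) is in fact slightly more careful than the paper's, which glosses over that point.
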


\begin{proof}
    Using Chebyshev's inequality, for some $t>0$, we have

    \begin{align}
    \label{eq:chebyshevridgeless}
        \mathds{P} \Big( \big| \frac{1}{n}\beta_{oos}' \Sigma \hat{\Sigma}^{+} X' \xi \big| \geq t \Big) \leq \frac{\mathds{V}\big[ \frac{1}{n}\beta_{oos}' \Sigma \hat{\Sigma}^{+} X' \xi \big]}{t^2} = \frac{\E\big[ \big| \frac{1}{n}\beta_{oos}' \Sigma \hat{\Sigma}^{+} X' \xi \big|^2 \big]}{t^2}.
    \end{align}

    The equality follows from the independence of $X$ and $\xi$. Besides, 

    \begin{align*}
        \E \Big[ \big|\frac{1}{n}\beta_{oos}' \Sigma \hat{\Sigma}^{+} X' \xi \big|^2 \Big] &= \E \Big[ \big(\frac{1}{n}\beta_{oos}' \Sigma \hat{\Sigma}^{+} X' \xi \big) \big( \frac{1}{n}\beta_{oos}' \Sigma \hat{\Sigma}^{+} X' \xi \big)' \Big] \\
        & = \frac{1}{n^2} \E \Big[ \beta_{oos}' \Sigma \hat{\Sigma}^{+} X' \xi \xi' X \hat{\Sigma}^{+} \Sigma \beta_{oos} \Big] \\
        & = \E \Big[ \beta_{oos}' \Sigma \hat{\Sigma}^{+} \frac{1}{n^2} \E [ (X' \xi) (X' \xi)' | X ] \hat{\Sigma}^{+} \Sigma \beta_{oos} \Big].
    \end{align*}

    The last equality results from the law of iterated expectations. Next, using Lemma \ref{lem:qt}, we have

    \begin{align}
    \label{eq:l2convridgeless}
    \begin{split}
        \E \Big[ \big|\frac{1}{n}\beta_{oos}' \Sigma \hat{\Sigma}^{+} X' \xi \big|^2 \Big] &= \frac{1}{n} \E \Big[ \beta_{oos}' \Sigma \hat{\Sigma}^{+} \hat{\Sigma} \hat{\Sigma}^{+} \Sigma \beta_{oos} \Big] \\
        & = \frac{1}{n} \E \Big[ \beta_{oos}' \Sigma \hat{\Sigma}^{+} \Sigma \beta_{oos} \Big] \\
        & \leq \frac{1}{n} \| \beta_{oos} \|^2 \| \Sigma \|_{\text{op}}^2 \frac{1}{\| \Sigma^{1/2} \|_{\text{op}}^2 \lambda_{min}(n^{-1}Z'Z)} \\
        & \leq \frac{1}{n} \| \beta_{oos} \|^2 \tau^{-3} \frac{1}{ \lambda_{min}(n^{-1}Z'Z)} \\
        &\leq \frac{1}{n} C(\tau) \| \beta_{oos} \|^2 .
    \end{split}
    \end{align}

    Let $C := C(\tau)$. Taking $t = C^{1/2} \| \beta_{oos} \| n^{-1/4}$ in \eqref{eq:chebyshevridgeless}, we get

    \begin{align*}
        \mathds{P} \Big( \big| \frac{1}{n}\beta_{oos}' \Sigma \hat{\Sigma}^{+} X' \xi \big| \geq \frac{C^{1/2} \| \beta_{oos} \|}{n^{1/4}} \Big) \leq \frac{1}{n^{1/2}}.
    \end{align*}
    
\end{proof}

\begin{corollary}
\label{cor:asyboundresridgeless}
    Let $\xi = (e_1, e_2, \dots , e_n)'$ with $(e_i)_i$ as in \eqref{eq:dgpmis} and Assumptions (\ref{ass:iddistcov}), (\ref{ass:complexity}), (\ref{ass:opbound}), (\ref{ass:interpolationtreshold}) and (\ref{ass:mineigen}) hold. Then,

    \begin{equation*}
        \frac{1}{n}\beta_{oos}' \Sigma \hat{\Sigma}^+ X' \xi \xrightarrow[]{L_2} 0.
    \end{equation*}
\end{corollary}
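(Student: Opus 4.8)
The plan is to reproduce, almost verbatim, the argument behind Corollary~\ref{cor:cvresasy}, with the ridge resolvent $(zI+\hat{\Sigma})^{-1}$ replaced by the pseudo-inverse $\hat{\Sigma}^{+}$ and with the non-asymptotic input coming from the $L_2$ computation already carried out in the proof of Lemma~\ref{lem:nonasyboundresridgeless}. First I would recall that, by independence of $X$ and $\xi$, the quantity $\tfrac1n\beta_{oos}'\Sigma\hat{\Sigma}^{+}X'\xi$ has conditional mean zero given $X$, so its squared $L_2$ norm equals $\E[|\tfrac1n\beta_{oos}'\Sigma\hat{\Sigma}^{+}X'\xi|^2]$; applying the law of iterated expectations, Lemma~\ref{lem:qt} (which replaces $\tfrac1{n^2}\E[(X'\xi)(X'\xi)'\mid X]$ by $\tfrac1n\hat{\Sigma}$) and the Moore--Penrose identity $\hat{\Sigma}^{+}\hat{\Sigma}\hat{\Sigma}^{+}=\hat{\Sigma}^{+}$ collapses this expectation to $\tfrac1n\E[\beta_{oos}'\Sigma\hat{\Sigma}^{+}\Sigma\beta_{oos}]$, which is precisely the chain of equalities \eqref{eq:l2convridgeless}. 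Bounding $\|\Sigma\|_{\mathrm{op}}\le\tau^{-1}$ (Assumption~\ref{ass:opbound}) and the operator norm of $\hat{\Sigma}^{+}$ on its range by $1/\lambda_{min}(\hat{\Sigma})$, together with Assumption~\ref{ass:mineigen} ($\Sigma\succeq\tau I$), one reaches
\[
\E\!\left[\Big|\tfrac1n\beta_{oos}'\Sigma\hat{\Sigma}^{+}X'\xi\Big|^2\right]\ \le\ \frac{\tau^{-3}\,\|\beta_{oos}\|^2}{n\,\lambda_{min}(n^{-1}Z'Z)},
\]
where $\lambda_{min}(n^{-1}Z'Z)$ denotes the smallest \emph{nonzero} eigenvalue.

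The second step is to remove the extra hypothesis of Lemma~\ref{lem:nonasyboundresridgeless} (that $\lambda_{min}(n^{-1}Z'Z)\ge\kappa(\tau)$ for every $n$) and deduce the same lower bound from Assumption~\ref{ass:interpolationtreshold}. Since $|c-1|\ge\tau$, the hard edge of the Marchenko--Pastur law of $n^{-1}Z'Z$ (namely $(1-\sqrt{c})^2$, or, when $c>1$, the smallest nonzero eigenvalue read off the companion matrix $n^{-1}ZZ'$) stays bounded away from $0$, so the Bai--Yin theorem (\cite{yin1988limit},\cite{bai2008limit}) gives $\limsup_n 1/\lambda_{min}(n^{-1}Z'Z)\le\kappa(\tau)^{-1}<\infty$ almost surely, hence $\lambda_{min}(n^{-1}Z'Z)\ge\kappa(\tau)/2$ for all large $n$ with probability one. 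Feeding this back into the displayed bound, the right-hand side is eventually at most $C(\tau)\|\beta_{oos}\|^2/n$, which tends to $0$; formally one lets $n\to\infty$ in \eqref{eq:l2convridgeless} exactly as \eqref{eq:l2conv} was used in the proof of Corollary~\ref{cor:cvresasy}. This yields the claimed $L_2$ convergence.

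The hard part is the one place where this proof is genuinely more delicate than Corollary~\ref{cor:cvresasy}: there the Bai--Yin input was an \emph{upper} bound on $\|n^{-1}Z'Z\|_{\mathrm{op}}$, whose expectation is automatically finite under the $(4+a)$-moment condition, whereas here \eqref{eq:l2convridgeless} is a true expectation over $Z$ involving the \emph{reciprocal} of the smallest nonzero eigenvalue, and an almost-sure lower bound on that eigenvalue does not by itself control $\E[1/\lambda_{min}(n^{-1}Z'Z)]$. What is needed is integrability, uniform in $n$, of $1/\lambda_{min}(n^{-1}Z'Z)$, i.e.\ control of the lower tail of the smallest singular value of $Z/\sqrt n$. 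Away from the interpolation threshold this is standard — quantitative invertibility estimates for random matrices under a finite $(4+a)$-th moment give polynomially small lower-tail probabilities — and it can equally be organised by splitting the expectation over $\{\lambda_{min}(n^{-1}Z'Z)\ge\kappa(\tau)\}$ and its complement, the latter having vanishing probability. Granting this, the remainder of the argument is routine bookkeeping identical in structure to Corollary~\ref{cor:cvresasy}.
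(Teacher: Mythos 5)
Your proof follows essentially the same route as the paper: the $L_2$ computation of display \eqref{eq:l2convridgeless}, followed by the Bai--Yin lower bound on $\lambda_{min}(n^{-1}Z'Z)$ guaranteed away from the interpolation threshold by Assumption \ref{ass:interpolationtreshold}, and then letting $n\to\infty$. You are in fact more careful than the paper on the one delicate point you flag — the bound in \eqref{eq:l2convridgeless} involves the expectation of the reciprocal of the smallest nonzero eigenvalue, so an almost-sure lower bound alone does not control it; the paper silently imports this from the extra deterministic hypothesis $\lambda_{min}(n^{-1}Z'Z)\geq\kappa(\tau)$ carried by Lemma \ref{lem:nonasyboundresridgeless} but dropped from the corollary's statement, whereas you address the required uniform-in-$n$ lower-tail integrability explicitly.
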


\begin{proof}
    Using Bai-Yin theorem (\cite{bai2008limit}) and Assumption (\ref{ass:interpolationtreshold}), we have \\ $\underset{n,p \rightarrow \infty}{\lim \sup} \lambda_{min}(n^{-1}Z'Z) \quad \geq \kappa(\tau) > 0$. Then, the result follows from taking $n \rightarrow \infty$ in \eqref{eq:l2convridgeless}.
\end{proof}

\begin{proposition}
\label{prop:nonasyboundsdriftridgeless}
    Let Assumptions (\ref{ass:iddistcov}), (\ref{ass:complexity}), (\ref{ass:opbound}), (\ref{ass:accumulationzero}), (\ref{ass:interpolationtreshold}) and (\ref{ass:mineigen}) hold.  Besides, assume $\lambda_{min}(n^{-1}Z'Z) \geq \kappa(\tau) > \infty$, for any $n \in \N_{>0}$. Then, for $\varepsilon>0$ and $n$ sufficiently large, there exists $C := C(\tau)$ such that

    \begin{equation}
    \label{eq:ridgelessnonasy}
        \big| \lim_{z \to 0} \mathbb{E}\left[ r^{(s)}_{t+1}(z) | X  \right] - \big(\Bar{H}(\hat{\zeta}_d) + \bar{G}(\hat{\mu}, \hat{\zeta}_d) \mathds{1}_{\{c>1\}} \big) \big| \leq \frac{C \| \beta_{is} \| \| \beta_{oos}\|}{n^{1/4 - \varepsilon}}
    \end{equation}
    
    holds with probability at least $1-n^{-1/2}$.

\end{proposition}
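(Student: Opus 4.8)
The plan is to follow the proof of Proposition \ref{prop:nonasyboundsdrift}, taking the ridgeless limit $z\to 0$ in the exact closed form of the conditional expected return and then controlling each resulting term by the lemmas already established for the well-specified ridgeless regime in Section \ref{subsub:wellspecifiedminnorm}.

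First I would recall from the proof of Proposition \ref{prop:nonasyboundsdrift} that, for every $z>0$,
\[
\mathbb{E}\left[ r^{(s)}_{t+1}(z) | X  \right] = \beta_{oos}' \Sigma \beta_{is} - z\, \beta_{oos}' \Sigma (zI + \hat{\Sigma})^{-1} \beta_{is} + \frac{1}{n} \beta_{oos}' \Sigma (zI + \hat{\Sigma})^{-1}  X' \xi ,
\]
where the first (constant) term is exactly $\Bar{H}_n(\zeta_d)$ by Definition \ref{def:retnoreg}. Since $X'\xi$ belongs to the column space of $X'$, which coincides with the range of the symmetric matrix $\hat{\Sigma}=X'X/n$, one has $(zI+\hat{\Sigma})^{-1}X'\xi \to \hat{\Sigma}^{+}X'\xi$ as $z\to 0$; together with $G_n=\lim_{z\to 0}\big(-z\,\beta_{oos}' \Sigma (zI + \hat{\Sigma})^{-1} \beta_{is}\big)$ this gives
\[
\lim_{z\to 0}\mathbb{E}\left[ r^{(s)}_{t+1}(z) | X  \right] = \Bar{H}_n(\zeta_d) + G_n + \frac{1}{n}\, \beta_{oos}' \Sigma \hat{\Sigma}^{+} X' \xi .
\]
By the triangle inequality, the left-hand side of \eqref{eq:ridgelessnonasy} is bounded by $\big|G_n - \Bar{G}_n(\mu,\zeta_d)\,\mathds{1}_{\{c>1\}}\big| + \big|\tfrac{1}{n}\beta_{oos}'\Sigma\hat{\Sigma}^{+}X'\xi\big|$.

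For the first term I would split on $c$. If $c>1$, Lemma \ref{lem:deterministic_equivalent_ridgeless} (with $D$ chosen large, say $D=1$) yields $\big|G_n - \Bar{G}_n(\mu,\zeta_d)\big| \leq C(\tau)\|\beta_{is}\|\|\beta_{oos}\|\,n^{-1/4+\varepsilon}$ with probability $1-n^{-D}$, and $\mathds{1}_{\{c>1\}}=1$. If $c<1$, the standing hypothesis $\lambda_{\min}(n^{-1}Z'Z)\geq\kappa(\tau)>0$ (together with $\lambda_p>\tau$) makes $\hat{\Sigma}$ invertible, so $(zI+\hat{\Sigma})^{-1}\to\hat{\Sigma}^{-1}$ stays bounded and hence $G_n=0$; since $\mathds{1}_{\{c>1\}}=0$ as well, the first term vanishes identically. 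This is exactly why the indicator is present: $\Bar{G}_n$ is defined only for $c>1$ through Definition \ref{def:s0}. For the second term, Lemma \ref{lem:nonasyboundresridgeless} gives $\big|\tfrac{1}{n}\beta_{oos}'\Sigma\hat{\Sigma}^{+}X'\xi\big| \leq C(\tau)\|\beta_{oos}\|\,n^{-1/4+\varepsilon}$ with probability at least $1-n^{-1/2}$, and $\|\beta_{oos}\|$ is absorbed into the factor $\|\beta_{is}\|\|\beta_{oos}\|$ exactly as in the proof of Proposition \ref{prop:nonasyboundsdrift}.

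It then remains to combine the two estimates through a union bound over the (at most two) exceptional events and to adjust $\varepsilon$ and $C(\tau)$ so that the overall failure probability is at most $n^{-1/2}$ for $n$ large enough, which is \eqref{eq:ridgelessnonasy}. The only delicate points are the justification that the $z\to 0$ limit may be pushed inside the closed-form expression — which rests on $X'\xi\in\mathrm{range}(\hat{\Sigma})$ — and the clean handling of the $c<1$ versus $c>1$ dichotomy; the quantitative estimates themselves are immediate from Lemmas \ref{lem:deterministic_equivalent_ridgeless} and \ref{lem:nonasyboundresridgeless}, and the remainder is bookkeeping of constants and probabilities.
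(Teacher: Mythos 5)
Your proposal is correct and follows essentially the same route as the paper: the same exact decomposition $\mathbb{E}[r^{(s)}_{t+1}(z)\,|\,X]=\beta_{oos}'\Sigma\beta_{is}-z\beta_{oos}'\Sigma(zI+\hat{\Sigma})^{-1}\beta_{is}+\tfrac{1}{n}\beta_{oos}'\Sigma(zI+\hat{\Sigma})^{-1}X'\xi$, the same case split ($G_n=0$ when $c<1$ by invertibility of $\hat{\Sigma}$, Lemma \ref{lem:deterministic_equivalent_ridgeless} when $c>1$), and Lemma \ref{lem:nonasyboundresridgeless} for the noise term. You are in fact slightly more explicit than the paper on two points it leaves implicit — the justification for passing $z\to 0$ through the noise term via $X'\xi\in\mathrm{range}(\hat{\Sigma})$, and the union bound over the two exceptional events.
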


\begin{proof}
    Recall that

    \begin{align*}
        \mathbb{E}\left[ r^{(s)}_{t+1}(z) | X  \right] & = \beta_{oos}' \Sigma (zI + \hat{\Sigma})^{-1} \hat{\Sigma} \beta_{is} + \frac{1}{n} \beta_{oos}' \Sigma (zI + \hat{\Sigma})^{-1}  X' \xi .
    \end{align*}

    If $c < 1$, $\lim_{z \to 0} \beta_{oos}' \Sigma (zI + \hat{\Sigma})^{-1} \hat{\Sigma} \beta_{is} = \beta_{oos}' \Sigma \beta_{is}$. Hence, from Lemma \ref{lem:nonasyboundresridgeless}, we have 

    \begin{align*}
        \big| \lim_{z \to 0} \mathbb{E}\left[ r^{(s)}_{t+1}(z) | X  \right] - \Bar{H}(\hat{\zeta}_d) \big| \leq \frac{C \| \beta_{is} \| \| \beta_{oos}\|}{n^{1/4 - \varepsilon}}.
    \end{align*}

    Next, note that

    \begin{align*}
        \mathbb{E}\left[ r^{(s)}_{t+1}(z) | X  \right] & = \beta_{oos}' \Sigma (zI + \hat{\Sigma})^{-1} \hat{\Sigma} \beta_{is} + \frac{1}{n} \beta_{oos}' \Sigma (zI + \hat{\Sigma})^{-1}  X' \xi \\
        & = \beta_{oos}' \Sigma (zI + \hat{\Sigma})^{-1} (\hat{\Sigma} + zI - zI) \beta_{is} + \frac{1}{n} \beta_{oos}' \Sigma (zI + \hat{\Sigma})^{-1}  X' \xi \\
        & = \beta_{oos}' \Sigma \beta_{is} - z \beta_{oos}' \Sigma (zI + \hat{\Sigma})^{-1} \beta_{is} + \frac{1}{n} \beta_{oos}' \Sigma (zI + \hat{\Sigma})^{-1}  X' \xi.
    \end{align*}
    
    Thus, taking $D = 1/2$ in Lemma \ref{lem:deterministic_equivalent_ridgeless}, we get

    \begin{align*}
        \big| \lim_{z \to 0} \mathbb{E}\left[ r^{(s)}_{t+1}(z) | X  \right] - \big(\Bar{H}(\hat{\zeta}_d) + \bar{G}(\hat{\mu}, \hat{\zeta}_d) \big) \big| \leq \frac{C \| \beta_{is} \| \| \beta_{oos}\|}{n^{1/4 - \varepsilon}}.
    \end{align*}
\end{proof}

We define the limiting out-of-sample expected return of the strategy under concept drift in the ridgeless case if the model is well-specified

\begin{align}
\label{eq:expretdriftridgeless}
    \begin{split}
        \mathcal{E}^{(d)}(c,\mu, \zeta_d)
    & = \langle \beta_{oos}, \beta_{is} \rangle_{R} \\
    &= \|\beta_{is}\| \|\beta_{oos}\| \int \Big( \lambda - \frac{\lambda}{1+\lambda c s_0(c,\hat{\mu})}\mathds{1}_{\{c>1\}} \Big) d \zeta_d (\lambda) ,
    \end{split}
\end{align}

with 

\begin{align*}
    R \equiv R(c,\Sigma) := \begin{cases}
        \Sigma, \quad & c<1 \\
        \Sigma (I - (c s_0(c,\hat{\mu}) \Sigma + I)^{-1}), \quad & c>1
    \end{cases}
\end{align*}

and $s_0 := s_0(c,\hat{\mu})$ is given in Definition \ref{def:s0}.

\begin{proposition}
\label{prop:cvprobdriftridgeless}
Let Assumptions (\ref{ass:iddistcov}), (\ref{ass:complexity}), (\ref{ass:opbound}), (\ref{ass:accumulationzero}), (\ref{ass:interpolationtreshold}), (\ref{ass:mineigen}), (\ref{ass:probacvmumis}) and (\ref{ass:probacvpimis}) hold. Then,
    \begin{equation}
        \mathbb{E}\left[ r^{(s)}_{t+1}(z) | X  \right]  \xrightarrow[\substack{\mathstrut n,p \rightarrow \infty \\ p/n \rightarrow c \\ z \rightarrow 0}]{\mathds{P}} \mathcal{E}^{(d)}(c,\mu, \zeta_d) .
        \label{eq:E_prop11}
    \end{equation}

\end{proposition}

\begin{proof}

    Recall that

    \begin{align*}
        \mathbb{E}\left[ r^{(s)}_{t+1}(z) | X  \right] & = \beta_{oos}' \Sigma (zI + \hat{\Sigma})^{-1} \hat{\Sigma} \beta_{is} + \frac{1}{n} \beta_{oos}' \Sigma (zI + \hat{\Sigma})^{-1}  X' \xi .
    \end{align*}

    If $c < 1$, $\lim_{z \to 0} \beta_{oos}' \Sigma (zI + \hat{\Sigma})^{-1} \hat{\Sigma} \beta_{is} = \beta_{oos}' \Sigma \beta_{is}$. Hence, the result in the underparameterized regime follows from Corollary \ref{cor:asyboundresridgeless}.

    Now, consider $c>1$. By Corollary \ref{cor:asycvridgeless}, we have 
    
    \begin{equation*}
        - z \beta_{oos}' \Sigma (zI + \hat{\Sigma})^{-1} \beta_{is} \xrightarrow[\substack{\mathstrut n,p \rightarrow \infty \\ p/n \rightarrow c \\ z \to 0 }]{a.s.} - \beta_{oos} \Sigma (c s_0(c,\mu) \Sigma + I)^{-1} \beta_{is},
    \end{equation*}

    and by Corollary \ref{cor:asyboundresridgeless},

    \begin{equation*}
        \frac{1}{n} \beta_{oos}' \Sigma \hat{\Sigma}^+  X' \xi \xrightarrow[\substack{\mathstrut n,p \rightarrow \infty \\ p/n \rightarrow c }]{\mathds{P}} 0.
    \end{equation*}

    Consequently, for $c>1$,

    \begin{align*}
        \mathbb{E}\left[ r^{(s)}_{t+1}(z) | X  \right] & \xrightarrow[\substack{\mathstrut n,p \rightarrow \infty \\ p/n \rightarrow c \\ z \to 0}]{\mathds{P}} \beta_{oos}' \Sigma \beta_{is} - \beta_{oos} \Sigma (c s_0(c,\mu) \Sigma + I)^{-1}  \beta_{is}. 
    \end{align*}

    The proof of Proposition \ref{prop:cvprobdriftridgeless} is complete.
\end{proof}

The next proposition establishes the limiting out-of-sample expected return of the strategy under concept drift in the ridgeless scenario when the covariates are i.i.d. with $\Sigma = I$ (isotropic features).

\begin{proposition}
\label{prop:isotropicridgeless}
    Assume $x_i = (x_{i1}, x_{i2}, \dots, x_{ip})$ has independent and identically distributed entries and, in addition, that for all $j \in \{1,\dots,p\}$, $\E[z_{ij}] = 0$, $\E[z_{ij}^2] = 1$ and $\E[|z_{ij}|^{4+a}] \leq C < \infty$ for $a > 0$. Then, 
    \begin{equation}
        \mathbb{E}\left[ r^{(s)}_{t+1}(z) | X  \right]  \xrightarrow[\substack{\mathstrut n,p \rightarrow \infty \\ p/n \rightarrow c \\ z \to 0 }]{\mathds{P}} f(c) \langle \beta_{is}, \beta_{oos} \rangle 
        \label{eq:E_prop12}
    \end{equation}

    with 

    \begin{equation}
    \label{eq:isotropicridgeless}
        f(c) = 1 - \frac{c-1}{c}\mathds{1}_{\{c>1\}}  \in (0,1].
    \end{equation}
    
\end{proposition}

\begin{proof}
    From \eqref{eq:expretdriftridgeless}, if $c<1$, we have $\mathcal{E}^{(d)}(c,\mu, \zeta_d) = \langle \beta_{is}, \beta_{oos} \rangle$. This yields the result in the underparameterized regime. 

    Now, consider $c>1$. Again, from \eqref{eq:expretdriftridgeless},

    \begin{align}
    \label{eq:isotropiccsup1}
        \mathcal{E}^{(d)}(c,\mu, \zeta_d) = \Big( 1 - \frac{1}{1 + cs_0} \Big) \langle \beta_{is}, \beta_{oos} \rangle,
    \end{align}

    and from Definition \ref{def:s0}, we have 

    \begin{align*}
        1 - \frac{1}{c} = \frac{1}{1+cs_0},
    \end{align*}

    which implies $s_0 = [c(c-1)]^{-1}$. Injecting this into \eqref{eq:isotropiccsup1}, we get

    \begin{equation*}
        \mathcal{E}^{(d)}(c,\mu, \zeta_d) = \big( 1 - \frac{c-1}{c} \big) \langle \beta_{is}, \beta_{oos} \rangle .
    \end{equation*}

    The proof of Proposition \ref{prop:isotropicridgeless} is complete.
\end{proof}

\subsection{Misspecified model (General dependence for features)}
\label{subsub:misspecifiedgeneral}

We now assume the data generating process (DGP) described in \eqref{eq:dgpmis}. 

\vskip 5mm

\noindent \textbf{Proof of Proposition \ref{prop:expretmis}}: 

Let $x_t \in \R^p$ (resp. $w_t \in \R^q$) be a new independent sample of observable (resp. unobservable) features. Then,

\begin{align}
\label{eq:expretmis}
\begin{split}
    \mathbb{E}\left[ r^{(s,m)}_{t+1}(z) \big| X  \right] & = \mathbb{E}\left[ \hat{\pi}_t r_{t+1} \big| X  \right] \\
    & = \mathbb{E}\left[ (\beta_{oos}'x_t + \theta_{oos}'w_t)  x_t' \hat{\beta}_{is} \big| X  \right] .
\end{split}
\end{align}

Note that in the misspecified setting, we are also integrating over the randomness in the unobserved features $W$. 

Similarly to \eqref{eq:betahatdecomp} and by taking into account the presence of unobserved features, we can decompose $\hat{\beta}_{is}$ as follows

\begin{align*}
    \hat{\beta}_{is} & = \frac{1}{n} (zI + \hat{\Sigma}_x)^{-1} X' (X \beta_{is} + W \theta_{is}  + \xi) \\
    & = (zI + \hat{\Sigma}_x)^{-1} (\hat{\Sigma}_x \beta_{is} + \hat{\Sigma}_{xw} \theta_{is} + \frac{1}{n} X' \xi).
\end{align*}

Therefore, coming back to \eqref{eq:expretmis}, 

\begin{align*}
    \mathbb{E}\left[ r^{(s,m)}_{t+1}(z) \big| X  \right] &= \mathbb{E}\left[ (\beta_{oos}'x_t + \theta_{oos}'w_t)  x_t' (zI + \hat{\Sigma}_x)^{-1} (\hat{\Sigma}_x \beta_{is} + \hat{\Sigma}_{xw} \theta_{is} + \frac{1}{n} X' \xi) \big| X  \right] \\
    & = \beta_{oos}' \E \big[x_t x_t' \big] (zI + \hat{\Sigma}_x)^{-1} \hat{\Sigma}_x \beta_{is} \\
    & \quad + \beta_{oos}' \E \big[x_t x_t' (zI + \hat{\Sigma}_x)^{-1} \hat{\Sigma}_{xw} \big| X \big] \theta_{is} \\ & \quad + \frac{1}{n} \beta_{oos}' \E \big[x_t x_t' \big] (zI + \hat{\Sigma}_x)^{-1} X' \xi \\
    & \quad + \theta_{oos}' \E \big[w_t x_t' \big] (zI + \hat{\Sigma}_x)^{-1} \hat{\Sigma}_x \beta_{is} \\
    & \quad + \theta_{oos}' \E \big[w_t x_t' (zI + \hat{\Sigma}_x)^{-1} \hat{\Sigma}_{xw} \big| X \big] \theta_{is} \\ & \quad + \frac{1}{n} \theta_{oos}' \E \big[w_t x_t' \big] (zI + \hat{\Sigma}_x)^{-1} X' \xi.
\end{align*}

By the law of iterated expectations, we have

\begin{align*}
    \beta_{oos}' \E \big[x_t x_t' (zI + \hat{\Sigma}_x)^{-1} \hat{\Sigma}_{xw} \big| X \big] \theta_{is} &= \beta_{oos}' \E \Big[ \E \big[ x_t x_t' (zI + \hat{\Sigma}_x)^{-1} \hat{\Sigma}_{xw} \big| X,W \big] \Big| X \Big] \theta_{is} \\
    & = \beta_{oos}' \E \Big[ \E \big[ x_t x_t'\big| X,W \big] (zI + \hat{\Sigma}_x)^{-1} \hat{\Sigma}_{xw}  \Big| X \Big] \theta_{is} \\
    & = \beta_{oos}' \E \Big[ \Sigma_x (zI + \hat{\Sigma}_x)^{-1} \hat{\Sigma}_{xw}  \Big| X \Big] \theta_{is} \\
    & = \beta_{oos}' \Sigma_x (zI + \hat{\Sigma}_x)^{-1} \E \Big[\hat{\Sigma}_{xw}\Big| X \Big] \theta_{is}, \\
\end{align*}

and 

\begin{align*}
    \theta_{oos}' \E \big[w_t x_t' (zI + \hat{\Sigma}_x)^{-1} \hat{\Sigma}_{xw} \big| X \big] \theta_{is} &= \theta_{oos}' \E \Big[ \E \big[ w_t x_t' (zI + \hat{\Sigma}_x)^{-1} \hat{\Sigma}_{xw} \big| X,W \big] \Big| X \Big] \theta_{is} \\
    & = \theta_{oos}' \E \Big[ \E \big[ w_t x_t'\big| X,W \big] (zI + \hat{\Sigma}_x)^{-1} \hat{\Sigma}_{xw}  \Big| X \Big] \theta_{is} \\
    & = \theta_{oos}' \E \Big[ \Sigma_{xw}' (zI + \hat{\Sigma}_x)^{-1} \hat{\Sigma}_{xw}  \Big| X \Big] \theta_{is} \\
    & = \theta_{oos}' \Sigma_{xw}' (zI + \hat{\Sigma}_x)^{-1} \E \Big[\hat{\Sigma}_{xw}\Big| X \Big] \theta_{is} . 
\end{align*}

Therefore, 

\begin{align*}
    \mathbb{E}\left[ r^{(s,m)}_{t+1}(z) \big| X  \right] & = \beta_{oos}' \Sigma_x (zI + \hat{\Sigma}_x)^{-1} \hat{\Sigma}_x \beta_{is} \\
    & \quad + \beta_{oos}' \Sigma_x (zI + \hat{\Sigma}_x)^{-1} \E \Big[\hat{\Sigma}_{xw}\Big| X \Big] \theta_{is} \\ & \quad + \frac{1}{n} \beta_{oos}' \Sigma_x (zI + \hat{\Sigma}_x)^{-1} X' \xi \\
    & \quad + \theta_{oos}' \Sigma_{xw}' (zI + \hat{\Sigma}_x)^{-1} \hat{\Sigma}_x \beta_{is} \\
    & \quad + \theta_{oos}' \Sigma_{xw}' (zI + \hat{\Sigma}_x)^{-1} \E \Big[\hat{\Sigma}_{xw}\Big| X \Big] \theta_{is} \\ & \quad + \frac{1}{n} \theta_{oos}' \Sigma_{xw}' (zI + \hat{\Sigma}_x)^{-1} X' \xi.
\end{align*}

    The proof of Proposition \ref{prop:expretmis} is complete. \qed

\vspace{10pt}

We introduce the following VESDs

\begin{align}
\label{eq:VESD2}
\begin{split}
    \hat{\eta} &\equiv \hat{\eta} (\lambda) := \frac{1}{\|\beta_{oos}\| \|P'\Sigma_w^{1/2} \theta_{is}\|} \sum_{i=1}^p \langle \beta_{oos} , v_i \rangle \langle v_i, P'\Sigma_w^{1/2} \theta_{is} \rangle  \delta_{\lambda_i}, \\
    \hat{\nu} &\equiv \hat{\nu} (\lambda) := \frac{1}{\|P'\Sigma_w^{1/2}\theta_{is}\| \|P'\Sigma_w^{1/2}\theta_{oos}\|} \sum_{i=1}^p \langle P'\Sigma_w^{1/2}\theta_{is} , v_i \rangle \langle v_i, P'\Sigma_w^{1/2}\theta_{oos} \rangle  \delta_{\lambda_i}, \\
    \hat{\psi} &\equiv \hat{\psi} (\lambda) := \frac{1}{\|\beta_{is}\| \|P'\Sigma_w^{1/2}\theta_{oos}\|} \sum_{i=1}^p \langle \beta_{is} , v_i \rangle \langle v_i, P'\Sigma_w^{1/2}\theta_{oos} \rangle  \delta_{\lambda_i}.
\end{split}
\end{align}

We assume that these three measures weakly converge. This is summed up in the following assumption. 

\begin{assump}
    \label{ass:weakconvsignedmeasures}
    $\hat{\eta}$, $\hat{\nu}$ and $\hat{\psi}$ weakly converge to some (possibly signed) measures, denoted by $\eta$, $\nu$ and $\psi$, respectively. 
\end{assump}

The following proposition characterizes the asymptotic behavior of the out-of-sample expected return of the timing strategy under posterior drift. 

\begin{proposition}
\label{prop:expretgeneral}
    Let $z>0$ and Assumptions (\ref{ass:iddistcov})-(\ref{ass:complexity}) and (\ref{ass:opbound})-(\ref{ass:weakconvsignedmeasures}) hold. Then, the average return of the strategy in the well-specified case ($\theta=0$) is given by: 
    \begin{align}
        \mathbb{E}\left[ r^{(s,w)}_{t+1}(z) \big| X  \right] &\xrightarrow[\substack{\mathstrut n,p,q \rightarrow \infty \\ p/n \rightarrow c\phi }]{\mathds{P}} \quad \|\beta_{\text{is}}\| \|\beta_{\text{oos}}\| \int h(\lambda;z,c\phi,\mu) d \zeta_{d}(\lambda) ,
        \label{eq:E_prop13}
    \end{align}
   and the additive term arising from mispecification ($\theta \neq 0$) has the following limit: 
    \begin{align*}
        \mathcal{M}(z) &\xrightarrow[\substack{\mathstrut n,p,q \rightarrow \infty \\ p/n \rightarrow c\phi }]{\mathds{P}} \quad \mathcal{J}_1(z;c\phi,\mu,\eta) + \mathcal{J}_2(z;c\phi,\mu,\nu) +\mathcal{J}_3(z;c\phi,\mu,\psi), 
        %\label{eq:expretgeneral}
    \end{align*}
    
    where $h(\cdot)$ is defined in \eqref{eq:h} and the quantities in the limiting expression of $\mathcal{M}(z)$ are given in Eq. \eqref{eq:Js}. 
\end{proposition}

\begin{proof}
    
By the assumptions on the covariates, we have 

\begin{align*}
    \E \Big[\hat{\Sigma}_{xw}\Big| X \Big] &= \frac{1}{n} X' \E \big[W\big| X \big] \\
    &= \frac{1}{n} X'ZP'\Sigma_w^{1/2} \\ 
    &= \hat{\Sigma}_x \Sigma_x^{-1/2}P'\Sigma_w^{1/2}
\end{align*}

Further, note that $\Sigma_{xw} = \E_{z}[x w'] = \E_z[\Sigma_x^{1/2}zz'P'\Sigma_w^{1/2}] = \Sigma_x^{1/2}P'\Sigma_w^{1/2}$. Thus, we can rewrite the formula for the expected return of the strategy as 

\begin{align*}
    \mathbb{E}\left[ r^{(s,m)}_{t+1}(z) \big| X  \right] & = \beta_{oos}' \Sigma_x \beta_{is} - z\beta_{oos}' \Sigma_x (zI + \hat{\Sigma}_x)^{-1} \beta_{is} \\
    & \quad + \beta_{oos}' \Sigma_x^{1/2}P'\Sigma_w^{1/2} \theta_{is} - z\beta_{oos}' \Sigma_x (zI + \hat{\Sigma}_x)^{-1} \Sigma_x^{-1/2}P'\Sigma_w^{1/2} \theta_{is} \\ 
    & \quad + \theta_{oos}' \Sigma_w^{1/2} P \Sigma_x^{1/2} \beta_{is} - z\theta_{oos}' \Sigma_w^{1/2} P \Sigma_x^{1/2} (zI + \hat{\Sigma}_x)^{-1} \beta_{is} \\
    & \quad + \theta_{oos}' \Sigma_w^{1/2} PP' \Sigma_w^{1/2} \theta_{is} - z\theta_{oos}' \Sigma_w^{1/2} P \Sigma_x^{1/2} (zI + \hat{\Sigma}_x)^{-1} \Sigma_x^{-1/2}P'\Sigma_w^{1/2} \theta_{is} \\ 
    & \quad + \frac{1}{n} (\beta_{oos}' \Sigma_x + \theta_{oos}' \Sigma_{xw}') (zI + \hat{\Sigma}_x)^{-1} X' \xi .
\end{align*}

where we also used the decomposition trick $\hat{\Sigma}_x = \hat{\Sigma}_x + zI - zI$.

From Corollary \ref{cor:cvresasy} and \ref{cor:asyboundresridgeless}, we have both in the ridge and ridgeless case

\begin{equation*}
        \frac{1}{n} (\beta_{oos}' \Sigma_x + \theta_{oos}' \Sigma_{xw}') (zI + \hat{\Sigma})^{-1} X' \xi \xrightarrow[]{L_2} 0.
\end{equation*} 

By Lemma \ref{lem:asympcv}, for any vectors $u,v \in \R^p$ such that $\|u\|_2 \to c_u < \infty$ and $\|v\|_2 \to c_v < \infty$, we have 
    
    \begin{equation*}
        - z u' (zI + \hat{\Sigma})^{-1} v \xrightarrow[\substack{\mathstrut n,p,q \rightarrow \infty \\ p/n \rightarrow c\phi }]{a.s.} -z u' ( zI + (c\phi zm(-z;c\phi,\mu) + 1 - c\phi) \Sigma)^{-1} v ,
    \end{equation*}

In the ridgeless scenario, by Corollary \ref{cor:asycvridgeless}, we have 
    
    \begin{equation*}
        - z u' (zI + \hat{\Sigma})^{-1} v \xrightarrow[\substack{\mathstrut n,p,q \rightarrow \infty \\ p/n \rightarrow c\phi \\ z \to 0 }]{a.s.} - u' (c\phi s_0(c\phi,\mu) \Sigma + I)^{-1} v \mathds{1}_{c\phi>1},
    \end{equation*}

    Hence, applying the last three limiting formulas to the different terms in $\mathbb{E}\left[ r^{(s,m)}_{t+1}(z) \big| X  \right]$ yields 
    
    \begin{align*}
        \mathbb{E}\left[ r^{(s,w)}_{t+1}(z) \big| X  \right] &\xrightarrow[\substack{\mathstrut n,p,q \rightarrow \infty \\ p/n \rightarrow c\phi }]{\mathds{P}} \quad \|\beta_{is}\| \|\beta_{oos}\| \int h(\lambda;z,c\phi,\mu) \lambda d \zeta_{d}(\lambda) \\
        \mathcal{M}(z) &\xrightarrow[\substack{\mathstrut n,p,q \rightarrow \infty \\ p/n \rightarrow c\phi }]{\mathds{P}} \quad \mathcal{J}_1(z;c\phi,\mu,\eta) + \mathcal{J}_2(z;c\phi,\mu,\nu) +\mathcal{J}_3(z;c\phi,\mu,\psi), 
        %\label{eq:expretgeneral}
    \end{align*}
    where the limiting quantities are defined as

    \begin{align}
    \label{eq:Js}
    \begin{split}
        % \mathcal{I}(z;c\phi,\mu,\zeta_d) &:= \|\beta_{is}\| \|\beta_{oos}\| \int h(\lambda;z,c\phi,\mu) d \zeta_{d}(\lambda), \\
        \mathcal{J}_1(z;c\phi,\mu,\eta) &:= \|\beta_{oos}\| \|P\Sigma_w^{1/2} \theta_{is}\| \int h(\lambda;z,c\phi,\mu) \sqrt{\lambda} d \eta(\lambda), \\
        \mathcal{J}_2(z;c\phi,\mu,\nu) &:= \|P\Sigma_w^{1/2} \theta_{is}\| \|P\Sigma_w^{1/2} \theta_{oos}\| \int h(\lambda;z,c\phi,\mu) d \nu(\lambda), \\
        \mathcal{J}_3(z;c\phi,\mu,\psi) &:= \|\beta_{is}\| \|P\Sigma_w^{1/2} \theta_{oos}\| \int h(\lambda;z,c\phi,\mu) \sqrt{\lambda} d \psi(\lambda).
    \end{split}
    \end{align}

    with 

    \begin{align}
    \label{eq:h}
    \begin{split}
        h(x;z,c\phi,\mu) := \left\{ \begin{array}{c l}
    1 - \frac{z}{x(c\phi z m(-z;c \phi,\mu) + 1 - c\phi) + z} & \quad z>0 \\[10pt]
     1 - \frac{1}{1+x c\phi s_0(c\phi,\mu)}\mathds{1}_{\{c\phi>1\}}  &\quad z \to 0.
    \end{array} \right. 
    \end{split}
    \end{align}

\end{proof}

\vspace{10pt}

Recall that $\omega_u := \Sigma_x^{1/2} \beta_u + P' \Sigma_w^{1/2} \theta_u$ with $u \in \{is, oos\}$. Note that by Assumptions \ref{ass:probacvpimis} and \ref{ass:weakconvsignedmeasures}, we have that 

\begin{align*}
    \hat{\iota}_{is} &\equiv \hat{\iota}_{is} (\lambda) := \frac{1}{\|\omega_{is}\|^2} \sum_{i=1}^p \langle \omega_{is} , v_i \rangle^2  \delta_{\lambda_i} \\
    \hat{\iota}_{oos} &\equiv \hat{\iota}_{oos} (\lambda) := \frac{1}{\|\omega_{oos}\|^2} \sum_{i=1}^p \langle \omega_{oos} , v_i \rangle^2  \delta_{\lambda_i} \\
    \hat{\iota}_d &\equiv \hat{\iota}_d (\lambda) := \frac{1}{\|\omega_{is}\| \|\omega_{oos}\|} \sum_{i=1}^p \langle \omega_{is} , v_i \rangle \langle v_i, \omega_{oos} \rangle  \delta_{\lambda_i}
\end{align*}

weakly converge to some (possibly signed) measures, denoted by $\iota_{is}$, $\iota_{oos}$ and $\iota_d$, respectively. 

Then, the following corollary follows from Proposition \ref{prop:expretgeneral}. It gives the mean return of the strategy in the general case.

\begin{proposition}
\label{prop:asyret_general}
    The asymptotic average return of the strategy in the general case $(\theta\neq0)$ is given by 
    \begin{align}
    \label{eq:asyretgeneral}
        \mathbb{E}\left[ r^{(s)}_{t+1}(z) \big| X  \right] &\xrightarrow[\substack{\mathstrut n,p,q \rightarrow \infty \\ p/n \rightarrow c\phi }]{\mathds{P}} \quad \|\omega_{is}\| \|\omega_{oos}\| \int h(\lambda;z,c\phi,\mu) d \iota_d(\lambda) := \mathcal{E}^{(d)}(z;c\phi,\iota_d),
    \end{align}

    with $h(\cdot)$ defined in Eq. \eqref{eq:h}.
\end{proposition}

\begin{proof}
    The proof directly follows from combining the asymptotic expressions of Proposition \ref{prop:expretgeneral} and noting that 

    \begin{align*}
        \iota_d :=\frac{\|\beta_{is}\|\|\beta_{oos}\| \zeta_d +  \|\beta_{oos}\|\|P \Sigma_w^{1/2}\theta_{is}\| \eta + \|P \Sigma_w^{1/2}\theta_{is}\|\|P \Sigma_w^{1/2}\theta_{oos}\| \nu + \|\beta_{is}\|\|P \Sigma_w^{1/2}\theta_{is}\| \psi}{\|\omega_{is}\|\|\omega_{oos}\|}.
    \end{align*}
\end{proof}

% Moreover, we make the following assumption on the local density of the state $\omega_{oos} \circ \omega_{is}$, where $\circ$ denotes the Hadamard product. 

% \begin{assump}
% \label{ass:esvd_hadamardprod}
% The VESD $\hat{\iota}_h \equiv \hat{\iota}_h(\lambda) := \|\omega_{oos} \circ \omega_{is}\|^{-2} \sum_{i=1}^p \langle \omega_{oos} \circ \omega_{is} , v_i \rangle^2  \delta_{\lambda_i}$ weakly converges to some density measure $\iota_h$. Moreover, this limiting measure has support on $[0,\infty)$.
% \end{assump}

Denote $m_4 = \E[z_i^4]$. Moreover, let $\tilde{\lambda}_1(z) \geq \Tilde{\lambda}_2(z) \geq \dots \geq \Tilde{\lambda}_p(z)$ be the eigenvalues of the matrix

\begin{align*}
    \Big( I - z(zI+[1-c\phi+c\phi z m(-z)]\Sigma_x)^{-1}\Big)\diag(\omega_{oos})^2 \Big( I - z(zI+[1-c\phi+c\phi z m(-z)]\Sigma_x)^{-1}\Big)
\end{align*}

if $z>0$ or of the matrix $\big(I - ( I + c \phi s_0 \Sigma_x)^{-1} \big) \diag(\omega_{oos})^2 \big(I - ( I + c \phi s_0 \Sigma_x)^{-1} \big)$ in the limit $z \to 0^+$.

Similarly, denote by $\overline{\lambda}_1(z) \geq \overline{\lambda}_2(z) \geq \dots \geq \overline{\lambda}_p(z)$ the eigenvalues of $(zI+[1-c\phi+c\phi z m(-z)]\Sigma_x)^{-1} \diag(\omega_{oos})^2 (zI+[1-c\phi+c\phi z m(-z)]\Sigma_x)^{-1}$ if $z>0$ or of the matrix $( I + c \phi s_0 \Sigma_x)^{-1} \diag(\omega_{oos})^2( I + c \phi s_0 \Sigma_x)^{-1}$ in the limit $z \to 0^+$. For $ i \in \{ 1, \dots , p \}$, denote by $\tilde{v}_i(z)$ and $\overline{v}_i(z)$ the eigenvectors associated with the eigenvalue $\tilde{\lambda}_i(z)$ and $\overline{\lambda}_i(z)$, respectively. We make the following assumption on the local density of the state $\omega_{is}$.

\begin{assump}
\label{ass:esvd_diag_omega_oos}
The VESDs $\hat{\varpi}_1^z \equiv \hat{\varpi}_1^z(\lambda) := \| \omega_{is}\|^{-2} \sum_{i=1}^p \langle \omega_{is} , \tilde{v}_i(z) \rangle^2  \delta_{\tilde{\lambda}_i(z)}$ and $\hat{\varpi}_2^z \equiv \hat{\varpi}_2^z(\lambda) := \| \omega_{is}\|^{-2} \sum_{i=1}^p \langle \omega_{is} , \overline{v}_i(z) \rangle^2  \delta_{\overline{\lambda}_i(z)}$ weakly converge to some density measures $\varpi_1$ and $\varpi_2$, respectively. Moreover, these limiting measures have support on $[0,\infty)$.
\end{assump}

The following proposition gives the asymptotic second moment of the strategy's return. 

\begin{proposition}
\label{prop:secondmomstratret}
    Let $z>0$ and Assumptions (\ref{ass:iddistcov})-(\ref{ass:complexity}) and (\ref{ass:opbound})-(\ref{ass:esvd_diag_omega_oos}) hold. Denote $m_4 = \E[z_i^4]$. Then, 

    \begin{align*}
        \mathbb{E}\left[ \Big(r^{(s)}_{t+1}(z) \Big)^2 \Big| X  \right] &\xrightarrow[\substack{\mathstrut n,p,q \rightarrow \infty \\ p/n \rightarrow c\phi }]{\mathds{P}} \quad (1+\|\omega_{oos}\|^2) \mathcal{L} + 2 \big( \mathcal{E}^{(d)}(z;c\phi,\iota_d)\big)^2 + (m_4-3)\mathcal{K}
    \end{align*}

    where $\mathcal{E}^{(d)}(z;c\phi,\iota_d)$ is the asymptotic return of the strategy and is given in Eq. \eqref{eq:asyretgeneral}, $\mathcal{L}$ is the (limiting) leverage of the strategy, that is $\E[|\hat{\pi}_t(z)|^2 |X] \to \mathcal{L}$ in probability when $n,p,q \to \infty$ such that $p/n \to c\phi$, and is defined in the Appendix in Eq. \eqref{eq:leverage_gen_ridge} for ridge regularization and in Eq. \eqref{eq:leverage_ridgeless} when the penalty level is vanishing. 
    
    Similarly, $\mathcal{K}$ is provided in Eq. \eqref{eq:K1_K2_gen_ridge} while $\lim_{z \to 0^+} \mathcal{K}$ is given in Eq. \eqref{eq:K1_K2_ridgeless}.
\end{proposition}

\begin{proof}
\textbf{Ridge regularization}:
    For notational simplicity, we drop the "is" subscript for the vector of estimated coefficients, i.e. $\hat{\beta} \equiv \hat{\beta}_{is}$. The second moment of the strategy's return is given by

    \begin{align*}
        \mathbb{E}\left[ \Big(r^{(s)}_{t+1}(z) \Big)^2 \Big| X  \right] &= \mathbb{E}\left[ \Big(\hat{\beta}'x_tr_{t+1} \Big)^2 \Big| X  \right] \\
        &= \mathbb{E}\left[ \hat{\beta}'x_tr_{t+1} r_{t+1} x_t' \hat{\beta} \Big| X  \right] \\
        &= \mathbb{E}\left[ \hat{\beta}'x_t (x_t'\beta_{oos}+w_t'\theta_{oos}+e_t) (\beta_{oos}'x_t+\theta_{oos}'w_t+e_t) x_t' \hat{\beta} \Big| X  \right] \\
        &=\mathbb{E}\left[ \hat{\beta}'x_t \big[(x_t'\beta_{oos}+w_t'\theta_{oos})(\beta_{oos}'x_t+\theta_{oos}'w_t)+1 \big] x_t' \hat{\beta} \Big| X  \right] \\
        &= \mathbb{E}\left[ \hat{\beta}'x_t z_t' \omega_{oos} \omega_{oos}' z_t x_t' \hat{\beta} \Big| X  \right] + \mathbb{E}\left[ \hat{\beta}' \Sigma_x \hat{\beta} \Big| X  \right] \\
        &= \omega_{oos}' \mathbb{E}\left[ z_t z_t' \Sigma_x^{1/2} \hat{\beta} \hat{\beta}' \Sigma_x^{1/2} z_t z_t' \Big| X  \right] \omega_{oos} + \mathbb{E}\left[ \hat{\beta}' \Sigma_x \hat{\beta} \Big| X  \right]
    \end{align*}

    Direct calculation yields that, for any p.s.d. matrix $A$, we have the following decomposition 

    \begin{align}
    \label{eq:decomp_zzAzz}
        \mathbb{E}\left[ z_t z_t' A z_t z_t'  \right] = 2A + \tr(A)I +(m_4-3) \diag(A).
    \end{align}

    Choosing $A = \Sigma_x^{1/2} \hat{\beta} \hat{\beta}' \Sigma_x^{1/2}$ gives 

    \begin{align}
    \label{eq:secondmomentret_decomp}
    \begin{split}
        \mathbb{E}\left[ \Big(r^{(s)}_{t+1}(z) \Big)^2 \Big| X  \right] &= 2 \omega_{oos}' \Sigma_x^{1/2}\mathbb{E}\left[ \hat{\beta} \hat{\beta}'\Big| X  \right] \Sigma_x^{1/2} \omega_{oos} + \tr(\Sigma_x^{1/2}\mathbb{E}\left[ \hat{\beta} \hat{\beta}'\Big| X  \right] \Sigma_x^{1/2}) \|\omega_{oos}\|^2 \\
        &\quad + (m_4-3)\omega_{oos}'\diag(\Sigma_x^{1/2}\mathbb{E}\left[ \hat{\beta} \hat{\beta}'\Big| X  \right] \Sigma_x^{1/2})\omega_{oos} + \mathbb{E}\left[ \hat{\beta}' \Sigma_x \hat{\beta} \Big| X  \right] \\
        &= \big( 1+\|\omega_{oos}\|^2 \big)\tr(\Sigma_x^{1/2}\mathbb{E}\left[ \hat{\beta} \hat{\beta}'\Big| X  \right] \Sigma_x^{1/2}) \\
        & \quad + 2 \omega_{oos}' \Sigma_x^{1/2}\mathbb{E}\left[ \hat{\beta} \hat{\beta}'\Big| X  \right] \Sigma_x^{1/2} \omega_{oos} \\
        & \quad + (m_4-3)\omega_{oos}'\diag(\Sigma_x^{1/2}\mathbb{E}\left[ \hat{\beta} \hat{\beta}'\Big| X  \right] \Sigma_x^{1/2})\omega_{oos} \\
        & = T_1 + T_2 + (m_4-3) T_3
    \end{split}
    \end{align}

    The key term to study is the second moment of the regression coefficients vector conditional on the observed features, that is 
    
    \begin{align*}
        \mathbb{E}\left[ \hat{\beta} \hat{\beta}'\Big| X  \right] &= \mathbb{E}\left[ \frac{1}{n^2} (\hat{\Sigma}_x+zI)^{-1}X'yy'X(\hat{\Sigma}_x+zI)^{-1}\Big| X  \right] 
    \end{align*}

    Remark that we can rewrite the central outer product as

    \begin{align*}
        yy' &= (X \beta_{is} + W\theta_{is}) (X \beta_{is} + W\theta_{is})' + \xi\xi' + \xi (X \beta_{is} + W\theta_{is})' +  (X \beta_{is} + W\theta_{is}) \xi' 
    \end{align*}

    Injecting this in the previous equation yields

    \begin{align*}
        \mathbb{E}\left[ \hat{\beta} \hat{\beta}'\Big| X  \right] &=  R\hat{\Sigma}_x \Tilde{\omega}_{is} \Tilde{\omega}_{is}' \hat{\Sigma}_xR + \frac{1}{n^2} RX'\xi\xi'XR \\
        & \quad + \frac{1}{n^2} RX'(\xi (X \beta_{is} + W\theta_{is})' +  (X \beta_{is} + W\theta_{is}) \xi')XR
    \end{align*}

    where $\Tilde{\omega}_{is} = \Sigma_x^{-1/2} \omega_{is}$ and $R := (\hat{\Sigma}_x+zI)^{-1}$.

    Applying the law of large numbers, we have $n^{-2} RX'(\xi (X \beta_{is} + W\theta_{is})' +  (X \beta_{is} + W\theta_{is}) \xi')XR \to 0$ in $L^2$. Further, noticing that $n^{-2}\E[X'\xi \xi'X] = n^{-1} \hat{\Sigma}_x$, we have

    \begin{align*}
        \mathbb{E}\left[ \hat{\beta} \hat{\beta}'\Big| X  \right] \to  R\hat{\Sigma}_x \Tilde{\omega}_{is} \Tilde{\omega}_{is}' \hat{\Sigma}_xR + \frac{1}{n} R\hat{\Sigma}_xR
    \end{align*}

    in probability. Then, rewriting $\hat{\Sigma}_x = \hat{\Sigma}_x + zI - zI$, we can decompose the first term in the rhs as

    \begin{align*}
        R\hat{\Sigma}_x \Tilde{\omega}_{is} \Tilde{\omega}_{is}' \hat{\Sigma}_xR &= \Tilde{\omega}_{is} \Tilde{\omega}_{is}' + z^2 R \Tilde{\omega}_{is} \Tilde{\omega}_{is}' R \\
        &\quad - z R\Tilde{\omega}_{is} \Tilde{\omega}_{is}' - z \Tilde{\omega}_{is} \Tilde{\omega}_{is}'R
    \end{align*}

    Thus, we can come back to \eqref{eq:secondmomentret_decomp} and plug our decomposition of $\mathbb{E}\left[ \hat{\beta} \hat{\beta}'\Big| X  \right]$ into the different terms in the rhs.

    \underline{Analysis of $T_1$}:
    Recall that $T_1 = \big( 1+\|\omega_{oos}\|^2 \big)\tr(\Sigma_x^{1/2}\mathbb{E}\left[ \hat{\beta} \hat{\beta}'\Big| X  \right] \Sigma_x^{1/2})$. We focus on the non-constant factor,

    \begin{align*}
        \tr(\Sigma_x^{1/2}\mathbb{E}\left[ \hat{\beta} \hat{\beta}'\Big| X  \right] \Sigma_x^{1/2}) &= \tr(\omega_{is}\omega_{is}') + z^2 \tr(\Sigma_x^{1/2} R \Sigma_x^{-1/2} \omega_{is} \omega_{is}' \Sigma_x^{-1/2}R\Sigma_x^{1/2}) \\
        & \quad - 2z \tr(\Sigma_x^{1/2} R \Sigma_x^{-1/2} \omega_{is} \omega_{is}') + \frac{1}{n}\tr(\Sigma_x^{1/2} R\hat{\Sigma}_xR\Sigma_x^{1/2}) + o_n(1) \\
        &= \|\omega_{is}\|^2 + z^2 \omega_{is}' \Sigma_x^{-1/2}R\Sigma_xR\Sigma_x^{-1/2}\omega_{is} \\
        &\quad - 2z \omega_{is}'\Sigma_x^{1/2} R \Sigma_x^{-1/2} \omega_{is} + \frac{1}{n}\tr(\Sigma_x R\hat{\Sigma}_xR) + o_n(1)
    \end{align*}

    Substituting $\beta_{oos}$ and $\beta_{is}$ by $\Sigma_x^{-1/2} \omega_{is}$ in the definition of $F_n(z)$, we obtain by Lemma \ref{lem:asympcv},

    \begin{align*}
        - 2z \omega_{is}'\Sigma_x^{1/2} R \Sigma_x^{-1/2} \omega_{is} \xrightarrow[]{a.s.} -2z \omega_{is}'(zI+[1-c\phi+c\phi z m(-z)]\Sigma_x)^{-1} \omega_{is}
    \end{align*}

    Similarly, replacing $\beta$ by $\Sigma_x^{-1/2} \omega_{is}$ in [\cite{hastie2020surprises}, Theorem 5 \& 6], we have 

    \begin{align*}
       & z^2 \omega_{is}' \Sigma_x^{-1/2}R\Sigma_xR\Sigma_x^{-1/2}\omega_{is} \xrightarrow[]{a.s.} z^2 [1+c\phi m_1(-z)] \omega_{is}'(zI+[1-c\phi+c\phi z m(-z)]\Sigma_x)^{-2} \omega_{is}
    \end{align*}

    with 

    \begin{align*}
        m_1(z) \equiv m_1(z;c\phi,\mu) := \frac{\int \frac{\lambda^2[1-c\phi-c\phi z m(z)]}{(\lambda[1-c\phi - c \phi z m(z)]-z)^2} \dd \mu(\lambda)}{1 - c \phi \int \frac{z\lambda}{(\lambda[1-c\phi - c \phi z m(z)]-z)^2}\dd \mu(\lambda)}.
    \end{align*}

    Further, remark that 

    \begin{align*}
        &I + z^2 [1+c\phi m_1(-z)] (zI+[1-c\phi+c\phi z m(-z)]\Sigma_x)^{-2} -2z (zI+[1-c\phi+c\phi z m(-z)]\Sigma_x)^{-1} \\
        = & (zI+[1-c\phi+c\phi z m(-z)]\Sigma_x)^{-2} \Big[ z^2 c \phi m_1(-z)I + ([1-c\phi+c\phi z m(-z)]\Sigma_x)^2\Big]
    \end{align*}

    Finally, note that

    \begin{align*}
        \frac{1}{n}\tr(\Sigma_xR\hat{\Sigma}_xR) = c \phi \frac{\partial}{\partial z} \Big\{ \frac{z}{p} \tr(\Sigma_xR) \Big\}.
    \end{align*}

    From the proof of [\cite{hastie2020surprises}, Theorem 5 \& 6], we have 

    \begin{align*}
        \frac{1}{n}\tr(R\hat{\Sigma}_xR) \xrightarrow[]{a.s.} c\phi \int \frac{\lambda^2[1 - c\phi + c\phi z^2 m'(-z)]}{\big(z+ \lambda[1-c\phi+c\phi z m(-z)]\big)^2} \dd \mu (\lambda).
    \end{align*}

    Put together, we have 

    \begin{align}
    \label{eq:T1_lim_gen_ridge}
    \begin{split}
        T_1 \xrightarrow[]{a.s.}& \big( 1+\|\omega_{oos}\|^2 \big) \Bigg[ \|\omega_{is}\|^2 \int\frac{z^2 c\phi m_1(-z) + \lambda^2[1-c\phi+czm(-z)]^2}{(z + \lambda[1-c\phi+czm(-z)])^2} \dd \iota_{is}(\lambda)  \\
        & \quad + c\phi \int \frac{\lambda^2[1 - c\phi + c\phi z^2 m'(-z)]}{\big(z+ \lambda[1-c\phi+c\phi z m(-z)]\big)^2} \dd \mu (\lambda)\Bigg]
    \end{split}
    \end{align}

    \underline{Analysis of $T_2$}:

    Recall that $T_2 = 2 \omega_{oos}' \Sigma_x^{1/2}\mathbb{E}\left[ \hat{\beta} \hat{\beta}'\Big| X  \right] \Sigma_x^{1/2} \omega_{oos}$. Using the decomposition of $\mathbb{E}\left[ \hat{\beta} \hat{\beta}'\Big| X  \right]$, we have 

    \begin{align*}
        T_2 &= 2 (\omega_{is}'\omega_{oos})^2 + 2 z^2 \omega_{oos}' \Sigma_x^{1/2} R \Sigma_x^{-1/2}\omega_{is}\omega_{is}'\Sigma_x^{-1/2}R\Sigma_x^{1/2}\omega_{oos} \\
        &\quad -4z \omega_{oos}' \Sigma_x^{1/2} R \Sigma_x^{-1/2}\omega_{is}\omega_{is}'\omega_{oos} + \frac{2}{n} \omega_{oos}' \Sigma_x^{1/2} R\hat{\Sigma}_xR\Sigma_x^{1/2} \omega_{oos} + o_n(1) \\
        & = 2 (\omega_{is}'\omega_{oos})^2 + 2 \big(z \omega_{oos}' \Sigma_x^{1/2} R \Sigma_x^{-1/2}\omega_{is}\big)^2 - 4 \big(z \omega_{oos}' \Sigma_x^{1/2} R \Sigma_x^{-1/2}\omega_{is}\big) \big(\omega_{is}'\omega_{oos}\big) \\
        & \quad + \frac{2}{n} \omega_{oos}' \Sigma_x^{1/2} R\hat{\Sigma}_xR\Sigma_x^{1/2} \omega_{oos} + o_n(1)\\
        &= 2 \big( \omega_{is}'\omega_{oos} - z \omega_{oos}' \Sigma_x^{1/2} R \Sigma_x^{-1/2}\omega_{is}\big)^2 + \frac{2}{n} \omega_{oos}' \Sigma_x^{1/2} R\hat{\Sigma}_xR\Sigma_x^{1/2} \omega_{oos} + o_n(1)
    \end{align*}

    Substituting $\beta_{oos}$ and $\beta_{is}$ by $\Sigma_x^{-1/2} \omega_{oos}$ and $\Sigma_x^{-1/2} \omega_{is}$, respectively, in the definition of $F_n(z)$, we obtain by Lemma \ref{lem:asympcv},

    \begin{align*}
        - z \omega_{oos}' \Sigma_x^{1/2} (\hat{\Sigma}_x+zI)^{-1} \Sigma_x^{-1/2}\omega_{is} \xrightarrow[]{a.s.} -z \omega_{oos}'(zI+[1-c\phi+c\phi z m(-z)]\Sigma_x)^{-1} \omega_{is}.
    \end{align*}

    For the second term in the rhs of $T_2$, we can replace the trace in [\cite{hastie2020surprises}, Section A.3] by the quadratic form in $\Sigma_x^{1/2}\omega_{oos}$, i.e., $\omega \mapsto \omega'R\hat{\Sigma}_xR\omega$, and proceed as therein. Indeed, [\cite{knowles2017anisotropic}, Theorem 3.16 (i)] still holds in this setting. Thus, for any $D,\varepsilon>0$, there exists $C := C(\varepsilon,D)$ such that 

    \begin{align*}
        \Bigg| \frac{1}{n} \omega_{oos}' \Sigma_x^{1/2} R\hat{\Sigma}_xR\Sigma_x^{1/2} \omega_{oos} - \frac{c \phi \|\omega_{oos}\|_2^2}{p} \int \frac{\lambda^2[1 - c\phi + c\phi z^2 m'(-z)]}{\big(z+ \lambda[1-c\phi+c\phi z m(-z)]\big)^2} \dd \hat{\iota}_{oos} (\lambda) \Bigg| \leq \frac{C \|\omega_{oos}\|_2^2}{z^2 n^{\frac{1-\varepsilon}{2}}}
    \end{align*}

    holds with probability at least $1 - Cn^{-D}$. Notice that, for $z>0$,

    \begin{align*}
        \frac{c \phi \|\omega_{oos}\|_2^2}{p} \int \frac{\lambda^2[1 - c\phi + c\phi z^2 m'(-z)]}{\big(z+ \lambda[1-c\phi+c\phi z m(-z)]\big)^2} \dd \hat{\iota}_{oos} (\lambda) &\leq \frac{c \phi \|\omega_{oos}\|_2^2\lambda_{\max}[1 - c\phi + c\phi z^2 m'(-z)]}{zp}
    \end{align*}

    where $\lambda_{\max} = \|\Sigma_x\|_{op} < \infty$. The upper bound term converges to $0$ as $p \to \infty$. Terefore, proceeding as in [\cite{hastie2020surprises}, Section A.4], we obtain

    \begin{align*}
        \frac{1}{n} \omega_{oos}' \Sigma_x^{1/2} R\hat{\Sigma}_xR\Sigma_x^{1/2} \omega_{oos} \xrightarrow[]{a.s.} 0
    \end{align*}

    as $n,p \to \infty$.

    Finally, after some rearrangements, we obtain

    \begin{align}
    \label{eq:T2_lim_gen_ridge}
        T_2 \xrightarrow[]{a.s.} 2 \|\omega_{is}\|^2 \|\omega_{oos}\|^2 \Big(\int \frac{\lambda [1 - c\phi + c\phi z m(-z)]}{(z + [1 - c\phi + c \phi z m(-z)]\lambda)} \dd\iota_d\Big)^2  = \big(\mathcal{E}^{(d)}(z;c\phi,\iota_d)\big)^2,
    \end{align}

    where the equality follows from Proposition \ref{prop:asyret_general}.

    \underline{Analysis of $T_3$}:

    Recall that $T_3 = \omega_{oos}'\diag(\Sigma_x^{1/2}\mathbb{E}\left[ \hat{\beta} \hat{\beta}'\Big| X  \right] \Sigma_x^{1/2})\omega_{oos}$. Observe that, 

    \begin{align*}
        \omega_{oos}'\diag(\Sigma_x^{1/2}\mathbb{E}\left[ \hat{\beta} \hat{\beta}'\Big| X  \right] \Sigma_x^{1/2})\omega_{oos} = \tr(\diag(\omega_{oos})^2\Sigma_x^{1/2}\mathbb{E}\left[ \hat{\beta} \hat{\beta}'\Big| X  \right] \Sigma_x^{1/2}),
    \end{align*}

    where $\diag(\omega_{oos})$ is a diagonal matrix whose diagonal is the vector $\omega_{oos}$. 

    Thus, decomposing $\mathbb{E}\left[ \hat{\beta} \hat{\beta}'\Big| X  \right]$ once again, we have 

    \begin{align*}
        T_3 &= \tr(\diag(\omega_{oos})^2 \omega_{is}\omega_{is}') + z^2 \tr(\diag(\omega_{oos})^2 \Sigma_x^{1/2} R \Sigma_x^{-1/2}\omega_{is}\omega_{is}'\Sigma_x^{-1/2}R\Sigma_x^{1/2}) \\
        &\quad -2z\tr(\diag(\omega_{oos})^2 \Sigma_x^{1/2} R \Sigma_x^{-1/2}\omega_{is}\omega_{is}')  + \frac{1}{n} \tr(\diag(\omega_{oos})^2 \Sigma_x^{1/2} R \hat{\Sigma}_xR\Sigma_x^{1/2}) + o_n(1)\\
        &= \|\diag(\omega_{oos}) \omega_{is}\|^2 + z^2 \omega_{is}'\Sigma_x^{-1/2}R\Sigma_x^{1/2}\diag(\omega_{oos})^2 \Sigma_x^{1/2} R \Sigma_x^{-1/2}\omega_{is} \\
        &\quad -2z\omega_{is}'\diag(\omega_{oos})^2 \Sigma_x^{1/2} R \Sigma_x^{-1/2}\omega_{is} + \frac{1}{n} \tr(\Sigma_x^{1/2}\diag(\omega_{oos})^2 \Sigma_x^{1/2} R \hat{\Sigma}_xR) + o_n(1)
    \end{align*}

    Slightly modifying the proof of [\cite{hastie2020surprises}, Theorem 5 \& 6] (i.e., replacing $\beta$ by $\Sigma_x^{-1/2} \omega_{is}$ and $\Sigma$ by $\Sigma_x^{1/2}\diag(\omega_{oos})^2 \Sigma_x^{1/2}$) yields that  

    \begin{align*}
       & z^2 \omega_{is}'\Sigma_x^{-1/2}(\hat{\Sigma}_x+zI)^{-1}\Sigma_x^{1/2}\diag(\omega_{oos})^2 \Sigma_x^{1/2} (\hat{\Sigma}_x+zI)^{-1} \Sigma_x^{-1/2}\omega_{is} \\
        &\quad \xrightarrow[]{a.s.} z^2 [1+c\phi m_1(-z)] \\
        & \quad \times\omega_{is}'(zI+[1-c\phi+c\phi z m(-z)]\Sigma_x)^{-1} \diag(\omega_{oos})^2 (zI+[1-c\phi+c\phi z m(-z)]\Sigma_x)^{-1}\omega_{is}
    \end{align*}

    In addition, substituting $\beta_{oos}$ and $\beta_{is}$ by $\Sigma_x^{-1/2} \diag(\omega_{oos})^2 \omega_{is}$ and $\Sigma_x^{-1/2} \omega_{is}$, respectively, in the definition of $F_n(z)$, we obtain by Lemma \ref{lem:asympcv},

    \begin{align*}
        &- z\omega_{is}'\diag(\omega_{oos})^2 \Sigma_x^{1/2} (\hat{\Sigma}_x+zI)^{-1} \Sigma_x^{-1/2}\omega_{is} \\
        &\quad \xrightarrow[]{a.s.} -z \omega_{is}'\diag(\omega_{oos})^2(zI+[1-c\phi+c\phi z m(-z)]\Sigma_x)^{-1} \omega_{is}.
    \end{align*}

    Finally, we proceed as for the second term in the rhs of $T_2$ to determine the asymptotic limit of the last term in $T_3$. Namely, we can replace $\Sigma$ in the proof of [\cite{hastie2020surprises}, Theorem 5] by $\Sigma_x^{1/2} \diag(\omega_{oos})^2 \Sigma_x^{1/2}$, and proceed as therein. Indeed, [\cite{knowles2017anisotropic}, Theorem 3.16 (i)] still holds in this setting. Thus, for any $D,\varepsilon>0$, there exists $C := C(\varepsilon,D)$ such that 

    \begin{align*}
        &\Bigg| \frac{1}{n} \tr(\Sigma_x^{1/2}\diag(\omega_{oos})^2 \Sigma_x^{1/2} R \hat{\Sigma}_xR) \\
        &- \frac{c \phi}{p} [1 - c\phi + c\phi z^2 m'(-z)]\tr(\diag(\omega_{oos})^2 \Sigma_x^2 [zI + (1-c\phi+c\phi z m(-z))\Sigma_x]^{-2}) \Bigg| \leq \frac{C \|\omega_{oos}\|_2^2}{z^2 n^{\frac{1-\varepsilon}{2}}}
    \end{align*}

    holds with probability at least $1 - Cn^{-D}$. Notice that, for $z>0$,

    \begin{align*}
        & \quad \frac{c \phi}{p} [1 - c\phi + c\phi z^2 m'(-z)]\tr(\diag(\omega_{oos})^2 \Sigma_x^2 [zI + (1-c\phi+c\phi z m(-z))\Sigma_x]^{-2}) \\
        &\leq \frac{c \phi \|\omega_{oos}\|_2^2\lambda_{\max}^2[1 - c\phi + c\phi z^2 m'(-z)]}{zp}
    \end{align*}

    where $\lambda_{\max} = \|\Sigma_x\|_{op} < \infty$. The upper bound term converges to $0$ as $p \to \infty$. Terefore, proceeding as in [\cite{hastie2020surprises}, Section A.4], we obtain

    \begin{align*}
        \frac{1}{n} \tr(\Sigma_x^{1/2}\diag(\omega_{oos})^2 \Sigma_x^{1/2} R \hat{\Sigma}_xR) \xrightarrow[]{a.s.} 0
    \end{align*}

    as $n,p \to \infty$. 

    Rearranging the limiting terms, we get 

    \begin{align}
    \label{eq:T3_lim_gen_ridge}
    \begin{split}
        T_3 &\xrightarrow[]{a.s.} \omega_{is}' \big(I - z (zI+[1-c\phi+c\phi z m(-z)]\Sigma_x)^{-1} \big) \diag(\omega_{oos})^2 \\
        &\quad \times\big(I - z (zI+[1-c\phi+c\phi z m(-z)]\Sigma_x)^{-1} \big)\omega_{is} \\
        & \quad + c \phi z^2 m_1(-z) \\
        &\quad \times \omega_{is}'(zI+[1-c\phi+c\phi z m(-z)]\Sigma_x)^{-1} \diag(\omega_{oos})^2 (zI+[1-c\phi+c\phi z m(-z)]\Sigma_x)^{-1}\omega_{is} \\
        & = \|\omega_{is}\|^2 \Big( \int \lambda \dd \varpi_1^z(\lambda) + c \phi z^2 m_1(-z) \int \lambda \dd \varpi_2^z(\lambda) \Big)
    \end{split}
    \end{align}

    \underline{Conclusion}:

    Recall that from Eq. \eqref{eq:secondmomentret_decomp},

    \begin{align*}
        \mathbb{E}\left[ \Big(r^{(s)}_{t+1}(z) \Big)^2 \Big| X  \right] = T_1+T_2+(m_4-3)T_3.
    \end{align*}

    The desired result follows from taking the limits of the three terms in the rhs, which are given in Eqs. \eqref{eq:T1_lim_gen_ridge}, \eqref{eq:T2_lim_gen_ridge} and \eqref{eq:T3_lim_gen_ridge}, respectively. That is, 
    
    \begin{align*}
        \mathbb{E}\left[ \Big(r^{(s)}_{t+1}(z) \Big)^2 \Big| X  \right] &\xrightarrow[\substack{\mathstrut n,p,q \rightarrow \infty \\ p/n \rightarrow c\phi }]{\mathds{P}} \quad (1+\|\omega_{oos}\|^2) \mathcal{L}(z) + 2 \big(\mathcal{E}^{(d)}(z;c\phi,\iota_d)\big)^2 + (m_4-3)\mathcal{K}(z)
    \end{align*}

    where $\mathcal{L}(z)$ is the (limiting) leverage of the strategy, that is $\E[|\hat{\pi}_t(z)|^2 |X] \to \mathcal{L}(z)$ in probability when $n,p,q \to \infty$ such that $p/n \to c\phi$, and is equal to

    \begin{align}
    \label{eq:leverage_gen_ridge}
    \begin{split}
        \mathcal{L}(z) \equiv \mathcal{L}(z,c\phi, \mu, \iota_{is}) &:= \Bigg[ \|\omega_{is}\|^2 \int\frac{z^2 c\phi m_1(-z) + \lambda^2[1-c\phi+czm(-z)]^2}{(z + \lambda[1-c\phi+czm(-z)])^2} \dd \iota_{is}(\lambda)  \\
        & \quad + c\phi \int \frac{\lambda^2[1 - c\phi + c\phi z^2 m'(-z)]}{\big(z+ \lambda[1-c\phi+c\phi z m(-z)]\big)^2} \dd \mu (\lambda)\Bigg],
    \end{split}
    \end{align}

    with 

    \begin{align*}
    %\label{eq:m_1}
        m_1(z) \equiv m_1(z;c\phi,\mu) := \frac{\int \frac{\lambda^2[1-c\phi-c\phi z m(z)]}{(\lambda[1-c\phi - c \phi z m(z)])^2} \dd \mu(\lambda)}{1 - c \phi \int \frac{z\lambda}{(\lambda[1-c\phi - c \phi z m(z)])^2}\dd \mu(\lambda)},
    \end{align*}

    and

    \begin{align}
    \label{eq:K1_K2_gen_ridge}
    \begin{split}
        \mathcal{K}(z) &\equiv \mathcal{K}(z;c\phi,\mu,\varpi_1^z, \varpi_2^z) := \|\omega_{is}\|^2 \Big( \int \lambda \dd \varpi_1^z(\lambda) + c \phi z^2 m_1(-z) \int \lambda \dd \varpi_2^z(\lambda) \Big).
    \end{split}
    \end{align}
    
\vspace{10pt}

\textbf{Vanishing regularization}:

    The proof consists in adjusting the limiting terms depending on $z$ to accommodate the scenario of vanishing regularization.

    % First, note that Eq. \eqref{eq:operator_norm_sqterm} still holds when $z \to 0$. Indeed, we have 

    % \begin{align*}
    %     \lim_{z \to 0^+} \|n^{-1} (\hat{\Sigma}_x+zI)^{-1}\hat{\Sigma}_x(\hat{\Sigma}_x+zI)^{-1}\|_{op} &= \|n^{-1} \hat{\Sigma}_x^{+}\hat{\Sigma}_x\hat{\Sigma}_x^{+}\|_{op} \\
    %     & = \frac{1}{n} \cdot \frac{1}{\lambda_{\min}(\hat{\Sigma}_x)} \\
    %     & \leq \frac{1}{n} \cdot \frac{1}{\|\Sigma_x\|_{op}\lambda_{\min}(n^{-1}Z'Z)} \\
    %     & \leq \frac{\kappa(\tau)}{n} \\
    %     & \xrightarrow[]{n \to \infty} 0
    % \end{align*}

    % where the last inequality follows by \citeauthor{bai2008limit}'s law (\citeyear{bai2008limit}) and Assumptions (\ref{ass:interpolationtreshold}-\ref{ass:mineigen}). 
    
    The decomposition of the second moment of $\hat{\beta}$ remains the same, that is

    \begin{align*}
        \mathbb{E}\left[ \hat{\beta} \hat{\beta}'\Big| X  \right] &= \lim_{z \to 0^+} \Big\{\Tilde{\omega}_{is} \Tilde{\omega}_{is}' + z^2 R \Tilde{\omega}_{is} \Tilde{\omega}_{is}' R - z R\Tilde{\omega}_{is} \Tilde{\omega}_{is}' - z \Tilde{\omega}_{is} \Tilde{\omega}_{is}'R +\frac{1}{n} R\hat{\Sigma}_xR \Big\} + o_n(1)
    \end{align*}

    Hence, it suffices to determine the limit of the three terms $T_1$, $T_2$ and $T_3$ as in the proof of Proposition \ref{prop:volstratret} when we first consider a vanishing penalty level.  

    \underline{Analysis of $T_1$}:
    Recall that $T_1 = \big( 1+\|\omega_{oos}\|^2 \big)\tr(\Sigma_x^{1/2}\mathbb{E}\left[ \hat{\beta} \hat{\beta}'\Big| X  \right] \Sigma_x^{1/2})$. We focus on the non-constant factor,

    \begin{align*}
        \tr(\Sigma_x^{1/2}\mathbb{E}\left[ \hat{\beta} \hat{\beta}'\Big| X  \right] \Sigma_x^{1/2}) &= \|\omega_{is}\|^2 + \lim_{z \to 0^+} \Big\{ z^2 \omega_{is}' \Sigma_x^{-1/2}R\Sigma_xR\Sigma_x^{-1/2}\omega_{is} - 2z \omega_{is}'\Sigma_x^{1/2} R \Sigma_x^{-1/2} \omega_{is} \\
        & \quad + \frac{1}{n} R\hat{\Sigma}_xR \Big\} + o_n(1)
    \end{align*}

    Substituting $\beta_{oos}$ and $\beta_{is}$ by $\Sigma_x^{-1/2} \omega_{is}$ in the definition of $G_n(z)$, we obtain by Corollary \ref{cor:asycvridgeless},

    \begin{align*}
        - 2 \lim_{z \to 0^+} z \omega_{is}'\Sigma_x^{1/2} R \Sigma_x^{-1/2} \omega_{is} \xrightarrow[]{a.s.} -2\omega_{is}' ( I + c \phi s_0 \Sigma_x)^{-1} \omega_{is} 
    \end{align*}

    Similarly, replacing $\beta$ by $\Sigma_x^{-1/2} \omega_{is}$ in [\cite{hastie2020surprises}, Theorem 2 \& 3], we have 

    \begin{align*}
       & \lim_{z \to 0^+} z^2 \omega_{is}' \Sigma_x^{-1/2}R\Sigma_xR\Sigma_x^{-1/2}\omega_{is} \\
        &\quad \xrightarrow[]{a.s.} [1+c\phi s_0'] \omega_{is}'(I+c\phi s_0\Sigma_x)^{-2} \omega_{is}
    \end{align*}

    with 

    \begin{align*}
        s_0' \equiv s_0'(c\phi,\mu) := \frac{\int \frac{\lambda^2}{(1+c\phi s_0 \lambda)^2} \dd \mu(\lambda)}{\int \frac{\lambda}{(1+c\phi s_0 \lambda)^2}\dd \mu(\lambda)}.
    \end{align*}

    Further, remark that 

    \begin{align*}
        &I + [1+c\phi s_0'] (I+c\phi s_0\Sigma_x)^{-2} -2 ( I + c \phi s_0 \Sigma_x)^{-1} \\
        = & I + (I+c\phi s_0\Sigma_x)^{-2}\Big[c\phi ( s_0'I - 2 s_0 \Sigma_x) - I\Big]
    \end{align*}

    Finally, from the proof of [\cite{hastie2020surprises}, Theorem 2 \& 3], we have 

    \begin{align*}
        \lim_{z \to 0^+} \frac{1}{n}\tr(R\hat{\Sigma}_xR) \xrightarrow[]{a.s.} c\phi \frac{\int \frac{\lambda^2}{(1+s_0c\phi \lambda)^2} \dd \mu(\lambda)}{\int \frac{\lambda}{(1+s_0c\phi \lambda)^2} \dd \mu(\lambda)},
    \end{align*}

    as $n,p \to \infty$. Put together, we have 

    \begin{align}
    \label{eq:T1_lim_gen_ridgeless}
        T_1 \xrightarrow[]{a.s.} &\big( 1+\|\omega_{oos}\|^2 \big) \Bigg[ \|\omega_{is}\|^2 \int \Big(1+\frac{c\phi ( s_0'I - 2 s_0 \lambda) - 1}{(1+c\phi s_0\lambda)^{2}} \Big) \dd \iota_{is}(\lambda) + c\phi \frac{\int \frac{\lambda^2}{(1+s_0c\phi \lambda)^2} \dd \mu(\lambda)}{\int \frac{\lambda}{(1+s_0c\phi \lambda)^2} \dd \mu(\lambda)} \Bigg]
    \end{align}

    \underline{Analysis of $T_2$}:

    Recall that $T_2 = 2 \omega_{oos}' \Sigma_x^{1/2}\mathbb{E}\left[ \hat{\beta} \hat{\beta}'\Big| X  \right] \Sigma_x^{1/2} \omega_{oos}$. Using the decomposition of $\mathbb{E}\left[ \hat{\beta} \hat{\beta}'\Big| X  \right]$, we have 

    \begin{align*}
        T_2 &= 2 \big( \omega_{is}'\omega_{oos} - \lim_{z \to 0^+}z \omega_{oos}' \Sigma_x^{1/2} R \Sigma_x^{-1/2}\omega_{is}\big)^2 + \lim_{z \to 0^+}\frac{2}{n}\omega_{oos}' \Sigma_x^{1/2}R\hat{\Sigma}_xR\Sigma_x^{1/2} \omega_{oos} + o_n(1)
    \end{align*}

    Substituting $\beta_{oos}$ and $\beta_{is}$ by $\Sigma_x^{-1/2} \omega_{oos}$ and $\Sigma_x^{-1/2} \omega_{is}$, respectively, in the definition of $G_n(z)$, we obtain by Corollary \ref{cor:asycvridgeless},

    \begin{align*}
        - \lim_{z \to 0^+}z \omega_{oos}' \Sigma_x^{1/2} (\hat{\Sigma}_x+zI)^{-1} \Sigma_x^{-1/2}\omega_{is} \xrightarrow[]{a.s.} - \omega_{oos}'( I + c \phi s_0 \Sigma_x)^{-1} \omega_{is}.
    \end{align*}

    Regarding the second term, in the rhs of $T_2$, we have that for any $D,\varepsilon>0$, there exists $C := C(\varepsilon,D)$ such that 

    \begin{align*}
        &\Bigg| \lim_{z \to 0^+}\frac{1}{n} \omega_{oos}' \Sigma_x^{1/2} R\hat{\Sigma}_xR\Sigma_x^{1/2} \omega_{oos} - \frac{c \phi \|\omega_{oos}\|_2^2}{p} \cdot \frac{\int \frac{\lambda^2}{(1+\lambda c\phi s_0)^2} \dd \hat{\iota}_{oos} (\lambda)}{\int \frac{\lambda}{(1+\lambda c\phi s_0)^2} \dd \mu(\lambda)} \Bigg| \leq \frac{C \|\omega_{oos}\|_2^2}{n^{\frac{1}{7}}}
    \end{align*}

    holds with probability at least $1 - Cn^{-D}$. Notice that the denominator of the deterministic term has the following lower bound

    \begin{align*}
        \int \frac{\lambda}{(1+\lambda c\phi s_0)^2} \dd \mu (\lambda) &\geq \lambda_{\min}(\Sigma_x(I + c \phi s_0 \Sigma_x)^{-2}) \\
        & \geq \lambda_{\min}(\Sigma_x) \|(I + c \phi s_0 \Sigma_x)^{2}\|_{op} \\
        & > \tau,
    \end{align*}

where the last inequality follows from Assumption \ref{ass:mineigen} and $\|(I + c \phi s_0 \Sigma_x)^{2}\|_{op} \geq 1$. Besides, Assumption \ref{ass:opbound} implies that

\begin{align*}
    \int \frac{\lambda^2}{(1+\lambda c\phi s_0)^2} \dd \hat{\iota}_d (\lambda) \leq \|\Sigma_x^2\|_{op} \leq \tau^{-2} 
\end{align*}

All in all, we obtain

\begin{align*}
    \frac{c \phi \|\omega_{oos}\|_2^2}{p} \cdot \frac{\int \frac{\lambda^2}{(1+\lambda c\phi s_0)^2} \dd \mu (\lambda)}{\int \frac{\lambda}{(1+\lambda c\phi s_0)^2} \dd \hat{\iota}_d (\lambda)} \le \frac{c \phi \tau^{-3} \|\omega_{oos}\|_2^2}{p} \to 0.
\end{align*}

Therefore, proceeding as in [\cite{hastie2020surprises}, Section A.4], we obtain

    \begin{align*}
        \lim_{z \to 0^+}\frac{1}{n} \omega_{oos}' \Sigma_x^{1/2} R\hat{\Sigma}_xR\Sigma_x^{1/2} \omega_{oos} \xrightarrow[]{a.s.} 0
    \end{align*}

    as $n,p \to \infty$.

    After some rearrangements, we finally get

    \begin{align}
    \label{eq:T2_lim_gen_ridgeless}
        T_2 \xrightarrow[]{a.s.} 2 \|\omega_{is}\|^2 \|\omega_{oos}\|^2 \Big(\int \frac{c \phi s_0 \lambda}{1 + c \phi s_0 \lambda} \dd\iota_d\Big)^2 = \big( \mathcal{E}^{(d)}(z;c\phi,\iota_d) \big)^2,
    \end{align}

    where the equality follows from Proposition \ref{prop:asyret_general}.

    \underline{Analysis of $T_3$}:

    Recall that $T_3 = \omega_{oos}'\diag(\Sigma_x^{1/2}\mathbb{E}\left[ \hat{\beta} \hat{\beta}'\Big| X  \right] \Sigma_x^{1/2})\omega_{oos}$. Decomposing $\mathbb{E}\left[ \hat{\beta} \hat{\beta}'\Big| X  \right]$ once again, we have 

    \begin{align*}
        T_3 &= \lim_{z\to0^+} \Big\{\|\diag(\omega_{oos}) \omega_{is}\|^2 + z^2 \omega_{is}'\Sigma_x^{-1/2}R\Sigma_x^{1/2}\diag(\omega_{oos})^2 \Sigma_x^{1/2} R \Sigma_x^{-1/2}\omega_{is} \\
        &\quad -2z\omega_{is}'\diag(\omega_{oos})^2 \Sigma_x^{1/2} R \Sigma_x^{-1/2}\omega_{is} + \frac{1}{n} \tr( \diag(\omega_{oos})^2 \Sigma_x^{1/2}R\hat{\Sigma}_xR\Sigma_x^{1/2}) \Big\} + o_n(1)
    \end{align*}

    Slightly modifying the proof of [\cite{hastie2020surprises}, Theorem 2 \& 3] (i.e., replacing $\beta$ by $\Sigma_x^{-1/2} \omega_{is}$ and $\Sigma$ by $\Sigma_x^{1/2}\diag(\omega_{oos})^2 \Sigma_x^{1/2}$) yields that  

    \begin{align*}
       & \lim_{z\to0^+} z^2 \omega_{is}'\Sigma_x^{-1/2}(\hat{\Sigma}_x+zI)^{-1}\Sigma_x^{1/2}\diag(\omega_{oos})^2 \Sigma_x^{1/2} (\hat{\Sigma}_x+zI)^{-1} \Sigma_x^{-1/2}\omega_{is} \\
        &\quad \xrightarrow[]{a.s.} [1+c\phi s_0'] \omega_{is}'(I+c\phi s_0\Sigma_x)^{-1} \diag(\omega_{oos})^2 (I+c\phi s_0\Sigma_x)^{-1} \omega_{is}
    \end{align*}

    In addition, substituting $\beta_{oos}$ and $\beta_{is}$ by $\Sigma_x^{-1/2} \diag(\omega_{oos})^2 \omega_{is}$ and $\Sigma_x^{-1/2} \omega_{is}$, respectively, in the definition of $G_n(z)$, we obtain by Corollary \ref{cor:asycvridgeless},

    \begin{align*}
        &- \lim_{z\to0^+}z\omega_{is}'\diag(\omega_{oos})^2 \Sigma_x^{1/2} (\hat{\Sigma}_x+zI)^{-1} \Sigma_x^{-1/2}\omega_{is} \\
        &\quad \xrightarrow[]{a.s.} -\omega_{is}' \diag(\omega_{oos})^2 ( I + c \phi s_0 \Sigma_x)^{-1} \omega_{is}.
    \end{align*}

    Regarding the last term, in the rhs of $T_3$, we have that for any $D,\varepsilon>0$, there exists $C := C(\varepsilon,D)$ such that 

    \begin{align*}
        &\Bigg| \lim_{z \to 0^+}\frac{1}{n} \tr( \diag(\omega_{oos})^2 \Sigma_x^{1/2}R\hat{\Sigma}_xR\Sigma_x^{1/2}) - \frac{c \phi}{p} \cdot \frac{\tr(\diag(\omega_{oos})^2 (I+c\phi s_0 \Sigma_x)^{-2} \Sigma_x^2)}{\frac{1}{p} \tr( (I+c\phi s_0 \Sigma_x)^{-2} \Sigma_x)}  \Bigg| \leq \frac{C \|\omega_{oos}\|_2^2}{n^{\frac{1}{7}}}
    \end{align*}

    holds with probability at least $1 - Cn^{-D}$. Notice that the denominator of the deterministic term has the following lower bound

    \begin{align*}
        \frac{1}{p} \tr( (I+c\phi s_0 \Sigma_x)^{-2} \Sigma_x) &\geq \lambda_{\min}((I+c\phi s_0 \Sigma_x)^{-1})^2 \lambda_{\min} (\Sigma_x) > \tau,
    \end{align*}

where the last inequality follows from Assumption \ref{ass:mineigen} and $\lambda_{\min}((I+c\phi s_0 \Sigma_x)^{-1}) \geq 1$. Besides, Assumption \ref{ass:opbound} implies that

\begin{align*}
    \tr(\diag(\omega_{oos})^2 (I+c\phi s_0 \Sigma_x)^{-2} \Sigma_x^2) &\leq \|\omega_{oos}\|_2^2 \|(I+c\phi s_0 \Sigma_x)^{-1} \Sigma_x\|_{op}^2 \\
    & \leq \|\omega_{oos}\|_2^2 \|(I+c\phi s_0 \Sigma_x)^{-1} \|_{op} \|\Sigma_x\|_{op}^2 \\
    & \leq \Big( \frac{\|\omega_{oos}\|_2}{\tau}\Big)^2
\end{align*}

All in all, we obtain

\begin{align*}
    \frac{c \phi}{p} \cdot \frac{\tr(\diag(\omega_{oos})^2 (I+c\phi s_0 \Sigma_x)^{-2} \Sigma_x^2)}{\frac{1}{p} \tr( (I+c\phi s_0 \Sigma_x)^{-2} \Sigma_x)}  \leq \frac{c \phi}{p} \cdot \|\omega_{oos}\|_2^2 \tau^{-3} \to 0.
\end{align*}

Therefore, proceeding as in [\cite{hastie2020surprises}, Section A.4], we obtain

    \begin{align*}
        \lim_{z \to 0^+}\frac{1}{n} \tr( \diag(\omega_{oos})^2 \Sigma_x^{1/2}R\hat{\Sigma}_xR\Sigma_x^{1/2}) \xrightarrow[]{a.s.} 0
    \end{align*}

    as $n,p \to \infty$.

    Rearranging the limiting terms, we get 

    \begin{align}
    \label{eq:T3_lim_gen_ridgeless}
    \begin{split}
        T_3 &\xrightarrow[]{a.s.} \omega_{is}' \big(I - ( I + c \phi s_0 \Sigma_x)^{-1} \big) \diag(\omega_{oos})^2 \big(I - ( I + c \phi s_0 \Sigma_x)^{-1} \big)\omega_{is} \\
        & \quad + c\phi s_0' \omega_{is}'(I+c\phi s_0\Sigma_x)^{-1} \diag(\omega_{oos})^2 (I+c\phi s_0\Sigma_x)^{-1} \omega_{is}\\
        & = \|\omega_{is}\|^2 \Big( \int \lambda \dd \varpi_1^0(\lambda) + c\phi s_0' \int \lambda \dd \varpi_2^0(\lambda) \Big)
    \end{split}
    \end{align}

    \underline{Conclusion}:

    Recall that from Eq. \eqref{eq:secondmomentret_decomp},

    \begin{align*}
        \mathbb{E}\left[ \Big(r^{(s)}_{t+1}(z) \Big)^2 \Big| X  \right] = T_1+T_2+(m_4-3)T_3.
    \end{align*}

    The desired result follows from taking the limits of the three terms in the rhs, which are given in Eqs. \eqref{eq:T1_lim_gen_ridgeless}, \eqref{eq:T2_lim_gen_ridgeless} and \eqref{eq:T3_lim_gen_ridgeless}, respectively. That is, in the ridgeless scenario, we have

    \begin{align*}
        \mathbb{E}\left[ \Big(r^{(s)}_{t+1}(z) \Big)^2 \Big| X  \right] &\xrightarrow[\substack{z\to 0^+ \\\mathstrut n,p,q \rightarrow \infty \\ p/n \rightarrow c\phi }]{\mathds{P}} \quad (1+\|\omega_{oos}\|^2) \mathcal{L} + 2 \big( \mathcal{E}^{(d)}(z;c\phi,\iota_d) \big)^2 + (m_4-3)\mathcal{K}
    \end{align*}

    where $\mathcal{L}$ is the (limiting) leverage of the strategy, that is 

    \begin{align}
    \label{eq:leverage_ridgeless}
        \mathcal{L} \equiv \mathcal{L}(c\phi, \mu, \iota_{is}) := \|\omega_{is}\|^2 \int \Big(1+\frac{c\phi ( s_0'I - 2 s_0 \lambda) - 1}{(1+c\phi s_0\lambda)^{2}} \Big) \dd \iota_{is}(\lambda)+ c\phi s_0',
    \end{align}

    with 

    \begin{align*}
        s_0' \equiv s_0'(c\phi,\mu) := \frac{\int \frac{\lambda^2}{(1+c\phi s_0 \lambda)^2} \dd \mu(\lambda)}{\int \frac{\lambda}{(1+c\phi s_0 \lambda)^2}\dd \mu(\lambda)},
    \end{align*}

    and

    \begin{align}
    \label{eq:K1_K2_ridgeless}
    \begin{split}
        \mathcal{K} &\equiv \mathcal{K}(c\phi,\mu,\varpi_1^0, \varpi_2^0) := \|\omega_{is}\|^2 \Big( \int \lambda \dd \varpi_1^0(\lambda) + c\phi s_0' \int \lambda \dd \varpi_2^0(\lambda) \Big).
    \end{split}
    \end{align} 

\end{proof}

\vspace{10pt}

\textbf{Proof of Proposition \ref{prop:volstratret}}: 

The proof simply follows from decomposing the variance with the Koenig-Huygens formula and applying the limiting results provided in Propositions \ref{prop:asyret_general} and \ref{prop:secondmomstratret}.

\qed

\subsection{Misspecified model (Isotropic features)}
\label{subsub:misspecifiedisotropic}

\noindent \textbf{Proof of Proposition \ref{prop:expretdriftisomis}}:

$\Sigma = I$, then $\Sigma_{xw} = 0$. Moreover, as $X$ and $W$ are independent and $\E[W] = 0$, we have 

\begin{equation*}
    \E \Big[\hat{\Sigma}_{xw}\Big| X \Big] = \frac{1}{n} X' \E \big[W\big| X \big] = \frac{1}{n} X' \E \big[W \big] = 0
\end{equation*}

Consequently,

\begin{align*}
    \mathcal{M}(z) & = \big( \beta_{oos}' \Sigma_x + \theta_{oos}' \Sigma_{xw}' \big) (zI + \hat{\Sigma}_x)^{-1} \E \Big[\hat{\Sigma}_{xw}\Big| X \Big] \theta_{is} \\
    & \quad + \theta_{oos}' \Sigma_{xw}' (zI + \hat{\Sigma}_x)^{-1} \big( \hat{\Sigma}_x \beta_{is} +  \frac{1}{n} X' \xi \big) \\
    & = 0.
\end{align*}

Finally, we end up in the same setting as Proposition \ref{prop:isotropic} or \ref{prop:isotropicridgeless} where the complexity ratio is now the misspecified one, i.e. $c\phi$. \qed

\vspace{10mm}

\begin{proposition}
\label{prop:secondmomstratret_iid}
    Let $z>0$ and Assumptions (\ref{ass:iddistcov})-(\ref{ass:complexity}) and (\ref{ass:opbound})-(\ref{ass:esvd_diag_omega_oos}) hold. Denote $m_4 = \E[z_i^4]$. Then, 

    \begin{align*}
        \mathbb{E}\left[ \big( r^{(s)}_{t+1}(z)\big)^2 \Big| X  \right] &\xrightarrow[\substack{\mathstrut n,p,q \rightarrow \infty \\ p/n \rightarrow c\phi }]{\mathds{P}}  (1+S_{oos}) \mathcal{L} + 2( \hspace{0.4em} \underbrace{f(z;c\phi) \langle \beta_{oos}, \beta_{is} \rangle}_{=\mathcal{E}^{(d)}(z;c\phi)} \hspace{0.4em} )^2 + (m_4-3) f(z;c\phi) \|\beta_{is} \circ \beta_{oos}\|^2 
    \end{align*}

    where $S_{oos} = \|\beta_{oos}\|^2+\|\theta_{oos}\|_2^2$ is the amount of signal contained in the out-of-sample return and $\mathcal{L}$ is the (limiting) leverage of the strategy, that is, $\E[|\hat{\pi}_t(z)|^2 |X] \to \mathcal{L}$ in probability when $n,p,q \to \infty$ such that $p/n \to c\phi$. The latter quantity is defined in the Appendix in Eq. \eqref{eq:leverage_ridge_iid} for ridge regularization and in Eq. \eqref{eq:leverage_ridgeless_iid} when the penalty level is vanishing. Recall that $f$ is defined in Eq. \eqref{eq:f_isotropic} for the ridge scenario and in Eq. \eqref{eq:isotropicridgeless} for the ridgeless case.
\end{proposition}

\begin{proof}

\textbf{Ridge regularization}:
    For notational simplicity, we drop the "is" subscript for the vector of estimated coefficients, i.e. $\hat{\beta} \equiv \hat{\beta}_{is}$. The second moment of the strategy's return is given by

    \begin{align*}
        \mathbb{E}\left[ \Big(r^{(s)}_{t+1}(z) \Big)^2 \Big| X  \right] &= \mathbb{E}\left[ \Big(\hat{\beta}'x_tr_{t+1} \Big)^2 \Big| X  \right] \\
        &= \mathbb{E}\left[ \hat{\beta}'x_tr_{t+1} r_{t+1} x_t' \hat{\beta} \Big| X  \right] \\
        &= \mathbb{E}\left[ \hat{\beta}'x_t (x_t'\beta_{oos}+w_t'\theta_{oos}+e_t) (\beta_{oos}'x_t+\theta_{oos}'w_t+e_t) x_t' \hat{\beta} \Big| X  \right] \\
        &=\mathbb{E}\left[ \hat{\beta}'x_t \big[(x_t'\beta_{oos}+w_t'\theta_{oos})(\beta_{oos}'x_t+\theta_{oos}'w_t)+1 \big] x_t' \hat{\beta} \Big| X  \right] \\
        &= \mathbb{E}\left[ \hat{\beta}'x_t x_t' \beta_{oos} \beta_{oos}' x_t x_t' \hat{\beta} \Big| X  \right] + \mathbb{E}\left[ \hat{\beta}'x_t w_t' \theta_{oos} \theta_{oos}' w_t x_t' \hat{\beta} \Big| X  \right] + \mathbb{E}\left[ \hat{\beta}' \hat{\beta} \Big| X  \right] \\
        &= \beta_{oos}' \mathbb{E}\left[ x_t x_t' \hat{\beta} \hat{\beta}' x_t x_t' \Big| X  \right] \beta_{oos} + \theta_{oos}' \mathbb{E}\left[ w_t x_t' \hat{\beta} \hat{\beta}' x_t w_t' \Big| X  \right] \theta_{oos} + \mathbb{E}\left[ \hat{\beta}' \hat{\beta} \Big| X  \right]
    \end{align*}

    Direct calculation yields that, for any p.s.d. matrix $A \in \R^{p \times p}$ and i.i.d. random vectors $z_1 \in \R^q$, $z_2 \in \R^p$, with mean $0$ and unit variance, we have 

    \begin{align*}
        \mathbb{E}\left[ z_1 z_2' A z_2 z_1'  \right] = \tr(A)I.
    \end{align*}

    Choosing $A = \hat{\beta} \hat{\beta}'$, $z_1 = w_t$ and $z_2 = x_t$ gives 

    \begin{align*}
        \theta_{oos}' \mathbb{E}\left[ w_t x_t' \hat{\beta} \hat{\beta}' x_t w_t' \Big| X  \right] \theta_{oos} = \|\theta_{oos}\|_2^2 \tr(\mathbb{E}[ \hat{\beta} \hat{\beta}' | X  ])   
    \end{align*}

    Moreover, using the decomposition in Eq. \eqref{eq:decomp_zzAzz}, we get 

    \begin{align}
    \label{eq:secondmomentret_decomp_iid}
    \begin{split}
        \mathbb{E}\left[ \Big(r^{(s)}_{t+1}(z) \Big)^2 \Big| X  \right] &= (1+\|\beta_{oos}\|^2+\|\theta_{oos}\|_2^2) \tr(\mathbb{E}[ \hat{\beta} \hat{\beta}' | X  ]) + 2 \beta_{oos}' \mathbb{E}\left[ \hat{\beta} \hat{\beta}'\Big| X  \right] \beta_{oos} \\
        & \quad + (m_4-3)\beta_{oos}'\diag(\mathbb{E}\left[ \hat{\beta} \hat{\beta}'\Big| X  \right] )\beta_{oos} \\
        & = T_4 + T_5 + (m_4-3) T_6
    \end{split}
    \end{align}

    The key term to study is the second moment of the regression coefficients vector conditional on the observed features, that is 
    
    \begin{align}
    \label{eq:condexp_betabeta}
        \mathbb{E}\left[ \hat{\beta} \hat{\beta}'\Big| X  \right] &= \mathbb{E}\left[ \frac{1}{n^2} (\hat{\Sigma}_x+zI)^{-1}X'yy'X(\hat{\Sigma}_x+zI)^{-1}\Big| X  \right] 
    \end{align}

    Over the unobserved training covariates, $W$, the second moment of the in-sample labels is given by

    \begin{align*}
        \E_{W}[yy'] &= X \beta_{is} \beta_{is}' X' + \E[W \theta_{is} \theta_{is}' W'] + \xi \xi' + 2 X \beta_{is} \xi'.
    \end{align*}

    Observe that $\E[(W \theta_{is} \theta_{is}' W')_{ij}] = \E[\theta_{is}' w_i w_j' \theta_{is}]$. Hence, $\E[W \theta_{is} \theta_{is} W'] = \|\theta_{is}\|_2^2 I$ and 

    \begin{align*}
        \E_{W}[yy'] &= X \beta_{is} \beta_{is}' X' + \|\theta_{is}\|_2^2I + \xi \xi' + 2 X \beta_{is} \xi'.
    \end{align*}

    Injecting this in Eq. \eqref{eq:condexp_betabeta} yields

    \begin{align*}
        \mathbb{E}\left[ \hat{\beta} \hat{\beta}'\Big| X  \right] &=  R\hat{\Sigma}_x \beta_{is} \beta_{is}' \hat{\Sigma}_xR + \frac{\|\theta_{is}\|_2^2}{n} R\hat{\Sigma}_xR + \frac{1}{n^2}RX'(\xi\xi'+2X\beta_{is}\xi')XR,
    \end{align*}

    where $R := (\hat{\Sigma}_x+zI)^{-1}$. Applying the law of large numbers, we have $\frac{2}{n^2}RX'X\beta_{is}\xi'XR \to 0$ in $L^2$. Further, noticing that $n^{-2}\E[X'\xi \xi'X] = n^{-1} \hat{\Sigma}_x$, we have 

    \begin{align*}
        \mathbb{E}\left[ \hat{\beta} \hat{\beta}'\Big| X  \right] &\to  R\hat{\Sigma}_x \beta_{is} \beta_{is}' \hat{\Sigma}_xR + \frac{1+\|\theta_{is}\|_2^2}{n} R\hat{\Sigma}_xR,
    \end{align*}

    % in probability. 
    
    % We first focus on the second term in the rhs. If $z>0$, we have 

    % \begin{align}
    % \label{eq:operator_norm_sqterm}
    %     \|n^{-1} (\hat{\Sigma}_x+zI)^{-1}\hat{\Sigma}_x(\hat{\Sigma}_x+zI)^{-1}\|_{op} \leq \frac{1}{2zn} \to 0
    % \end{align}

    % as $n \to \infty$. Hence, $n^{-1}(1+\|\theta_{is}\|_2^2) (\hat{\Sigma}_x+zI)^{-1}\hat{\Sigma}_x(\hat{\Sigma}_x+zI)^{-1}$ is almost surely zero in the large $n$ limit. 

    in probability. Rewriting $\hat{\Sigma}_x = \hat{\Sigma}_x + zI - zI$, we can decompose the first term in the rhs as

    \begin{align*}
        R\hat{\Sigma}_x \beta_{is}\beta_{is}' \hat{\Sigma}_xR &= \beta_{is}\beta_{is}' + z^2 R \beta_{is}\beta_{is}' R - z R\beta_{is}\beta_{is}' - z \beta_{is}\beta_{is}'R
    \end{align*}

    Thus, we can come back to \eqref{eq:secondmomentret_decomp_iid} and plug our decomposition of $\mathbb{E}\left[ \hat{\beta} \hat{\beta}'\Big| X  \right]$ into the different terms in the rhs. 
    
    % Before, we introduce $O_n$, a quantity that is uniformly bounded as $|O_n(z)| \leq C(z)n^{-1}$. 

    \underline{Analysis of $T_4$}:
    Recall that $T_4 = (1+\|\beta_{oos}\|^2+\|\theta_{oos}\|_2^2) \tr(\mathbb{E}[ \hat{\beta} \hat{\beta}' | X  ])$. We focus on the non-constant factor,

    \begin{align*}
        \tr(\mathbb{E}\left[ \hat{\beta} \hat{\beta}'\Big| X  \right]) &= \tr(\beta_{is}\beta_{is}') + z^2 \tr( R \beta_{is} \beta_{is}' R) - 2z \tr( R \beta_{is} \beta_{is}') + \frac{1+\|\theta_{is}\|_2^2}{n} \tr(R\hat{\Sigma}_xR) + o_n(1) \\
        &= \|\beta_{is}\|^2 + z^2 \beta_{is}' R^2\beta_{is} - 2z \beta_{is}'R \beta_{is} + \frac{1+\|\theta_{is}\|_2^2}{n} \tr(R\hat{\Sigma}_xR) + o_n(1)
    \end{align*}

    Substituting $\beta_{oos}$ by $\beta_{is}$ in the definition of $F_n(z)$, we obtain by Lemma \ref{lem:asympcv},

    \begin{align*}
        - 2z \beta_{is}' R \beta_{is} \xrightarrow[]{a.s.} - \frac{2z\|\beta_{is}\|_2^2}{1-c\phi+z [1+c\phi m(-z)]}
    \end{align*}

    Similarly, replacing $\beta$ by $\beta_{is}$ in [\cite{hastie2020surprises}, Theorem 5 \& 6], we have 

    \begin{align*}
       & z^2 \beta_{is}'R^2\beta_{is} \xrightarrow[]{a.s.} \frac{z^2 [1+c\phi m_1(-z)]}{\big(1-c\phi+z [1+c\phi m(-z)]\big)^2}\|\beta_{is}\|_2^2
    \end{align*}

    with

    \begin{align*}
        m_1(z) \equiv m_1(z;c\phi) := \frac{1-c\phi-c\phi z m(z)}{(1-c\phi - c \phi z m(z)-z)^2 - c\phi z}.
    \end{align*}

    Finally, note that

    \begin{align*}
        \frac{1}{n}\tr(R\hat{\Sigma}_xR) = c \phi \frac{\partial}{\partial z} \Big\{ \frac{z}{p} \tr(R) \Big\}.
    \end{align*}

    Again, from the proof of [\cite{hastie2020surprises}, Theorem 5 \& 6], we have 

    \begin{align*}
        \frac{1}{n}\tr(R\hat{\Sigma}_xR) \xrightarrow[]{a.s.} c\phi \frac{1 - c\phi + c\phi z^2 m'(-z)}{\big(1-c\phi+z [1+c\phi m(-z)]\big)^2}.
    \end{align*}

    Put together, we have 

    \begin{align}
    \label{eq:T4_lim}
    \begin{split}
        T_4 \xrightarrow[]{a.s.} &(1+\|\beta_{oos}\|^2+\|\theta_{oos}\|_2^2) \Bigg[\frac{z^2 c\phi m_1(-z) + [1-c\phi+czm(-z)]^2}{\big(1-c\phi+z [1+c\phi m(-z)]\big)^2} \|\beta_{is}\|_2^2 \\
        & \quad + (1+\|\theta_{is}\|_2^2)c\phi\frac{1 - c\phi + c\phi z^2 m'(-z)}{\big(1-c\phi+z [1+c\phi m(-z)]\big)^2} \Bigg] \\
        & = (1+\|\beta_{oos}\|^2+\|\theta_{oos}\|_2^2) \Bigg[f(z;c\phi) \|\beta_{is}\|_2^2 + (1+\|\theta_{is}\|_2^2)c\phi\frac{1 - c\phi + c\phi z^2 m'(-z)}{\big(1-c\phi+z [1+c\phi m(-z)]\big)^2} \Bigg]
    \end{split}
    \end{align}

    where the equality follows from simple algebra and $f$ is defined in Eq. \eqref{eq:f_isotropic}. 

    \underline{Analysis of $T_5$}:

    Recall that $T_5 = 2 \beta_{oos}' \mathbb{E}\left[ \hat{\beta} \hat{\beta}'\Big| X  \right] \beta_{oos}$. Using the decomposition of $\mathbb{E}\left[ \hat{\beta} \hat{\beta}'\Big| X  \right]$, we have 

    \begin{align*}
        T_5 = 2 \big( \beta_{is}'\beta_{oos} - z \beta_{oos}' R\beta_{is}\big)^2 +\frac{1+\|\theta_{is}\|_2^2}{n} \beta_{oos}'R\hat{\Sigma}_xR\beta_{oos} + o_n(1)
    \end{align*}

    By Lemma \ref{lem:asympcv}, we obtain

    \begin{align*}
        - z \beta_{oos}' R \beta_{is} \xrightarrow[]{a.s.} -\frac{z}{1-c\phi+z[1+c\phi m(-z)]} \langle \beta_{oos}, \beta_{is} \rangle.
    \end{align*}

    For the second term in the rhs of $T_5$, we can replace the trace in [\cite{hastie2020surprises}, Section A.3] by the quadratic form in $\beta_{oos}$, i.e., $\beta \mapsto \beta'R\hat{\Sigma}_xR\beta$, and proceed as therein. Indeed, [\cite{knowles2017anisotropic}, Theorem 3.16 (i)] still holds in this setting. Thus, for any $D,\varepsilon>0$, there exists $C := C(\varepsilon,D)$ such that 

    \begin{align*}
        \Bigg| \frac{1}{n} \beta_{oos}'R\hat{\Sigma}_xR\beta_{oos} - \frac{c\phi}{p} \cdot \frac{1 - c\phi + c\phi z^2 m'(-z)}{\big(1-c\phi+z [1+c\phi m(-z)]\big)^2}\|\beta_{oos}\|_2^2 \Bigg| \leq \frac{C \| \beta_{oos}\|_2^2}{z^2 n^{\frac{1-\varepsilon}{2}}}
    \end{align*}

    holds with probability at least $1 - Cn^{-D}$. Notice that $c\phi p^{-1} \big[1 - c\phi + c\phi z^2 m'(-z)\big]/\big[1-c\phi+z [1+c\phi m(-z)]\big]^2\|\beta_{oos}\|_2^2 \to 0$ as $n,p \to \infty$. Finally, proceeding as in [\cite{hastie2020surprises}, Section A.4], we obtain

    \begin{align*}
        \frac{1}{n} \beta_{oos}'R\hat{\Sigma}_xR\beta_{oos} \xrightarrow[]{a.s.} 0.
    \end{align*}

    Therefore, after some rearrangements, we obtain

    \begin{align}
    \label{eq:T5_lim}
        T_5 \xrightarrow[]{a.s.} 2 \Big(\frac{1-c\phi+c\phi z m(-z)}{1-c\phi+z[1+c\phi m(-z)]} \Big)^2 \langle \beta_{oos}, \beta_{is} \rangle^2 = 2 (f(z;c\phi)\langle \beta_{oos}, \beta_{is} \rangle)^2
    \end{align}

    \underline{Analysis of $T_6$}:

    Recall that $T_6 = \beta_{oos}'\diag(\mathbb{E}\left[ \hat{\beta} \hat{\beta}'\Big| X  \right] )\beta_{oos}$. Observe that, 

    \begin{align*}
        \beta_{oos}'\diag(\mathbb{E}\left[ \hat{\beta} \hat{\beta}'\Big| X  \right] )\beta_{oos} = \tr(\diag(\beta_{oos})^2\mathbb{E}\left[ \hat{\beta} \hat{\beta}'\Big| X  \right]),
    \end{align*}

    where $\diag(\beta_{oos})$ is a diagonal matrix whose diagonal is the vector $\beta_{oos}$. 

    Thus, decomposing $\mathbb{E}\left[ \hat{\beta} \hat{\beta}'\Big| X  \right]$ once again, we have 

    \begin{align*}
        T_6 &= \tr(\diag(\beta_{oos})^2 \beta_{is}\beta_{is}') + z^2 \tr(\diag(\beta_{oos})^2 R \beta_{is}\beta_{is}'R) \\
        &\quad -2z\tr(\diag(\beta_{oos})^2 R \beta_{is}\beta_{is}') + \frac{1+\|\theta_{is}\|_2^2}{n} \tr(\diag(\beta_{oos})^2R\hat{\Sigma}_xR) + o_n(1)\\
        &= \|\beta_{is} \circ \beta_{oos}\|^2 + z^2 \beta_{is}'R\diag(\beta_{oos})^2 R \beta_{is} -2z\beta_{is}'\diag(\beta_{oos})^2 R \beta_{is} \\
        &\quad + \frac{1+\|\theta_{is}\|_2^2}{n} \tr(\diag(\beta_{oos})^2R\hat{\Sigma}_xR) + o_n(1)
    \end{align*}

    Slightly modifying the proof of [\cite{hastie2020surprises}, Theorem 5 \& 6] (i.e., replacing $\beta$ by $\beta_{is}$ and $\Sigma$ by $\diag(\beta_{oos})^2 $) yields that  

    \begin{align*}
       & z^2 \beta_{is}'R\diag(\beta_{oos})^2 R \beta_{is} \\
        &\quad \xrightarrow[]{a.s.} \frac{z^2 [1+c\phi m_1(-z)]}{(1-c\phi+z [1+c\phi m(-z)])^2}\|\beta_{is} \circ \beta_{oos}\|^2
    \end{align*}

    In addition, substituting $\beta_{oos}$ by $ \diag(\beta_{oos})^2 \beta_{is}$ in the definition of $F_n(z)$, we obtain by Lemma \ref{lem:asympcv},

    \begin{align*}
        &- z\beta_{is}'\diag(\beta_{oos})^2 R \beta_{is} \\
        &\quad \xrightarrow[]{a.s.}  -\frac{z}{1-c\phi+z [1+c\phi m(-z)]}\|\beta_{is} \circ \beta_{oos}\|^2.
    \end{align*}

    For the last term in the rhs of $T_6$, we can proceed as in [\cite{hastie2020surprises}, Section A.3] simply by replacing $\Sigma$ in Eq. (86) by $\diag(\beta_{oos})^2$. Here again, [\cite{knowles2017anisotropic}, Theorem 3.16 (i)] still holds in this setting. Thus, for any $D,\varepsilon>0$, there exists $C := C(\varepsilon,D)$ such that 

    \begin{align*}
        \Bigg| \frac{1}{n} \tr(\diag(\beta_{oos})^2R\hat{\Sigma}_xR) - \frac{c\phi}{p} \cdot \frac{1 - c\phi + c\phi z^2 m'(-z)}{\big(1-c\phi+z [1+c\phi m(-z)]\big)^2}\|\beta_{oos}\|_2^2 \Bigg| \leq \frac{C\|\beta_{oos}\|_2^2}{z^2 n^{\frac{1-\varepsilon}{2}}}
    \end{align*}

    holds with probability at least $1 - Cn^{-D}$. Notice that $c\phi p^{-1} \big[1 - c\phi + c\phi z^2 m'(-z)\big]/\big[1-c\phi+z [1+c\phi m(-z)]\big]^2\|\beta_{oos}\|_2^2 \to 0$ as $n,p \to \infty$. Finally, proceeding as in [\cite{hastie2020surprises}, Section A.4], we obtain

    \begin{align*}
        \frac{1}{n} \tr(\diag(\beta_{oos})^2R\hat{\Sigma}_xR) \xrightarrow[]{a.s.} 0.
    \end{align*}

    Rearranging the limiting terms, we get 

    \begin{align}
    \label{eq:T6_lim}
    \begin{split}
        T_6 &\xrightarrow[]{a.s.} \Bigg[\Big(  \frac{1-c\phi+czm(-z)}{1-c\phi+z [1+c\phi m(-z)]} \Big)^2 + \frac{c \phi z^2 m_1(-z)}{(1-c\phi+z [1+c\phi m(-z)])^2} \Bigg]\|\beta_{is} \circ \beta_{oos}\|^2\\
        &\quad = f(z;c\phi)\|\beta_{is} \circ \beta_{oos}\|^2
    \end{split}
    \end{align}

    where the equality follows by simple algebra and $f$ is defined in Eq. \eqref{eq:f_isotropic}.

    \underline{Conclusion}:

    Recall that from Eq. \eqref{eq:secondmomentret_decomp_iid},

    \begin{align*}
        \mathbb{E}\left[ \Big(r^{(s)}_{t+1}(z) \Big)^2 \Big| X  \right] = T_4+T_5+(m_4-3)T_6.
    \end{align*}

    The desired result follows from taking the limits of the three terms in the rhs, which are given in Eqs. \eqref{eq:T4_lim}, \eqref{eq:T5_lim} and \eqref{eq:T6_lim}, respectively. That is, 
    
    \begin{align*}
        \mathbb{E}\left[ \Big(r^{(s)}_{t+1}(z) \Big)^2 \Big| X  \right] &\xrightarrow[\substack{\mathstrut n,p,q \rightarrow \infty \\ p/n \rightarrow c\phi }]{\mathds{P}} \quad (1+S_{oos}) \mathcal{L} + 2 (f(z;c\phi) \langle \beta_{oos}, \beta_{is} \rangle)^2 + (m_4-3) f(z;c\phi) \|\beta_{is} \circ \beta_{oos}\|^2 
    \end{align*}

    where $S_{oos} = \|\beta_{oos}\|^2+\|\theta_{oos}\|_2^2$ is the amount of signal contained in the out-of-sample return and $\mathcal{L}$ is the (limiting) leverage of the strategy, that is $\E[|\hat{\pi}_t(z)|^2 |X] \to \mathcal{L}$ in probability when $n,p,q \to \infty$ such that $p/n \to c\phi$, and is equal to

    \begin{align}
    \label{eq:leverage_ridge_iid}
        \mathcal{L} \equiv \mathcal{L}(z,c\phi) := f(z;c\phi)\|\beta_{is}\|^2 + (1+\|\theta_{is}\|_2^2)\mathcal{H} .
    \end{align}

    with 

    \begin{align}
    \label{eq:K1_K2_ridge_iid}
    \begin{split}
        \mathcal{H} &\equiv \mathcal{H}(z;c\phi) := c\phi\frac{1 - c\phi + c\phi z^2 m'(-z)}{\big(1-c\phi+z [1+c\phi m(-z)]\big)^2}
    \end{split}
    \end{align}
    
    \qed

\vspace{10mm}

\textbf{Vanishing regularization:}

The proof consists in adjusting the limiting terms depending on $z$ to accommodate the scenario of vanishing regularization.

The decomposition of the second moment of $\hat{\beta}$ remains the same, that is

    \begin{align*}
        \mathbb{E}\left[ \hat{\beta} \hat{\beta}'\Big| X  \right] &= \lim_{z \to 0^+} \Big\{\beta_{is}\beta_{is}' + z^2 R \beta_{is}\beta_{is}' R  - z R\beta_{is}\beta_{is}' - z \beta_{is}\beta_{is}'R + \frac{1+\|\theta_{is}\|_2^2}{n} R\hat{\Sigma}_xR\Big\} + o_n(1)
    \end{align*}

    Hence, it suffices to determine the limit of the three terms $T_4$, $T_5$ and $T_6$ as in the proof of Proposition \ref{prop:volstratret_iid} when we first consider a vanishing penalty level. 

    \underline{Analysis of $T_4$}:
    Recall that $T_4 = (1+\|\beta_{oos}\|^2+\|\theta_{oos}\|_2^2) \tr(\mathbb{E}[ \hat{\beta} \hat{\beta}' | X  ])$. We focus on the non-constant factor,

    \begin{align*}
        \lim_{z \to 0^+} \tr(\mathbb{E}\left[ \hat{\beta} \hat{\beta}'\Big| X  \right]) = \lim_{z \to 0^+} \Big\{\|\beta_{is}\|^2 + z^2 \beta_{is}' R^2\beta_{is} - 2z \beta_{is}'R \beta_{is} + \frac{1+\|\theta_{is}\|_2^2}{n} \tr(R\hat{\Sigma}_xR) \Big\} + o_n(1)
    \end{align*}

    Substituting $\beta_{oos}$ by $\beta_{is}$ in the definition of $G_n(z)$, we obtain by Corollary \ref{cor:asycvridgeless},

    \begin{align*}
        - \lim_{z \to 0^+} 2z \beta_{is}' R \beta_{is} \xrightarrow[]{a.s.} - 2 \Big(1 - \frac{1}{c \phi}\Big) \|\beta_{is}\|_2^2 \mathds{1}_{\{c\phi>1\}}
    \end{align*}

    Similarly, replacing $\beta$ by $\beta_{is}$ in [\cite{hastie2020surprises}, Theorem 3], we have 

    \begin{align*}
       & \lim_{z \to 0^+} z^2 \beta_{is}'R^2\beta_{is} \xrightarrow[]{a.s.} \Big(1 - \frac{1}{c \phi}\Big)\|\beta_{is}\|_2^2 \mathds{1}_{\{c\phi>1\}}
    \end{align*}

    Finally, from the proof of [\cite{hastie2020surprises}, Theorem 1], we have 

    \begin{align*}
        \lim_{z \to 0^+}\frac{1}{n}\tr(R\hat{\Sigma}_xR) \xrightarrow[]{a.s.} \frac{c \phi}{1 - c\phi}\mathds{1}_{\{c\phi<1\}} + \frac{1}{c\phi-1}\mathds{1}_{\{c\phi>1\}}.
    \end{align*}

    Put together, we have 

    \begin{align}
    \label{eq:T4_lim_iid_ridgeless}
    \begin{split}
        \lim_{z \to 0^+} T_4 \xrightarrow[]{a.s.} &(1+\|\beta_{oos}\|^2+\|\theta_{oos}\|_2^2) \Bigg[ \Big(1-\frac{c \phi - 1}{c \phi}\mathds{1}_{\{c\phi>1\}} \Big)\|\beta_{is}\|_2^2 \\
        & \quad + (1+\|\theta_{is}\|_2^2) \Big( \frac{c \phi}{1 - c\phi}\mathds{1}_{\{c\phi<1\}} + \frac{1}{c\phi-1}\mathds{1}_{\{c\phi>1\}} \Big) \Bigg] \\
        & = (1+\|\beta_{oos}\|^2+\|\theta_{oos}\|_2^2) \Bigg[ f(c\phi)\|\beta_{is}\|_2^2 \\
        & \quad + (1+\|\theta_{is}\|_2^2) \Big( \frac{c \phi}{1 - c\phi}\mathds{1}_{\{c\phi<1\}} + \frac{1}{c\phi-1}\mathds{1}_{\{c\phi>1\}} \Big) \Bigg]
    \end{split}
    \end{align}

    where $f$ is defined in Eq. \eqref{eq:isotropicridgeless}.

    \underline{Analysis of $T_5$}:

    Recall that $T_5 = 2 \beta_{oos}' \mathbb{E}\left[ \hat{\beta} \hat{\beta}'\Big| X  \right] \beta_{oos}$. Using the decomposition of $\mathbb{E}\left[ \hat{\beta} \hat{\beta}'\Big| X  \right]$, we have 

    \begin{align*}
        \lim_{z \to 0^+} T_5 = 2 \big( \beta_{is}'\beta_{oos} - \lim_{z \to 0^+} z \beta_{oos}' R\beta_{is}\big)^2 + \lim_{z \to 0^+} \frac{1+\|\theta_{is}\|_2^2}{n}\beta_{oos}'R\hat{\Sigma}_xR\beta_{oos} + o_n(1)
    \end{align*}

    By Corollary \ref{cor:asycvridgeless}, we obtain

    \begin{align*}
        - \lim_{z \to 0^+} z \beta_{is}' R \beta_{oos} \xrightarrow[]{a.s.} -\Big(1 -  \frac{1}{c \phi}\Big) \langle\beta_{is}, \beta_{oos} \rangle \mathds{1}_{\{c\phi>1\}}
    \end{align*}

    For the second term in the rhs of $T_5$, we can replace the trace in [\cite{hastie2020surprises}, Section A.3] by the quadratic form in $\beta_{oos}$, i.e., $\beta \mapsto \beta'R\hat{\Sigma}_xR\beta$, and proceed as therein. Indeed, [\cite{knowles2017anisotropic}, Theorem 3.16 (i)] still holds in this setting. Define, $s(-z) = r(-z)/(c\phi)$ where we recall that $r$ is the companion Stieltjes transform of the ESD defined in \eqref{eq:esd}. Thus, using the same approximation argument as in [\cite{hastie2020surprises}, Section B.3] for vanishing regularization, we get that for any $D,\varepsilon>0$, there exists $C := C(\varepsilon,D)$ such that 

    \begin{align*}
        \Bigg| \lim_{z \to 0^+} \frac{1}{n} \beta_{oos}'R\hat{\Sigma}_xR\beta_{oos} - \lim_{z \to 0^+}\frac{c\phi}{p} \cdot \frac{c\phi s'(-z)}{1+c\phi s(-z)}\|\beta_{oos}\|_2^2 \Bigg| \leq \frac{C \| \beta_{oos}\|_2^2}{n^{1/7}}
    \end{align*}

    holds with probability at least $1 - Cn^{-D}$. From [\cite{hastie2020surprises}, Section B.3], we have that $s'$ is bounded by constants depending uniquely on $\tau$ for any $z \in [-\tau^{-1},0]$. Therefore, we have  

    \begin{align*}
        \lim_{z \to 0^+}\frac{c\phi}{p} \cdot \frac{c\phi s'(-z)}{1+c\phi s(-z)}\|\beta_{oos}\|_2^2 \to 0,
    \end{align*}

     as $n,p \to \infty$. Finally, proceeding as in [\cite{hastie2020surprises}, Section A.4], we obtain

    \begin{align*}
        \lim_{z \to 0^+} \frac{1}{n} \beta_{oos}'R\hat{\Sigma}_xR\beta_{oos} \xrightarrow[]{a.s.} 0.
    \end{align*}

    Therefore, after some rearrangements, we obtain

    \begin{align}
    \label{eq:T5_lim_iid_ridgeless}
        \lim_{z \to 0^+} T_5 \xrightarrow[]{a.s.} 2 \Big(1-\frac{c\phi - 1}{c\phi} \mathds{1}_{\{c\phi>1\}} \Big)^2 \langle \beta_{oos}, \beta_{is} \rangle^2 = 2 (f(c\phi) \langle \beta_{oos}, \beta_{is} \rangle)^2.
    \end{align}

    where $f$ is defined in Eq. \eqref{eq:isotropicridgeless}.

    \underline{Analysis of $T_6$}:

    Recall that $T_6 = \beta_{oos}'\diag(\mathbb{E}\left[ \hat{\beta} \hat{\beta}'\Big| X  \right] )\beta_{oos}$. Decomposing $\mathbb{E}\left[ \hat{\beta} \hat{\beta}'\Big| X  \right]$ once again, we have 

    \begin{align*}
        \lim_{z \to 0^+}T_6 &= \tr(\diag(\beta_{oos})^2 \beta_{is}\beta_{is}') + \lim_{z \to 0^+} z^2 \tr(\diag(\beta_{oos})^2 R \beta_{is}\beta_{is}'R) \\
        &\quad -2 \lim_{z \to 0^+}z\tr(\diag(\beta_{oos})^2 R \beta_{is}\beta_{is}') + \lim_{z \to 0^+}\frac{1+\|\theta_{is}\|_2^2}{n} \tr(\diag(\beta_{oos})^2R\hat{\Sigma}_xR) + o_n(1)\\
        &= \|\beta_{is} \circ \beta_{oos}\|^2 + \lim_{z \to 0^+}z^2 \beta_{is}'R\diag(\beta_{oos})^2 R \beta_{is} -2\lim_{z \to 0^+}z\beta_{is}'\diag(\beta_{oos})^2 R \beta_{is} \\
        &\quad + \lim_{z \to 0^+}\frac{1+\|\theta_{is}\|_2^2}{n} \tr(\diag(\beta_{oos})^2R\hat{\Sigma}_xR) + o_n(1)
    \end{align*}

    Slightly modifying the proof of [\cite{hastie2020surprises}, Theorem 2, 5 \& 6] (i.e., replacing $\beta$ by $\beta_{is}$ and $\Sigma$ by $\diag(\beta_{oos})^2 $) yields that 

    % \begin{align*}
    %    \lim_{z \to 0^+}z^2 \beta_{is}'R\diag(\beta_{oos})^2 R \beta_{is} \xrightarrow[]{a.s.} \frac{c \phi - 1}{c \phi} \cdot [1+c\phi s_0]\|\beta_{is} \circ \beta_{oos}\|^2
    % \end{align*}

    \begin{align*}
       \lim_{z \to 0^+}z^2 \beta_{is}'R\diag(\beta_{oos})^2 R \beta_{is} \xrightarrow[]{a.s.} \frac{c\phi-1}{c\phi} \|\beta_{is} \circ \beta_{oos}\|^2 \mathds{1}_{\{c\phi>1\}}
    \end{align*}

    In addition, substituting $\beta_{oos}$ by $ \diag(\beta_{oos})^2 \beta_{is}$ in the definition of $F_n(z)$, we obtain by Corollary \ref{cor:asycvridgeless},

    \begin{align*}
        &- z\beta_{is}'\diag(\beta_{oos})^2 R \beta_{is} \xrightarrow[]{a.s.}  -\frac{c\phi-1}{c\phi}\|\beta_{is} \circ \beta_{oos}\|^2\mathds{1}_{\{c\phi>1\}}.
    \end{align*}

    For the last term in the rhs of $T_6$, we can proceed as in [\cite{hastie2020surprises}, Section A.3] simply by replacing $\Sigma$ in Eq. (86) by $\diag(\beta_{oos})^2$. Here again, [\cite{knowles2017anisotropic}, Theorem 3.16 (i)] still holds in this setting. Thus, using the same approximation argument as in [\cite{hastie2020surprises}, Section B.3] for vanishing regularization, for any $D,\varepsilon>0$, there exists $C := C(\varepsilon,D)$ such that 

    \begin{align*}
        \Bigg| \lim_{z \to 0^+} \frac{1}{n} \tr(\diag(\beta_{oos})^2R\hat{\Sigma}_xR) - \lim_{z \to 0^+}\frac{c\phi}{p} \cdot \frac{c\phi s'(-z)}{1+c\phi s(-z)}\|\beta_{oos}\|_2^2 \Bigg| \leq \frac{C\|\beta_{oos}\|_2^2}{ n^{1/7}}
    \end{align*}

    holds with probability at least $1 - Cn^{-D}$. In the analysis of $T_5$, we have showed that the deterministic term vanishes. That is,

    \begin{align*}
        \lim_{z \to 0^+}\frac{c\phi}{p} \cdot \frac{c\phi s'(-z)}{1+c\phi s(-z)}\|\beta_{oos}\|_2^2 \to 0,
    \end{align*}

     as $n,p \to \infty$. Finally, proceeding as in [\cite{hastie2020surprises}, Section A.4], we obtain

    \begin{align*}
        \lim_{z \to 0^+} \frac{1}{n} \tr(\diag(\beta_{oos})^2R\hat{\Sigma}_xR) \xrightarrow[]{a.s.} 0.
    \end{align*}

    Rearranging the limiting terms, we get 

    \begin{align}
    \label{eq:T6_lim_iid_ridgeless}
    \begin{split}
        \lim_{z \to 0^+} T_6 &\xrightarrow[]{a.s.} \Big( 1 - \frac{c\phi-1}{c\phi}\mathds{1}_{\{c\phi>1\}} \Big) \|\beta_{is} \circ \beta_{oos}\|^2 = f(c\phi)\|\beta_{is} \circ \beta_{oos}\|^2,
    \end{split}
    \end{align}

    with $f$ defined in Eq. \eqref{eq:isotropicridgeless}.

    \underline{Conclusion}:

    Recall that from Eq. \eqref{eq:secondmomentret_decomp_iid},

    \begin{align*}
        \mathbb{E}\left[ \Big(r^{(s)}_{t+1}(z) \Big)^2 \Big| X  \right] = T_4+T_5+(m_4-3)T_6.
    \end{align*}

    The desired result follows from taking the limits of the three terms in the rhs, which are given in Eqs. \eqref{eq:T4_lim_iid_ridgeless}, \eqref{eq:T5_lim_iid_ridgeless} and \eqref{eq:T6_lim_iid_ridgeless}, respectively. That is, 
    
    \begin{align*}
        \mathbb{E}\left[ \Big(r^{(s)}_{t+1}(z) \Big)^2 \Big| X  \right] &\xrightarrow[\substack{\mathstrut n,p,q \rightarrow \infty \\ p/n \rightarrow c\phi }]{\mathds{P}} \quad (1+S_{oos}) \mathcal{L} + 2 (f(c\phi) \langle \beta_{oos}, \beta_{is} \rangle)^2 + (m_4-3) f(c\phi) \|\beta_{is} \circ \beta_{oos}\|^2 
    \end{align*}

    where $S_{oos} = \|\beta_{oos}\|^2+\|\theta_{oos}\|_2^2$ is the amount of signal contained in the out-of-sample return and $\mathcal{L}$ is the (limiting) leverage of the strategy, that is $\E[|\hat{\pi}_t(z)|^2 |X] \to \mathcal{L}$ in probability when $n,p,q \to \infty$ such that $p/n \to c\phi$, and is equal to

    \begin{align}
    \label{eq:leverage_ridgeless_iid}
        \mathcal{L} \equiv \mathcal{L}(z,c\phi) := f(c\phi)\|\beta_{is}\|^2 + (1+\|\theta_{is}\|_2^2)\mathcal{H} .
    \end{align}

    with 
    
    \begin{align}
    \label{eq:K1_K2_ridgeless_iid}
    \begin{split}
        \mathcal{H} &\equiv \lim_{z \to 0^+} \mathcal{H}(z;c\phi) := \frac{c \phi}{1 - c\phi}\mathds{1}_{\{c\phi<1\}} + \frac{1}{c\phi-1}\mathds{1}_{\{c\phi>1\}}.
    \end{split}
    \end{align}
    
\end{proof}

\textbf{Proof of Proposition \ref{prop:volstratret_iid}}: 

The proof simply follows from decomposing the variance with the Koenig-Huygens formula and applying the limiting results provided in Propositions \ref{prop:expretdriftisomis} and \ref{prop:secondmomstratret_iid}.

\qed

\subsection{Misspecified model (Latent space model)}
\label{subsub:misspecifiedlatent}

We can also consider a latent space model as presented in \cite{hastie2020surprises} and \cite{misiakiewicz2023six}. This model will allow us to highlight a situation where the expected returns of the strategy depend on the alignment between the eigenvectors of the population covariance matrix and the true vector of regression coefficients. Remark that this symmetry plays no role when the features are isotropic (see Appendix \ref{subsub:misspecifiedisotropic}, e.g.).  

Namely, we consider a model in which the dependent variable is linear in a latent covariate vector, $z_i \in \R^d$. Though, we are only able to observe a vector of $p (\geq d)$ covariates, $x_i$, whose components are linear in the latent features. Formally, 

\begin{align}
\label{eq:dgplatent}
\begin{split}
    (z_i,b_i,u_i) &\sim \mathcal{N}(0,I_d) \times  \mathcal{N}(0,\sigma_b^2) \times \mathcal{N}(0,1),  \quad i=1,\dots,n  \\
    y_i &= z'_i\theta + b_i,  \quad i=1,\dots,n, \\
    x_i &= W z_i + u_i, \quad i=1,\dots,n,
\end{split}
\end{align}
where the $n$ random draws are independent. $(b_i)_{1 \leq i \leq n}$ and $(u_{ij})_{1 \leq i \leq n, 1 \leq j \leq p}$ are independent. Moreover, the latent covariates, $z_i$, are independent from both noise variables. $W \in \R^{p \times d}$ is the linear transformation relating the observed features to the latent covariates.

DGP \eqref{eq:dgplatent} is a special case of DGP \eqref{eq:dgpmis} where $w_i = 0$, $P_x \sim \mathcal{N}(0,\Sigma)$ and $P_e \sim \mathcal{N}(0,\sigma^2)$. Formally, this new DGP can be written 

\begin{align*}
\label{eq:dgplatentspe}
\begin{split}
    (x_i,e_i) &\sim \mathcal{N}(0,\Sigma) \times  \mathcal{N}(0,\sigma^2),  \quad i=1,\dots,n \\
    y_i &= x'_i\beta+ e_i,  \quad i=1,\dots,n. 
\end{split}
\end{align*}

It is straightforward that $\Sigma = \E[x_i x_i'] = \E[(W z_i + u_i) (W z_i + u_i)'] = I + WW'$. Further, taking $\beta = W (I+W'W)^{-1} \theta$ and $e_i = b_i + \theta ' (I+W'W)^{-1} (z_i - W'u_i)$ yields $y_i = z'_i\theta + b_i$. Hence, $\sigma^2 = \E[e_i^2] = \sigma_v^2 + \theta' (I+W'W)^{-1} \theta $.

Let $\Upsilon = d/p \in (0,1)$. For the sake of simplicity, we also consider $W$ to be proportional to an orthogonal matrix, i.e. $W'W = \Upsilon^{-1}I_d$. 

\begin{proposition}
\label{prop:latentspace}
    Let $z>0$. Assume the data is generated as per \eqref{eq:dgplatent}. Then,

    \begin{align*}
        \mathbb{E}\left[ r^{(s)}_{t+1}(z) \big| X  \right]  \xrightarrow[\substack{\mathstrut n,p,d \rightarrow \infty \\ p/n \rightarrow c \\ d/p \rightarrow \Upsilon}]{\mathds{P}} g(z;c,\Upsilon) \langle \theta_{is}, \theta_{oos} \rangle 
    \end{align*}

    with 

    \begin{align*}
        g(z;c,\Upsilon) = \frac{1}{1+\Upsilon} \Big( 1 - \frac{z}{(1+\Upsilon^{-1})(czm + 1 - c) +z} \Big)
    \end{align*}

    and $m := m(-z;c,\Upsilon)$ the unique solution in $\C_+$ to

    \begin{align*}
        m = \frac{1-\Upsilon}{1-c+czm+z} + \frac{\Upsilon}{(1+\Upsilon^{-1})(1-c+czm)+z}
    \end{align*}

    Moreover, in the ridgeless limit,

    \begin{align*}
        \mathbb{E}\left[ r^{(s)}_{t+1}(z) \big| X  \right]  \xrightarrow[\substack{\mathstrut n,p,d \rightarrow \infty \\ p/n \rightarrow c \\ d/p \rightarrow \Upsilon \\ z \to 0}]{\mathds{P}} g(c,\Upsilon) \langle \theta_{is}, \theta_{oos} \rangle 
    \end{align*}

    with

    \begin{align*}
        g(c,\Upsilon) = \lim_{z \to 0} g(z;c) = \frac{1}{1+\Upsilon} \Big( 1 - \frac{1}{1 + (1+\Upsilon^{-1})c s_0} \mathds{1}_{c>1} \Big)
    \end{align*}

    and $s_0 := s_0(c,\Upsilon)$ the non-negative solution to 

    \begin{equation*}
        1 - \frac{1}{c} = \frac{1 - \Upsilon}{1+cs_0} + \frac{\Upsilon}{1+c(1+\Upsilon^{-1})s_0}
    \end{equation*}

\end{proposition}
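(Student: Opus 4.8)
The plan is to recognise, as the excerpt already points out, that the latent-space DGP \eqref{eq:dgplatent} is an instance of the well-specified model of Sections \ref{subsub:wellspecifiedridge}--\ref{subsub:wellspecifiedminnorm} with population covariance $\Sigma = I + WW'$ and coefficient vectors $\beta_{u} = W(I+W'W)^{-1}\theta_{u}$ for $u \in \{\text{is},\text{oos}\}$, and then to invoke Propositions \ref{prop:cvprobdrift} and \ref{prop:cvprobdriftridgeless}. First I would set $e_i^{(u)} = b_i + \theta_u'(I+W'W)^{-1}(z_i - W'u_i)$ and check that $y_i = x_i'\beta_u + e_i^{(u)}$ reproduces $y_i = z_i'\theta_u + b_i$ (a one-line identity using $W'W(I+W'W)^{-1} + (I+W'W)^{-1} = I$), that $e_i^{(u)}$ is centred Gaussian with variance $\sigma_u^2 = \sigma_b^2 + \theta_u'(I+W'W)^{-1}\theta_u$, and --- crucially --- that $e_i^{(u)}$ is independent of the whole design $X$: the rows are independent draws, and within a row $\mathbb{C}\text{ov}(e_i^{(u)}, x_i) = \theta_u'(I+W'W)^{-1}(W' - W') = 0$, so joint Gaussianity of $(x_i,e_i^{(u)})$ upgrades zero correlation to independence. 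Since $x_i \sim \mathcal N(0,\Sigma)$ has i.i.d.\ Gaussian innovations, Assumption \ref{ass:iddistcov} holds.

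Next I would compute the spectrum of $\Sigma$ under $W'W = \Upsilon^{-1}I_d$. One checks that $\R^p = \text{col}(W) \oplus \ker W'$, that $\Sigma$ acts as $\lambda^* := 1 + \Upsilon^{-1}$ on $\text{col}(W)$ (multiplicity $d$) and as $1$ on $\ker W'$ (multiplicity $p-d$), so the ESD is $\mu = (1-d/p)\,\delta_{1} + (d/p)\,\delta_{\lambda^*}$, converging weakly to $\nu = (1-\Upsilon)\,\delta_1 + \Upsilon\,\delta_{\lambda^*}$; this verifies Assumptions \ref{ass:opbound}, \ref{ass:accumulationzero}, \ref{ass:probacvmumis} and \ref{ass:mineigen} (with $\lambda_p = 1$), while \ref{ass:interpolationtreshold} is assumed. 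Because $\beta_{\text{is}}$, $\beta_{\text{oos}}$ and $\beta_{\text{oos}}-\beta_{\text{is}}$ all lie in $\text{col}(W) = E_{\lambda^*}$, the re-weighted spectral distributions $\zeta_{\text{is}}$, $\zeta_{\text{oos}}$, $\zeta_r$ equal $\delta_{\lambda^*}$ identically (so Assumption \ref{ass:probacvpimis} is immediate) and $\zeta_d = \text{cosim}(\beta_{\text{is}},\beta_{\text{oos}})\,\delta_{\lambda^*}$. Feeding $\mu = \nu$ into Definition \ref{def:mn} collapses the integral to its two atoms and yields exactly the stated fixed-point equation for $m_n$; feeding it into Definition \ref{def:s0} yields the stated equation for $s_0$.

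Finally I would assemble the limit. By Proposition \ref{prop:cvprobdrift}, $\mathbb{E}[r^{(s)}_{t+1}(z)\,|\,X] \xrightarrow{\mathds{P}} \mathcal E^{(d)}(z;c,\nu,\eta_d) = \langle\beta_{\text{oos}},\beta_{\text{is}}\rangle_{Q(z)}$, and since $\eta_d$ is a point mass at $\lambda^*$, formula \eqref{eq:expretdrift} reduces to $\mathcal E^{(d)} = \langle\beta_{\text{oos}},\beta_{\text{is}}\rangle\,\lambda^*\big(1 - z/[\lambda^*(czm_n + 1 - c) + z]\big)$. Using $\beta_u = \tfrac{\Upsilon}{1+\Upsilon}W\theta_u$ and $W'W = \Upsilon^{-1}I_d$ one gets $\langle\beta_{\text{oos}},\beta_{\text{is}}\rangle = \tfrac{\Upsilon}{(1+\Upsilon)^2}\langle\theta_{\text{is}},\theta_{\text{oos}}\rangle$ together with $\tfrac{\Upsilon}{(1+\Upsilon)^2}\lambda^* = \tfrac{1}{1+\Upsilon}$, so $\mathcal E^{(d)} = g(z;c,\Upsilon)\langle\theta_{\text{is}},\theta_{\text{oos}}\rangle$ as claimed. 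The ridgeless statement follows identically from Proposition \ref{prop:cvprobdriftridgeless} and \eqref{eq:expretdriftridgeless}, with $R = \Sigma$ for $c<1$ and $R = \Sigma(I - (c s_0\Sigma + I)^{-1})$ for $c>1$: the point-mass integral collapses and the same scalar identities turn $\langle\beta_{\text{oos}},\beta_{\text{is}}\rangle_R$ into $g(c,\Upsilon)\langle\theta_{\text{is}},\theta_{\text{oos}}\rangle$, with the $\mathds{1}_{\{c>1\}}$ appearing exactly through the two cases for $R$.

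The hard part will not be any single estimate --- all the random-matrix machinery is outsourced to Propositions \ref{prop:cvprobdrift}--\ref{prop:cvprobdriftridgeless} --- but rather the careful bookkeeping of the recast. One must make sure the effective noise $e_i^{(u)}$ is genuinely independent of $X$ (this uses Gaussianity, not just the zero-covariance identity, since $e_i^{(u)}$ is built from the same latent $z_i,u_i$ that generate $x_i$), and that the differing in-/out-of-sample noise variances $\sigma_{\text{is}}^2 \neq \sigma_{\text{oos}}^2$ are irrelevant --- which is the case because the limiting expected return carries no $\sigma^2$ term (the cross term with $\xi$ vanishes in $L_2$ by Corollary \ref{cor:cvresasy}, resp.\ \ref{cor:asyboundresridgeless}). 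A secondary nuisance is propagating the scalar factors $\tfrac{\Upsilon}{1+\Upsilon}$, $\lambda^* = 1+\Upsilon^{-1}$ and $W'W = \Upsilon^{-1}I_d$ consistently so that the final constant matches $g$ exactly.
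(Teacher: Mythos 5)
Your proposal is correct and follows essentially the same route as the paper's proof: diagonalize $\Sigma = I + WW'$ to get the two-atom spectral measure $\mu = (1-\Upsilon)\delta_1 + \Upsilon\,\delta_{1+\Upsilon^{-1}}$, observe that $\beta_{\text{is}},\beta_{\text{oos}}$ live in the $\lambda^*$-eigenspace so the reweighted measures collapse to point masses, and then plug into Propositions \ref{prop:cvprobdrift} and \ref{prop:cvprobdriftridgeless} with Definitions \ref{def:mn} and \ref{def:s0}. Your explicit verification that the effective noise $e_i^{(u)}$ is independent of $X$ (via joint Gaussianity, not merely zero covariance) is a detail the paper relegates to the setup text preceding the proposition rather than the proof itself, but it is the same argument.
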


\begin{proof} 

    First, we need to determine the measures $\hat{\mu}$ and $\hat{\zeta}_d$. Note that $\Sigma$ admits the following eigenvalue decomposition

    \begin{equation}
    \label{eq:sigmaeigendecomp}
        \Sigma = PDP' + I_p = \sum_{i=1}^d (1+\Upsilon^{-1}) v_i v_i' + \sum_{i=d+1}^p v_i v_i',
    \end{equation}

    with $(v_i)_{1 \leq i \leq d}$ the columns of $P \in \R^{p \times d}$ and, for $d+1 \leq i \leq p$, $v_i$ is a canonical basis vector with $1$ in position $i$. 

    $\Sigma$ has $d$ eigenvalues equal to $1+\Upsilon^{-1}$ and $p-d$ equal to $1$, hence

    \begin{equation*}
        \hat{\mu} = \Upsilon \delta_{1+\Upsilon^{-1}} + (1-\Upsilon) \delta_1
    \end{equation*}

    Next, let $W$ admit the following singular value decomposition

    \begin{align*}
        W = PDQ'
    \end{align*}

    with $P \in \R^{p \times d}$ and $Q \in \R^{d \times p}$ orthogonal and $D \in \R^{d \times p}$ such that

    \begin{align*}
        D_{ij} = \begin{cases}
            \Upsilon^{-1/2}, \quad &i=j \\
            0, \quad &i \neq j.
        \end{cases}
    \end{align*}

    Then,

    \begin{align*}
        \langle \beta, v_i \rangle^2 &= \beta' v_i v_i' \beta \\
        & = (1+\Upsilon^{-1})^{-2} \theta' W' v_i v_i W \theta \\
        & = (1+\Upsilon^{-1})^{-2} \theta' QD'P' v_i v_i' PDQ' \theta
    \end{align*}

    $P'v_i = v_i' P = (v_1' v_i, v_2' v_i, \dots, v_d'v_i)'$ which is equal to the canonical basis vector with $1$ in position $i$ if $i\leq d$ or to the null vector if $d+1 \leq i \leq p$. 

    Then, for $u \in \{is, oos \}$, we have 

    \begin{align*}
        \hat{\zeta}_u &= \frac{1}{\| \beta_u \|^2} \sum_{i=1}^p |\langle \beta_u, v_i \rangle |^2 \delta_{\lambda_i}\\
        &= \frac{1}{\| \beta_u \|^2} \Big( \sum_{i=1}^d |\langle \beta_u, v_i \rangle |^2 \delta_{1+\Upsilon^{-1}} + \sum_{i=d+1}^p |\langle \beta_u, v_i \rangle |^2 \delta_{1} \Big) \\
        & = \frac{1}{\| \beta_u \|^2} \beta_u' \big( \sum_{i=1}^d v_i v_i' \big) \beta_u  \delta_{1+\Upsilon^{-1}} \\
        & = \frac{1}{\| \beta_u \|^2} \beta_u ' \beta_u \delta_{1+\Upsilon^{-1}} \\
        & = \delta_{1+\Upsilon^{-1}}. 
    \end{align*}

    Similarly, $\hat{\zeta}_r = \delta_{1+\Upsilon^{-1}}$. Note that $\| \beta_u \|^2 = \Upsilon (1+\Upsilon)^{-2}\| \theta_u \|^2$. Hence,

    \begin{align*}
        \hat{\zeta}_d &= \frac{1}{2\|\beta_{is}\| \|\beta_{oos}\|} \big( \|\beta_{is}\|^2 \hat{\zeta}_{is} + \|\beta_{oos}\|^2 \hat{\zeta}_{oos} - \|\beta_{oos} - \beta_{is}\|^2 \hat{\zeta}_r \big) \\
        & =\frac{1}{2\|\theta_{is}\| \|\theta_{oos}\|} \big( \|\theta_{is}\|^2 + \|\theta_{oos}\|^2 - \|\theta_{oos} - \theta_{is}\|^2 \big) \delta_{1+\Upsilon^{-1}} \\
        & = \frac{\langle \theta_{is}, \theta_{oos} \rangle}{\|\theta_{is}\| \|\theta_{oos}\|} \delta_{1+\Upsilon^{-1}}
    \end{align*}

    From there, the result in the ridge regularization scenario follows from Proposition \ref{prop:cvprobdrift} and Definition \ref{def:mn} while the result for the ridgeless limit is a direct application of Proposition \ref{prop:cvprobdriftridgeless} and Definition \ref{def:s0}. 

    The proof of Proposition \ref{prop:latentspace} is complete. 
    
\end{proof}

\end{document}